\newcommand{\be}{\begin{equation*}}
\newcommand{\ee}{\end{equation*}}
\newcommand{\ben}[1]{\begin{equation}\label{#1}}
\newcommand{\een}{\end{equation}}
\newcommand{\bea}{\begin{eqnarray}}
\newcommand{\eea}{\end{eqnarray}}
\newcommand{\bean}{\begin{eqnarray*}}
\newcommand{\eean}{\end{eqnarray*}}
\newcommand{\R}{\mathbb{R}}
\newcommand{\A}{\A}
\newcommand{\C}{\mathbb{C}}
\newcommand{\cS}{\mathcal{S}}
\renewcommand{\O}[1]{\mathcal{O}\left( #1 \right)}
\renewcommand{\H}{\underline{H}}
\renewcommand{\A}{\mathcal{A}}
\renewcommand{\L}{\underline{L}}
\newcommand{\hL}{\hat{L}_s}
\newcommand{\scri}{\mathscr{I}}
\newcommand{\hor}{{\mathscr{H}^+}}
\newcommand{\bhR}{\mathscr{R}}
\newcommand{\cc}{\Subset }
\newcommand{\abs}[1]{\left|#1 \right|} 
\newcommand{\norm}[2]{\left|\left |#1 \right| \right |_{#2}} 
\newcommand{\ip}[3]{\left(#1,#2 \right )_{#3}} 
\newcommand{\pair}[2]{\left\langle#1,#2 \right \rangle}
\newcommand{\tn}{\tilde{\nabla}}
\newcommand{\emT}{\tilde{\mathbb{T}}}
\newcommand{\emS}{\tilde{\mathbb{S}}}
\newcommand{\curJ}{\tilde{\mathbb{J}}}
\newcommand{\eq}[1]{(\ref{#1})}
\newtheorem{Theorem}{Theorem}
\newtheorem*{conj*}{Conjecture}
\newtheorem{Lemma}[Theorem]{Lemma}
\newtheorem*{Lem*}{Lemma}
\newtheorem{Corollary}[Theorem]{Corollary}
\newtheorem{Definition}[Theorem]{Definition}
\numberwithin{equation}{section}
\numberwithin{Theorem}{section}
\newenvironment{customthm}[1]
  {\innercustomthm}
  {\endinnercustomthm}
\newenvironment{customdef}[1]
  {\innercustomdef}
  {\endinnercustomdef}
\title[Quasinormal Modes]{On quasinormal modes of asymptotically anti-de Sitter black holes}
\author{Claude M. Warnick}
\thanks{\texttt{warnick@ualberta.ca} \\
\phantom{1 } Department of Physics, 4-181 CCIS, University of Alberta, Edmonton AB T6G 2E1, Canada \\
\phantom{1 } \texttt{c.warnick@warwick.ac.uk} \\
\phantom{1 } Mathematics Institute, Zeeman Building, University of Warwick, Coventry CV4 7AL, UK}
\begin{document}
\begin{abstract}
We consider the problem of quasinormal modes (QNM) for strongly hyperbolic systems on stationary, asymptotically anti-de Sitter black holes, with very general boundary conditions at infinity. We argue that for a time slicing regular at the horizon the QNM should be identified with certain $H^k$ eigenvalues of the infinitesimal generator $\A$ of the solution semigroup. Using this definition we are able to prove directly that the quasinormal frequencies form a discrete, countable subset of $\mathbb{C}$ which in the globally stationary case accumulates only at infinity. We avoid any need for meromorphic extension, and the quasinormal modes are honest eigenfunctions of an operator on a Hilbert space. Our results apply to any of the linear fields usually considered (Klein-Gordon, Maxwell, Dirac etc.) on a stationary black hole background, and do not rely on any separability or analyticity properties of the metric. Our methods and results largely extend to the locally stationary case. We provide a counter-example to the conjecture that quasinormal modes are complete. We relate our approach directly to the approach via meromorphic continuation. \\\phantom{1}\\ \phantom{1} \hfill ALBERTA THY 3-13
\end{abstract}
\maketitle
\thispagestyle{empty}
\vspace{-.3cm}
\tableofcontents

\newpage 
\section{Introduction}

A hugely important tool in the physical sciences is the idea of spectral analysis. Many physical systems respond to stimuli by emitting radiation at certain precise characteristic frequencies, the simplest example being the normal modes of a guitar string. Understanding the characteristic frequencies is a significant step in analysing a system.

In the context of black holes, the appropriate notion of `characteristic frequencies' that one should study are the quasinormal frequencies (QNF) \cite{Kokkotas:1999bd, Berti:2009kk, Konoplya:2011qq} and their corresponding quasinormal modes (QNM). Unlike the modes of a guitar string, these represent behaviour which is both oscillatory and decaying. Accordingly, the frequencies are necessarily complex. The theory of quasinormal modes is significantly complicated by the fact that they, unlike normal modes, are not usually understood as eigenfunctions of some operator. One aim of this paper is to remedy this situation by showing that the quasinormal modes can, and indeed should, be understood as honest eigenfunctions of a operator on a Hilbert space.

Much recent progress in understanding the behaviour of linear fields on black hole backgrounds has come from an improved understanding of behaviour near the horizon. In particular the combination of a regular choice of time slicing with the redshift effect has paid dividends in understanding the decay of fields outside many stationary or ultimately stationary black holes (see for example \cite{Mihalisnotes} and references therein). The price one pays for this approach is that energies are typically no longer conserved, but instead decay in time. As a result, time evolution is not  `unitary', i.e.\ cannot be viewed as a group action preserving a norm on the Hilbert space of states.

In this paper our basic philosophy is to take the view that a regular slicing is the natural setting in which to consider fields outside a black hole. In this setting, rather than unitary evolution, the time evolution should instead be thought of as the action of a (contraction) semigroup. This is very natural, since energy and information can and will fall into the black hole. Associated to such a semigroup is a operator which generates infinitesimal time translations. We shall simply define the quasinormal modes to be certain eigenmodes of this operator. A second branch of our philosophy is the primacy of the hyperbolic problem. Although the operator whose eigenvalues we seek is (degenerate) elliptic, we make heavy use of estimates for the hyperbolic problem to derive the estimates we require for the elliptic problem.

The precise Hilbert space on which the semigroup acts will be seen to be of considerable importance. In particular, the more rapidly decaying a quasinormal mode, the higher the regularity of the Hilbert space one should consider in order to identify the mode. This has a natural interpretation, since for rough initial data one may construct slowly decaying solutions by placing a suitable lump of energy on the horizon. The higher the regularity of the initial data, the smoother such a lump must be and the more rapidly it will decay. 

We will focus our attention on asymptotically anti-de Sitter (AdS) black holes. The quasinormal modes of such black holes are of great interest in the context of the putative AdS/CFT correspondence (for example, see \cite{Kovtun:2005ev}). Classical gravity in asymptotically AdS backgrounds has also recently attracted attention, owing to a conjectured instability \cite{daf, Anderson:2006ax} of the anti-de Sitter spacetime for which numerical evidence has recently been provided \cite{Bizon}. One feature of asymptotically AdS black holes is that there is a large zoology of spacetimes, and a variety of linear systems which are considered in these backgrounds. Our methods allow us to treat such problems in full generality.

With our definition in terms of the semigroup associated to a regular slicing, we are able to prove discreteness of the quasinormal frequencies of strongly hyperbolic operators on globally stationary asymptotically AdS black holes. The class of operators we consider allows us to treat the Klein-Gordon, Dirac and Maxwell equations as well as many others that often arise.  Our definition does not require any separability of the equations under consideration, nor any real analyticity of the metric. We can show the same result holds for locally stationary black holes provided we restrict to solutions consisting of a finite number of angular modes. It can be readily extended to the asymptotically de Sitter case, where it is closely related to the approach of Vasy in \cite{vasy10} (see the discussion below). A final advantage of our approach is that it easily permits consideration of perturbations which do not vanish on the horizon. We are able to relate the results back to the standard definition of quasinormal modes as resonances in the meromorphic continuation of a resolvent.

\subsection{Relation to previous works}

Quasinormal modes of black holes have a long history of study in the physics literature, and we would not be able to do justice to the whole body of work in the space available here. We instead refer the reader to the review articles \cite{Kokkotas:1999bd, Berti:2009kk, Konoplya:2011qq} for a survey of the many results in this area. This work typically focusses on explicit metrics and systems for which variables may be separated. Of particular relevance to us are approaches which use regularity on the horizon as a boundary condition. For an example of this, see \cite{Horowitz:1999jd}. For other early work on quasinormal modes in asymptotically anti-de Sitter spacetimes, see \cite{Chan:1996yk,Chan:1999sc,Cardoso:2001vs,Cardoso:2001bb,Cardoso:2001hn}.

More mathematical study of quasinormal modes was initiated by Bachelot \cite{Bachelot, Bachelot2} for the asymptotically flat Schwarzschild black hole. More recently,  very impressive results for asymptotically de Sitter spacetimes have been obtained. Quasinormal modes for the Schwarzschild-de Sitter spacetime were defined in \cite{barreto1997distribution} by making use of the results of \cite{Mazzeo1987260} on the meromorphic continuation of the Laplace resolvent for asymptotically hyperbolic manifolds. The high frequency limit was studied in \cite{BonyHafner, sabaretto}, where an exponential decay rate was proven, together with an expansion in terms of quasinormal modes. This was then extended to slowly rotating Kerr-de Sitter by Dyatlov, who used separation of variables to define the quasinormal modes in \cite{Dyatlov:2010hq}. The high frequency limit was studied in \cite{Dyatlov:2011jd,Dyatlov:2011zz, Wunsch:2011fk, NZ}. More recently, Dyatlov has extended these results to spacetimes close to Kerr-de Sitter \cite{Dyatlov:2013hba}, using methods developed in \cite{vasy10} for more general stationary asymptotically de Sitter spacetimes. 

The seminal paper of Vasy \cite{vasy10} is of particular importance in this story. In this paper, the quasinormal modes are defined for a large class of spacetimes, which includes the Kerr-de Sitter case (but not including asymptotically anti-de Sitter spacetimes). The methods used are primarily those of microlocal analysis (as is the case for several of the papers mentioned in the previous paragraph). The current paper can be viewed as a parallel development of some of the results of Vasy, using physical space methods in place of microlocal arguments. In particular, this approach allows us to make use of the renormalised energy spaces introduced in \cite{Warnick:2012fi} in order to incorporate general boundary conditions for the fields at conformal infinity. The two key effects which we exploit in this paper, the redshift estimate and the enhanced redshift effect, have a loose analogue in the radial points estimate and the elliptic estimate of Vasy. The methods we use, however, are significantly different. Vasy goes further than our results, in obtaining high frequency results and non-trapping estimates, as well as being able to treat spacetimes containing an ergoregion, provided a certain non-trapping assumption holds.

In the anti-de Sitter case, quasinormal modes for the Klein-Gordon equation with Dirichlet boundary conditions on the Schwarzschild-AdS black hole have been studied by Gannot \cite{Gannot:2012pb}. Gannot uses a `black-box' approach to define the quasinormal modes after separation of variables and furthermore finds a sequence of quasinormal frequencies which approach the imaginary axis exponentially rapidly.  The work of Holzegel and Smulevici \cite{HolSmul,Holzegel:2013kna} in the Kerr-AdS case is closely related, although they do not directly construct the quasinormal modes.

\subsection*{A note on frequency conventions}

Because we make heavy use of the Laplace transform, it is convenient to work with a definition of quasinormal frequencies in which the time dependence of a quasinormal mode with frequency $s$ is $e^{s t}$. As a result, decaying QNM have frequencies which reside in the left half-plane. The quasinormal frequency is more often defined to be $\omega = i s$, so that the time dependence of a mode is $e^{-i \omega t}$ and decaying modes inhabit the lower half plane. All of our results may be easily restated with this convention.

\subsection{The traditional approach} \label{traditional section} 

Before discussing our results in more detail, we give here a brief summary of the usual approach to quasinormal modes for asymptotically AdS black holes as considered in the physics literature. See for example \cite{Kokkotas:1999bd, Konoplya:2011qq} for a more detailed exposition. In the simplest case of a conformally coupled Klein-Gordon field (which permits us to ignore complications at infinity), analysis of the Klein-Gordon equation may be reduced to that of  the $(1+1)$-dimensional wave equation with potential:
\ben{wscr}
\psi_{\tau \tau}-\psi_{xx} + V(x)\psi = 0,
\een 
where $0\leq x<\infty$. We require boundary conditions at $x=0$, which for concreteness we assume to be Dirichlet. $x=0$ corresponds to the conformal infinity of the black hole and $x=\infty$ to the horizon. To solve this equation for $\tau \geq 0$, we may Laplace transform in the time variable\footnote{The Fourier transform is also often used, however in the context of an initial value problem, the Laplace transform is more natural. Of course, for complex values of the spectral parameter, the two are essentially the same.}, to deduce that
\ben{lap1}
-\hat{\psi}_{xx} + (V(x)+s^2) \hat{\psi} = f, \qquad f(x) = s \psi(0, x)+ \psi_\tau(0, x).
\een
In order to construct $\psi(\tau, x)$, we need to solve \eq{lap1} for $\hat{\psi}(s,x) \in L^2(\mathbb{R})$ where $s$ lies in the half-plane $\Re(s)>c_0$ for some $c_0$. Then we can invert the Laplace transform with the Bromwich integral \cite{HilleLaplace}:
\ben{Lapinv}
\psi(\tau, x) = \lim_{T\to \infty} \frac{1}{2 \pi i} \int_{c-iT}^{c+iT} e^{s \tau} \hat{\psi}(s,x) ds, \qquad c>c_0.
\een

For simplicity\footnote{The potentials arising in the black hole context typically have $V$ vanishing exponentially rapidly for large $r$, but the analysis is not significantly different.}, we'll assume that the smooth potential $V(x)$ has support in $[0, K]$. We denote by $u^0(s,x)$ the unique smooth solution of
\ben{lap2}
-{u}_{xx} + (V(x)+s^2) u =0,
\een
with $u^0(s,0)=0, u^0_x(s,0)=1$ and by $u^\infty(s, x)$ the unique smooth solution equal to $e^{-s x}$ for $x>K$. Suppose for some $s$ with $\Re(s)>0$ that these two solutions are not independent, then $u^0(s, x) = \lambda u^\infty(s, x)=u$ is smooth and decays exponentially for large $x$. Multiplying \eq{lap2} by $\overline{s} \overline{u}$, integrating over $x$ and taking the real part, we deduce that
\be
\Re(s) \int_0^\infty \abs{u_x}^2 + (V(x) + \abs{s}^2) \abs{u}^2 dx = 0,
\ee
which if $\abs{s}^2> -\min_r V(r)$ implies $u\equiv 0$, a contradiction. Thus there exists a $c_0\geq 0$ such that $\Re(s)>c_0$ implies that $u^0(s, x)$ and $u^\infty(s,x)$ are linearly independent. We may form the Wronskian of the two functions and find
\be
W(s) = \left| \begin{array}{cc} u^\infty(s, x_0) & u^0(s, x_0) \\ u^\infty_x(s, x_0) & u^0_x(s, x_0)\end{array} \right|, \qquad W(s)\neq 0,
\ee
where $W(s)$ is independent of the point $x_0$ at which the Wronskian is evaluated. We define the Green's function for the operator \eq{lap2} by
\be
G(s; x, \xi):= \left \{ \begin{array}{ll} \frac{u^0(s, x) u^\infty(s, \xi)}{W(s)}, & x<\xi, \\ \frac{u^\infty(s, x) u^0(s, \xi)}{W(s)}, & x>\xi.\end{array}\right.
\ee
We then have that the solution of \eq{lap1} is given by:
\ben{Gint}
\hat{\psi}(s, x) = \int_0^\infty G(s; x, \xi) f(\xi) d\xi.
\een

Now, the Green's function is in fact \emph{holomorphic}\footnote{more precisely, the family of operators mapping $f$ to $\hat{\psi}$ given by \eq{Gint} is a holomorphic family of bounded operators $L^2(\R)\to L^2(\R)$ which admits a meromorphic extension as operators $L^2_c(\R) \to L^2_{loc.}(\R)$ for $\Re(s)\leq c_0$.} in $s$ provided that $\Re(s)>c_0$. Thus we may consider the analytic extension to $\Re(s)\leq c_0$. The functions $u^\infty(s, x)$ and $u^0(s, x)$ are perfectly well behaved for any complex value of $s$, so the Green's function can only fail to be holomorphic if $W(s)$ vanishes for some $s$. Since $W(s)$ is holomorphic, it can have only isolated zeros, and these occur exactly at the values of $s$ where $u^\infty$ and $u^0$ are linearly dependent. These values of $s$ are known as \emph{quasinormal frequencies} and the corresponding function $u=u^\infty \propto u^0$ is a \emph{quasinormal mode}. Quasinormal modes may be characterised as non-trivial solutions of \eq{lap2} satisfying
\ben{rad}
u = 0\ \  \textrm{ at }\ \  x=0, \qquad \quad  u = e^{-s x},\ \  \textrm{ for }\ \  x>K.
\een

Given a QNM, we may construct a solution of \eq{wscr}:
\ben{QNM1}
\psi(\tau, x) = e^{s \tau} u(s, x).
\een
 If $\Re(s)>0$ is a QNF, then $u(s, x)\in L^2(\R)$, and $u$ is simply a growing mode of \eq{wscr} with finite energy. If, however, $\Re(s)\leq 0$ then $u(s, x)\not \in L^2(\R)$. For any fixed value of $x$, \eq{QNM1} will decay in time with rate $-\Re(s)$, while oscillating with a frequency $\Im(s)$. Note that for large $x$ we have
 \be
 \psi(\tau, x) = e^{s(\tau-x)}, \quad r>K,
 \ee
 so that for $x$ large and positive, $\psi(\tau, x)$ is a right moving wave. This is often taken to implement the physical condition that nothing is `coming in from infinity'. These are usually referred to as `ingoing' boundary conditions\footnote{Recall that the horizon is at $x=\infty$, so this is a wave travelling towards the black hole}.
 
 One may hope to deform the contour of integration in \eq{Lapinv} to pick up contributions from the poles at the quasinormal frequencies $s_i$, so that a generic solution to \eq{wscr},  for late times, takes the form
 \ben{qnsum}
 \psi(\tau, x) \sim \sum_{s_i} a_i e^{s_i \tau} u_i(s_i, x),\qquad \textrm{ as }\tau\to \infty,
 \een
with constants $a_i$ determined from the initial data. Whether this can be done depends on the large $\abs{s}$ behaviour of the Green's function. In certain specific cases, this can be demonstrated explicitly \cite{BonyHafner, Dyatlov:2010hq, Beyer:1998nu}.

This discussion may be extended to the case of potentials which are not compactly supported. In this case the boundary conditions required for large $r$ are modified slightly. The Green's function may also develop branch cuts in the complex plane if the potential $V$ does not decay sufficiently rapidly, as occurs in the asymptotically flat case which we do not consider here.

\subsection{Observations on the traditional approach}

While the traditional approach to QNM is undoubtedly mathematically sound, and can be extended to cases where the relevant equations do not separate, there are several criticisms that one may level, some aesthetic and some practical. We shall list here some objections, not all of which are independent.
\begin{enumerate}[1.]
\item The ingoing boundary conditions, implying that no information is `coming in from the horizon' are rather unnatural. The general solution of the one-dimensional wave equation, 
\be
\psi(\tau, x) = \psi_+(\tau+x)+\psi_-(\tau-x),
\ee
has components which `come in from infinity' in both directions, but it is not physically reasonable to require that these vanish. The quasinormal modes are thus subject to different boundary conditions to the original problem. We shall see in \S\ref{example}, \ref{ressec} that the ingoing boundary conditions can actually be too restrictive, as they arise from considering initial data supported away from the horizon.
\item The ingoing boundary conditions \eq{rad} near the horizon imply that decay in time corresponds to an exponential growth in the $x$ coordinate.
\item The QNM with $\Re(s) < 0$ do not belong to any obvious Hilbert space (as a result of the exponential growth near infinity), so are not eigenfunctions in the usual sense.
\item Any sum of quasinormal modes cannot hope to approximate the solution for all $x$, only within an arbitrarily chosen range $x<x_0$ (see for example the results of \cite{Beyer:1998nu}).  In other words the error term implicit in \eq{qnsum} cannot be uniform in $x$.
\item The construction of quasinormal modes in terms of a meromorphic extension through the continuous spectrum obscures their physical meaning. It also seems rather arbitrary that the growing modes and the decaying modes should be treated differently.
\item As a matter of practical calculation, the boundary conditions are hard to implement when no analytic solution is available. This is because one wishes to set to zero the coefficient of an exponentially decaying term, which is hard to pick out numerically against the background of an exponentially growing term. It should be noted that methods have been developed which successfully overcome these problems, for example the method of complex scaling (also known as the Perfectly Matched Layer method \cite{Berenger1994185}).
\item The definition of QNM as poles in the meromorphic extension of the scattering resolvent can be extended to cases where the wave equation does not separate, but the meaning of the boundary conditions becomes even less clear.
\item The definition requires the presence of a bifurcate Killing horizon in the spacetime. This is an unphysical requirement, since black-holes arising as the endpoint of a gravitational collapse will not contain such a structure.
\end{enumerate}

It should be stressed that these issues have been overcome in the mathematical literature on the subject of QNM. For Schwarzschild-de Sitter, \cite{barreto1997distribution, sabaretto} address these questions. For more general spacetimes, the work of Vasy \cite{vasy10} addresses all of these points. Our goal is to establish that these issues may be resolved in a relatively straightforward way by considering a different slicing of the spacetime which is regular on the horizon.

\subsection{The regular slicing approach}

If one reflects on the problems listed above, it appears that the source of many of them is the fact that energy is conserved for \eq{wscr}. We can avoid this issue by changing to a new set of coordinates which are motivated by black hole coordinates which are regular on the horizon. Let us return to equation \eq{wscr}, and introduce new coordinates $\rho, t$ by:
\begin{align*}
\rho &= 1- \tanh{x},\\
t &= \tau - x + \tanh{x}.
\end{align*}
We have that $0<\rho\leq 1$, with $\rho \to 0$ corresponding to $x \to \infty$, so that the horizon is now at $\rho=0$ while the AdS conformal boundary has been mapped to $\rho = 1$. After dividing through by a factor $\rho(2-\rho)$, the equation \eq{wscr} becomes:
\begin{align}
\nonumber 0&= (1+(1-\rho)^2) \frac{\partial^2 \phi}{\partial t^2} -  \frac{\partial}{\partial \rho}\left(\rho(2-\rho) \frac{\partial \phi }{\partial \rho} \right)\\ & \quad  -  (1-\rho)^2 \frac{\partial^2 \phi}{\partial \rho \partial t}- \frac{\partial}{\partial \rho}\left( (1-\rho)^2 \frac{\partial \phi}{\partial t}\right) + \frac{\tilde{V}}{\rho(2-\rho)} \phi. \label{regular wave}
\end{align}
Here $\phi(t, \rho) = \psi(\tau(t, \rho), x(t, \rho))$ and $\tilde{V}(\rho) = V(\tanh^{-1}(1-\rho))$. Notice that since $V$ vanishes for large $x$, the combination $\rho^{-1} \tilde{V}$ is smooth up to $\rho=0$. Now let us Laplace transform this equation in the time variable, $t$. We will  denote this Laplace transform by $\hat{\phi}(s, \rho)$. Note that $\hat{\phi}$ is related to $\hat{\psi}$ of  \S \ref{traditional section}, but not by the na\"ive change of coordinates $x \to \rho$. We find that $\hat{\phi}$ obeys:
\begin{align}
\nonumber g&= \hat{L}_s \hat{\phi} := -  \frac{\partial}{\partial \rho}\left(\rho(2-\rho) \frac{\partial \hat{\phi} }{\partial \rho} \right)  -  s (1-\rho)^2 \frac{\partial \hat{\phi}}{\partial \rho}- s\frac{\partial}{\partial \rho}\left( (1-\rho)^2 \hat{\phi} \right) \\ & \quad+ \left[ \frac{\tilde{V}}{\rho(2-\rho)} +s^2 (1+(1-\rho)^2) \right]\hat{\phi}\label{regular operator}
\end{align}
here $g$ is a function constructed from initial data, and we have introduced the degenerate elliptic operator $\hat{L}_s$. In order to construct $\phi(t, \rho)$ by the inverse Laplace transform, we need to be able to invert $\hat{L}_s$. The key result is the following Lemma:
\begin{Lemma}\label{toy lemma}
Let $I = [0, 1]$ be the closed unit interval. Suppose that $g \in H^{k-1}(I)$, then for $s$ belonging to the half-plane $\Re(s) >(\frac{1}{2} - k)$ either:
\begin{enumerate}[i)]
\item The equation \eq{regular operator} admits a unique solution $\hat{\phi}(s, \cdot) \in H^{k}(I)\cap H^{k+1}_{loc.}(I)$ which vanishes at $\rho=1$ for any $g$,
 \\
 or:
\item  There exists  $w(s, \cdot)\in C^\infty(I)$, vanishing at $\rho=1$, which solves the homogeneous problem, i.e.\ a solution of $\hat{L}_s w= 0$.
\end{enumerate}
Moreover, possibility $ii)$ can only occur for isolated values of $s$.
\end{Lemma}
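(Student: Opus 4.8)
\emph{Proof proposal.} Since \eq{regular operator} is a linear second-order ODE in $\rho$ on $[0,1]$, the plan is to analyse it directly as a boundary value problem, treating $\rho=0$ not as a point where a boundary condition is imposed but as a regular singular point at which the requirement $\hat\phi\in H^{k}$ selects a branch of solutions. First I would examine $\hat{L}_s$ near $\rho=0$. Using that $\rho^{-1}\tilde V$ is smooth there and that $\rho(2-\rho)\sim 2\rho$, $(1-\rho)^2\to 1$, the leading behaviour is $-2\rho\,\hat\phi''-(2+2s)\hat\phi'+O(1)\hat\phi$, so $\rho=0$ is a regular singular point with indicial equation $\lambda(\lambda+s)=0$. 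The two Frobenius exponents are therefore $\lambda=0$ and $\lambda=-s$: one branch is smooth at the horizon, $u_{\mathrm{reg}}(s,\rho)=1+O(\rho)$, while the other behaves like $\rho^{-s}$. A direct computation shows that $\rho^{-s}$ and its first $k$ derivatives lie in $L^2$ near $\rho=0$ precisely when $\Re(s)<\half-k$; hence for $\Re(s)>\half-k$ the singular branch is excluded from $H^{k}$, and membership in $H^{k}$ forces the solution to coincide with the regular branch near $\rho=0$. This is the mechanism by which the degenerate weight replaces the traditional ``ingoing'' boundary condition at the horizon.

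With the endpoint behaviour understood, I would construct the solution by the fundamental-solution method. Let $u_0(s,\cdot)$ denote the solution regular at $\rho=0$ (the $\lambda=0$ branch) and $u_1(s,\cdot)$ the solution satisfying the Dirichlet condition $u_1(s,1)=0$; both can be chosen to depend holomorphically on $s$ on the half-plane $\Re(s)>\half-k$, and their Wronskian $W(s)$ is then holomorphic in $s$. Away from $\rho=0$ the operator in \eq{regular operator} is uniformly elliptic, since $\rho(2-\rho)>0$ on $(0,1]$. When $W(s)\neq0$ the two solutions are independent, so I can form the Green's function $G(s;\rho,\xi)$ in the standard way and set $\hat\phi(s,\cdot)=\int_0^1 G(s;\rho,\xi)\,g(\xi)\,d\xi$. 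By construction this $\hat\phi$ vanishes at $\rho=1$ and matches the regular branch at $\rho=0$; interior and boundary elliptic regularity upgrade it to $H^{k+1}_{\mathrm{loc}}$ away from the degenerate endpoint, while the regular-singular-point structure together with $g\in H^{k-1}$ caps the global regularity at exactly $H^{k}$. Uniqueness follows because any two $H^{k}$ solutions vanishing at $\rho=1$ differ by an $H^{k}$ homogeneous solution, which must be a multiple of $u_0$; but $W(s)\neq0$ says $u_0$ does not vanish at $\rho=1$, so the difference is zero. This establishes alternative (i).

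Alternative (ii) is exactly the degenerate case $W(s)=0$: then $u_0\propto u_1$, so $w:=u_0$ is a nontrivial solution of $\hat{L}_s w=0$ that is regular at $\rho=0$ and vanishes at $\rho=1$. Smoothness of $w$ on $[0,1]$ follows from ellipticity on $(0,1]$ and from $w$ lying on the $\lambda=0$ branch at $\rho=0$, the only delicate point being the exceptional values $s\in\N$ where the exponents differ by an integer, at which one checks that the regular branch remains free of logarithms. Finally, to see that (ii) occurs only for isolated $s$, I would show that $W$ is holomorphic and not identically zero. Non-triviality is where the energy philosophy enters: multiplying $\hat{L}_s w=0$ by $\overline{s}\,\overline{w}$, integrating over $[0,1]$ and taking real parts, exactly as in the computation of \S\ref{traditional section}, yields a coercive identity forcing $w\equiv0$ once $\Re(s)$ is sufficiently large, so $W(s_*)\neq0$ for some $s_*$. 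A holomorphic function that is not identically zero has only isolated zeros, giving the claim. Equivalently, one may package steps (i)--(iii) by showing $\hat{L}_s$ is a holomorphic family of index-zero Fredholm operators from the relevant $H^{k}$ domain to $H^{k-1}$, invertible for large $\Re(s)$, and invoke the analytic Fredholm theorem.

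The step I expect to be the main obstacle is the regularity analysis at the degenerate endpoint $\rho=0$: establishing rigorously that $H^{k}$ membership is equivalent to selecting the regular Frobenius branch, that the constructed $\hat\phi$ genuinely lies in $H^{k}(I)\cap H^{k+1}_{\mathrm{loc}}(I)$ with the sharp threshold $\Re(s)>\half-k$, and that the solution map gains exactly one order of Sobolev regularity relative to the degenerate weight. This is a weighted estimate at a regular singular point rather than a routine elliptic estimate, and the borderline cases where the indicial roots $0$ and $-s$ differ by an integer require separate attention.
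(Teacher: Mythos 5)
Your route --- Frobenius analysis at the regular singular point $\rho=0$, a Green's function built from the regular branch $u_0$ and the Dirichlet branch $u_1$, and isolated zeros of a holomorphic Wronskian --- is genuinely different from the paper's proof, which never solves the ODE but instead recycles the hyperbolic multiplier estimates (the $\overline{s}\,\overline{u}$ ``Killing'' estimate and the $-\partial_\rho\overline{u}$ ``redshift'' estimate) to invert $\hat{L}_s+\gamma$, gains compactness from the $L^2\to H^1$ smoothing, applies the analytic Fredholm theorem, and reaches higher $k$ by differentiating the equation (which shifts $s\to s+1$). However, your version has a genuine gap at the integer coincidences of the indicial roots, and you have placed the exceptional set on the wrong side: the roots $0$ and $-s$ differ by an integer for $s\in\Z$, and the dangerous points are $s\in-\N$, which lie \emph{inside} the half-plane $\Re(s)>\half-k$ once $k$ is large --- not $s\in\N$, which are harmless since there the excluded singular branch carries any logarithms. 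Two things break at $s=-n$. First, the exponent-$0$ Frobenius solution is generically only \emph{meromorphic} in $s$: the recursion determines the coefficient $a_n$ by dividing by the indicial polynomial $-2n(n+s)$, so $u_0(s,\cdot)$ has potential poles on $-\N$ and the claimed holomorphic $W(s)$ does not exist as stated; its zeros and poles near $-\N$ need not correspond to alternative $ii)$. Second, your selection principle fails there: the claim that $\rho^{-s}\in H^k$ near $0$ \emph{precisely} when $\Re(s)<\half-k$ is false for $-s\in\N$, where $\rho^{-s}$ is a polynomial. When the exponent-$0$ solution is log-free at $s=-n$, the entire two-dimensional solution space of $\hat{L}_s w=0$ is smooth, so $H^k$ membership imposes no condition at $\rho=0$ and the Dirichlet solution at $\rho=1$ is automatically an eigenfunction. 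This is not a pathology but a central phenomenon of the lemma: the paper explicitly warns of extra ``regularity'' QNF at $s\in-\N$, and its example in \S\ref{example} realizes it ($w_1=e^{s\rho}$ and $w_2=e^{s\rho}\left[1-\left(\frac{2-\rho}{\rho}\right)^s\right]$ are both smooth at $s=-n$, and $\Lambda_{QNF}=-\N$). Your Wronskian dichotomy, as written, would misclassify or miss exactly these frequencies.

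Beyond the resonant set, note also that the step you defer as the ``main obstacle'' --- that the solution operator maps $H^{k-1}$ into $H^k(I)\cap H^{k+1}_{loc.}$ with the sharp threshold $\Re(s)>\half-k$ --- is precisely the quantitative heart of the lemma; the paper's degenerate multiplier estimates \eq{first estimate}--\eq{final estimate} and the commutation trick (which replaces your Frobenius bookkeeping at higher $k$ and, unlike the ODE method, survives in the genuinely PDE setting of Theorem \ref{mainthm}) are what supply it. To repair your argument you would need to abandon the single branch $u_0$ near $s\in-\N$ in favour of a meromorphic family together with a separate, case-by-case analysis at the negative integers (log present: admissible space one-dimensional but spanned by the $\rho^{n}$ branch; log absent: $s$ is automatically in case $ii)$), or else fall back on the Fredholm packaging you mention at the end --- which requires proving the weighted estimates anyway, and is the paper's route.
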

%To see why this is plausible, consider trying to invert  \eq{regular operator} by constructing a Green's function, as in \S \ref{traditional section}. The operator has a regular singular point at $\rho=0$, such that when we solve the homogeneous problem we expect one solution that is smooth at $\rho=0$ and another that behaves like $\rho^{-s}$ times a smooth function\footnote{There are some subtleties when $s \in -\mathbb{N}$ which we ignore for the time being. These are treated in the body of the paper.}. Provided $\Re(s)> \left(\frac{1}{2}-k \right)$ then only the first of these two belongs to $H^{k}(I)$. In other words, the requirement that the solution $\hat{\phi}$ be in $H^k(I)$ selects a boundary condition at $\rho=0$. We can of course always construct a solution satisfying the Dirichlet condition at $\rho = 1$, and so long as these two solutions are linearly independent we  can construct a Green's function and find $\hat{\phi}$. 
\begin{proof}
We will very briefly sketch a proof of this result, with an emphasis on showing how to derive the relevant estimates. Let us first take $k=1$, and for convenience we'll assume $V \geq 0$. 
\begin{enumerate}[1.]
\item Let $\gamma>1$ be a constant which we will fix later. Multiplying $\hat{L}_s u + \gamma u$ by $\overline{s} \overline{u}$, taking the real part  and integrating by parts we find
\begin{align*}
 &\int_0^1 \Re\left[ \overline{s} \overline{u} (\hat{L}_s u + \gamma u)\right] d\rho -\abs{s}^2 \abs{u(0)}\\ & \quad = \Re(s) \int_0^1\left\{ \rho (2-\rho) \abs{\partial_\rho u}^2 + \left( \frac{\tilde{V}}{\rho(2-\rho)} +\abs{s}^2 (1+(1-\rho)^2) + \gamma\right)\abs{u}^2\right\} d\rho .
\end{align*}
From here we readily deduce that if $\Re(s)>0$
\ben{first estimate}
E(u):= \int_0^1 \left\{ \rho \abs{\partial_\rho u}^2  + \gamma\abs{u}^2 \right\} d \rho \leq \epsilon \norm{ (\hat{L}_s  + \gamma) u)}{L^2}^2 + \frac{C^{(1)}_s}{\epsilon}\norm{u}{L^2}^2,
\een
 for any $\epsilon>0$, holds for some constant $C^{(1)}_s$, independent of $\gamma$ and $\epsilon$. Notice that the weight in front of the $\abs{\partial_\rho u}^2$ term  in this estimate degenerates at $\rho=0$.

\item Next, we multiply $\hat{L}_s u + \gamma u$ by $(-\partial_\rho \overline{u})$, take the real part and integrate by parts to obtain:
\begin{align*}
 &-\int_0^1 \Re\left[ \partial_\rho \overline{u} (\hat{L}_s u + \gamma u)\right] d\rho - \frac{1}{2}\abs{\partial_\rho u(1)}^2 - \frac{\gamma}{2} \abs{u(0)}^2 \\ & \quad =  \int_0^1 \left[ (1-\rho) + 2 \Re(s) (1-\rho)^2 \right] \abs{\partial_\rho u}^2  d\rho \\ &-  \int_0^1\Re\left\{\left( \frac{\tilde{V}}{\rho(2-\rho)} +s^2 (1+(1-\rho)^2)+ 2s(1-\rho)\right) (\partial_\rho \overline{u} ) u \right\} d\rho .
\end{align*}
Notice that the term involving an integral over $\abs{\partial_\rho u}^2$ does not have a degenerate weight at $\rho=0$ and moreover is positive provided $\Re(s)>-\frac{1}{2}$. Adding a multiple of $E(u)$ to both sides of this equation, we deduce that for $\Re(s)>-\frac{1}{2}$ we have:
\ben{second estimate}
\norm{u}{H^1} +\gamma\norm{u}{L^2}^2 \leq C^{(2)}_s\left[ \norm{ (\hat{L}_s  + \gamma) u)}{L^2}^2 + E(u)\right]
\een
holds for some constant $C^{(2)}_s$, independent of $\gamma$.
\item Now let us return to \eq{first estimate}. Notice that $\hat{L}_{s_1} - \hat{L}_{s_2}$ is a first order differential operator. Let us therefore apply \eq{first estimate} with $\epsilon$ sufficiently small to deduce that if $\Re(s) > -\frac{1}{2}$ we have that:
\bean
E(u) &\leq& \epsilon \norm{ (\hat{L}_{s+\frac{1}{2}}  + \gamma) u)}{L^2}^2 + \frac{C^{(1)}_{s+\frac{1}{2}}}{\epsilon}\norm{u}{L^2}^2 \\
&\leq &  \delta\norm{u}{H^1}^2 + C^{(3)}_{s, \delta}  \left( \norm{ (\hat{L}_{s}  + \gamma) u)}{L^2}^2  + \norm{u}{L^2}^2 \right)
\eean
holds for any $\delta>0$ for some constant $C^{(3)}_{s, \delta}$. Taking $\delta$ small enough and inserting this into \eq{second estimate} we have:
\be
\norm{u}{H^1}^2  +\gamma\norm{u}{L^2}^2  \leq C^{(4)}_s\left[ \norm{ (\hat{L}_s  + \gamma) u)}{L^2}^2 + \norm{u}{L^2}^2\right]
\ee
for $C^{(4)}_s$ independent of $\gamma$. Finally then, we take $\gamma$ large enough to absorb the $L^2$ term on the right hand side and we conclude that
\ben{final estimate}
\norm{u}{H^1}^2 \leq C  \norm{ (\hat{L}_s  + \gamma) u)}{L^2}^2,
\een
provided $\Re(s) >-\frac{1}{2}$. 
\item The estimate \eq{final estimate} shows that $\hat{L}_s + \gamma$ is injective for $\gamma$ large enough provided $\Re(s)>-\frac{1}{2}$. By considering an adjoint problem it is possible to show that $\hat{L}_s + \gamma$ is in fact surjective. Thus $(\hat{L}_s + \gamma)^{-1}$ exists and moreover is a compact operator by virtue of the Rellich-Kondrachov theorem, since it is a bounded map $L^2 \to H^1$. An application of the analytic Fredholm theorem establishes the result.
\item To increase $k$ we differentiate the original equation. Rather than show this in detail, we look at the structure of the equation near $\rho = 0$:
\ben{simplified}
g = -   \frac{\partial}{\partial \rho}\left(\rho \frac{\partial u }{\partial \rho} \right)  -  2 s  \frac{\partial u}{\partial \rho} +  \ldots
\een
where we drop terms which are subleading near $\rho =0$. Differentiating this equation, we observe that \eq{simplified} is equivalent to a system of equations for $(u, u_\rho)$ which near $\rho=0$ have the structure:
\bean
g &=& -   \frac{\partial}{\partial \rho}\left(\rho \frac{\partial u }{\partial \rho} \right)  -  2 (s+1)  \frac{\partial u}{\partial \rho} + 2 u_{\rho} + \ldots
\\ \partial_{\rho}g &=& -   \frac{\partial}{\partial \rho}\left(\rho \frac{\partial u_{\rho} }{\partial \rho} \right)  -  2 (s+1)  \frac{\partial u_{\rho} }{\partial \rho} + \ldots
\eean
where $u_{\rho} = \partial_\rho u$. We have used this fact to modify the first equation by adding and subtracting the same term. The structure of this system near $\rho=0$ is the same as that of \eq{simplified}, but with $s \to s+1$. We also find that $u_\rho$ inherits a boundary condition at $\rho = 1$. As a result we can repeat all of the estimates above, but now for a system of equations, and we will conclude that
\be
\norm{u}{H^2}^2 \leq C' \norm{ (\hat{L}_s  + \gamma') u)}{H^1}^2
\ee
holds for some $C', \gamma'$, provided $\Re(s+1) > -\frac{1}{2}$. Repeatedly differentiating we obtain the result for any $k$.
\end{enumerate}
\end{proof}

There are three key ingredients in this proof which have interpretations in terms of the geometry of the black hole:
\begin{itemize}
\item The estimate \eq{first estimate} is related to the \emph{Killing energy estimate} associated to the stationary Killing vector. It is well known that this degenerates on the horizon.
\item The estimate \eq{second estimate} is related to the \emph{redshift estimate} for the black hole horizon \cite{Mihalisnotes}. This makes crucial use of the positivity of the surface gravity.
\item The fact that commuting the equation makes the structure at the horizon more favourable is due to the \emph{enhanced redshift effect} \cite{Mihalisnotes}.
\end{itemize}
These three properties are robust, in the sense that they are present for any non-extremal black hole, and it is this fact which we shall exploit when studying general black hole spacetimes.

We have in fact shown that for given $g\in H^{k-1}(I)$, the solution $\hat{\phi}(s)$ to \eq{regular operator} is meromorphic on $\Re(s) >(\frac{1}{2} - k)$. We can thus give a new definition of quasinormal frequencies to be the locations, $s_i$, of the poles in this half-plane, with quasinormal modes given by the corresponding solutions $w_i(s_i, \cdot)$ of the homogeneous problem. By taking $k$ progressively larger, we can define the quasinormal frequencies throughout the complex plane in this fashion. Where confusion is likely to arise, we shall refer to the QNF defined in this fashion as `regularity' quasinormal frequencies, and the QNF defined in the sense of \S\ref{traditional section} as `ingoing' quasinormal frequencies.

The advantage of the `regularity' definition over the `ingoing' one is that the associated quasinormal modes naturally live in $H^{k}(I)$. We can easily translate the problem of finding quasinormal modes in this context to an honest eigenvalue problem by writing $\hat{\phi} = \phi_1$ and introducing an auxiliary function $\phi_2$. Then $\hat{\phi}$ solves the homogeneous problem if and only if the vector $\bm{\phi} = (\phi_1, \phi_2) \in H^k(I) \times H^{k-1}(I)$ solves
\be
(\mathcal{A}-s) \bm{\phi} = 0,
\ee
where
\be
\mathcal{A}\bm{\phi} = \left(\begin{array}{cc}0 & 1 \\-P_2 & -P_1\end{array}\right)\left(\begin{array}{c}\phi_1 \\\phi_2\end{array}\right),
\ee
with
\begin{align*}
P_2\phi &=  -  \frac{\partial}{\partial \rho}\left(\rho(2-\rho) \frac{\partial  \phi }{\partial \rho} \right) +  \frac{\tilde{V}}{\rho(2-\rho)} \phi, \\
P_1 \phi &=   -(1-\rho)^2 \frac{\partial \phi}{\partial \rho}- \frac{\partial}{\partial \rho}\left( (1-\rho)^2 \phi \right).
\end{align*}

The operator $\mathcal{A}$ has a natural interpretation arising from the equation \eq{regular wave}. This equation naturally defines a semigroup $\mathcal{S}(t)$, which maps initial data $(\phi_0, \phi'_0) \in H^k(I) \times H^{k-1}(I)$ to the solution $(\phi(t, \cdot), \phi_t(t, \cdot))\in H^k(I) \times H^{k-1}(I)$ of \eq{regular wave} at time $t$. The operator $\mathcal{A}$ is the infinitesimal generator of this solution semigroup. This discussion suggests how we can extend the definition to more general set-up. Before we do so, we briefly relate our new definition of quasinormal modes to the previous one.

Let us note that $s_i$ is a `regularity' quasinormal frequency if and only if there exists a corresponding solution of the equation \eq{regular wave} which is smooth up to $\rho=0$ of the form:
\be
\phi(t, \rho) = e^{s_i t} w(s_i, \rho).
\ee
Recall that in  \S\ref{traditional section} we saw that at each ingoing quasinormal frequency, we found a solution of \eq{wscr} which for large $x$ has the form
\be
\psi(\tau, x) = e^{s_i (\tau-x)}, \qquad x>K
\ee
Transforming to $(\tau, \rho)$ coordinates, we see that for $\rho$ close to $0$ this solution maps to:
\be
\psi(\tau(t, \rho), x(t, \rho)) = e^{s_i (t+\rho-1)},
\ee
which certainly extends to a smooth function at $\rho=0$. Thus any `ingoing' quasinormal frequency is a `regularity' quasinormal frequency. The converse is not true. There may be extra `regularity' quasinormal frequencies at points $s\in -\mathbb{N}$ which do not correspond to quasinormal modes in the sense of  \S\ref{traditional section}. See \S \ref{ressec} for a general discussion of this issue.

\subsection{The general case}
\begin{figure}[t]
\begin{minipage}[b]{0.45\linewidth}
\centering
\input{adsbhsch2.tex}
\caption{Static slicing of AdS-Schwarzschild}
\label{fig1}
\end{minipage}
\hspace{0.8cm}
\begin{minipage}[b]{0.45\linewidth}
\centering
\input{adsbhsch.tex}
\caption{Regular slicing of AdS-Schwarzschild}
\label{fig2}
\end{minipage}
\end{figure}

As suggested by our simple example above, in the black hole case we wish to define the quasinormal modes in terms of the generator of the solution semigroup with respect to a regular slicing. In Figure \ref{fig1} we show the standard static slicing of AdS-Schwarzschild, corresponding to the $(\tau, x)$ coordinates above. The problems with the original definition of the QNM are all issues with the region where $x$ is large and $\tau$ is finite, i.e.\ in the neighbourhood of the bifurcation sphere of the horizon\footnote{$x$ is not the usual Schwarzschild radial coordinate, but is rather a tortoise coordinate}. In Figure \ref{fig2} we show a regular slicing of the horizon (which would correspond to the $(t, \rho)$ coordinates above). In this slicing, the energy is no longer conserved, but instead is bounded by its initial value. In fact, from the redshift argument \cite{Mihalisnotes, Holbnd}, we essentially have\footnote{We shall need to be more precise about the definition of the Sobolev spaces later to capture the boundary conditions at infinity, but we elide this subtlety for the purpose of the introduction.}
\ben{eng}
\norm{\psi}{H^{k}(\Sigma_t)}^2 + \norm{T{\psi}}{H^{k-1}(\Sigma_t)}^2 \leq C \left( \norm{\psi}{H^{k}(\Sigma_0)}^2+ \norm{T{\psi}}{H^{k-1}(\Sigma_0)}^2\right),
\een
where $T$ is the timelike Killing field. This implies that the solution operator for the wave equation is in fact a $C^0$-semigroup acting on $\mathbf{H}^k(\Sigma)=H^{k}(\Sigma) \times H^{k-1}(\Sigma)$.  A consequence of the Hille-Yoshida theorem (strictly, its converse) is that the generator of this semigroup\footnote{we abuse notation by using the same letter to refer to the generator regardless of which of the $\mathbf{H}^k$ spaces it is acting in.}, $\A:D^k(\A) \to \bm{H}^{k}(\Sigma)$, is a closed, unbounded, linear operator acting on a dense subspace of $\mathbf{H}^k(\Sigma)$. As such, the spectrum of $\A$ is well defined. 

We propose the following definition for quasinormal modes, see Appendix \ref{glossary} for the definition of various terms:
\begin{customdef}{\ref{QNSdef}}
Let $L$ be a strongly hyperbolic operator on a globally stationary black hole $\bhR$. Let $(D^k(\A),\A)$ be the infinitesimal generator of the solution semigroup on $\bm{H}^k(\Sigma)$ corresponding to a regular slicing of the black hole. We say that $s\in \C$ belongs to the $\bm{H}^k$-quasinormal spectrum (denoted $\Lambda^k_{QNM}$) of $L$ if:
\begin{enumerate}[i)]
\item $\Re(s) >\frac{1}{2}  w_L + (\frac{1}{2} - k)\varkappa$,
\item $s$ belongs to the spectrum of $(D^k(\A),\A)$.
\end{enumerate}
Here $\varkappa$ is the surface gravity and $w_L$ is a real number associated to $L$, defined in Definition \ref{RSdef}. If $s$ is an eigenvalue of $(D^k(\A),\A)$, we say $s$ is an $\bm{H}^k$-quasinormal frequency. The corresponding eigenfunctions are the $\bm{H}^k$-quasinormal modes.
\end{customdef}
The reason that we have to include the term $w_L$ is that by adding a first order term to the wave operator one can essentially shift the whole spectral plane to the left or right. The term $w_L$ takes account of this fact. For practical purposes one can assume $w_L=0$. This is in particular the case for the Klein-Gordon operator.

The justification for this definition will be one of the main theorems of the paper, the analogue of Lemma \ref{toy lemma} for the general case:
\begin{customthm}{\ref{mainthm}}
Let $L$ be a strongly hyperbolic operator on a globally stationary black hole $\bhR$, with boundary conditions fixed at infinity. Let $(D^k(\A),\A)$  be the infinitesimal generator of the solution semigroup on $\bm{H}^k(\Sigma)$ corresponding to a regular slicing of the black hole. Then for $s$ in the half-plane $\Re(s) >\frac{1}{2}  w_L + (\frac{1}{2} - k)\varkappa$, either:
\begin{enumerate}[i)]
 \item $s$ belongs to the resolvent set of $(D^k(\A),\A)$,
 \\
 or:
 \item $s$ is an eigenvalue of $(D^k(\A),\A)$ with finite multiplicity. 
 \end{enumerate}
Possibility $ii)$ holds only for isolated values of $s$. The resolvent is meromorphic on the half-plane $\Re(s) >\frac{1}{2}  w_L + (\frac{1}{2} - k)\varkappa$, with poles of finite rank at points satisfying $ii)$. The residues at the poles are finite rank operators.

Furthermore, if $k_1\geq k_2$, the eigenvalues and eigenfunctions of $(D^{k_1}(\A),\A)$, $(D^{k_2}(\A),\A)$ agree for $s$ in the half-plane $\Re(s) >\frac{1}{2}  w_L + (\frac{1}{2} - k_2)\varkappa$.
\end{customthm}
In particular, this implies that the set of all quasinormal modes
\be
\Lambda_{QNF} = \bigcup_{k=1}^\infty \Lambda^k_{QNF},
 \ee
 is a countable subset of $\C$, which accumulates only at infinity. 
 
What does this mean in practice? We shall see later that for a globally stationary black hole, the Klein-Gordon equation, after multiplication by a suitable function, may be decomposed as
\be
P_{2} \psi +P_{1} T\psi+TT\psi = 0,
\ee
where $P_i$ are differential operators on $\Sigma$ of order $i$. The $\mathbf{H}^k$-quasinormal frequencies are simply those values of $s$ with $\Re(s)>(\frac{1}{2}-k)\varkappa$ for which
\ben{qnmfdef}
P_2 u +sP_1u+s^2u = 0,
\een
admits a non-trivial solution $u\in H^{k}(\Sigma)$, obeying the boundary conditions at $\scri$. This definition has been used implicitly in many calculations of QNMs (e.g.\  \cite{Horowitz:1999jd}), however we believe the relation to semigroup theory for a regular slicing is novel. 

In the case of the massless wave equation, the operator appearing in \eq{qnmfdef} is essentially the stationary d'Alembert-Beltrami operator. It should be noted, however, that this depends on the spatial slicing chosen. In the Schwarzschild-AdS case the operator in \eq{qnmfdef} is \emph{not} the same as the operator one would obtain by Laplace (or Fourier) transforming the d'Alembert-Beltrami operator written in the usual Schwarzschild coordinates\footnote{This operator is the one used for example in \cite{ Dyatlov:2010hq, Gannot:2012pb}.}. This is where the choice of a regular slicing makes itself known. Contrast for example, \eq{lap1}, \eq{regular operator}.  

Our operator is elliptic away from the horizon, however the ellipticity degenerates on the horizon. The majority of the technical content of this paper consists of handling this degeneracy. The key observation is that the redshift estimates used to establish \eq{eng} can be `recycled' into elliptic estimates. We require both the `enhanced' redshift effect which comes from commuting with the redshift vector field as well as the standard redshift argument which makes use of the redshift vector field as a multiplier.

\subsection{Advantages of the regularity approach}

We discuss now how the regularity approach has advantages over the traditional approach, and in particular deals with our previous criticisms. As previously mentioned, this is not the only approach which can handle these issues \cite{ vasy10, barreto1997distribution, sabaretto}.
\begin{enumerate}[1.]
\item[1., 2.] The condition that QNM have a certain degree of regularity at the horizon is a very natural one. We can also deal with initial data which may not vanish on the horizon more easily with a regular slicing.
\setcounter{enumi}{2}
\item The QNM now belong to the Hilbert space $\mathbf{H}^k$, and are honest eigenfunctions of a closed operator in this Hilbert space
\item The QNM, multiplied by $e^{s_i t}$ represent genuine solutions of the equation, perfectly well behaved on the full extent of the spatial slice and as $t \to \infty$. As a result, a finite sum of such terms makes sense everywhere for $t\geq 0$, and we may hope to approximate the solution for late times, uniformly on $\Sigma$, in this way.
\item There is no meromorphic extension through the continuous spectrum required. In some sense, we are constructing the meromorphic extension directly as we increase $k$.
\item Since we seek eigenvalues of a (non-self adjoint) operator on a Hilbert space, there are many methods for explicit calculation, for example spectral approaches. These methods have been used implicitly in finding QNM for aAdS black holes \cite{Horowitz:1999jd}.
\item There is no assumption of separability of the wave equation required in defining the QNM in this way, and the boundary conditions are natural.
\item The definition requires only a future Killing black hole horizon. This is more physically reasonable than requiring a bifurcate horizon.
\end{enumerate}

A final advantage of the regularity approach is that we have access to standard results in semigroup theory which may be immediately applied. We give here two examples, adapted from the results of \cite{Borichev, batkai}.
\begin{Corollary}[adapted from \cite{Borichev} Theorem 1.1]\label{schcor}
Suppose that solutions of the Klein-Gordon equation, with boundary conditions fixed, on some asymptotically AdS black hole are bounded in $\H^1\times \L^2$. Furthermore suppose that there exists no quasinormal frequency on the imaginary axis (i.e. a purely oscillatory QNM). Then for any solution $\psi$ of the Klein-Gordon equation with initial data in $D^1(\A)$, we have
\be
\norm{\psi}{\H^1(\Sigma_t)} + \norm{T \psi}{\L^2(\Sigma_t)} \to 0, \qquad \textrm{ as }t \to \infty.
\ee
\end{Corollary}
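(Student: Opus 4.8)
The plan is to recognise this as an instance of a general decay theorem for bounded semigroups on Hilbert spaces, so that the only real work is to feed the conclusions of Theorem \ref{mainthm} into the semigroup machinery of \cite{Borichev}. Throughout I would work at regularity $k=1$ and use that for the Klein--Gordon operator one has $w_L=0$, so the half-plane in which Theorem \ref{mainthm} applies is $\Re(s)>-\tfrac{1}{2}\varkappa$. Since the surface gravity obeys $\varkappa>0$, this half-plane contains a full neighbourhood of the imaginary axis $i\R$, and it is precisely this overlap that makes the argument run.

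The key step -- and the only place where the geometry of the problem enters -- is to convert the spectral hypothesis into a statement about the resolvent set. First I would record that $\cS(t)$ is a $C_0$-semigroup on the Hilbert space $\H^1\times\L^2$ with generator $\A$ (the content of \eq{eng}), and that boundedness of all solutions in $\H^1\times\L^2$ gives, via the uniform boundedness principle, a uniform bound $\sup_{t\geq 0}\norm{\cS(t)}{}<\infty$. Next, for each $s\in i\R$, Theorem \ref{mainthm} (with $k=1$, $w_L=0$) asserts that $s$ either lies in the resolvent set of $(D^1(\A),\A)$ or is an $\H^1$-quasinormal frequency; the assumed absence of quasinormal frequencies on the imaginary axis excludes the latter, so $i\R\subset\rho(\A)$. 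In particular $0\in\rho(\A)$, so $\A^{-1}$ is a bounded operator, and the meromorphy clause of Theorem \ref{mainthm} shows $s\mapsto(\A-is)^{-1}$ is holomorphic along $i\R$, whence its norm is finite and locally bounded there.

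With these in hand I would invoke the Borichev--Tomilov theorem (\cite{Borichev}, Theorem 1.1): a bounded $C_0$-semigroup $\cS(t)$ on a Hilbert space whose generator satisfies $i\R\subset\rho(\A)$ obeys $\norm{\cS(t)\A^{-1}}{}\to 0$ as $t\to\infty$. Their quantitative estimate bounds $\norm{\cS(t)\A^{-1}}{}$ in terms of the growth of the resolvent norm along $i\R$; since that norm is finite on every compact subinterval (Step two), the right-hand side tends to zero, giving decay with, in general, no explicit rate. Finally, for data $\mathbf{u}_0=(\psi_0,T\psi_0)\in D^1(\A)$ I would write $\mathbf{u}_0=\A^{-1}(\A\mathbf{u}_0)$ and estimate
\be
\norm{\cS(t)\mathbf{u}_0}{\H^1\times\L^2}\leq\norm{\cS(t)\A^{-1}}{}\,\norm{\A\mathbf{u}_0}{\H^1\times\L^2}\longrightarrow 0,
\ee
so that, since $\cS(t)\mathbf{u}_0=(\psi(t,\cdot),T\psi(t,\cdot))$, the left-hand side is exactly $\norm{\psi}{\H^1(\Sigma_t)}+\norm{T\psi}{\L^2(\Sigma_t)}$ up to equivalence of norms, and the claim follows.

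I expect the substantive obstacle to be Step two rather than the abstract input: one must check that Theorem \ref{mainthm} genuinely covers all of $i\R$ (relying on $\varkappa>0$ and $w_L=0$) and that ``no quasinormal frequency on the imaginary axis'' is precisely the statement that no point of $i\R$ realises alternative $ii)$ of that theorem. A secondary point worth flagging is that we deliberately do \emph{not} claim a uniform resolvent bound along all of $i\R$; were such a bound available, Gearhart--Pr\"uss would yield exponential decay, but here only local finiteness is at hand, which is why the conclusion is decay without a rate and why the data must be taken in $D^1(\A)$ rather than in all of $\H^1\times\L^2$.
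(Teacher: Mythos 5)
Your proposal is correct and is essentially the argument the paper intends: the corollary is stated as an immediate application of the cited semigroup result, and you supply exactly the missing glue — Theorem \ref{mainthm} at $k=1$ with $w_L=0$ (so the half-plane $\Re(s)>-\tfrac{1}{2}\varkappa$ contains $i\R$) turns ``no quasinormal frequency on the imaginary axis'' into $i\R\subset\rho(\A)$, uniform boundedness of $\cS(t)$ follows from the boundedness hypothesis, and the decay $\norm{\cS(t)\A^{-1}}{}\to 0$ for data in $D^1(\A)$ is then precisely the content of \cite{Borichev}. Your closing remarks about the absence of a rate and the necessity of taking data in $D^1(\A)$ also match the paper's own discussion following the corollary.
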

Note that we do not get an explicit rate for this decay. An explicit rate can be obtained from a resolvent estimate on the imaginary axis, but this carries slightly more information than the quasinormal spectrum\footnote{The quasinormal spectrum permits us to estimate the spectral radius of $(\A-s)^{-1}$, whereas we require the operator norm. Since $\A$ is not self-adjoint, the two need not be related}. Note also that we require the data to be in $D^1(\A)$ in order to obtain decay in $\bm{H}^1(\Sigma)$. If this result were to hold for data in $\bm{H}^1(\Sigma)$, that would imply that the decay is in fact exponential. 

In Lemma \ref{hermlem} we shall show that for the Klein-Gordon equation there can be no quasinormal frequencies on the imaginary axis, except possibly at the origin. If the Klein-Gordon mass is non-zero then there can be no QNF at the origin either. Combining this with the result above and Theorems 1.2, 1.3 of \cite{Holzegel:2012wt} we are able to `upgrade' the boundedness results of that paper to local energy decay, albeit without a quantitative rate:
\begin{Corollary}
Let $\psi$ solve the Klein-Gordon equation with non-zero mass:
\be
\Box_g \psi + \mu \psi = 0, \qquad \mu \neq 0
\ee
on the Schwarzschild-AdS exterior. Suppose further that $\psi$ satisfies homogeneous Dirichlet or Neumann boundary conditions at $\scri$. Then $\psi$  has locally decaying energy in the sense that: 
\be
\norm{\psi}{\H^1(\Sigma_{t_*})} + \norm{T \psi}{\L^2(\Sigma_{t_*})} \to 0, \qquad \textrm{ as }t \to \infty,
\ee
for $t_*$ the usual time coordinate which is regular on $\hor$ (see for example \cite{Holbnd}). In the case $\mu =0$, $\psi$ decays to a constant solution.

The same holds for Kerr-AdS backgrounds satisfying the Hawking-Reall bound $r_+^2>\abs{a} l$ and for $l^{-1} \abs{a}$ sufficiently small. In the Dirichlet case only $\abs{a}<l$ is necessary.
\end{Corollary}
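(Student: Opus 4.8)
The plan is to assemble this statement from three ingredients that are already in place: the abstract semigroup stability criterion recorded in Corollary~\ref{schcor} (adapted from \cite{Borichev, batkai}), the spectral input of Lemma~\ref{hermlem}, and the external boundedness theorems of \cite{Holzegel:2012wt}. Corollary~\ref{schcor} is conditional, requiring (a) that solutions be bounded in $\H^1 \times \L^2$, and (b) that there be no quasinormal frequency on the imaginary axis. The proof thus reduces to verifying these two hypotheses for the massive Klein--Gordon field and then invoking the corollary; the massless and rotating cases demand only small additional arguments.

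For hypothesis (a), I would quote Theorems~1.2 and~1.3 of \cite{Holzegel:2012wt} directly: these supply the uniform boundedness of $\norm{\psi}{\H^1(\Sigma_{t_*})}^2 + \norm{T\psi}{\L^2(\Sigma_{t_*})}^2$ for Dirichlet or Neumann data, once the $t_*$ slicing used there is identified with the regular slicing underlying the semigroup $\cS(t)$ here. For hypothesis (b), Lemma~\ref{hermlem} states that the massive Klein--Gordon operator admits no quasinormal frequency on $i\R$, the origin included when $\mu \neq 0$. To upgrade ``no eigenvalue'' to ``the imaginary axis lies in the resolvent set'' — which is what the stability theorem actually needs — I would invoke Theorem~\ref{mainthm} with $k=1$: since $w_L = 0$ for the Klein--Gordon operator and the surface gravity $\varkappa>0$, the line $\Re(s)=0$ lies inside the half-plane $\Re(s) > \tfrac{1}{2} w_L + (\tfrac{1}{2}-1)\varkappa$ on which the Fredholm alternative of Theorem~\ref{mainthm} holds, so the absence of eigenvalues there forces every point of $i\R$ into the resolvent set of $\A$. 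With (a) and (b) established, Corollary~\ref{schcor} yields the claimed decay for data in $D^1(\A)$.

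The massless case $\mu=0$ differs only in that $s=0$ is now a genuine quasinormal frequency (Dirichlet excepted, where the conclusion is unchanged), its eigenspace being spanned by the constant solution. Since Theorem~\ref{mainthm} makes this eigenvalue isolated and of finite rank, I would introduce the Riesz spectral projection $P_0$ associated to $s=0$ and decompose the data accordingly; the boundedness input also precludes a Jordan block at the origin, so the projected component $P_0$ of the data evolves as a stationary constant solution, while the complementary component has no spectrum on $i\R$ and decays by the argument above. This gives decay to a constant. For Kerr--AdS the mechanism is identical, and the parameter restrictions enter solely through the availability of the boundedness input: under the Hawking--Reall bound $r_+^2 > \abs{a} l$ the co-rotating Killing field is causal and \cite{Holzegel:2012wt} furnishes boundedness for $l^{-1}\abs{a}$ small, while in the Dirichlet case the enhanced coercivity permits boundedness throughout $\abs{a}<l$; Lemma~\ref{hermlem} and Corollary~\ref{schcor} then apply unchanged.

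The main obstacle I anticipate is not the spectral input but the careful reconciliation of conventions between \cite{Holzegel:2012wt} and the semigroup framework. One must check that the energy controlled in the cited boundedness theorems is uniformly comparable to the relevant norm of $\A$ on $\H^1 \times \L^2$, that the time function $t_*$ generates precisely the $C^0$-semigroup whose generator is characterised by Theorem~\ref{mainthm}, and that the boundary conditions at $\scri$ are implemented identically in both settings — this is exactly the subtlety flagged in the footnote to \eqref{eng}. Once these identifications are made the decay is a formal consequence; the absence of a quantitative rate is intrinsic, reflecting that the argument uses only the qualitative emptiness of $\sigma(\A)\cap i\R$ and not a resolvent bound on the imaginary axis.
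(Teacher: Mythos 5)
Your proposal coincides with the paper's own (sketched) argument: the paper proves this corollary precisely by combining Corollary \ref{schcor}, Lemma \ref{hermlem}, and Theorems 1.2, 1.3 of \cite{Holzegel:2012wt}, and your additional details --- invoking Theorem \ref{mainthm} with $k=1$, $w_L=0$ to upgrade ``no eigenvalue on $i\R$'' to ``$i\R$ lies in the resolvent set'', and using the finite-rank Riesz projection at $s=0$ for the massless case --- are correct elaborations of steps the paper leaves implicit. One small correction: the exclusion of a QNF at $s=0$ when $\mu \neq 0$ is not part of Lemma \ref{hermlem} itself (which rules out only \emph{non-zero} purely imaginary frequencies) but is a separate assertion in the paper's text, so your attribution of that step should be adjusted.
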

 Note that for the Dirichlet case, the results of Holzegel and Smulevici \cite{HolSmul} give local energy decay with a sharp asymptotic rate of $1/\log t_*$. The step from qualitative decay to sharp quantitative decay is not trivial, and requires a detailed understanding of trapping in the spacetime.

We finally state an estimate for the `best possible' rate in terms of properties of $\Lambda_{QNF}$:
\begin{Corollary}[adapted from \cite{batkai} Propositions 3.3, 3.7]
Suppose that solutions of the Klein-Gordon equation, with boundary conditions fixed, on some asymptotically AdS black hole, are bounded in $\H^1\times \L^2$. Furthermore suppose that there exists a sequence $s_n$ of quasinormal frequencies of the Klein-Gordon equation with $\abs{\Im(s_n)} \to \infty$ as $n \to \infty$ and such that for some $C$
\be
 -C \abs{\Im(s_n)}^{-\frac{1}{\alpha}} <  \Re(s_n) \leq 0.
\ee
Then for any $\epsilon>0$, no bound of the form
\be
\norm{(\psi, T \psi)}{\L^2(\Sigma_t)\times \H^1(\Sigma_t)}\leq \frac{C}{t^{\alpha+\epsilon}}\norm{\A(\psi, T \psi)}{\L^2(\Sigma_0)\times \H^1(\Sigma_0)} , 
\ee
can hold for all solutions $\psi$ of the Klein-Gordon equation with initial data in $D^1(\A)$, with $C$ independent of $\psi$.
\end{Corollary}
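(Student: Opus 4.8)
The plan is to run a \emph{resolvent obstruction} argument: I would translate the hypothesis on the quasinormal frequencies into a lower bound for the resolvent $(\A - i\eta)^{-1}$ along a sequence of points on the imaginary axis, and then invoke the necessity direction of the semigroup correspondence of \cite{batkai} (equivalently of Borichev--Tomilov) to show that this lower bound is incompatible with any polynomial decay rate faster than $t^{-\alpha}$. Throughout I work on the Hilbert space $\bm{H}^1 = \H^1\times\L^2$, on which the solution semigroup $\cS(t)$ is bounded by hypothesis, and I use that the quasinormal frequencies are precisely the eigenvalues of the generator $\A$, as established in Theorem \ref{mainthm}.

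First I would extract the resolvent lower bound. Writing $s_n = \sigma_n + i\eta_n$ with $\sigma_n = \Re(s_n)$ and $\eta_n = \Im(s_n)$, each $s_n$ is an eigenvalue of $\A$, so there is a normalised eigenfunction $v_n$ with $\A v_n = s_n v_n$. Then
\be
(\A - i\eta_n) v_n = (s_n - i\eta_n) v_n = \sigma_n v_n,
\ee
so that $\norm{(\A - i\eta_n)v_n}{} = \abs{\sigma_n}$ and hence, provided $i\eta_n$ lies in the resolvent set of $\A$ (which will be forced by the decay hypothesis assumed below),
\be
\norm{(\A - i\eta_n)^{-1}}{} \geq \frac{1}{\abs{\sigma_n}} > \frac{1}{C}\abs{\eta_n}^{\frac{1}{\alpha}},
\ee
where the last step uses $-C\abs{\eta_n}^{-\frac{1}{\alpha}} < \sigma_n \leq 0$. (If $\sigma_n = 0$ for some $n$ then $i\eta_n$ is itself an eigenvalue and the resolvent fails to exist there, which only strengthens the obstruction.) Thus the resolvent of $\A$ grows at least like $\abs{\eta}^{1/\alpha}$ along $\eta = \eta_n$ as $n\to\infty$.

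Next I would argue by contradiction. Suppose the claimed bound held for all data in $D^1(\A)$ with $C$ independent of $\psi$; this is precisely the statement that classical solutions decay at the rate $\norm{\cS(t)x}{} \leq C t^{-(\alpha+\epsilon)}\norm{\A x}{}$, i.e. $\norm{\cS(t)\A^{-1}}{} = \O{t^{-(\alpha+\epsilon)}}$, which in particular places the whole imaginary axis in the resolvent set of $\A$. In a Hilbert space, for a bounded semigroup, the necessity half of the correspondence in \cite{batkai} (Propositions 3.3, 3.7) shows that such polynomial decay forces the resolvent bound
\be
\norm{(\A - i\eta)^{-1}}{} = \O{\abs{\eta}^{\frac{1}{\alpha+\epsilon}}}, \qquad \abs{\eta}\to\infty.
\ee
Since $\tfrac{1}{\alpha+\epsilon} < \tfrac{1}{\alpha}$ strictly, this upper bound is violated by the lower bound of the previous paragraph once $\abs{\eta_n}$ is large enough, a contradiction. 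This direction of the correspondence is the elementary one, obtained from the Laplace representation $(\A - i\eta)^{-1}\A^{-1} = \int_0^\infty e^{-i\eta t}\cS(t)\A^{-1}\,dt$ together with the decay of the integrand; the deep Borichev--Tomilov direction (resolvent bound $\Rightarrow$ decay) is not needed here.

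The main obstacle is not the abstract machinery, which may be cited, but the bookkeeping required to apply it cleanly in this setting. Specifically I would need to confirm that $\cS(t)$ is a bounded semigroup on the precise (renormalised) Hilbert space in which the decay estimate is posed, which is supplied by the boundedness hypothesis in $\H^1\times\L^2$, and that the identification of quasinormal frequencies with eigenvalues of $\A$ is available throughout the relevant half-plane. The free parameter $\epsilon$ is exactly what makes the argument robust: it provides a definite gap $\tfrac{1}{\alpha} - \tfrac{1}{\alpha+\epsilon} > 0$ between the two exponents, which absorbs both any logarithmic imprecision in the general form of the resolvent--decay correspondence and the fact that the sequence $s_n$ only approaches, rather than meets, the imaginary axis.
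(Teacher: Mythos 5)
Your proposal is correct and follows essentially the route the paper intends: the paper gives no written proof beyond citing \cite{batkai} Propositions 3.3 and 3.7, and those are precisely the necessity half of the polynomial-decay/resolvent-growth correspondence that you invoke, combined with the elementary observation that an eigenvalue $s_n$ of $\A$ at distance $\abs{\Re(s_n)} < C\abs{\Im(s_n)}^{-1/\alpha}$ from the imaginary axis forces $\norm{(\A - i\Im(s_n))^{-1}}{} \gtrsim \abs{\Im(s_n)}^{1/\alpha}$, which is incompatible with the $\O{\abs{\eta}^{1/(\alpha+\epsilon)}}$ bound that decay at rate $t^{-(\alpha+\epsilon)}$ would impose. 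Your handling of the edge cases ($\Re(s_n)=0$, and the role of $\epsilon$ in providing a strict exponent gap) is also sound.
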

In other words, the more rapidly the QNFs approach the imaginary axis, the slower the decay one may obtain. This is in fact the case for the Schwarzschild-AdS black hole, as recent work has shown \cite{Gannot:2012pb,HolSmul,Holzegel:2013kna}. In this case the QNFs approach the imaginary axis faster than any polynomial, and the best asymptotic rate that one may obtain is $1/\log{t_*}$. See also \cite{Cardoso:2004up,Natario:2004jd,Harmark:2007jy} for results in the physics literature concerning the asymptotic distribution of quasinormal modes in asymptotically AdS spacetimes.

\subsection{Outline of the paper} In \S\ref{prelim}, we set up the problem, in particular introducing the globally stationary asymptotically AdS black holes, together with the strongly hyperbolic operators we shall study on these backgrounds. In \S\ref{hypsec} we prove some estimates for the full time-dependent problem. In particular we give a new account of the redshift effect which simplifies some previous approaches. We finish this section with the proof that the solution operators for the strongly hyperbolic operators generate a $C^0$-semigroup, and use this to define the quasinormal spectrum. In \S\ref{QNMsec} we prove Theorem \ref{mainthm}, making considerable use of the estimates from \S\ref{hypsec}. In \S\ref{locstat} we generalise the results to the case of black holes which are merely \emph{locally} stationary. In this case we recover the same results restricted, however, to irreducible representations of the axial symmetry group. In \S\ref{example} we develop a simple example which directly shows how our approach may be applied. It furthermore provides an example where the quasinormal modes are manifestly not `complete' in any sense, and shows that QNM with `ingoing' boundary conditions are insufficient to describe the late time behaviour of solutions whose initial data is non-trivial on the horizon. In \S\ref{ressec} we make explicit the connection to the usual definition of QNM in terms of resonances. In particular, we show how our result implies the meromorphicity of the resolvent for the Schwarzschild-AdS black hole. We also show how our results may be applied to prove the meromorphicity of the resolvent of the Laplacian for asymptotically hyperbolic manifolds. In Appendix \ref{oscapp} we prove that for hermitian operators (such as the Klein-Gordon operator) there can be no purely oscillatory quasinormal modes, a result necessary for Corollary \ref{schcor}. In Appendix \ref{supplapp} we prove some results that, while not central to our presentation here, do not appear in a suitable form in previous work.  In Appendix \ref{glossary} we provide a glossary of symbols defined throughout the paper.

\section{Preliminaries}\label{prelim}
\subsection{Asymptotically AdS spacetimes; well posedness for the IBVP}

We first define what we mean by an `asymptotically anti-de Sitter Spacetime'.
\begin{Definition} \label{adsenddef}
Let $\mathscr{M}$ be a $(d+1)$-dimensional manifold with boundary\footnote{We take the convention that $\mathscr{M}$ includes $\partial \mathscr{M}$ as a point set, and $\mathring{\mathscr{M}} = \mathscr{M}\setminus \partial \mathscr{M}$} $\partial \mathscr{M}$, and $g$ be a smooth Lorentzian\footnote{Throughout we take the mostly plus signature convention} metric on $\mathring{\mathscr{M}}$. We say that a connected component $\scri$ of $\partial \mathscr{M}$ is an \emph{asymptotically anti-de Sitter end of $(\mathring{\mathscr{M}}, g)$ with radius $l$} if:
\begin{enumerate}[i)]
\item There exists a smooth function $r$ such that $r^{-1}$ is a boundary defining function for $\scri$.
\item For $r$ sufficiently large, we can choose $x^\alpha$, coordinates on the slices $r=\textrm{const.}$  so that locally
\be
g_{rr} = \frac{l^2}{r^2}+ \O{\frac{1}{r^4}},\qquad g_{r\alpha} = \O{\frac{1}{r^2}}, \qquad g_{\alpha\beta} = r^2 \mathfrak{g}_{\alpha \beta} + \O{1},
\ee
where $\mathfrak{g}_{\alpha\beta}dx^\alpha dx^\beta$ is a Lorentzian metric\footnote{with signature $(-+++\ldots)$. In particular the induced metric on $\scri$ is not Riemannian.} on $\scri$.
\item $r^{-2} g$ extends as a smooth metric on a neighbourhood of $\scri$.
 \end{enumerate}
We say that $r$ is an asymptotic radial coordinate and $\scri$ is the conformal infinity of this end. We denote by $\mathfrak{X}(\mathscr{M})$ the space of smooth vector fields on $\mathscr{M}$. If $\Sigma$ is any submanifold of $\mathscr{M}$, then $\mathfrak{X}_\Sigma(\mathscr{M})$ consists of those smooth vector fields on $\mathscr{M}$ tangent to $\Sigma$. Finally, $\mathfrak{X}^*(\mathscr{M})$ is the space of smooth one-form fields on $\mathscr{M}$.
\end{Definition}

\subsubsection{Well Posedness} \label{WPsec}
We shall require a slightly more general well-posedness theorem than that proven in \cite{Warnick:2012fi}. In particular we will work with a multi-component complex scalar field.  The proof of \cite{Warnick:2012fi} readily extends to this case.  It is possible to consider linear fields defined on more complicated bundles, such as $\Omega^2(\mathscr{M})$ in the case of the Maxwell field or $\textrm{Spin}(\mathscr{M})$ for the Dirac field by choosing a set of reference sections with respect to which a general section may be written as a linear sum. The coefficients of this linear sum are then scalar quantities. 

We will be interested in equations of the form\footnote{Our convention is that $\Box_g = -g^{\mu \nu}\nabla_\mu \nabla_\nu$}
\ben{wpeq}
\Box_g \psi + K[\psi] + V\psi = 0,
\een
where $\psi$ takes values in $\C^N$. Here $V$ is a $C^\infty$-smooth matrix valued potential, which near infinity has an expansion:
\ben{vasy}
V^I{}_{J} = \frac{1}{l^2}\left( \kappa_I^2-\frac{d^2}{4}  \right) \delta^I{}_{J} + \O{r^{-2}}, \qquad \textrm{ [no sum over }I].
\een
We assume that $0<\kappa_1\leq \kappa_2 \leq \ldots \leq \kappa_N$ are real, and we define
\ben{kappadef}
\kappa = \textrm{diag} (\kappa_1, \ldots, \kappa_N).
\een
$K$ is a $C^\infty$-smooth matrix valued vector field, given locally by:
\be
K[\phi]^I = \sum_{J=1}^N(K^\mu)^I{}_J{} \nabla_\mu \phi^J,
\ee
and near infinity we have, in the coordinates of Definition \ref{adsenddef}: 
\ben{kasy}
(K^r)^I{}_J = \O{r^{-1}}, \quad (K^\alpha)^I{}_J = \O{r^{-2}}.
\een
This is equivalent to the statement that $r^{2} K$ belongs to $\mathfrak{X}_\scri(\mathscr{M})$.

Now, let us define a matrix version of the twisted derivatives of \cite{Warnick:2012fi,Holzegel:2012wt}. We define the twisting matrix by:
\ben{fdef}
f := \exp\left[ \left (-\frac{d}{2} \iota+\kappa\right )\log r \right],
\een
where $\iota$ is the $N\times N$ identity matrix. Then the twisted derivatives are defined  to be
\ben{twist}
\tn_\mu \phi := f \nabla_\mu \left(f^{-1} \phi \right), \qquad \tn_\mu^\dagger \phi := -f^{-1} \nabla_\mu \left(f \phi \right),
\een
where we understand $f, f^{-1}$ to act by matrix multiplication. If $W\in \mathfrak{X}(\mathscr{M})$ is any vector field, we define
\be
\tilde{W} = W^\mu \tn_\mu.
\ee

\begin{Definition} \label{wpdefs}
Let $(\mathring{\mathscr{M}}^{d+1}, g)$ be a time oriented Lorentzian manifold with an asymptotically AdS end, with asymptotic radial coordinate $r$ which we assume extends as a smooth positive function throughout $\mathring{\mathscr{M}}^{d+1}$. Let $\Sigma$ be a spacelike surface which extends to the conformal infinity of the asymptotically AdS end, $\scri$, which for convenience we assume meets $\scri$ orthogonally with respect to the conformal metric $r^{-2} g$. Let $n_\Sigma$ be the future directed unit normal of $\Sigma$ and define
\be
\hat{n}_\Sigma = r n_\Sigma,
\ee  
to be the rescaled normal. We define $\mathcal{D}^+(\Sigma)$ to be the set of points $p\in \mathring{\mathscr{M}}$ such that every past directed inextensible timelike curve either intersects $\Sigma$ or else approaches a point on $\scri\cap I^+(\Sigma)$.

Let $f$ be as in \eq{fdef} and let $\phi : \Sigma \to \C^N$, where $\C^N$ is equipped with the usual sesquilinear product. We define the norms
\bean
\norm{\phi}{\L^2(\Sigma)}^2 &=& \int_\Sigma  \frac{\abs{\phi}^2}{r} dS_{\Sigma},  \\
\norm{\phi}{\H^1(\Sigma, \kappa)}^2 &=& \int_\Sigma \left( |\tilde{\nabla} \phi |^2 + \frac{\abs{\phi}^2}{r^2}\right) r dS_{\Sigma} \, ,
\eean
where the twisted derivative is defined as in (\ref{twist}) and we use the induced metric on $\Sigma$, together with the norm on $\C^N$ to define $|\tilde{\nabla} \phi |^2$ and $dS_\Sigma$. The derivatives here are understood in a weak sense. If $\mathcal{D} \subset \{1, \ldots, N\}$, we denote by $\H^1_\mathcal{D}(\Sigma, \kappa)$ the completion in the $\H^1(\Sigma, \kappa)$ norm of the space of smooth functions $\phi : \Sigma \to \C^N$ where $\phi^I$ is supported away from $\scri$ for $I \in \mathcal{D}$.

Given a $C^1$ function $\phi:\mathring{X} \to \C^N$, we say that the $I^{th}$ component obeys Dirichlet or Robin boundary conditions if the following hold
\begin{enumerate}[i)]
\item Dirichlet:
\be
r^{\frac{d}{2} - \kappa_I} \phi^I \to 0, \quad \textrm { as } r \to \infty.
\ee
\item Robin\footnote{This includes the Neumann boundary conditions where $\beta_I=0$.}: $0<\kappa_I<1$ \emph{and}
\be
r^{\frac{d+2}{2} + \kappa_I}\, \tilde{\nabla}_r \phi^I + l \beta_I r^{\frac{d}{2} - \kappa_I} \phi^I \to 0, \quad \textrm { as } r \to \infty,
\ee
where $\beta_I \in C^\infty(\scri)$.
\end{enumerate}
A choice of homogeneous boundary conditions for $\phi$ is a subset $\mathcal{D}\subset \{1, \ldots, N\}$  together with a choice of $\beta_I \in C^\infty(\scri)$, for $I \in  \{1, \ldots, N\}\setminus \mathcal{D}$. We require that the $I^{th}$ component of $\phi$ obeys Dirichlet boundary conditions for $I \in \mathcal{D}$ and Robin boundary conditions with Robin function $\beta_I$ for $I \not \in \mathcal{D}$.

Given a choice of homogeneous boundary conditions, we denote by $C^\infty_{bc}(\bhR; \C^N)$ the set of functions of the form $\phi = e^{(-\frac{d}{2}+\kappa)\log r} \phi_++e^{(-\frac{d}{2}-\kappa)\log r} \phi_-$ where $\phi_{\pm} \in C^\infty(\bhR; \C^N)$, with $\phi_+^I = 0$ if $I \in \mathcal{D}$ and $2\kappa_I \phi_-^I - \beta_I \phi_+^I=0$ on $\scri$ if $I \not \in \mathcal{D}$.

\end{Definition}

Note that Robin boundary conditions may only be imposed for those components with $0<\kappa_I<1$. This is an extension of a similar restriction for a single real scalar field, where it is known that Neumann or Robin boundary conditions may only be chosen for a certain range of masses, see \cite{BF, Warnick:2012fi}.

\begin{Theorem}[Well Posedness]\label{WPThm}

Fix a choice of homogeneous boundary conditions for $\psi$. Let $\uppsi\in \H^1_\mathcal{D}(\Sigma, \kappa)$, $\uppsi' \in \L^2(\Sigma)$. Then there exists a unique $\psi:\mathcal{D}^+(\Sigma) \to \C^N$ which solves
\be
\Box_g \psi + K[\psi] + V\psi = 0, \qquad \psi|_{\Sigma} = \uppsi, \quad \hat n_\Sigma \psi|_\Sigma = \uppsi',
\ee
and satisfies the boundary conditions on $\scri$. Here both the equation and boundary conditions are to be understood in a weak sense\footnote{For full details of the weak formulation of the equation and boundary conditions, see \cite{Warnick:2012fi}.}. If ${\Sigma'}$ is any spacelike surface in $\mathcal{D}^+(\Sigma)$ then $\psi|_{\Sigma'} \in \H^1_\mathcal{D}({\Sigma'}, \kappa)$, $\hat n_{\Sigma'}\psi|_{\Sigma'} \in \L^2({\Sigma'})$ and we have a continuity estimate:
\be
\norm{\psi|_{\Sigma'} }{ \H^1({\Sigma'}, \kappa)} +\norm{\hat n_{\Sigma'}\psi|_{\Sigma'} }{ \L^2({\Sigma'})} \leq C({\Sigma'}, g, K, V) \left(\norm{\uppsi}{ \H^1(\Sigma, \kappa)} +\norm{\uppsi' }{ \L^2(\Sigma)} \right).
\ee

If the initial conditions satisfy stronger regularity and asymptotic conditions, then in fact $\psi|_{\Sigma'} \in H^k_{\textrm{loc.}}({\Sigma'})$, $\hat n_{\Sigma'}\psi|_{\Sigma'} \in   H^{k-1}_{\textrm{loc.}}({\Sigma'})$ for any integer  $k\geq 2$. Furthermore, the boundary conditions hold in the sense of traces\footnote{in fact, for Dirichlet components we get a stronger result that $\psi^I \sim r^{-\frac{d}{2}-\kappa_I}$ if $0<\kappa_I<1$, and $\abs{\psi^I} \lesssim  r^{-\frac{d}{2}-1+\epsilon}$ for any $\epsilon>0$ if $\kappa_I \geq 1$.}. For sufficiently regular initial data we obtain a classical solution to the initial boundary value problem.
\end{Theorem}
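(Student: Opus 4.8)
The plan is to follow the strategy used for the single real scalar field in \cite{Warnick:2012fi, Holzegel:2012wt}, adapting it to the matrix setting and treating the coupling potential $V$ and the first-order operator $K$ as perturbations of a decoupled, diagonal leading-order problem. The role of the twisting matrix $f$ in \eq{fdef} is precisely to conjugate away the most singular behaviour at $\scri$: writing the equation in terms of the twisted derivatives \eq{twist}, the operator $\Box_g + V$ becomes a twisted wave operator built from $\tn_\mu$, $\tn^\dagger_\mu$ plus lower-order terms, the leading part $\frac{1}{l^2}(\kappa_I^2 - \frac{d^2}{4})$ of $V^I{}_J$ being absorbed into the twisting because $f$ is built from the diagonal matrix $\kappa$. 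First I would make this reduction precise and check that the off-diagonal and $\O{r^{-2}}$ pieces of $V$, together with the first-order term $K$ (which by \eq{kasy} satisfies $r^2 K \in \mathfrak{X}_\scri(\mathscr{M})$), contribute only terms that are subleading relative to the twisted energy.

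The heart of the matter is the energy estimate underlying the continuity estimate claimed in the theorem. I would contract the twisted stress-energy current with the future-directed normal and a suitable timelike multiplier, integrate the resulting identity over the region of $\mathcal{D}^+(\Sigma)$ bounded by $\Sigma$ and $\Sigma'$, and apply the divergence theorem. Three ingredients must be controlled. First, coercivity: a Hardy inequality adapted to the weight $r^{-2}$ near $\scri$ is needed to dominate the zeroth-order contribution $\norm{\psi}{\L^2(\Sigma')}$ by the twisted gradient, and this is exactly where reality of $\kappa$ (the Breitenlohner--Freedman condition) and the positivity $\kappa_I > 0$ enter. Second, the boundary term at $\scri$: under Dirichlet conditions for $I \in \mathcal{D}$ this term vanishes, while under Robin conditions it reduces to a boundary integral involving $\beta_I$ that is finite and controllable precisely because $0 < \kappa_I < 1$ guarantees the requisite boundary traces exist. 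Third, the first-order term $K$ and the subleading matrix potential produce contributions bounded by the energy itself, absorbed by a Gr\"onwall argument, yielding the continuity constant $C(\Sigma', g, K, V)$.

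With the a priori estimate in hand, existence would follow from a Galerkin scheme: project onto finite-dimensional subspaces of $\H^1_\mathcal{D}(\Sigma, \kappa)$ adapted to the boundary conditions, solve the resulting linear systems of ODEs in $t$, use the energy estimate for uniform bounds, and extract a weak limit lying in $\H^1_\mathcal{D}(\Sigma', \kappa) \times \L^2(\Sigma')$ solving the weak formulation of \eq{wpeq}. Uniqueness is immediate from linearity and the energy estimate applied to the difference of two solutions. For the higher regularity statement I would commute the equation with tangential vector fields and the time-translation generator, use the equation to exchange time derivatives for spatial ones, and invoke degenerate-elliptic estimates near $\scri$ to bootstrap to $H^k_{\text{loc}}$; the asymptotic and trace statements then follow from a Frobenius analysis of the indicial equation at $\scri$, whose roots $-\frac{d}{2} \pm \kappa_I$ are selected by finiteness of the energy (Dirichlet) or related through $\beta_I$ (Robin).

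I expect the main obstacle to be the boundary analysis at $\scri$ within the energy estimate: keeping the Robin boundary term finite and correctly controlled while simultaneously closing the Hardy inequality in the matrix setting, since the components are coupled at subleading order by $V$ and $K$. The interior existence and uniqueness arguments are essentially those of \cite{Warnick:2012fi}; the genuinely new bookkeeping lies in verifying that the matrix coupling respects the diagonal structure imposed by $\kappa$ well enough for these boundary estimates to survive.
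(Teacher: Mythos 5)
Your proposal is correct and takes essentially the same route as the paper, which offers no independent proof of Theorem \ref{WPThm} but simply asserts that the twisted-energy argument of \cite{Warnick:2012fi} ``readily extends'' to the multi-component case---precisely the extension you sketch: the diagonal twisting matrix $f$ absorbing the leading part of $V$, twisted energy estimates with Hardy-type coercivity, control of the Robin boundary term via traces available for $0<\kappa_I<1$, Gr\"onwall absorption of $K$ and the subleading matrix coupling, Galerkin existence with uniqueness from the energy estimate, and commutation plus indicial-root analysis at $\scri$ for the higher-regularity and asymptotic statements. Your identification of the matrix bookkeeping at the boundary as the only genuinely new ingredient matches the paper's assessment exactly.
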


We will also require a twisted version of the energy momentum tensor which extends the definition of \cite{Holzegel:2012wt} to allow for a multi-component field.

\begin{Definition}\label{emtdef}
We define the twisted energy-momentum tensor associated to \eq{wpeq} to be the symmetric $2-$tensor
\ben{twem}
\emT_{\mu\nu}[\phi] = \tn_{(\mu} \overline{\phi} \cdot \tn_{\nu)} \phi - \frac{1}{2} g_{\mu \nu} \left(\tn_\sigma  \overline{\phi} \cdot \tn^\sigma \phi + \overline{\phi}\cdot  F\phi \right),
\een
where the matrix valued function $F:\mathscr{M} \to \mathbb{M}^{n\times n}$ is given by
\be
F =  \frac{1}{2} (V+V^*)- f^{-1}\nabla_\mu \nabla^\mu f.
\ee
where $V^*$ is the hermitian conjugate of $V$. From the definition of $f$ and the asymptotics of $V$, we have $r^2 F\in C^\infty(\mathscr{M})$.
\end{Definition}
The twisted energy momentum enjoys the following important property:

\begin{Lemma}[Divergence of $\emT_{\mu\nu}$]\label{emprop}
For a sufficiently regular $\phi$, we have
\ben{divt}
\nabla_\mu \emT^\mu{}_\nu[\phi] = -\Re\left[\tn_\nu \overline{\phi}\cdot \left(\Box_g\phi+ \frac{1}{2} (V+V^*) \phi  \right) \right]+ \emS_\nu[\phi],
\een
where
\be
2 \emS_\nu[\phi] = \Re\left\{ \overline{\phi} \cdot \left[ \tn^\dagger_\nu ( F f) f^{-1} \phi\right]+  (\tn_\sigma \overline{\phi}) \cdot \left[f^{-1} (\tn^\dagger_\nu  f)  (\tn^\sigma \phi)\right]\right \}.
\ee
\end{Lemma}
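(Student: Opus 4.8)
The plan is to compute $\nabla_\mu \emT^\mu{}_\nu[\phi]$ directly from \eqref{twem} and to reorganise every term into the twisted calculus, so that the second-order contributions collapse onto a single twisted wave operator. The key preliminary is a purely algebraic identity relating the twisted Laplacian to $\Box_g$. Writing $\tn^\mu \phi = f\nabla^\mu(f^{-1}\phi)$ and $\tn^\dagger_\mu \psi = -f^{-1}\nabla_\mu(f\psi)$, and using that $f = \exp[(-\frac{d}{2}\iota + \kappa)\log r]$ commutes with its own covariant derivatives (so $\nabla_\mu(f^2) = 2f\nabla_\mu f$ and $\nabla_\mu\nabla^\mu f$ commutes with $f^{-1}$), a two-line computation gives $\tn^\dagger_\mu \tn^\mu \phi = \Box_g \phi + f^{-1}(\nabla_\mu\nabla^\mu f)\phi$. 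Comparing with the definition of $F$, the $f^{-1}\nabla_\mu\nabla^\mu f$ terms cancel and one obtains
\be
\tn^\dagger_\mu \tn^\mu \phi + F\phi = \Box_g \phi + \half(V+V^*)\phi.
\ee
It therefore suffices to prove \eqref{divt} with the right-hand side rewritten, using this identity, as $-\Re[\tn_\nu\overline\phi\cdot(\tn^\dagger_\mu\tn^\mu\phi + F\phi)] + \emS_\nu[\phi]$, keeping the whole statement inside the twisted framework. Crucially this reduction uses only that $f$ commutes with its derivatives, never that $f$ commutes with the (generally non-diagonal) matrix $V$.

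Next I would differentiate the tensor. Since $F$ is Hermitian, the scalars $\tn_\sigma\overline\phi\cdot\tn^\sigma\phi$ and $\overline\phi\cdot F\phi$ are real, so every term may be placed under $\Re[\cdot]$; metric compatibility then yields
\begin{align*}
\nabla_\mu\emT^\mu{}_\nu &= \Re\big[(\nabla_\mu\tn^\mu\overline\phi)\cdot\tn_\nu\phi + \tn^\mu\overline\phi\cdot\nabla_\mu\tn_\nu\phi\big]\\
&\quad - \Re[\tn_\sigma\overline\phi\cdot\nabla_\nu\tn^\sigma\phi] - \half\nabla_\nu(\overline\phi\cdot F\phi).
\end{align*}
Two structural identities control the resulting derivative mismatches. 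The first converts the ordinary divergence into the twisted one, $\nabla_\mu\tn^\mu\overline\phi = -\tn^\dagger_\mu\tn^\mu\overline\phi - f^{-1}(\nabla_\mu f)\tn^\mu\overline\phi$, producing the intended leading term $-\Re[\tn_\nu\overline\phi\cdot\tn^\dagger_\mu\tn^\mu\phi]$ together with an error. The second is the twisted commutation of mixed second derivatives: because covariant derivatives commute on the $\C^N$-valued components and the connection is torsion-free, one finds $\nabla_\mu\tn_\nu\phi - \nabla_\nu\tn_\mu\phi = (\nabla_\mu f)f^{-1}\tn_\nu\phi - (\nabla_\nu f)f^{-1}\tn_\mu\phi$, which lets me fuse the two gradient-gradient terms into a single contraction against this twisted curvature.

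The remaining and most delicate step is bookkeeping: showing that the accumulated errors reassemble into exactly $\emS_\nu[\phi]$. I anticipate three identifications. The error from the divergence conversion cancels the first of the two commutator terms, using that $f^{-1}\nabla_\mu f$ is self-adjoint for the $\C^N$ product and so may be moved across it. The second commutator term reproduces the gradient summand of $\emS_\nu$ once one notes $\tn^\dagger_\nu f = -2\nabla_\nu f$, hence $f^{-1}(\tn^\dagger_\nu f) = -2f^{-1}\nabla_\nu f$. Finally the $F$-terms $-\Re[(\nabla_\nu\overline\phi)\cdot F\phi]$ and $-\half\,\overline\phi\cdot(\nabla_\nu F)\phi$ must combine into $-\Re[\tn_\nu\overline\phi\cdot F\phi]$ plus the potential summand of $\emS_\nu$; substituting $\tn_\nu\phi = \nabla_\nu\phi - f^{-1}(\nabla_\nu f)\phi$ and expanding $\tn^\dagger_\nu(Ff)f^{-1} = -f^{-1}(\nabla_\nu f)F - \nabla_\nu F - Ff^{-1}(\nabla_\nu f)$, the $\nabla_\nu F$ terms and the Hermitian $f$-derivative terms cancel in pairs, leaving precisely the claimed expression. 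Collecting these and invoking the algebraic identity of the first paragraph gives \eqref{divt}. The main obstacle throughout is to preserve matrix ordering and complex conjugations faithfully, using \emph{only} the commutativity of $f$ with its own derivatives together with the Hermiticity of $F$ and of $f^{-1}\nabla_\mu f$, since $V$ — and hence $F$ — is a general matrix that need not commute with the twisting $f$.
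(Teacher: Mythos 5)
Your proposal is correct and takes essentially the same route as the paper's proof in Appendix B: both hinge on the identity $\tn^\dagger_\mu \tn^\mu = \Box_g + f^{-1}\nabla_\mu\nabla^\mu f$ (so that $\tn^\dagger_\mu\tn^\mu + F = \Box_g + \tfrac{1}{2}(V+V^*)$), followed by a direct expansion of $\nabla_\mu \emT^\mu{}_\nu$ in the twisted calculus with the first-order errors collected into $\emS_\nu$. Your antisymmetrised commutation identity $\nabla_{[\mu}\tn_{\nu]}\phi = (\nabla_{[\mu} f)f^{-1}\tn_{\nu]}\phi$ is a repackaging of the paper's mixed second twisted derivative formula for $\tn^\dagger_\mu\tn_\nu\phi$, and your pairwise cancellations (including $\tn^\dagger_\nu f = -2\nabla_\nu f$ and the Hermiticity of $F$ and $f^{-1}\nabla_\mu f$) all check out.
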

\begin{proof}
The proof follows by a straightforward calculation, which we include in \S\ref{Lemma appendix}.
\end{proof}

\subsection{Asymptotically AdS Black Holes}
We now define what we mean by a stationary aAdS black hole:
\begin{Definition}\label{adsbh}
We say that $(\bhR, \hor, \scri, \Sigma, g, r, T)$ is a globally stationary, asymptotically anti-de Sitter, black hole space time with AdS radius $l$ if the following holds
\begin{enumerate}[i)]
\item $\bhR$ is a $(d+1)$-dimensional manifold with stratified boundary $\hor\cup \Sigma\cup \scri$, where $\Sigma$ is a compact manifold whose boundary has two components: $\hor\cap \Sigma$ and $\Sigma\cap\scri$ which are compact, connected, manifolds.
\item $g$ is a smooth Lorentzian metric on $\bhR\setminus \scri$.
\item $\scri$ is an asymptotically AdS end of $(\mathring{\bhR}, g)$ of AdS radius $l$, asymptotic radial coordinate $r$ and such that
\be
\bhR = \mathcal{D}^+(\Sigma).
\ee 
We assume $r$ extends to a smooth positive function throughout $\mathring{\bhR}$.
\item $\Sigma$ is everywhere spacelike with respect to $g$, whereas $\hor$ is null.
\item $T$ is a Killing field of $g$ which is normal to $\hor$,  transverse to $\Sigma$ and timelike in a punctured neighbourhood of $\hor$, thus $\hor$ is a Killing horizon generated by $T$, which we assume to be a non-extremal black hole horizon, i.e. it should have constant surface gravity $\varkappa$.
\item $T\in\mathfrak{X}_\scri(\bhR)$, as a result, w.l.o.g. we may assume\footnote{By this we mean that we may always choose an asymptotic radial coordinate $r$ which satisfies $\mathcal{L}_T r=0$ by fixing such an $r$ on $\Sigma$ and defining $r$ elsewhere by $\mathcal{L}_T r=0$.} $\mathcal{L}_T r=0$. 
 \item If $\varphi_t$ is the one-parameter family of diffeomorphisms generated by $T$, then $\bhR$ is smoothly foliated by  $\varphi_t(\Sigma):=\Sigma_t$, $t\geq 0$.
 \item $T$ is timelike on $\bhR \setminus \hor$.
\end{enumerate}
\end{Definition}
Figure \ref{Pendiag} shows a schematic Penrose diagram of the black hole exterior region, $\bhR$.

\emph{Remarks: } we note that the assumption that there is exactly one Killing horizon and one asymptotically anti-de Sitter end is not necessary for our results. In particular, we can assume that $\bhR$ is bounded by $\Sigma$ and one or two non-extremal Killing horizons. In this way, our results may be readily extended to asymptotically de Sitter spacetimes. The compactness of $\Sigma$ is, however, an important feature which cannot be easily relaxed, so our results do not extend to the asymptotically flat case in an obvious fashion. 

The assumption in $v)$ that $T$ is timelike in a neighbourhood of $\hor$ is superfluous, given $viii)$. It is convenient to have the definition stated in this way so that when we later relax assumption $viii)$ to consider black holes which are merely locally stationary in \S\ref{locstat} we only have to make minimal changes. As a matter of terminology, prior to \S\ref{locstat} we may refer to globally stationary black holes as simply stationary.

\begin{figure}[t]
\input{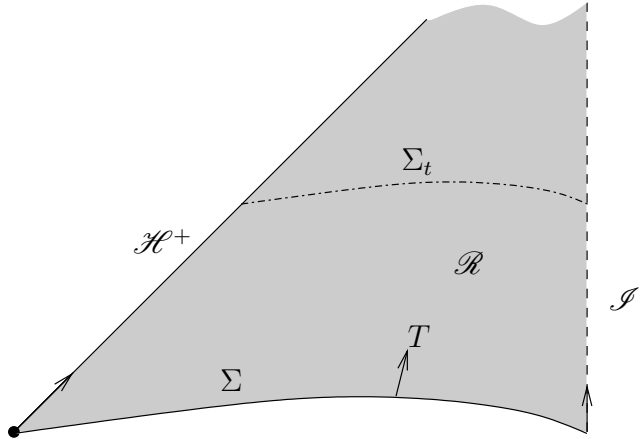}
\caption{A schematic Penrose diagram for a stationary, asymptotically anti-de Sitter, black hole spacetime.\label{Pendiag}}
\end{figure}

Now, as in \cite{Holzegel:2012wt}, we define some geometrical objects associated to the stationary slicing defined by $\Sigma_t$. Firstly, we denote
\ben{sigA}
\sigma:=-g(T,T), \qquad A:= -\frac{1}{g^{-1}(dt, dt)}.
\een
The asymptotic conditions imply that $\sigma, A \sim r^2$ near $\scri$. The definition also implies that $A>0$ on $\bhR$, while $\sigma\geq0$, with $\sigma$ vanishing precisely on $\hor$. The condition that $\hor$ is a non-extremal black hole horizon generated by $T$ implies that on the horizon\footnote{For a function $u$, the gradient $\nabla u$ is a vector field defined by $\nabla u[f] = g^{-1}(du, df)$.}
\ben{sgdef}
\nabla \sigma = 2 \varkappa T, 
\een
for some $\varkappa>0$, constant on the horizon. We refer to $\varkappa$ as the \emph{surface gravity}. We consider the constancy of $\varkappa$ to be part of our definition, however this follows automatically in the case when $(\mathscr{R}, g)$ satisfies Einstein's equations with matter obeying the dominant energy condition \cite[\S12.5]{Wald}. We believe that a modified version of our results will persist for non-constant $\varkappa$, provided that we have bounds $0<\underline{\varkappa}\leq \varkappa \leq \overline{\varkappa} <\infty$ for constants $\underline{\varkappa},\overline{\varkappa}$. Since the physically relevant case has constant $\varkappa$, we shall not pursue this further.

Rather than make use of equation \eq{wpeq} as it stands, it's convenient to consider the operator obtained after multiplying by $A$, so that the $TT\phi$ term appears with unit coefficient. Thus we define

\begin{Definition} \label{LLdef}
We say that an operator $L$ acting on functions $\phi \in C^2(\mathring{\bhR}; \C^N)$ is a strongly hyperbolic operator on $\bhR$ if it takes the form: 
\ben{Ldef}
L \phi := A (\Box_g \phi + V\phi) +W[\phi],
\een
where $V$ satisfies the assumptions of \S\ref{WPsec} and  $W$ is a  matrix-valued vector field which belongs to $\mathfrak{X}_\scri(\bhR)$. It is further assumed that $\mathcal{L}_T V = 0$, $\mathcal{L}_T W = 0$. 

We define the adjoint operator to $L$ to be
\ben{Ldefdual}
L^\dagger \phi := A (\Box_g \phi + V^*\phi) -W^*[\phi] - A\textrm{div}_g\left( \frac{W^*}{A}\right) \phi.
\een
$L^\dagger$ is itself a strongly hyperbolic operator on $\bhR$.
\end{Definition}
We will also consider $L$ acting on larger spaces of functions, with the derivatives understood to act in a weak sense. After dividing by $A$ the equation $L\phi=0$ is equivalent to an equation of the form \eq{wpeq}, so the well posedness results of \S2 apply.

\subsubsection{AdS-Schwarzschild} \label{schwarzschild section}
To justify the definitions we have given in this section, we will show that they apply in particular to the well known AdS-Schwarzshild black hole. We first define the manifold with stratified boundary to be:
\be
\mathscr{R} = [0, \infty)_t \times [0, 1]_\rho \times S^2_{\omega},
\ee
and identify the boundary components as:
\bean
\Sigma &=& \{0\} \times [0, 1] \times S^2, \\
\hor &=& [0, \infty) \times \{0 \} \times S^2, \\
\scri &=& [0, \infty) \times \{1\} \times S^2.
\eean
Clearly $\Sigma$ is compact, as are $\Sigma \cap \hor\simeq \Sigma \cap \scri \simeq S^2$.

Now, let $r_{+}$ be the largest root of the cubic
 \be
 x-2 m +\frac{x^3}{l^2}=0,
\ee
and define the function $r : \mathscr{R} \setminus \scri$ by:
\be
r(t, \rho, \omega) = \frac{1}{1-\rho} + r_+-1.
\ee
We may take $(t, r, \omega) \in [0, \infty)_t\times [r_+, \infty)_r \times S^2_\omega$ as coordinates\footnote{of course we will require two coordinate patches to cover the sphere, but for clarity we will ignore this issue.} on $\mathscr{R} \setminus \scri$, and with respect to these coordinates we define
 the metric
\ben{gdd}
g = -\left( 1-\frac{2 m}{r} +\frac{r^2}{l^2}\right)dt^2 + \frac{4m}{r}\frac{1}{1+\frac{r^2}{l^2}} dr dt + \frac{1+\frac{2 m}{r} +\frac{r^2}{l^2}}{\left(1+\frac{r^2}{l^2} \right)^2}dr^2 + r^2 d\Omega_2^2,
\een
where $d\Omega_2^2$ is the canonical metric on the unit $2$-sphere and $m$, $l$ are constants. Taking $T = \frac{\partial}{\partial t}$ in these coordinates, we clearly see that $T\in \mathfrak{X}_\scri (\bhR)$, $T$ is timelike away from $\hor$ and becomes null on $\hor$. We can conclude:
\begin{Lemma}
The AdS-Schwarzschild black hole $(\bhR, \hor, \scri, \Sigma, g, r, T)$ is a globally stationary, asymptotically anti-de Sitter, black hole space time with AdS radius $l$, and surface gravity
\be
\varkappa = \frac{2 m}{r_+^3}\left( \frac{3 m -r_+}{1+\frac{r_+^2}{l^2}}\right).
\ee
\end{Lemma}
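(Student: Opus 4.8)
The plan is to verify the data $(\bhR, \hor, \scri, \Sigma, g, r, T)$ against Definition \ref{adsbh} condition by condition, and then to read off $\varkappa$ from its defining relation \eq{sgdef}. The compactness statements in $i)$ follow from the product structure and have already been noted. The single computation that organises almost everything is the determinant of the $(t,r)$ block of \eq{gdd}: writing $h = 1 + r^2/l^2$ and $f = 1 - 2m/r + r^2/l^2$, so that $g_{tt} = -f$, $g_{tr} = \tfrac{2m}{rh}$ and $g_{rr} = \tfrac{h + 2m/r}{h^2}$, a short calculation gives $g_{tt}g_{rr} - g_{tr}^2 = -1$ identically. Since the angular part contributes $r^2\, d\Omega_2^2$, the metric is non-degenerate with signature $(-+++)$ everywhere on $\bhR\setminus\scri$, in particular across the horizon $\{r = r_+\}$ where $f$ vanishes; this establishes $ii)$.

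The determinant being $-1$ also hands over the inverse $(t,r)$ block directly: $g^{tt} = -g_{rr}$, $g^{rr} = -g_{tt} = f$, $g^{tr} = g_{tr} = \tfrac{2m}{rh}$. From here $iv)$--$viii)$ follow by inspection. Indeed $g^{tt} = -g_{rr} < 0$ shows $dt$ is timelike and $\Sigma = \{t=0\}$ spacelike, while $g^{rr} = f$ vanishes exactly on $\{r=r_+\}$, so $\hor$ is null there, giving $iv)$. Since $r_+$ is the largest root of the defining cubic and $f\to+\infty$, we have $f>0$ for $r>r_+$, so $g(T,T) = g_{tt} = -f < 0$ off the horizon and vanishes on it, giving $viii)$ and the causal part of $v)$. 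That $T = \partial_t$ is Killing and $\mathcal{L}_T r = 0$ is immediate from $t$-independence of the coefficients and of $r$; the flow of $T$ is translation in $t$, so the $\Sigma_t = \{t\}\times[0,1]\times S^2$ foliate $\bhR$, giving $vi)$ and $vii)$. For normality of $T$ to $\hor$ in $v)$, I would note $\nabla r = g^{r\mu}\partial_\mu = g^{tr}\partial_t + g^{rr}\partial_r$, which on $\{r=r_+\}$ reduces to $g^{tr}\partial_t \propto T$ because $g^{rr}$ vanishes there.

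For $iii)$ I would check the asymptotically anti-de Sitter end conditions of Definition \ref{adsenddef} by expanding as $r\to\infty$. The function $r^{-1}$ vanishes simply at $\rho = 1$ with non-vanishing differential, so it is a boundary defining function. Expanding \eq{gdd} gives $g_{rr} = \tfrac{h+2m/r}{h^2} = \tfrac{l^2}{r^2} + \O{r^{-4}}$, $g_{tr} = \O{r^{-2}}$, and $g_{tt} = -\tfrac{r^2}{l^2} + \O{1}$, while the angular part is $r^2\, d\Omega_2^2$. Hence $r^{-2}g$ extends smoothly up to $\rho = 1$ with induced boundary metric $-l^{-2}\, dt^2 + d\Omega_2^2$, which is Lorentzian of the signature required in Definition \ref{adsenddef}.

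Finally I would compute the surface gravity from \eq{sgdef}. With $\sigma = -g(T,T) = f(r)$ a function of $r$ alone, $\nabla\sigma = f'(r)\, g^{r\mu}\partial_\mu$, and on $\hor$ the vanishing of $g^{rr}$ leaves $\nabla\sigma|_\hor = f'(r_+)\, g^{tr}\, T = f'(r_+)\tfrac{2m}{r_+(1+r_+^2/l^2)}\, T$, so that $2\varkappa = f'(r_+)\tfrac{2m}{r_+(1+r_+^2/l^2)}$. The one genuinely non-mechanical step is simplifying $f'(r_+) = \tfrac{2m}{r_+^2} + \tfrac{2r_+}{l^2}$: using $f(r_+) = 0$, equivalently $r_+^2/l^2 = 2m/r_+ - 1$, to eliminate $l$ yields $f'(r_+) = \tfrac{2(3m-r_+)}{r_+^2}$, whence $\varkappa = \tfrac{2m}{r_+^3}\bigl(\tfrac{3m - r_+}{1 + r_+^2/l^2}\bigr)$ as claimed. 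The same relation gives $2m = r_+(1+r_+^2/l^2) > r_+$, hence $3m > r_+$ and $\varkappa > 0$, so the horizon is non-extremal with constant surface gravity (constancy being automatic, as $\varkappa$ depends only on the constants $m, l, r_+$). I expect no serious obstacle: the statement is a verification, and the only mildly delicate points are recognising the $(t,r)$-determinant identity that trivialises regularity at the horizon together with the inverse-metric components, and exploiting the cubic relation to put $\varkappa$ in closed form.
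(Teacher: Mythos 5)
Your verification is correct and takes the route the paper itself leaves implicit: the lemma appears there with no proof beyond the remark that $T=\partial_t$ is Killing, tangent to $\scri$, timelike off $\hor$ and null on it, so the content is exactly the condition-by-condition check you carry out. Your key computations all check out — the determinant identity $g_{tt}g_{rr}-g_{tr}^2=-1$ (which follows from $f = h - 2m/r$ with $h = 1+r^2/l^2$), the resulting inverse components $g^{rr}=f$, $g^{tr}=g_{tr}$ giving normality of $T$ to $\hor$ and the asymptotics of Definition \ref{adsenddef}, and the surface-gravity evaluation $2\varkappa = f'(r_+)\,\tfrac{2m}{r_+(1+r_+^2/l^2)}$ with $f'(r_+)=\tfrac{2(3m-r_+)}{r_+^2}$ obtained from the cubic relation $2m = r_+(1+r_+^2/l^2)$, which reproduces the stated formula and yields $\varkappa>0$.
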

We have included the AdS-Schwarzschild example as it is straightforward to state the metric. However, our definition of globally stationary, asymptotically anti-de Sitter, black hole space times also includes the slowly rotating Kerr-AdS black hole, see \cite[\S 5]{Holzegel:2012wt} for a definition. 

With the problems that we have in mind, an important result is:
\begin{Lemma}
A sufficiently smooth function $\psi$ obeys the Klein-Gordon equation in the AdS-Scwarzschild background
\be
\Box_g \psi + \frac{a}{l^2} \psi = 0,
\ee
if and only if it satisfies the equation
\be
L\psi = 0,
\ee
for a strongly hyperbolic operator on $\bhR$ with $W =0$, $V = \frac{a}{l^2}$. The condition under which Neumann or Robin boundary conditions may be imposed, $0<\kappa<1$, translates to the Breitenlohner-Freedman condition $-\frac{9}{4}<a <-\frac{5}{4}$.
\end{Lemma}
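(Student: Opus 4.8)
The plan is to reduce everything to the defining relation \eq{Ldef} together with the asymptotic normalisation \eq{vasy}, so that both assertions become short algebraic identities. First I would substitute the stated data $W=0$, $V = a/l^2$ directly into Definition \ref{LLdef}, which gives
\[
 L\psi = A\left( \Box_g \psi + \tfrac{a}{l^2}\psi \right).
\]
By the preceding Lemma the tuple $(\bhR,\hor,\scri,\Sigma,g,r,T)$ built from the metric \eq{gdd} is a globally stationary asymptotically AdS black hole, so the discussion following \eq{sigA} applies and $A = -1/g^{-1}(dt,dt)$ is smooth and strictly positive on $\mathring{\bhR}$ (with $A\sim r^2$ near $\scri$). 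Since multiplication by the nowhere-vanishing factor $A$ does not affect the vanishing of the expression in parentheses --- and, because $A$ is smooth and positive, preserves the weak formulation of the equation used in Theorem \ref{WPThm} --- the equations $L\psi=0$ and $\Box_g\psi + \tfrac{a}{l^2}\psi=0$ are equivalent. This gives the claimed `if and only if'.

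Next I would check that this $L$ actually lies in the class of Definition \ref{LLdef}, i.e.\ that $V = a/l^2$ satisfies the hypotheses of \S\ref{WPsec}. Here $N=1$ and the spacetime is four-dimensional, so $d=3$ and $d^2/4 = 9/4$. The potential is exactly constant, so comparing with the required expansion \eq{vasy},
\[
 \frac{a}{l^2} = \frac{1}{l^2}\left( \kappa^2 - \frac{9}{4}\right) + \O{r^{-2}},
\]
forces the error term to vanish and fixes $\kappa^2 = a + \tfrac94$. For $\kappa$ to be real and positive, as demanded in \eq{kappadef}, one needs $a > -\tfrac94$; this is precisely the Breitenlohner--Freedman bound, and under it $L$ is strongly hyperbolic (the remaining conditions $\mathcal{L}_T V = 0$, $W=0\in\mathfrak{X}_\scri(\bhR)$, $\mathcal{L}_T W = 0$ are immediate since $V$ is constant and $W$ vanishes).

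Finally, the boundary-condition statement follows by solving the inequality $0<\kappa<1$ for $a$. With $\kappa = \sqrt{a + 9/4}$ one has $\kappa>0 \iff a > -\tfrac94$ and $\kappa<1 \iff a + \tfrac94 < 1 \iff a < -\tfrac54$, so that the admissibility window $0<\kappa<1$ for Robin (including Neumann) boundary conditions in Definition \ref{wpdefs} is exactly $-\tfrac94 < a < -\tfrac54$.

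I do not expect a genuine obstacle here: the only quantitative input is the positivity and $r^2$-growth of $A$, which is inherited from the general asymptotically AdS setup rather than recomputed from \eq{gdd}, and the matching of the constant mass term to the profile \eq{vasy}. The one point requiring care is bookkeeping of conventions --- confirming $d=3$ so that $d^2/4 = 9/4$, and tracking that the Klein--Gordon mass parameter $a$ enters \eq{vasy} as $\kappa^2 - d^2/4$ rather than with some other normalisation --- since an error there would shift the endpoints of the interval $(-\tfrac94, -\tfrac54)$.
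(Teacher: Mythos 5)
Your proof is correct; the paper states this lemma without any proof, and your argument supplies exactly the routine verification intended: since $A>0$ on $\bhR$, the equation $L\psi = A(\Box_g\psi + \tfrac{a}{l^2}\psi) = 0$ is pointwise equivalent to the Klein--Gordon equation, and matching the constant potential against the expansion \eq{vasy} with $d=3$ gives $\kappa^2 = a + \tfrac{9}{4}$, so $0<\kappa<1$ is precisely $-\tfrac{9}{4}<a<-\tfrac{5}{4}$. Your attention to the conventions ($d^2/4 = 9/4$, the normalisation of \eq{vasy}, and the requirement $\kappa>0$ real from \eq{kappadef}, which implicitly restricts $a>-\tfrac{9}{4}$ for $L$ to be strongly hyperbolic at all) is exactly the right bookkeeping.
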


\section{The hyperbolic problem}\label{hypsec}

We will now discuss the regular wave semigroup approach to quasinormal modes for the equation \eq{wpeq}. We first derive some estimates which hold for any suitably smooth functions on $\bhR$, a stationary asymptotically anti-de Sitter black hole. From these we shall deduce that the solution operator of \eq{wpeq} is a $C^0$ semigroup on a family of Hilbert spaces. Our approach to deriving this fact may appear to be somewhat indirect, however, in the process we derive some estimates which will be extremely useful later on.

\subsection{Hyperbolic estimates}
All of our estimates are obtained by the vector field method. The vector field method usually consists of applying the divergence theorem to suitably chosen vector fields, known as energy currents, integrated over a spacetime slab. For our purposes, it will be useful to work with a form of the divergence theorem where the integration is performed only over a spacelike slice.
\begin{Lemma}[Local-in-time divergence theorem] \label{litdivlem}
Suppose $K$ is a vector field on $\bhR$, an aAdS black hole spacetime. Then for sufficiently regular $K$ the following relation holds
\ben{litdiv}
\frac{d}{dt} \int_{\Sigma_t} K^\mu n_\mu\ dS - \int_{\scri \cap \Sigma_t} K^\mu m_\mu \ \sqrt{A}d \mathcal{K} = -\int_{\hor\cap \Sigma_t} K^\mu T_\mu\ d\sigma - \int_{\Sigma_t} \nabla_\mu K^\mu\ \sqrt{A} dS.
\een
Here $dS$ is the induced volume form  and $n$ the future directed unit normal on $\Sigma_t$; $d\sigma$ the induced volume form on $\hor\cap \Sigma_t$ and $T$ the future directed Killing generator of $\hor$ normalised by $dt(T)=1$. We understand the integral over $\scri \cap \Sigma_t$ to be defined as follows
\ben{scrint}
\int_{\scri \cap \Sigma_t} K^\mu m_\mu\ \sqrt{A} d \mathcal{K}  = \lim_{r \to \infty} \int_{\{r=\textrm{const.}\} \cap \Sigma_t} K^\mu m_\mu\ \sqrt{A}d \mathcal{K},
\een
where $d \mathcal{K}$ is the induced volume element on  $\{r=\textrm{const.}\} \cap \Sigma_t$ and $m$ is the outward normal unit vector tangent to $\Sigma_t$.
\end{Lemma}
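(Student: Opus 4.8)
The plan is to derive \eqref{litdiv} from the ordinary spacetime divergence theorem applied to a thin slab, and then differentiate in $t$. Fix $t<t'$ and let $\mathcal{R}(t,t')\subset\bhR$ be the region foliated by $\{\Sigma_\tau : t\le\tau\le t'\}$; its boundary consists of the two spacelike caps $\Sigma_t,\Sigma_{t'}$, the null horizon segment $\hor\cap\mathcal{R}(t,t')$, and the portion of conformal infinity $\scri\cap\mathcal{R}(t,t')$. Because the metric degenerates at $\scri$, I would first truncate at a level set $\{r=R\}$, apply the divergence theorem on the truncated region, and send $R\to\infty$ at the end, the limit being exactly the regularization \eqref{scrint}; this is legitimate for sufficiently regular $K$ whose flux admits a finite limit.

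First I would adapt the spacetime volume element to the foliation. Since $t$ is the Killing parameter and $A=-1/g^{-1}(dt,dt)$ by \eqref{sigA}, the lapse of the slicing is $\sqrt{A}$, so $dV_g=\sqrt{A}\,d\tau\,dS$ (the shift, being tangent to $\Sigma_\tau$, does not enter the volume). Hence, by the co-area formula, the bulk term of the divergence theorem splits as
\[
\int_{\mathcal{R}(t,t')}\nabla_\mu K^\mu\,dV_g=\int_t^{t'}\!\!\left(\int_{\Sigma_\tau}\nabla_\mu K^\mu\,\sqrt{A}\,dS\right)d\tau ,
\]
which already produces the weighted divergence integral on the right of \eqref{litdiv}. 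The two spacelike caps contribute $\int_{\Sigma_{t'}}K^\mu n_\mu\,dS-\int_{\Sigma_t}K^\mu n_\mu\,dS$ once the outward orientations are fixed (future on $\Sigma_{t'}$, past on $\Sigma_t$), and the truncation surface $\{r=R\}$ contributes the flux whose $R\to\infty$ limit is the $\scri$ term $\int_{\scri\cap\Sigma_\tau}K^\mu m_\mu\sqrt{A}\,d\mathcal{K}$, integrated in $\tau$.

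The step I expect to be the main obstacle is the horizon contribution, because $\hor$ is \emph{null} and therefore carries no unit normal: one cannot simply write $K^\mu n_\mu$ there. Working in a null frame $\{T,L,E_A\}$ adapted to the generator (with $g(T,L)=-1$, $g(E_A,E_B)=\delta_{AB}$, and $E_A$ tangent to the cross-sections $\hor\cap\Sigma_\tau$), the pullback of $\iota_K\,dV_g$ to $\hor$ retains only the term transverse to the degenerate direction. A short computation, using that this surviving component reproduces $K^\mu T_\mu=g(K,T)$ and that $dt(T)=1$, shows the pullback equals $K^\mu T_\mu\,d\tau\wedge d\sigma$. Thus the horizon flux is $\int_t^{t'}\int_{\hor\cap\Sigma_\tau}K^\mu T_\mu\,d\sigma\,d\tau$, which explains why it is the Killing generator $T$ (normalized by $dt(T)=1$), and not a unit normal, that appears, and why the dependence on $K$ is only through $K^\mu T_\mu$. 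Careful bookkeeping of the boundary orientations then fixes the overall signs.

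Assembling the four boundary terms and the bulk term gives the slab identity
\[
\int_{\Sigma_{t'}}K^\mu n_\mu\,dS-\int_{\Sigma_t}K^\mu n_\mu\,dS=\int_t^{t'}\!\Big(\int_{\scri\cap\Sigma_\tau}K^\mu m_\mu\sqrt{A}\,d\mathcal{K}-\int_{\hor\cap\Sigma_\tau}K^\mu T_\mu\,d\sigma-\int_{\Sigma_\tau}\nabla_\mu K^\mu\sqrt{A}\,dS\Big)d\tau .
\]
Since the geometry is stationary ($\mathcal{L}_T g=0$ and $\mathcal{L}_T r=0$, so $A$, $dS$, $d\sigma$ are transported by the flow) and $K$ is assumed sufficiently regular, the bracketed integrand is continuous in $\tau$, so the fundamental theorem of calculus lets me differentiate at $\tau=t$ to obtain \eqref{litdiv}. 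As an alternative route reaching the $t$-derivative form directly, I could write the flux as the integral of the $d$-form $\beta=\iota_K\,dV_g$ over $\Sigma_t=\varphi_t(\Sigma)$, use the transport formula $\frac{d}{dt}\int_{\Sigma_t}\beta=\int_{\Sigma_t}\mathcal{L}_T\beta$ together with Cartan's formula $\mathcal{L}_T\beta=d\,\iota_T\beta+\iota_T\,d\beta$; Stokes on $\Sigma_t$ then converts $d\,\iota_T\beta$ into the horizon and $\scri$ boundary terms, while $\iota_T\,d\beta$ yields the bulk term after using $\iota_T\,dV_g|_{\Sigma_t}=\sqrt{A}\,dS$.
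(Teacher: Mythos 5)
Your proposal is correct and follows exactly the second route the paper itself indicates (its proof is a one-line sketch: ``differentiating with respect to time the usual divergence theorem as applied to the region bounded by $\Sigma_t$, $\Sigma_0$, $\hor$ and $\scri$''). Your filled-in details --- the lapse identification $dV_g=\sqrt{A}\,d\tau\,dS$ from \eqref{sigA}, the truncation at $\{r=R\}$ matching the regularization \eqref{scrint}, and the null-frame computation showing the horizon flux is $K^\mu T_\mu\,d\tau\wedge d\sigma$ with $T$ normalized by $dt(T)=1$ --- are all sound and consistent with the signs in \eqref{litdiv}.
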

\begin{proof}
This can be shown either directly from the formula for the divergence of a vector field in local coordinates, or else by differentiating with respect to time the usual divergence theorem as applied to the region bounded by $\Sigma_t$, $\Sigma_0$, $\hor$ and $\scri$.
\end{proof}

Before we begin to derive estimates for the equation, we first derive an adjointness relation between $L$ and $L^\dagger$ that justifies our choice of notation.
\begin{Corollary}[Adjointness relation]\label{adjrel}
Let $L$ be a strongly hyperbolic operator on a black hole background $\bhR$. Define an inner product on the spacelike slice $\Sigma_t$ by
\be
\pair{\phi_1}{ \phi_2}_{\Sigma_t} = \int_{\Sigma_t} \overline{\phi_1} \cdot \phi_2 \frac{1}{\sqrt{A}} dS.
\ee 
This gives rise to a norm equivalent to the $\L^2(\Sigma_t)$ norm. Furthermore, defining
\be
K_\mu = \overline{\phi_1}\cdot (\tn_\mu \phi_2) -  (\tn_\mu \overline{\phi_1})\cdot \phi_2 - \frac{1}{A} \overline{\phi_1}\cdot W_\mu \phi_2,
\ee
we have that for $\phi_1, \phi_2\in C^2(\mathring{\bhR}; \C^N)$:
\be
\pair{\phi_1}{L\phi_2}_{\Sigma_t} -\langle L^\dagger \phi_1, \phi_2\rangle_{\Sigma_t} = \frac{d}{dt} \int_{\Sigma_t} K^\mu n_\mu\ dS - \int_{\scri \cap \Sigma_t} K^\mu m_\mu \ \sqrt{A}d \mathcal{K} +\int_{\hor\cap \Sigma_t} K^\mu T_\mu\ d\sigma. 
\ee
\end{Corollary}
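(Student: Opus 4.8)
The plan is to prove the adjointness relation by applying the local-in-time divergence theorem (Lemma \ref{litdivlem}) to the vector field $K$ specified in the statement, and checking that the bulk divergence $\nabla_\mu K^\mu$ reproduces exactly the difference $\overline{\phi_1}\cdot L\phi_2 - \overline{L^\dagger \phi_1}\cdot \phi_2$ after integration against the natural density. First I would address the preliminary claim that $\pair{\cdot}{\cdot}_{\Sigma_t}$ gives a norm equivalent to $\L^2(\Sigma_t)$: this is immediate since the $\L^2(\Sigma)$ norm in Definition \ref{wpdefs} weights by $r^{-1}\,dS_\Sigma$, while this inner product weights by $A^{-1/2}\,dS$, and since $A\sim r^2$ near $\scri$ by the asymptotic conditions in Definition \ref{adsbh}, the two weights are comparable up to a bounded, bounded-below factor on the compact slice $\Sigma_t$.

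The core of the proof is the divergence computation. I would compute $\nabla_\mu K^\mu$ term by term. For the first two terms, $\nabla^\mu(\overline{\phi_1}\tn_\mu\phi_2) - \nabla^\mu(\tn_\mu\overline{\phi_1}\,\phi_2)$, I would use the relationship between the twisted derivatives $\tn_\mu, \tn_\mu^\dagger$ of \eq{twist} and the ordinary connection. The key algebraic identity to exploit is that $\tn^\dagger$ is the formal adjoint of $\tn$ with respect to the twisting matrix $f$, i.e.\ integration by parts converts $\tn$ into $-\tn^\dagger$ plus boundary/density terms; concretely, $\nabla_\mu(\overline\phi_1 \tn^\mu \phi_2)$ expands, using $f\nabla_\mu(f^{-1}\phi) = \tn_\mu\phi$, so that the twisted Laplacian $\tn^\dagger_\mu\tn^\mu$ relates to $\Box_g$ up to the zeroth-order term $f^{-1}\nabla_\mu\nabla^\mu f$ that already appears in the definition of $F$ in Definition \ref{emtdef}. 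Carrying this through, the second-derivative terms should assemble into $\overline{\phi_1}\cdot A(\Box_g+V)\phi_2 - \overline{A(\Box_g+V^*)\phi_1}\cdot\phi_2$ after accounting for the $A^{-1/2}$ density in the inner product, while the $W$-dependent term $-A^{-1}\overline{\phi_1}W_\mu\phi_2$ in $K$ is designed precisely so that its divergence supplies the $W[\phi]$ contribution to $L$ together with the $-W^* - A\,\textrm{div}_g(W^*/A)$ contribution to $L^\dagger$ from Definition \ref{LLdef}.

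The main obstacle I anticipate is bookkeeping the density factors and the twisting matrix $f$ correctly. The operators $L, L^\dagger$ carry the factor $A$ (Definition \ref{LLdef}), the inner product carries $A^{-1/2}$, and the divergence theorem \eq{litdiv} introduces its own $\sqrt{A}$ weights on the bulk and boundary terms; reconciling these so that $\int_{\Sigma_t}\nabla_\mu K^\mu \sqrt{A}\,dS$ matches $\pair{\phi_1}{L\phi_2}_{\Sigma_t} - \langle L^\dagger\phi_1,\phi_2\rangle_{\Sigma_t}$ requires care, and the overall sign on the horizon term (which appears with opposite sign in the Corollary relative to \eq{litdiv}) must be tracked. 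The twisted derivatives are essential here because they are the natural objects for which integration by parts closes without producing uncontrolled boundary contributions at $\scri$; the specific form of $W$ in $L^\dagger$, namely $-W^* - A\,\textrm{div}_g(W^*/A)$, is exactly what emerges when one integrates by parts the first-order term and reattributes the resulting divergence of $W^*/A$. Once the bulk identity $\nabla_\mu K^\mu \sqrt{A} = \overline{\phi_1}\cdot L\phi_2/\sqrt{A} - \overline{L^\dagger\phi_1}\cdot\phi_2/\sqrt{A}$ is verified pointwise, substituting into \eq{litdiv} yields the stated identity directly, with the three boundary terms appearing as in the statement.
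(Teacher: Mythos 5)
Your proposal is correct and takes essentially the same route as the paper, whose proof consists precisely of applying Lemma \ref{litdivlem} to $K_\mu$ and carrying out the bulk computation you describe (the twisted-derivative terms cancel the $f^{-1}\nabla_\mu\nabla^\mu f$ contributions because $f$ is real and diagonal, the $1/A$ factor in front of $W_\mu$ combines with the $\sqrt{A}$ weight in \eq{litdiv} to match the $A^{-1/2}$ density of the inner product, and the horizon term is moved across with a sign flip). The only slip is in your final pointwise identity, which should carry an overall minus sign, $\nabla_\mu K^\mu\,\sqrt{A} = -\left(\overline{\phi_1}\cdot L\phi_2 - \overline{L^\dagger\phi_1}\cdot\phi_2\right)/\sqrt{A}$, which then cancels against the minus in front of the divergence integral in \eq{litdiv} to give the stated result.
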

\begin{proof}
We simply apply Lemma \ref{litdivlem} to the vector field $K_\mu$, which gives the result after a brief calculation.
\end{proof}

\subsubsection{The Killing estimate} The first estimate we shall derive from \eq{litdiv} will control the rate of growth of the energy associated to the vector field $T$. Since $T$ becomes null on the horizon this will only give us control of $L^2$ norms of the field and its derivatives tangential to the horizon, i.e. we do not control a transverse derivative at the horizon. We will later correct this oversight with a second estimate, the \emph{redshift estimate}, which will control the full $H^1$ norm at the horizon.

\begin{Definition}\label{Kildef}
Fix homogeneous boundary conditions for the strongly hyperbolic operator $L$, as in Definition \ref{wpdefs}, and assume that the Robin functions are stationary, i.e. $\mathcal{L}_T \beta_I = 0$. For a function $\phi\in C^\infty_{bc}(\bhR; \C^N)$, we define:
\be
(\curJ^T_\gamma)^\mu[\phi] = T^\nu \emT_\nu{}^\mu[\phi] - \frac{1}{2A} \gamma \abs{\phi}^2 T^\mu.
\ee
Here $\gamma$ is a constant which we will fix later. We also define the Killing energy on $\Sigma_t$ to be
\be
E_\gamma(t)[\phi] = \int_{\Sigma_t} (\curJ^T_\gamma)^\mu[\phi] n_\mu\ dS + \frac{1}{2}\sum_{I \in \{1, \dots, N\}\setminus \mathcal{D}} \int_{\scri \cap \Sigma_t}  \overline{\phi}^I \phi^I \beta_I r^{-2\kappa_I} \sqrt{A} d \mathcal{K}.
\ee
The powers of $r$ are such that the second integral, defined in the same fashion as \eq{scrint}, is finite.
\end{Definition}

\begin{Theorem}[The Killing estimate]\label{killthm}
For $\phi\in C^\infty_{bc}(\bhR; \C^N)$, we have:
\begin{enumerate}[i)]
\item Given $\gamma \geq 0$, there exists $c>0$ independent of $\gamma$ such that
\be
E_\gamma(t)[\phi] \leq  c \left( \norm{\phi}{\H^1(\Sigma_t, \kappa)}^2 + \norm{T\phi}{\L^2(\Sigma_t)}^2+ \gamma \norm{\phi}{\L^2(\Sigma_t)}^2\right),
\ee
for all $\phi$.
\item There exists $\gamma_0$ such that for any $\gamma > \gamma_0$ and for any $X \in \mathfrak{X}_\hor(\bhR)$ we can find $C_{X, \gamma_0}>0$ independent of $\gamma$ such that
\be
\norm{\tilde{X} \phi}{\L^2(\Sigma_t)}^2+(\gamma-\gamma_0) \norm{\phi}{\L^2(\Sigma_t)}^2 \leq C_{X, \gamma_0} E_\gamma(t)[\phi],
\ee
for all $\phi$.
\item There exists a constant $C$, independent of $\gamma$, such that for any $\epsilon>0$ the following estimate holds:
\ben{killest}
\frac{d}{dt} E_{\gamma}(t)[\phi] \leq \epsilon\left( \norm{\phi}{\H^1(\Sigma_t, \kappa)}^2 + \norm{(L+\gamma)\phi}{\L^2(\Sigma_t)}^2\right) + \frac{C}{\epsilon} \norm{T\phi}{\L^2(\Sigma_t)}^2.
\een
\item \emph{If $\phi$ is additionally assumed to vanish on the horizon}, there exists a constant $C$ independent of $\gamma$ such that for any $\epsilon>0$ the following estimate holds, 
\ben{killest2}
-\frac{d}{dt} E_{\gamma}(t)[\phi] \leq \epsilon\left( \norm{\phi}{\H^1(\Sigma_t, \kappa)}^2 + \norm{(L+\gamma)\phi}{\L^2(\Sigma_t)}^2\right) + \frac{C}{\epsilon} \norm{T\phi}{\L^2(\Sigma_t)}^2.
\een

\end{enumerate}
\end{Theorem}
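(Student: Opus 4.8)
The plan is to separate the purely algebraic statements i), ii) from the dynamical statements iii), iv). For i) and ii) I would work pointwise with the integrand $(\curJ^T_\gamma)^\mu[\phi]\, n_\mu = T^\nu \emT_{\nu}{}^\mu[\phi]\, n_\mu - \tfrac{\gamma}{2A}\abs{\phi}^2 T^\mu n_\mu$. Choosing at each point an orthonormal frame adapted to $\Sigma_t$ with timelike leg $n$, the contraction of the twisted energy--momentum tensor with the two future-directed vectors $T$ and $n$ decomposes as a sum of squares $\abs{\tn_\mu\phi}^2$ with bounded coefficients, plus the potential contribution $-\tfrac12\, g(T,n)\,\overline\phi\cdot F\phi$ and the $\scri$ boundary term. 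Because $T$ is timelike on $\bhR\setminus\hor$ and $n$ is timelike, this quadratic form in $\tn\phi$ is positive, which gives the upper bound i) at once (the $\gamma$-term accounting for the $-\tfrac{\gamma}{2A}\abs{\phi}^2 T^\mu n_\mu$ piece). For the coercive bound ii) the essential point is that $T$ degenerates to a null vector on $\hor$, so the energy density controls only derivatives in directions tangent to $\hor$; this is precisely why the estimate is phrased for $X\in\mathfrak{X}_\hor(\bhR)$. The possibly indefinite potential $F$ and the Robin boundary term at $\scri$ are the only obstructions to positivity, and both are absorbed once $\gamma$ exceeds a fixed $\gamma_0$, leaving the surplus $(\gamma-\gamma_0)\norm{\phi}{\L^2}^2$.

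For iii) and iv) I would apply the local-in-time divergence theorem, Lemma \ref{litdivlem}, to the current $(\curJ^T_\gamma)^\mu[\phi]$. Since $T$ is Killing, the term $\emT_{\mu\nu}\nabla^\mu T^\nu$ vanishes by symmetry, so the bulk divergence reduces, via Lemma \ref{emprop}, to $T^\nu\nabla_\mu\emT_\nu{}^\mu = -\Re[\tn_T\overline\phi\cdot(\Box_g\phi + \tfrac12(V+V^*)\phi)] + T^\nu\emS_\nu$, together with the contribution of the $\gamma$-modification. Here I would use that $\mathcal{L}_T r = 0$ forces $T f = 0$, hence $\tn_T\phi = T\phi$ exactly, so every factor $\tn_T\overline\phi$ is literally the time derivative and is controlled by $\norm{T\phi}{\L^2}$. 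Rewriting $\Box_g\phi + V\phi = A^{-1}(L\phi - W[\phi])$ and applying Cauchy--Schwarz then bounds the bulk integral by $\epsilon(\norm{\phi}{\H^1(\Sigma_t,\kappa)}^2 + \norm{(L+\gamma)\phi}{\L^2}^2) + \tfrac{C}{\epsilon}\norm{T\phi}{\L^2}^2$, the lower-order terms $T^\nu\emS_\nu$, $W$, and the $\gamma$-term being swallowed by the $\epsilon\norm{\phi}{\H^1}^2$ part (note $\norm{\phi}{\L^2}\le\norm{\phi}{\H^1}$).

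The two boundary fluxes are what distinguish iii) from iv). On $\hor$, $T$ is null and normal, so $(\curJ^T_\gamma)^\mu T_\mu = \abs{\tn_T\phi}^2 = \abs{T\phi}^2$, since $g(T,T)=0$ kills both the trace term in $\emT$ and the $\gamma$-term; thus the horizon contribution to $\frac{d}{dt}E_\gamma$ is $-\int_{\hor\cap\Sigma_t}\abs{T\phi}^2\, d\sigma\le 0$. For iii) this term has the favourable sign and may simply be discarded. For iv), where one instead bounds $-\frac{d}{dt}E_\gamma$, this term has the wrong sign and cannot be dropped; the hypothesis that $\phi$ vanishes on $\hor$ rescues the argument, because $T$ is tangent to $\hor$ and hence $T\phi$ also vanishes there, making the horizon flux vanish identically. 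I expect the genuinely delicate step to be the conformal boundary: the $\scri$ flux $\int_{\scri\cap\Sigma_t}(\curJ^T_\gamma)^\mu m_\mu\sqrt{A}\,d\mathcal{K}$ must be combined with the time derivative of the Robin correction built into the definition of $E_\gamma(t)$, and shown (using the precise asymptotics of the twisted derivatives and the stationarity $\mathcal{L}_T\beta_I = 0$) to be controlled in absolute value by $\epsilon\norm{\phi}{\H^1}^2$. This boundary analysis, which rests on the renormalised energy framework of \cite{Warnick:2012fi}, is symmetric in sign and so serves both iii) and iv); it is the main technical obstacle, whereas the horizon and bulk terms are clean once the Killing and null structure is exploited.
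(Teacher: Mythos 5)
Your strategy coincides with the paper's proof: i), ii) follow from the algebraic properties of $\emT$ contracted with the causal vectors $T, n$, and iii), iv) from Lemma \ref{litdivlem} applied to $\curJ^T_\gamma$ together with Lemma \ref{emprop}, with the horizon flux $-\int_{\hor\cap\Sigma_t}\abs{T\phi}^2 d\sigma \le 0$ discarded in iii) and vanishing in iv) exactly as you describe. However, two of your ``swallowed by $\epsilon\norm{\phi}{\H^1}^2$'' steps would fail as estimates, and the paper replaces them by exact identities; this distinction is the substance of the proof.

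First, $T^\nu\emS_\nu$ contains $(\tn_\sigma\overline{\phi})\cdot[f^{-1}(\tn^\dagger_T f)(\tn^\sigma\phi)]$, which is quadratic in first derivatives with no $T\phi$ factor; if it were nonzero it would contribute a term of size $O(1)\cdot\norm{\phi}{\H^1(\Sigma_t,\kappa)}^2$, which cannot be absorbed into $\epsilon\norm{\phi}{\H^1}^2$ for arbitrary $\epsilon$. The proof instead uses that $\mathcal{L}_T r = 0$ and $\mathcal{L}_T V = 0$ force $T^\nu\emS_\nu \equiv 0$ (you already note $Tf=0$, which is the mechanism). Your $\epsilon$-absorption is legitimate only for the $W$ and $(V-V^*)$ terms, precisely because those come paired with a factor $T\overline{\phi}$. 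Second, the $\scri$ flux is likewise not estimated: using $g(T,m)=0$, $\tn_t = \nabla_t$, the Robin condition and $\mathcal{L}_T\beta_I=0$, it equals \emph{exactly} $-\tfrac{1}{2}\tfrac{d}{dt}\sum_I \int_{\scri\cap\Sigma_t}\overline{\phi}^I\phi^I\beta_I r^{-2\kappa_I}\sqrt{A}\,d\mathcal{K}$, so it cancels identically against the time derivative of the Robin correction and the left-hand side of \eq{litdiv} is precisely $dE_\gamma/dt$. A genuine bound of the boundary trace of $\Re(\tn_t\overline{\phi}\cdot\tn_r\phi)$ by $\epsilon\norm{\phi}{\H^1}^2$ with arbitrary $\epsilon$ is unavailable: the trace inequality (Lemma \ref{twisted trace}, estimate \eq{traceeq}) yields $\delta\norm{\cdot}{\H^1}^2 + C_\delta\norm{\cdot}{\L^2}^2$ with $C_\delta\to\infty$ as $\delta\to 0$, and in any case does not control the $\tn_r\phi$ factor. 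Since you did propose combining the flux with the time derivative of the Robin correction, carrying that computation through yields the identity and your plan closes; but the outcome is cancellation, not a delicate estimate. Two smaller points: the divergence of the $\gamma$-piece of the current, $-\tfrac{\gamma}{A}\Re(\overline{\phi}\cdot T\phi)$, must be grouped with the $L\phi$ term to form $-\tfrac{1}{A}\Re[T\overline{\phi}\cdot(L+\gamma)\phi]$ before Cauchy--Schwarz, or else $C$ in \eq{killest} acquires a factor of $\gamma$; and in i), ii) the Robin surface term in $E_\gamma$ itself requires Lemma \ref{twisted trace} to be controlled, a mechanism you gloss over.
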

\begin{proof}
\begin{enumerate}[1.]
\item The fact that tangential (but not transverse) derivatives are controlled by the Killing energy at the horizon for $\gamma$ sufficiently large can be confirmed by a calculation in local coordinates. The requirement that $\gamma$ be sufficiently large comes from the fact that we do not require $V$ to be positive, so we have to add a multiple of the $L^2$-norm of $\phi$ to ensure that the energy as defined is coercive. Given this, the only barrier to establishing $i)$ and $ii)$ is to deal with the surface term in the definition of $E_\gamma$. To do this, we note that a twisted Sobolev trace identity (see Lemma \ref{twisted trace} in the appendix) implies that if $I \in \{1, \ldots, N\}\setminus \mathcal{D}$ then for any $\delta>0$, there exists $C_\delta$ such that
\ben{traceeq}
 \int_{\scri \cap \Sigma_t} \abs{\phi_I}^2 r^{-2\kappa_I} \sqrt{A} d \mathcal{K} \leq \delta \norm{\chi \phi}{\H^1(\Sigma_t, \kappa)}^2 + C_\delta \norm{\chi \phi}{\L^2(\Sigma_t)}^2,
\een
where $\chi$ is some smooth function equal to $1$ near $\scri$ and vanishing outside a neighbourhood of $\scri$. With this in hand, the algebraic properties of the twisted energy-momentum tensor together with an analysis of the behaviour near infinity give the required inequalities.
\item To establish $iii)$ we apply Lemma \ref{litdivlem} with the vector field $K=\curJ^T_\gamma$.  We take the terms one at a time. The first term in \eq{litdiv} may be left alone as it will simply give rise to the time derivative of the first term in the definition of the energy.
\item Consider now the second term. Since $\mathcal{L}_T r = 0$ we have that $g(T, m) = 0$ and so
\bean
\int_{\scri \cap \Sigma_t} (\curJ^T_\gamma)^\mu m_\mu\ \sqrt{A} d \mathcal{K} &=& \lim_{r \to \infty} \sum_I \int_{\{r=\textrm{const.}\} \cap \Sigma_t} T^\mu m^\mu \tn_{(\mu}\overline{\phi}^I\tn_{\nu)}\phi^I \ \sqrt{A}d \mathcal{K},\\ &=& \lim_{r \to \infty} \sum_I \int_{\{r=\textrm{const.}\} \cap \Sigma_t} r l^{-1} \Re(\tn_t \overline{\phi}^I\tn_r \phi^I) \ \sqrt{A}d \mathcal{K}.
\eean
Now, if $I\in \mathcal{D}$, then the Dirichlet boundary condition implies that this limit vanishes.  If $I \in  \{1, \dots, N\}\setminus \mathcal{D}$ then we use the Robin boundary condition, together with the fact that $\tn_t = \nabla_t$ to deduce that
\be
\int_{\scri \cap \Sigma_t} (\curJ^T_\gamma)^\mu m_\mu\ \sqrt{A} d \mathcal{K} = - \frac{1}{2}\frac{d}{dt}\sum_{I \in \{1, \dots, N\}\setminus \mathcal{D}} \int_{\scri \cap \Sigma_t}  \overline{\phi}^I \phi^I \beta_I r^{-2\kappa_I} \sqrt{A} d \mathcal{K}.
\ee
Thus the two terms on the left hand side of \eq{litdiv} combine to give $dE_\gamma/dt$.
\item Now consider the surface term in \eq{litdiv} evaluated on the horizon. We have 
\be
-\int_{\hor\cap \Sigma_t} (\curJ^T_\gamma)^\mu T_\mu\ d\sigma = -\int_{\hor\cap \Sigma_t} \abs{T\phi}^2 d\sigma \leq 0.
\ee
\item Finally we consider the divergence term for which we make use of Lemma \ref{emprop}. Since $\mathcal{L}_Tr = 0$ and $\mathcal{L}_TV = 0$, we have that $T^\mu \emS_\mu = 0$. We deduce that:
\bean
\nabla_\mu  (\curJ^T_\gamma)^\mu &=& -\Re\left[ T\overline{\phi}\cdot \left(\Box_g\phi+ \frac{1}{2}(V+V^*) \phi + \frac{\gamma}{A} \phi  \right) \right] \\&=&  -\Re\left[ \frac{1}{A} T\overline{\phi}\cdot \left(L \phi+ \gamma \phi  \right) \right]+\Re\left[ \frac{1}{A} T\overline{\phi}\cdot  W \phi\right] -\Re\left[ \frac{1}{2} T\overline{\phi}\cdot (V-V^*) \phi\right],
\eean
with $W \in \mathfrak{X}_\scri(\bhR)$. Recalling that $A \sim r^2$ and $(V-V^*)\sim r^{-2}$  near $\scri$, and applying the Cauchy-Schwarz inequality, we deduce that we can find a $C$, independent of $\gamma$, such that for any $\epsilon>0$
\be
\abs{ \int_{\Sigma_t} \nabla_\mu K^\mu\ \sqrt{A} dS} \leq  \epsilon\left( \norm{\phi}{\H^1(\Sigma_t, \kappa)}^2 + \norm{(L+\gamma)\phi}{\L^2(\Sigma_t)}^2\right) + \frac{C}{\epsilon}\norm{T\phi}{\L^2(\Sigma_t)}^2.
\ee
Taking $2.$-$5.$ together we conclude that $iii)$ holds.
\item Statement $iv)$, follows in exactly the same way as $iii)$, however since $\phi$ is assumed to vanish on the horizon we do not need to consider the surface term on $\hor$ and can simply estimate the divergence term on the right.
\end{enumerate}
\end{proof}

It may seem somewhat strange that we have separated the $T\phi$ term from the rest of the derivative terms in our estimates. The reason for this will become apparent later when we consider the degenerate elliptic problem we obtain by Laplace transforming in $t$. Then $T \phi\to s  \phi$ and we can absorb this term on the left hand side by taking $\gamma$ large enough. It may also seem surprising that we consider the case where the function $\phi$ is constrained to vanish on the horizon. This is because when we come to the elliptic problem we will make use of the fact that the existence problem for a closed, densely defined, linear operator on a Hilbert space is related to the uniqueness problem for its dual operator. Loosely speaking, the dual problem to the forward evolution problem in the exterior of a horizon is the backwards evolution problem with Dirichlet conditions on the horizon. We will be more concrete about the dual problem later when we come to the elliptic problem proper, but the estimates we have derived will be of use.

We will now briefly note a corollary for the scalar Klein-Gordon equation, which follows from the proof of the main theorem using an approximation argument to weaken the smoothness assumptions.
\begin{Corollary}\label{cor1}
Suppose $\psi\in L^2(\R_+, \H^1(\Sigma, \kappa))$ with $T\psi \in L^2(R_+, \L^2(\Sigma))$ is a weak solution of the equation
\ben{scaleq}
(\Box_g + V)\psi = 0, \qquad \textrm{ in }\bhR,
\een
where $V$ is an hermitian matrix valued potential obeying the conditions of \S\ref{WPsec}, and $\psi$ is subject to compatible homogeneous stationary boundary conditions at $\scri$, and to the initial conditions:
\be
\psi|_{\Sigma_0} = \uppsi, \qquad T \psi|_{\Sigma_0} = \uppsi',
\ee
for  $\uppsi\in \H^1_\mathcal{D}(\Sigma, \kappa)$, $\uppsi' \in \L^2(\Sigma)$. Suppose further that Theorem \ref{killthm} $ii)$ holds with $\gamma_0<0$. Then in fact $E_0(t)[\psi] \in C^0(\R_+)$ with the estimate:
\be
\sup_{t\geq 0}\left( E_0(t)[\psi] \right)  \leq E_0(0)[\psi] \leq C\left(\norm{\uppsi}{\H^1(\Sigma, \kappa)}^2 + \norm{\uppsi'}{\L^2(\Sigma)}^2 \right).
\ee
\end{Corollary}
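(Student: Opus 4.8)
The plan is to prove the result first for smooth solutions by extracting an \emph{exact} energy identity from the proof of Theorem \ref{killthm}, and then to recover the general weak solution by an approximation argument in which the coercivity hypothesis $\gamma_0<0$ plays the decisive role. For the Klein--Gordon equation I would realise \eq{scaleq} as $L\psi=0$ for the strongly hyperbolic operator $L=A(\Box_g+V)$ of Definition \ref{LLdef}, i.e.\ with $W=0$; since $V$ is hermitian one has $\tfrac12(V+V^*)=V$. Suppose first that $\psi\in C^\infty_{bc}(\bhR;\C^N)$ solves $L\psi=0$. I would apply Lemma \ref{litdivlem} to $K=\curJ^T_0$ exactly as in the proof of Theorem \ref{killthm}~$iii)$, but now with $\gamma=0$: the $\scri$ flux is again absorbed into the definition of $E_0$ using the stationary boundary conditions, and the horizon flux contributes $-\int_{\hor\cap\Sigma_t}|T\psi|^2\,d\sigma\le 0$. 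The only change is the bulk term. By Lemma \ref{emprop} and the computation in the proof of Theorem \ref{killthm},
\be
\nabla_\mu(\curJ^T_0)^\mu = -\Re\!\left[\tfrac1A\,T\overline{\psi}\cdot L\psi\right]+\Re\!\left[\tfrac1A\,T\overline{\psi}\cdot W\psi\right]-\Re\!\left[\tfrac12 T\overline{\psi}\cdot(V-V^*)\psi\right],
\ee
and since $W=0$, $V=V^*$ and $L\psi=0$ this vanishes identically (using also $T^\mu\emS_\mu=0$, which holds because $\mathcal L_T V=0$, $\mathcal L_T r=0$). Thus for smooth solutions I obtain the \emph{exact} identity $\tfrac{d}{dt}E_0(t)[\psi]=-\int_{\hor\cap\Sigma_t}|T\psi|^2\,d\sigma\le 0$, so $E_0(t)[\psi]\le E_0(0)[\psi]$, and Theorem \ref{killthm}~$i)$ at $t=0$ gives $E_0(0)[\psi]\le c\!\left(\norm{\uppsi}{\H^1(\Sigma,\kappa)}^2+\norm{\uppsi'}{\L^2(\Sigma)}^2\right)$.

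Next I would remove the smoothness assumption. Choose smooth data $(\uppsi_n,\uppsi_n')$ satisfying the compatibility and boundary conditions and converging to $(\uppsi,\uppsi')$ in $\H^1_\mathcal D(\Sigma,\kappa)\times\L^2(\Sigma)$; by Theorem \ref{WPThm} the resulting solutions $\psi_n$ are classical and lie in $C^\infty_{bc}(\bhR;\C^N)$ on each slice, and by the uniqueness part of that theorem the weak solution $\psi$ is their limit. Each $\psi_n$, and each difference $\psi_n-\psi_m$ (again a Klein--Gordon solution), obeys the smooth identity just derived. Applying it to the difference gives
\be
E_0(t)[\psi_n-\psi_m]\le E_0(0)[\psi_n-\psi_m]\le c\left(\norm{\uppsi_n-\uppsi_m}{\H^1(\Sigma,\kappa)}^2+\norm{\uppsi_n'-\uppsi_m'}{\L^2(\Sigma)}^2\right),
\ee
which tends to $0$ \emph{uniformly in $t$}. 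Here the hypothesis $\gamma_0<0$ is essential: it makes Theorem \ref{killthm}~$ii)$ available at $\gamma=0$, so that $E_0$ is a genuine coercive energy, bounding below $\norm{\tilde X(\psi_n-\psi_m)}{\L^2(\Sigma_t)}^2+(-\gamma_0)\norm{\psi_n-\psi_m}{\L^2(\Sigma_t)}^2$ for $X\in\mathfrak X_\hor(\bhR)$. Consequently $\{\psi_n\}$ is Cauchy, uniformly in $t\ge0$, in the (degenerate) energy norm of Theorem \ref{killthm}~$ii)$, and its limit coincides with $\psi$.

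Finally, the bound and the time-continuity would follow by passing to the limit, and this is where I expect the main obstacle to lie. The essential technical point is that $E_0(t)[\cdot]$ should be not merely coercive over but also \emph{continuous} with respect to the degenerate energy norm; verifying this requires controlling the $\scri$ trace term uniformly via \eq{traceeq} and checking that the contribution of the transverse derivative at $\hor$, which degenerates there, is genuinely captured by the fields $\tilde X$, $X\in\mathfrak X_\hor(\bhR)$ (including those vanishing on $\hor$, which recover a weighted transverse derivative). Granting this, polarisation together with the uniform energy bound gives $E_0(t)[\psi_n]\to E_0(t)[\psi]$ uniformly on $\R_+$, so that $t\mapsto E_0(t)[\psi]$, being a uniform limit of the continuous functions $t\mapsto E_0(t)[\psi_n]$, lies in $C^0(\R_+)$; and passing to the limit in $E_0(t)[\psi_n]\le E_0(0)[\psi_n]\le c\!\left(\norm{\uppsi_n}{\H^1(\Sigma,\kappa)}^2+\norm{\uppsi_n'}{\L^2(\Sigma)}^2\right)$ yields the stated estimate. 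Throughout, the assumption $\gamma_0<0$ is exactly what makes $E_0$ a nonnegative, coercive energy at $\gamma=0$, and hence what allows the approximation to close.
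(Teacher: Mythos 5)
Your proposal is correct and is essentially the paper's own argument, which is only sketched there as ``follows from the proof of the main theorem using an approximation argument'': for smooth solutions the divergence term in the Killing estimate vanishes identically (since $W=0$, $V=V^*$, $\mathcal{L}_T r = \mathcal{L}_T V=0$ give $T^\mu\emS_\mu=0$), yielding the exact monotonicity $\frac{d}{dt}E_0(t)[\psi]=-\int_{\hor\cap\Sigma_t}\abs{T\psi}^2\,d\sigma\leq 0$, and the hypothesis $\gamma_0<0$ makes $E_0$ nonnegative and coercive so the approximation closes. The technical worry in your last paragraph is unnecessary: since $\gamma_0<0$ makes $E_0$ a nonnegative quadratic form, the Cauchy--Schwarz inequality for its associated bilinear form gives $\abs{E_0(t)[\psi_n]-E_0(t)[\psi_m]}\leq E_0(t)[\psi_n-\psi_m]^{1/2}\,E_0(t)[\psi_n+\psi_m]^{1/2}$ directly, so uniform convergence of $E_0(t)[\psi_n]$ follows without any separate continuity statement for the degenerate norm.
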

\textbf{Remark:} In the paper \cite{Holzegel:2012wt}, we gave conditions under which $\gamma_0$ may be taken to be negative in terms of the eigenvalue of a certain degenerate elliptic operator. Our proof was for a one-component field, but can be extended.

\subsubsection{The redshift estimate}

We now consider an estimate based on a stationary multiplier vector field which remains timelike at the horizon. In fact, we can choose essentially any such vector field: the slightly delicate construction of \cite{Mihalisnotes} is not actually necessary in order to gain control over the transverse derivatives. The key consideration is the following:
\begin{Lemma}\label{deformation}
Let $K\in \mathfrak{X}_\scri(\bhR)$ be stationary. Then we have
\ben{deften}
\mathcal{L}_K g = -2\varkappa K^\flat \otimes_s dt+r^2 \omega_0 \otimes_s dt + r^2 \sum_a \epsilon_a \omega_a \otimes_s \omega_a,
\een
for some finite collection of one forms $\omega_i \in \mathfrak{X}^*(\bhR)$ such that $\omega_i(T)=0$ on the horizon and $\epsilon_a = \pm1$.
\end{Lemma}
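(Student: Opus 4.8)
The plan is to compute $\mathcal{L}_K g$ directly, to split off one distinguished term using the surface-gravity relation \eq{sgdef}, and then to put the remainder into the stated normal form by elementary linear algebra. Throughout I work in the stationary slicing with $T=\partial_t$ and $dt(T)=1$, and I use that both $g$ and $K$ are stationary, so that $[K,T]=-\mathcal{L}_T K=0$. The heart of the argument is the single identity that the remainder $P:=\mathcal{L}_K g+2\varkappa K^\flat\otimes_s dt$ satisfies $P(T,T)=0$ on $\hor$. Indeed, since $[K,T]=0$ we have $(\mathcal{L}_K g)(T,T)=K\big(g(T,T)\big)=-d\sigma(K)$, while $2\varkappa(K^\flat\otimes_s dt)(T,T)=2\varkappa\,g(K,T)$. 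On the horizon the surface-gravity relation \eq{sgdef} gives $d\sigma=2\varkappa\,T^\flat$, hence $d\sigma(K)=2\varkappa\,g(T,K)$, and the two contributions cancel. This is exactly the redshift mechanism, and it is what pins the coefficient of the $K^\flat\otimes_s dt$ term to be precisely $-2\varkappa$.

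Next I would record the asymptotic scaling coming from $K\in\mathfrak{X}_\scri(\bhR)$. Writing $\bar g=r^{-2}g$, which extends smoothly up to $\scri$ by Definition \ref{adsenddef}, and using that $K$ is tangent to $\scri$ so that $K(r^{-1})$ vanishes there, the Leibniz identity $\mathcal{L}_K g=K(r^2)\,\bar g+r^2\mathcal{L}_K\bar g$ shows after a short computation that $r^{-2}\mathcal{L}_K g$ is smooth up to $\scri$; the same holds for $r^{-2}\big(K^\flat\otimes_s dt\big)=\big(\bar g(K,\cdot)\big)\otimes_s dt$. Thus $r^{-2}P$ is a smooth symmetric two-tensor up to $\scri$, and this is what produces the overall factors of $r^2$ in the statement.

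Finally I would put $P$ into normal form. I choose a coframe $\{dt,\theta^1,\dots,\theta^d\}$ with $\theta^i(T)=0$ (the $\theta^i$ span the annihilator of $T$, which is complementary to $dt$ since $dt(T)=1$), smooth up to $\scri$; a partition of unity over finitely many charts, available since $\Sigma$ is compact, reduces this to local existence of such a coframe and keeps the total number of one-forms finite. Expanding $P=dt\otimes_s\alpha+P_{ij}\,\theta^i\otimes_s\theta^j$ with $\alpha=P_{00}\,dt+2P_{0i}\,\theta^i$, I set $r^2\omega_0:=\alpha$; then $\omega_0(T)=r^{-2}\alpha(T)=r^{-2}P(T,T)$ vanishes on $\hor$ by the first step. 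For the remaining piece I invoke the algebraic fact that any smooth symmetric two-tensor $S=S_{ij}\theta^i\otimes_s\theta^j$ can be written as a finite sum $\sum_a\epsilon_a\,\omega_a\otimes_s\omega_a$ with $\epsilon_a=\pm1$ and $\omega_a$ smooth combinations of the $\theta^i$; smoothness across sign changes of the eigenvalues is handled by the polarisation identities $c\,\eta\otimes\eta=(\tfrac{c+1}{2}\eta)^{\otimes 2}-(\tfrac{c-1}{2}\eta)^{\otimes 2}$ and $\eta\otimes_s\zeta=\tfrac14(\eta+\zeta)^{\otimes 2}-\tfrac14(\eta-\zeta)^{\otimes 2}$, so that no square roots are needed. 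Applying this to $r^{-2}P_{ij}\theta^i\otimes_s\theta^j$ yields the $r^2\sum_a\epsilon_a\,\omega_a\otimes_s\omega_a$ term, with each $\omega_a(T)=0$ identically since the $\omega_a$ are built from the $\theta^i$.

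The main obstacle is the first step: verifying that the part of $\mathcal{L}_K g$ on $\hor$ not annihilating $T$ is captured exactly by $-2\varkappa K^\flat\otimes_s dt$. This is precisely where the geometry enters, through non-extremality and the constancy of $\varkappa$ encoded in $\nabla\sigma=2\varkappa T$. The asymptotic scaling and the normal-form decomposition, while requiring care with the coframe and with smoothness, are essentially bookkeeping once this identity is in hand.
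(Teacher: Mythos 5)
Your proposal is correct and follows essentially the same route as the paper's own (sketched) proof: the identity $(\mathcal{L}_K g)(T,T)=-2\varkappa\, g(K,T)$ on $\hor$ via the surface-gravity relation and $[K,T]=0$, the smoothness of $r^{-2}\mathcal{L}_K g$ up to $\scri$ from $K\in\mathfrak{X}_\scri(\bhR)$, and a pointwise linear-algebra normal form glued by a partition of unity. Your polarisation identities are a clean way of making the paper's ``simple linear algebra considerations'' explicit without running into smooth-diagonalisation issues, but this fills in detail rather than changing the argument.
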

\begin{proof}
First we note that the condition that $K \in \mathfrak{X}_\scri(\bhR)$, together with the asymptotic behaviour of $g$, implies that $r^{-2} \mathcal{L}_K g$ extends smoothly to $\bhR$.  Next we note the redshift identity, which states that
\be
d\left[ g(T, T) \right] = - 2 \varkappa T^\flat,
\ee
on the horizon\footnote{$T^\flat = T_\mu dx^\mu$ is the one form obtained by `lowering an index' on $T$ using the metric $g$}. From here, we deduce that for any stationary vector field $K$, i.e. one satisfying $ \mathcal{L}_T K=-\mathcal{L}_K T =0$, we have
\be
(\mathcal{L}_K g)(T, T) = \mathcal{L}_K[ g(T, T)] -2 g(\mathcal{L}_KT, T) =-2 \varkappa g(K, T). 
\ee 
The result follows from simple linear algebra considerations at the horizon, together with a partition of unity argument away from the horizon.
\end{proof}

We may now move on to discuss the redshift estimate.
\begin{Definition}\label{RSdef}
Fix homogeneous boundary conditions for the strongly hyperbolic operator $L$, as in Definition \ref{wpdefs}, and assume that the Robin functions are stationary, i.e. $\mathcal{L}_T \beta_I = 0$. Fix also any smooth timelike stationary vector field $N$ which agrees with $n_{\Sigma_t}$ near $\hor$ and with $T$ near $\scri$. For a function $\phi\in C^\infty_{bc}(\bhR; \C^N)$, we define:
\be
(\curJ^N_\gamma)^\mu[\phi] = N^\nu \emT_\nu{}^\mu[\phi] - \frac{1}{2A}\gamma \abs{\phi}^2 N^\mu.
\ee
We also define the redshift energy on $\Sigma_t$ to be
\be
\mathcal{E}_\gamma(t)[\phi] = \int_{\Sigma_t} (\curJ^N_\gamma)^\nu[\phi] n_\nu\ dS +  \frac{1}{2}\sum_{I \in \{1, \dots, N\}\setminus \mathcal{D}} \int_{\scri \cap \Sigma_t}  \overline{\phi}^I \phi^I \beta_I r^{-2\kappa_I} \sqrt{A} d \mathcal{K}. 
\ee
Finally, recalling that $L \phi = A (\Box_g\phi+V) + W[\phi]$ for $W \in \mathfrak{X}_\scri(\bhR)$ a stationary matrix valued vector field, we define
\bean
w_L &=& \sup_{\Sigma_t\cap \hor} \left \{ \frac{1}{A} g(T, \Re[\overline{\xi}\cdot W\xi])\ |\ \xi\in \C^N, \abs{\xi} = 1 \right\}, \\
w_L^* &=& \inf_{\Sigma_t\cap \hor} \left \{ \frac{1}{A} g(T, \Re[\overline{\xi}\cdot W\xi])\ |\ \xi\in \C^N, \abs{\xi} = 1 \right\}.
\eean
Note that these are finite as a result of the compactness of spatial sections of the horizon and furthermore
\be
w_{L^\dagger} = -w_L^*, \qquad w_{L^\dagger}^* = -w_L.
\ee
\end{Definition}

We then have the following result:
\begin{Theorem}[Redshift estimate]\label{redshift}
\begin{enumerate}[i)]
\item There exist constants $C, c>0$ independent of $\gamma$ such that for any sufficiently large $\gamma$ we have 
\be
c \mathcal{E}_\gamma(t)[\phi] \leq \left( \norm{\phi}{\H^1(\Sigma_t; \kappa)}^2+\norm{T\phi}{\L^2(\Sigma_t)}^2+\gamma \norm{\phi}{\L^2(\Sigma_t)}^2\right) \leq C \mathcal{E}_\gamma(t)[\phi],
\ee
for all $\phi\in C^\infty_{bc}(\bhR; \C^N)$.
\item For sufficiently large $\gamma$, for any $\epsilon>0$ there exists $C_{\epsilon}$ independent of $\gamma$ such that
\ben{rsest}
\frac{d}{dt} \mathcal{E}_{\gamma}(t)[\phi] \leq ( w_L-\varkappa+\epsilon) \mathcal{E}_{\gamma}(t)[\phi] +C_{\epsilon} \left ( \norm{(L+\gamma)\phi}{\L^2(\Sigma_t)}^2+E_\gamma(t)[\phi] \right),
\een
for all $\phi\in C^\infty_{bc}(\bhR; \C^N)$.
\item If additionally $\phi$ vanishes on $\hor$, then for sufficiently large $\gamma$, for any $\epsilon>0$ there exists $C_{\epsilon}$ independent of $\gamma$ such that
\ben{dualrsest}
-\frac{d}{dt} \mathcal{E}_{\gamma}(t)[\phi] \leq (- w_L^*+\varkappa+\epsilon) \mathcal{E}_{\gamma}(t)[\phi] +C_{\epsilon} \left ( \norm{(L+\gamma)\phi}{\L^2(\Sigma_t)}^2+E_\gamma(t)[\phi] \right).
\een
\end{enumerate} 
In all three cases, ``sufficiently large $\gamma$'' may be taken to mean both $\gamma>2 \gamma_0$ and $\gamma\geq 0$ hold.
\end{Theorem}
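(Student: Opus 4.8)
The plan is to run the vector-field method with the timelike multiplier $N$, applying the local-in-time divergence theorem (Lemma~\ref{litdivlem}) to the current $K=\curJ^N_\gamma$, exactly as in the proof of the Killing estimate (Theorem~\ref{killthm}), but now exploiting the fact that $N$ remains timelike on $\hor$ where $T$ degenerates. For the coercivity statement $i)$ I would argue pointwise: since $N$ and $n_{\Sigma_t}$ are both future-directed and timelike everywhere on $\Sigma_t$ (including at $\hor$), the contraction $\emT(N,n_{\Sigma_t})$ is a positive-definite quadratic form in the full twisted gradient $\tn\phi$, now controlling the derivative \emph{transverse} to $\hor$ as well as the tangential ones. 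The potential contribution $\overline\phi\cdot F\phi$ is lower order and is dominated once $\gamma$ is large by the added $\gamma\abs{\phi}^2$ term, while the $\scri$ surface integral in $\mathcal{E}_\gamma$ is absorbed using the twisted Sobolev trace inequality \eq{traceeq} precisely as in Theorem~\ref{killthm}; the reverse bound is Cauchy--Schwarz. This gives the two-sided equivalence.

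For $ii)$ I would apply Lemma~\ref{litdivlem} to $\curJ^N_\gamma$ and treat the three terms. The $\frac{d}{dt}$-term together with the $\scri$-flux combine, using the stationarity $\mathcal{L}_T\beta_I=0$ of the Robin data, into $\frac{d}{dt}\mathcal{E}_\gamma$, just as in Theorem~\ref{killthm}. The horizon flux $-\int_{\hor\cap\Sigma_t}(\curJ^N_\gamma)^\mu T_\mu\,d\sigma$ has a favourable sign: on $\hor$ one has $N=n_{\Sigma_t}$ and $T$ future-null, so $\emT(N,T)\geq 0$ up to the lower-order $\overline\phi\cdot F\phi$ term, which is controlled by $E_\gamma$. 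The crux is the bulk divergence $\nabla_\mu(\curJ^N_\gamma)^\mu$, which I would split as $\frac12(\mathcal{L}_N g)^{\mu\nu}\emT_{\mu\nu}+N^\nu\nabla_\mu\emT_\nu{}^\mu$ plus the $\gamma$-term. Into the deformation piece I substitute Lemma~\ref{deformation}: its leading part $-\varkappa(N^\flat\otimes_s dt)^{\mu\nu}\emT_{\mu\nu}=-\varkappa\,\emT(N,\nabla t)$, and since near $\hor$ the gradient $\nabla t$ is \emph{past}-directed with $\nabla t=-n_{\Sigma_t}/\sqrt A$, this equals $+\frac{\varkappa}{\sqrt A}\emT(n_{\Sigma_t},n_{\Sigma_t})>0$; after the $-\int(\,\cdot\,)\sqrt A\,dS$ it yields precisely the decay term $-\varkappa\,\mathcal{E}_\gamma$ via the coercivity of $i)$. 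The indefinite $r^2\sum_a\epsilon_a\omega_a\otimes_s\omega_a$ pieces vanish on $\hor$ because $\omega_a(T)=0$ there, so near $\hor$ they are $\O{\textrm{dist}(\hor)}$ and absorbable into $\epsilon\mathcal{E}_\gamma$, while away from $\hor$ they and $\omega_0$ are bounded and fall under $C_\epsilon E_\gamma$. For $N^\nu\nabla_\mu\emT_\nu{}^\mu$ I use Lemma~\ref{emprop} and replace $\Box_g\phi=A^{-1}(L\phi-W[\phi])-V\phi$: the $L\phi$ piece is controlled, together with the $\gamma$-term, by $\epsilon\norm{(L+\gamma)\phi}{\L^2(\Sigma_t)}^2+C_\epsilon(\cdots)$; the $\emS$ and $(V-V^*)$ pieces are lower order; and the surviving $A^{-1}\Re[\tn_N\overline\phi\cdot W[\phi]]$, restricted to $\hor$ where transverse derivatives saturate the energy density, contributes at most $w_L\,\mathcal{E}_\gamma$ by the very definition of $w_L$. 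Assembling these yields \eq{rsest}.

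Statement $iii)$ is the time-reversed estimate for $\phi$ vanishing on $\hor$. The hypothesis kills the horizon flux outright: if $\phi|_\hor=0$ then $\tn\phi|_\hor$ is proportional to the conormal $T^\flat$, so $\tn_\sigma\overline\phi\,\tn^\sigma\phi\propto g(T,T)=0$ and $\emT(N,T)=0$ on $\hor$. Reading off $-\frac{d}{dt}\mathcal{E}_\gamma=+\int_{\Sigma_t}\nabla_\mu(\curJ^N_\gamma)^\mu\sqrt A\,dS$ (plus the $\scri$-term handled as before), the same deformation computation now enters with the opposite overall sign, producing the $+\varkappa\,\mathcal{E}_\gamma$ contribution, while the worst case of the $W$-term becomes $-w_L^*=w_{L^\dagger}$; this is exactly \eq{dualrsest}, reflecting that $iii)$ is the forward redshift estimate for the adjoint $L^\dagger$.

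The main obstacle I anticipate is the bookkeeping near $\hor$ in $ii)$: one must show that the genuinely good redshift term dominates with the \emph{sharp} constant $\varkappa$ after throwing the indefinite $\omega_a$ contributions and all off-horizon errors into $\epsilon\mathcal{E}_\gamma+C_\epsilon(\norm{(L+\gamma)\phi}{\L^2(\Sigma_t)}^2+E_\gamma)$. This requires localizing to a small enough neighbourhood of $\hor$, where $N=n_{\Sigma_t}$ and the $\omega_a$ are small, and invoking the coercivity of $i)$ to convert the pointwise density $\emT(n_{\Sigma_t},n_{\Sigma_t})$ into $\mathcal{E}_\gamma$; extracting exactly the surface-gravity coefficient is the technical heart of the redshift mechanism.
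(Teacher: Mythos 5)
Your strategy coincides with the paper's: apply the local-in-time divergence theorem (Lemma~\ref{litdivlem}) to $\curJ^N_\gamma$, fold the $\scri$-flux into $\frac{d}{dt}\mathcal{E}_\gamma$ exactly as in Theorem~\ref{killthm}, extract the sharp decay constant from the leading term of Lemma~\ref{deformation} via $\nabla t=-n_{\Sigma_t}/\sqrt{A}$ (your computation of this sign and normalisation is correct and is the heart of the paper's step), obtain $w_L$ from the transverse component of $W$ at the horizon --- the paper makes this precise by writing $W=-f\sqrt{A}\,N+W'$ with $W'$ tangent to $\hor$ and $w_L^*\abs{\xi}^2\leq \Re[\xi^*\cdot f\xi]\leq w_L\abs{\xi}^2$ --- and prove $iii)$ by rerunning the argument with $-N$ as multiplier, where your observation that $\phi|_\hor=0$ forces $\tn\phi\propto T^\flat$ and hence kills the horizon flux is exactly the paper's. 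Two steps, however, do not go through as you have written them.

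First, your treatment of the horizon flux in $ii)$: you propose to bound the indefinite $\overline{\phi}\cdot F\phi$ contribution on $\hor\cap\Sigma_t$ by $E_\gamma(t)[\phi]$, but $E_\gamma$ is a bulk integral over $\Sigma_t$ and cannot dominate a surface integral over $\hor\cap\Sigma_t$; a trace inequality would necessarily cost a transverse derivative, i.e.\ an $\epsilon\,\mathcal{E}_\gamma$ term, not $E_\gamma$ alone. The paper sidesteps any estimate here: since $\gamma>\gamma_0$, the full current $-\curJ^N_\gamma$ (including the $\frac{\gamma}{2A}\abs{\phi}^2 N^\mu$ piece, which absorbs $F$ pointwise) is future-directed timelike everywhere, so $-\int_{\hor\cap\Sigma_t}(\curJ^N_\gamma)^\mu T_\mu\,d\sigma\leq 0$ with no loss. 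Second, you never address the bulk terms proportional to $\gamma$ generated by differentiating the $\frac{\gamma}{2A}\abs{\phi}^2N^\mu$ part of the current (terms like $\frac{\gamma}{2A}\abs{\phi}^2\nabla_\mu N^\mu$ in the paper's equation \eq{divN}). These are of size $\gamma\norm{\phi}{\L^2(\Sigma_t)}^2$ and must be absorbed into $E_\gamma$ with a constant \emph{independent of $\gamma$}: this is exactly where the conditions $\gamma>2\gamma_0$ and $\gamma\geq 0$ enter, since then $\gamma\norm{\phi}{\L^2(\Sigma_t)}^2\leq 2(\gamma-\gamma_0)\norm{\phi}{\L^2(\Sigma_t)}^2\leq 2E_\gamma(t)[\phi]$ by \eq{engamma}. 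Your sketch omits this entirely, so the clause ``$C_\epsilon$ independent of $\gamma$'' --- on which the later elliptic arguments (Theorem~\ref{Linj}, where $\gamma$ is sent large to absorb lower-order terms) crucially depend --- is left unproven. A smaller imprecision: the one-forms $\omega_a$ of Lemma~\ref{deformation} do not vanish on $\hor$; only $\omega_a(T)=0$ there, so $\omega_a^\sharp$ is tangent to the horizon and the associated quadratic terms carry tangential derivatives (controlled by $E_\gamma$) plus transverse corrections of size $\O{\mathrm{dist}(\hor)}$ --- the absorption into $\epsilon\,\mathcal{E}_\gamma+C_\epsilon E_\gamma$ that you intend, but for a different reason than the one you state.
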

\begin{proof}
The thing to bear in mind when proving this theorem is that of the type of terms we will encounter, the only terms we don't already control by the Killing energy are those involving transverse derivatives at the horizon, i.e.\ terms like $\tilde{N}\phi$. Quadratic terms involving $\tilde{N}\phi$ occur when the $\tilde{W}\phi$ term in $L$ is multiplied by $\tilde{N}\phi$ as well as from the deformation tensor term, as a result of Lemma \ref{deformation}. These give rise to the presence of $w_L$ and $-\varkappa$ respectively in the final estimate. Terms of the form $\tilde{N}\phi \tilde{K}\phi$ for $K$ tangent to the horizon can be dealt with using the Cauchy-Schwarz inequality, and give rise to the loss of $\epsilon\mathcal{E}_\gamma$. All other derivative terms may be estimated by $E_\gamma$. In more detail, we have:
\begin{enumerate}[1.]
\item Note that in \eq{traceeq}, we may assume that $\chi$ is supported in the region where $N=T$. A very similar argument to the case of the Killing energy then shows that if $\gamma>\gamma_0$, $\mathcal{E}_\gamma(t)[\phi]$ is positive. Since $N$ is everywhere timelike, $\mathcal{E}_\gamma(t)[\phi]$ controls all derivatives, including those transverse to the horizon. Thus $i)$ holds
\item We now apply \eq{litdiv} with the vector field $K=\curJ^N_\gamma$. The surface term on $\scri$ can be dealt with exactly as in the proof of Theorem \ref{killthm}, so the terms on the left hand side of \eq{litdiv} combine to give $d\mathcal{E}/dt$.
\item Note that $\gamma>2\gamma_0, \gamma \geq 0$ implies $\gamma>\gamma_0$. This implies that $-\curJ^N_\gamma$ is timelike and future directed everywhere. As a result, we conclude that
\be
-\int_{\hor \cap \Sigma_t} (\curJ^N_\gamma)^\mu T_{\mu} d \sigma \leq 0,
\ee
so the term on the horizon has a good sign.
\item We now need to estimate the divergence term. For this, we need the results of Lemma \ref{emprop}. We deduce
\bea\label{divN}
\nabla_\mu  (\curJ^N_\gamma)^\mu &=& -\Re\left[ \tilde{N}\overline{\phi}\cdot \left(\Box_g\phi+ \frac{1}{2}(V+V^*) \phi + \frac{\gamma}{A} \phi  \right) \right] + N^\mu \emS_{\nu}[\phi] +\Pi^N_{\mu \nu} \emT^{\mu \nu} \\ \nonumber &&\quad -  \frac{1}{2A}\gamma \abs{\phi}^2 (\nabla^\mu N_\mu)-\frac{\gamma}{A} (N^\mu (\nabla_\mu f)f^{-1})\overline{\phi})\cdot\phi +\frac{\gamma N(A)}{2A^2} \abs{\phi^2}.
\eea
where
\be
\Pi^N = \frac{1}{2}\mathcal{L}_N g
\ee
is the deformation tensor of $N$. We take the terms in \eq{divN} one by one.
\item We have:
\bean
\Re\left[ \tilde{N} \overline{\phi}\cdot \left(\Box_g\phi+ \frac{1}{2}(V+V^*) \phi + \frac{\gamma}{A} \phi  \right) \right]  &=& \Re\left[ \frac{1}{A} \tilde{N}\overline{\phi}\cdot \left(L \phi +\gamma \phi \right) \right]  - \Re \left[ \frac{1}{A} \tilde{N}\overline{\phi}\cdot \tilde{W}\phi \right] +\\&&\quad + \frac{1}{2}\Re[\tilde{N}\overline{\phi} \cdot (V-V^*)\phi].
\eean
We may write $W = -f \sqrt{A} N + W'$, where $W'$ is tangent to $\hor$ and $f$ is a smooth matrix valued function supported near the horizon, with $w_L^* \abs{\xi}^2\leq \Re[\xi^*\cdot f\xi] \leq w_L\abs{\xi}^2$. Recall $\gamma>\gamma_0$. As a result, $E_\gamma(t)[\phi]$ controls all derivatives tangent to the horizon, as well as the $\L^2$ norm of $\phi$ with a constant that can be chosen independent of $\gamma$ (since $\gamma$ is bounded away from $\gamma_0$). Making use of this, we deduce that for any $\delta$, there exists $C_\delta$ independent of $\gamma$ such that
\bean
&&\int_{\Sigma_t} \Re\left[ \tilde{N} \overline{\phi}\cdot \left(\Box_g\phi+ \frac{1}{2}(V+V^*) \phi+ \frac{\gamma}{A} \phi  \right) \right] \sqrt{A} dS \\&& \qquad \leq (w_L+\delta) \mathcal{E}_{\gamma}(t)[\phi] +C_{\delta} \left ( \norm{(L+\gamma)\phi}{\L^2(\Sigma_t)}^2+E_\gamma(t)[\phi] \right).
\eean

\item Recall that the correction term arising from the use of twisted derivatives, $\emS_\nu$ is given by
\be
2 \emS_\nu[\phi] = \Re\left\{ \overline{\phi} \cdot \left[ \tn^\dagger_\nu ( Ff ) f^{-1}\phi\right]+  (\tn_\sigma \overline{\phi}) \cdot \left[f^{-1} (\tn^\dagger_\nu  f)  (\tn^\sigma \phi)\right]\right \}.
\ee
Now, using the fact that  $g^{-1} = N \otimes_s K_0 + \sum_a K_a \otimes K_a$, for $K_i$ stationary and tangent to $\hor$, together with the fact that $N[f]=N[F] = 0$ near $\scri$, we deduce that for any $\delta>0$, there exists $C_\delta$ independent of $\gamma$ such that
\be
\abs{\int_{\Sigma_t} N^\mu \emS_{\nu}[\phi] \sqrt{A} dS} \leq \delta \mathcal{E}_{\gamma}(t)[\phi] +C_{\delta} E_\gamma(t)[\phi].
\ee
\item Now consider the remaining terms in \eq{divN}, those proportional to $\gamma$. These can clearly be dominated by
\be
R_1:=\gamma C_{N, f, A} \norm{\phi}{\L^2(\Sigma_t)}^2,
\ee
with $C_{N, f, A}$ independent of $\gamma$. We know that
\ben{engamma}
(\gamma-\gamma_0) \norm{\phi}{\L^2(\Sigma_t)}^2 \leq E_\gamma(t)[\phi].
\een
Making use of the fact that $\gamma < 2(\gamma-\gamma_0)$, we deduce that
\be
R_1 \leq 2 C_{N, f, A} E_\gamma(t)[\phi],
\ee
with $C_{N, f, A}$ independent of $\gamma$. 
\item Now, making use of Lemma \ref{deformation} with $K=N$, we have that for any $\delta$, there exists $C_{\delta}$ such that 
\be
-\int_{\Sigma_t} \Pi^N_{\mu \nu} \emT^{\mu \nu} \sqrt{A} dS \leq (-\varkappa +\delta)  \mathcal{E}_{\gamma}(t)[\phi] +C_{\delta} E_\gamma(t)[\phi].
\ee
Taking all the estimates in 2.-8. together, and making $\delta$ sufficiently small, we arrive at \eq{rsest}.
\item For part $iii)$, we make essentially the same estimates, but using $-N$ as a multiplier. The only place where we have to modify the argument is to check that the surface term on the horizon vanishes since $\phi$ is assumed to vanish there.
\end{enumerate}
\end{proof}

The first corollary of this theorem follows from a simple application of Gronwall's Lemma:
\begin{Corollary}\label{cor2}
\begin{enumerate}[i)]
\item Suppose $\psi\in L^2(\R_+, \H^1(\Sigma, \kappa))$ with $T\psi \in L^2(R_+, \L^2(\Sigma))$ is a weak solution of the equation
\ben{Leq}
L\psi = 0, \qquad \textrm{ in }\bhR,
\een
where $L$ is as above and $\psi$ is subject to compatible homogeneous stationary boundary conditions at $\scri$, and to the initial conditions:
\be
\psi|_{\Sigma_0} = \uppsi, \qquad T \psi|_{\Sigma_0} = \uppsi',
\ee
for  $\uppsi\in \H^1_\mathcal{D}(\Sigma, \kappa)$, $\uppsi' \in \L^2(\Sigma)$. Then in fact $\psi\in C^0(\R_+, \H^1(\Sigma, \kappa))$ with $T\psi \in C^0(R_+, \L^2(\Sigma))$, with the estimate
\ben{supest}
\sup_{t\geq 0} \left(\norm{\psi}{\H^1(\Sigma_t, \kappa)}^2 + \norm{T\psi}{\L^2(\Sigma_t)}^2 \right)\leq C e^{M t} \left(\norm{\uppsi}{\H^1(\Sigma, \kappa)}^2 + \norm{\uppsi'}{\L^2(\Sigma)}^2 \right).
\een
for some constants $C, M$ depending on $g, W, V$.

\item Suppose further that $W=0$ and that Theorem \ref{killthm} $ii)$ holds with $\gamma_0<0$. Then \eq{supest} holds with $M=0$.
\end{enumerate}
\end{Corollary}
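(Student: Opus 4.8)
The plan is to derive both parts as consequences of the redshift estimate, Theorem \ref{redshift}, by massaging the differential inequality \eq{rsest} into a closed Gronwall inequality for the redshift energy $\mathcal{E}_\gamma(t)[\psi]$ alone. Since \eq{rsest} is established for $\phi \in C^\infty_{bc}(\bhR;\C^N)$, I would first prove \eq{supest} for such smooth data and then extend to a general weak solution by approximation: by Theorem \ref{WPThm} a weak solution arising from $(\uppsi,\uppsi') \in \H^1_\mathcal{D}(\Sigma,\kappa)\times\L^2(\Sigma)$ is a limit of smooth solutions, and the uniform bound survives this limit. The differential inequality together with the uniform bound yields equicontinuity of $t\mapsto\mathcal{E}_\gamma(t)$; combined with the weak continuity coming from $\psi\in L^2(\R_+,\H^1)$ and the equation, the standard ``weak continuity plus norm continuity implies strong continuity'' argument then upgrades the a priori $L^2_t$ regularity to $\psi\in C^0(\R_+,\H^1(\Sigma,\kappa))$ and $T\psi\in C^0(\R_+,\L^2(\Sigma))$, using the norm equivalence of Theorem \ref{redshift} $i)$.

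For part $i)$, fix $\gamma$ large enough that Theorem \ref{redshift} applies and set $\phi=\psi$. Because $L\psi=0$, the source term in \eq{rsest} reduces to $\norm{(L+\gamma)\psi}{\L^2(\Sigma_t)}^2=\gamma^2\norm{\psi}{\L^2(\Sigma_t)}^2$. The equivalence in Theorem \ref{redshift} $i)$ gives $\gamma\norm{\psi}{\L^2(\Sigma_t)}^2\le C\,\mathcal{E}_\gamma(t)[\psi]$, hence $\gamma^2\norm{\psi}{\L^2}^2\le C\gamma\,\mathcal{E}_\gamma$, while comparing Theorem \ref{killthm} $i)$ with the lower bound of Theorem \ref{redshift} $i)$ gives $E_\gamma(t)[\psi]\le C'\mathcal{E}_\gamma(t)[\psi]$. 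Substituting both into \eq{rsest} collapses the right-hand side to a multiple of $\mathcal{E}_\gamma$, so that $\tfrac{d}{dt}\mathcal{E}_\gamma(t)[\psi]\le M\,\mathcal{E}_\gamma(t)[\psi]$ for a constant $M=M(g,W,V)$. Gronwall's lemma gives $\mathcal{E}_\gamma(t)[\psi]\le e^{Mt}\mathcal{E}_\gamma(0)[\psi]$, and one final use of the norm equivalence in both directions converts this into \eq{supest}.

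For part $ii)$, the hypotheses are arranged precisely so that the leading coefficient in \eq{rsest} turns negative and the source vanishes. Since $W=0$ we have $w_L=0$, and since Theorem \ref{killthm} $ii)$ holds with $\gamma_0<0$ we may legitimately take $\gamma=0$ (this is where the $\gamma$-independence of the constants in Theorem \ref{redshift} is essential), so that \eq{rsest} becomes $\tfrac{d}{dt}\mathcal{E}_0\le(-\varkappa+\epsilon)\mathcal{E}_0+C_\epsilon E_0$. One cannot absorb $E_0$ into $\mathcal{E}_0$ here: the naive coupled system $\mathcal{E}_0'\le -a\mathcal{E}_0+bE_0$, $E_0'\le c\mathcal{E}_0$ has a positive eigenvalue $\tfrac12(-a+\sqrt{a^2+4bc})$ and so admits exponential growth. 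The resolution is to control the \emph{degenerate} Killing energy by a separate, sign-dependent mechanism: invoking Corollary \ref{cor1} (with $W=0$, $V$ hermitian, $\gamma=0$), the interior divergence term in the Killing identity vanishes and the horizon flux has a good sign, giving $E_0(t)[\psi]\le E_0(0)[\psi]$. Feeding this uniform bound into the inequality above and choosing $\epsilon<\varkappa$ so that $a:=\varkappa-\epsilon>0$, an integrating-factor argument yields
\[
\mathcal{E}_0(t)[\psi]\le e^{-at}\mathcal{E}_0(0)[\psi]+\frac{C_\epsilon}{a}\bigl(1-e^{-at}\bigr)E_0(0)[\psi],
\]
which is bounded uniformly in $t$ by $C\bigl(\norm{\uppsi}{\H^1(\Sigma,\kappa)}^2+\norm{\uppsi'}{\L^2(\Sigma)}^2\bigr)$, establishing \eq{supest} with $M=0$.

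The main obstacle I anticipate is exactly this hierarchy in part $ii)$: the redshift energy closes only once the weaker Killing energy has been bounded independently, and that independent bound relies on a positivity structure (hermiticity and the good horizon flux) that the generic redshift inequality does not see. A secondary technical point is the passage from the smooth a priori estimate to weak solutions and the upgrade from $L^2_t$ to $C^0_t$ regularity, which requires care but is routine given Theorem \ref{WPThm} and the energy equivalences.
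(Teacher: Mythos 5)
Your proposal is correct and follows essentially the same route as the paper: part $i)$ is the redshift inequality \eq{rsest} with the source controlled via $L\psi=0$ and the norm equivalences, closed by Gronwall and extended to weak solutions by approximation; part $ii)$ invokes Corollary \ref{cor1} to obtain the uniform bound $E_0(t)[\psi]\leq E_0(0)[\psi]$ and then applies Gronwall to $\tfrac{d}{dt}\mathcal{E}_0\leq -(\varkappa-\epsilon)\mathcal{E}_0+C_\epsilon E_0$ with $\epsilon<\varkappa$, exactly as in the paper. Your added remark on why the coupled $(\mathcal{E}_0,E_0)$ system cannot be closed naively, and your explicit note that $\gamma_0<0$ is what licenses $\gamma=0$ in Theorem \ref{redshift}, are accurate elaborations of the paper's terser argument rather than a different method.
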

\begin{proof}
Suppose $\uppsi$, $\uppsi'$ are in $C^\infty_{bc}(\bhR; \C^N)$ and launch a smooth solution of \eq{scaleq}. Fix a sufficiently large $\gamma$. The right hand side of \eq{rsest} is controlled by $M \mathcal{E}_\gamma(t)[\psi]$, after making use of $L\psi=0$. An application of Gronwall's Lemma immediately produces the estimate \eq{supest}. An approximation argument allows us to relax the condition that the initial data launch a smooth solution.

Suppose now $W=0$ and we may assume $\gamma_0<0$. Again consider initial data launching a smooth solution. By Corollary \ref{cor1} we have that $E_0(t)[\psi]\leq E_0(0)[\psi]$. We have $\mathcal{E}_0(t)[\psi] \sim \left(\norm{\psi}{\H^1(\Sigma_t)}^2 + \norm{T\psi}{\L^2(\Sigma_t)}^2 \right)$ and for any $\epsilon>0$ we have:
\be
\frac{d}{dt} \mathcal{E}_{0}(t)[\psi] \leq -(\varkappa-\epsilon) \mathcal{E}_{0}(t)[\psi] +C_{\epsilon} E_0(t)[\psi].
\ee
Taking $\epsilon<\varkappa$, an application of Gronwall's Lemma immediately produces the estimate \eq{supest} with $M=0$. By approximation we can again relax the assumption that the initial data are smooth.
\end{proof}
The presence of the $w_L$ term in the estimates \eq{rsest} is certainly necessary. Suppose for simplicity that a single component $\psi$ obeys
\be
(\Box_g+V)\psi = 0,
\ee
with, say,  Dirichlet conditions at infinity and for $V$ such that $\mathcal{E}_\gamma(t)[\psi]$ remains bounded by the above result. Consider now the function $\Psi = e^{k t} \psi$, $k>0$, which obeys
\be
\Box_g \Psi + 2 k (dt)^\sharp[\Psi] +(V- k^2 g^{-1}(dt, dt) )\Psi = 0.
\ee
Obviously any sharp bound for $\mathcal{E}_\gamma(t)[\Psi]$ will be $e^{2 k t}$ worse than for $\mathcal{E}_\gamma(t)[\psi]$, so clearly the presence of the $\tilde{W}$ term has an effect on the decay one can obtain, and this is reflected in our estimates.

Corollary \ref{cor2}, $ii)$ is the now-classical redshift result for an asymptotically AdS black hole. The key point is that the positivity of the surface gravity allows us to `upgrade' an estimate for the Killing energy into an estimate for the full $H^1$ norm at the horizon. There was nothing special about our choice of normal vector field to use as a multiplier -- any vector field which is future directed, stationary, timelike, and transverse to the horizon will work equally well, but it's somewhat convenient to assume it is normal to the spatial slice. In \cite{Mihalisnotes} a redshift vector field is constructed such that the deformation term controls \emph{all} derivatives on the horizon. By contrast, we note that any future directed, transverse, stationary vector field gives rise to a deformation tensor which has a good sign for the \emph{transverse derivatives} and we make use of the Killing energy to control the tangential derivatives. Our argument does not rely on the asymptotic AdS structure for these facts (in fact this rather complicates matters) so this approach can be applied to simplify any redshift argument of this form.

\subsubsection{Commuting the equation}\label{comsec} The last major result that we shall require for the full time-dependent problem is a commutation with vector fields tangent to $\scri$. We may state it as follows:
\begin{Theorem}\label{commute}
Fix stationary homogeneous boundary conditions on $\scri$. Suppose $\phi\in C^\infty_{bc}(\bhR; \C^N)$. We define $f$ by
\ben{Leqn}
L\phi = f.
\een
\begin{enumerate}[i)]
\item There exists a finite set of vector fields $K_a$, $a = 1, \ldots, M$ which span\footnote{In the sense that any element of $\mathfrak{X}_\scri(\bhR)$ may be written (not necessarily uniquely) as a linear combination of $K_a$'s with smooth coefficients.} $\mathfrak{X}_\scri(\bhR)$, such that if we define $\Phi = (\phi, K_1 \phi, \ldots, K_M \phi)$, then $\Phi$ obeys an equation
\ben{Lpeqn}
L' \Phi = f'.
\een
Here $L'$ is a strongly hyperbolic operator constructed from $L$, which acts on vectors in $\C^{N'}$, $N'=N(M+1)$, and $f'$ is defined by
\ben{fpdef}
f'_0 = f, \qquad f'_a = K_a f - \frac{K_a A}{A} f.
\een
Each component of $\Phi$ separately obeys the same boundary conditions as $\phi$, so $\Phi$ inherits stationary homogeneous boundary conditions. We also have:
\ben{neww}
w_{L'} = w_{L} - 2 \varkappa.
\een

\item Conversely, suppose that some $\Phi\in C^\infty_{bc'}(\bhR; \C^{N'})$ where $bc'$ are the inherited boundary conditions, and furthermore suppose $\Phi$ satisfies \eq{Lpeqn} where $f'$ has the form \eq{fpdef} for some $f$. Then, defining $\phi := \Phi_0$ and $\delta\Phi := (\Phi_a - K_a \phi)_{a=1,\ldots, M}$, we have that $\delta\Phi$ satisfies:
\be
L'' \delta \Phi =0,
\ee
for a strongly hyperbolic operator $L''$ acting on vectors of dimension $MN$ such that $w_{L''} = w_{L'}$. If the initial conditions imply that $\delta\Phi \equiv 0$, then $\phi$ solves \eq{Leqn}.
\end{enumerate}
\end{Theorem}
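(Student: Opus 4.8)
The plan is to build the system for $\Phi$ by differentiating $L\phi=f$ along each commutation field and then repackaging the result as a strongly hyperbolic operator on the enlarged vector. First I would fix the fields $K_a$: choose finitely many stationary $K_1,\dots,K_M\in\mathfrak{X}_\scri(\bhR)$ that generate $\mathfrak{X}_\scri(\bhR)$ as a $C^\infty(\bhR)$-module, which is possible since $\Sigma$ is compact, and arrange that the timelike field $N$ of Definition \ref{RSdef} (transverse to $\hor$, and equal to $T$ hence tangent to $\scri$ near $\scri$) is one of them. Writing the equation in divided form as $\Box_g\phi+V\phi+A^{-1}W[\phi]=A^{-1}f$, applying $K_a$ and multiplying by $A$ reproduces precisely the source $f'_a=K_af-(K_aA/A)f$ of \eq{fpdef}, with left-hand side $A\Box_g(K_a\phi)-A[\Box_g,K_a]\phi+AK_a(V\phi)+AK_a(A^{-1}W[\phi])$.

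The crux is to turn the second-order term $-A[\Box_g,K_a]\phi$ into a first-order operator in the variables $\Phi=(\phi,K_1\phi,\dots,K_M\phi)$. Modulo curvature and lower-order terms its principal part is the raised deformation tensor, $[\Box_g,K_a]\phi=(\mathcal{L}_{K_a}g^{-1})^{\mu\nu}\nabla_\mu\nabla_\nu\phi+\cdots$. I would decompose the symmetric bivector $\mathcal{L}_{K_a}g^{-1}$ as a smooth multiple of $g^{-1}$ plus a remainder each of whose terms carries at least one slot tangent to $\scri$: the $g^{-1}$ piece produces $\Box_g\phi$, which I eliminate through $\Box_g\phi=A^{-1}f-V\phi-A^{-1}W[\phi]$, while for a tangential slot $X$ one has $X^\nu\nabla_\mu\nabla_\nu\phi=\nabla_\mu(X^\nu\nabla_\nu\phi)-(\nabla_\mu X^\nu)\nabla_\nu\phi$, a genuine first-order operator in the $K_b\phi$ (since $X^\nu\nabla_\nu\phi$ is a combination of the $K_b\phi$) plus terms first order in $\phi$. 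This reads off the matrix potential $V'$ and matrix field $W'$ of $L'$; that $V'$ obeys \eq{vasy} and $W'\in\mathfrak{X}_\scri(\bhR)$ follows because $K_a$ is tangent to $\scri$ and kills the leading constant part of $V$, and the same tangency shows each component of $\Phi$ inherits the Dirichlet/Robin conditions.

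The shift \eq{neww} is the delicate point. The horizon-leading term of $\mathcal{L}_{K_a}g^{-1}$ is $2\varkappa\,K_a\otimes_s(dt)^\sharp$, the raised form of the $-2\varkappa\,K_a^\flat\otimes_s dt$ appearing in Lemma \ref{deformation}; contracting it against $\nabla\nabla\phi$ and pulling $K_a$ outside contributes $2\varkappa\,(dt)^\sharp[K_a\phi]$ to $[\Box_g,K_a]\phi$, hence $-2\varkappa A\,(dt)^\sharp$ to the diagonal block of $W'$. Because $g(T,(dt)^\sharp)=dt(T)=1$, this is exactly a $-2\varkappa$ change in $A^{-1}g(T,\Re[\overline{\xi}\cdot W'\xi])$ at the horizon, whereas the remaining deformation terms $r^2\omega_0\otimes_s dt$ and $r^2\sum_a\epsilon_a\omega_a\otimes_s\omega_a$ feed only the tangential part of $W'$ and the potential, and carry no $T$-component on $\hor$ precisely because $\omega_i(T)=0$ there. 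To make the shift uniform over all components — including the zeroth, which otherwise keeps $w_L$ and would dominate the supremum defining $w_{L'}$ — I would modify the zeroth equation by the add-and-subtract device of the proof of Lemma \ref{toy lemma}: replace its transverse $\tilde{N}\phi$ redshift term by the shifted one plus a coupling proportional to $N\phi-\Phi_{(N)}$, which vanishes identically once $\Phi_{(N)}=N\phi$. This yields $w_{L'}=w_L-2\varkappa$ for the whole operator and simultaneously sets up the converse.

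For part $ii)$, I would exploit that the construction of $i)$ is an algebraic identity in its input: for any sufficiently regular $\psi$, setting $\tilde f:=L\psi$, the vector $(\psi,K_1\psi,\dots,K_M\psi)$ solves $L'(\psi,K_a\psi)=\tilde f'$ with $\tilde f'$ built from $\tilde f$ by \eq{fpdef}. Applying this with $\psi=\Phi_0$ and subtracting from the given $L'\Phi=f'$, the difference $(0,\delta\Phi_1,\dots,\delta\Phi_M)$ satisfies a system whose components $a=1,\dots,M$ close into a homogeneous problem $L''\delta\Phi=0$; as $L''$ keeps the diagonal $A\Box_g$ blocks with their shifted first-order part, $w_{L''}=w_{L'}$. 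The zeroth equation, modified by the coupling proportional to $N\Phi_0-\Phi_{(N)}=-\delta\Phi_{(N)}$, reads $L\Phi_0=f+(\text{shift})[\delta\Phi_{(N)}]$, so $\phi=\Phi_0$ solves \eq{Leqn} exactly when the initial data force $\delta\Phi\equiv0$. The main obstacle is the reduction and coefficient extraction of the middle two paragraphs: tracking which second derivatives are absorbed by the equation versus re-expressed tangentially, and checking that the off-diagonal first-order couplings in $W'$ carry no horizon $T$-component, so that $w_{L'}$ is controlled solely by the diagonal $-2\varkappa$ shift.
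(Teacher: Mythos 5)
Your proposal follows the same overall strategy as the paper---commute $L\phi=f$ with a finite stationary family spanning $\mathfrak{X}_\scri(\bhR)$, use the horizon normal form of the deformation tensor to produce the $-2\varkappa$ shift, fix the zeroth row by adding a multiple of the constraint $N\phi-\Phi_{(N)}$ (the paper's $2\eta\varkappa\sqrt{A}(K_1\Phi_0-\Phi_1)$, where you should not forget the cut-off $\eta$, needed so that the zeroth-order coupling respects the asymptotics \eq{vasy}), and obtain part $ii)$ by applying the part-$i)$ identity to $\Phi_0$ and subtracting---but your handling of the horizon terms is a genuinely different bookkeeping. The paper first proves Lemma \ref{vflem}, constructing $K_a$ whose deformation tensors decompose as $r^{-2}f^{bc}_aK_b^\flat\otimes_sK_c^\flat$ with the horizon values \eq{horvals}, and then must both add-and-subtract $2\varkappa\sqrt{A}K_1K_a\phi$ \emph{and} add multiples $\alpha^b_a$ of the commutator identity $K_bK_1\phi-K_1K_b\phi+[K_1,K_b]\phi=0$ to force the transverse off-diagonal coefficients $\mathsf{W}^{1b}_a$ to vanish on $\hor$; only then does $g(T,W'_i{}^j)=(g(T,W)-2A\varkappa\iota)\delta_i{}^j$ follow. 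You instead contract the raised deformation tensor directly in the normal form of Lemma \ref{deformation}, always pulling the $T$-orthogonal directions $\omega_i^\sharp$ outside so that the dangerous direction $(dt)^\sharp$ (with $g(T,(dt)^\sharp)=1$) hits only the row's own variable $K_a\phi$ and lands on the diagonal, while off-diagonal couplings have vanishing $T$-component on $\hor$ by $\omega_i(T)=0$. This replaces the paper's $\alpha^b_a$ device by a choice of contraction order, and since $n=-\sqrt{A}\,\nabla t$ gives $-2\varkappa A(dt)^\sharp=2\varkappa\sqrt{A}\,N$ near $\hor$, your diagonal insertion is literally the same term as the paper's $W+2\iota\varkappa\sqrt{A}K_1$; the sign checks out against $g(T,n)=-\sqrt{A}$. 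Note also that Lemma \ref{deformation} applies to every stationary $K_a\in\mathfrak{X}_\scri(\bhR)$, tangential or transverse, which is why every row $a\geq1$ acquires the same $-2\varkappa$ shift, exactly as in the paper.

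Two soft spots remain, one of which you correctly flag. First, near $\scri$ your claim that ``at least one slot tangent to $\scri$'' suffices is not enough: writing $X^\mu Y^\nu\nabla_\mu\nabla_\nu\phi=Y[X\phi]+\text{l.o.t.}$, the outer field $Y$ must itself lie in $\mathfrak{X}_\scri(\bhR)$ with the correct $r$-weights for $W'$ to be admissible, and an inner $Y\phi$ with $Y$ non-tangent is not a combination of the $\Phi_b$ at all. Lemma \ref{deformation} controls only the horizon (it asserts nothing about the $\omega_i$ at infinity beyond smoothness of $r^{-2}\mathcal{L}_Kg$); the weight bookkeeping near $\scri$ is precisely the content of Lemma \ref{vflem} $vi)$--$vii)$ and step 2 of the paper's proof, with its explicit coordinate computations, and your argument must import that analysis. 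Second, if the multiple of $g^{-1}$ in your decomposition of $\mathcal{L}_{K_a}g^{-1}$ is nonzero, eliminating $\Box_g\phi$ through the equation adds a multiple of $f$ to the source of row $a$, so $f'_a$ no longer has the exact form \eq{fpdef} required by the statement and used in the induction of Theorem \ref{Lsinv}; either take that multiple to vanish (Lemma \ref{deformation} shows no $g$-trace part is needed) or track the modified source explicitly. With these repairs your argument closes, including $w_{L''}=w_{L'}$ in part $ii)$, where---as in the paper's computation of $g(T,W''_a{}^b)$---the extra constraint coupling only lowers entries of the quadratic form defining the supremum and so cannot raise $w_{L''}$.
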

The relation \eq{neww} is very important, as we shall see later that it is this fact which permits us to successively define the quasinormal modes on a larger and larger subset of the plane of complex frequencies. This property is (a generalisation of) the enhanced redshift effect of Dafermos and Rodnianski \cite{Mihalisnotes}.

We will first prove a technical Lemma, which allows us to construct a suitable set of $K_a$'s. Theorem \ref{commute} will then follow by commuting the original equation with these vector fields.
\begin{Lemma}\label{vflem}
There exists a finite collection of vector fields $K_a$, $a= 1, \ldots, M$ with the following properties:
\begin{enumerate}[i)]
\item $K_a$ are stationary elements of $\mathfrak{X}(\bhR)$.
\item $K_1$ agrees with $N$ near $\hor$ and vanishes near $\scri$. \label{K1def}
\item $K_2$ agrees with $r^{-1} dr^\sharp$ near $\scri$  and vanishes near $\hor$. \label{K2def}
\item $K_a$ are tangent to $\hor$ and satisfy $dr(K_a)=0$ near $\scri$ for $a=3, \ldots M$.
\item If $X\in \mathfrak{X}_\scri(\bhR)$ is any stationary vector field tangent to $\scri$, then there exist (not necessarily unique) functions $x^a \in C^\infty(\bhR)$ such that
\be
X = \sum_a x^a K_a.
\ee
\item We have $r^2 \nabla \left[r^{-2} g(K_a, K_b)\right] \in \mathfrak{X}_\scri(\bhR)$
and 
\be
{}^{K_a}\Pi =  r^{-2} \sum_{b,c} f^{bc}_a K_b^\flat \otimes_s K_c^\flat,
\ee
for stationary functions $f^{bc}_a=f^{cb}_a\in C^\infty(\bhR)$ such that $\nabla f_a^{bc} \in \mathfrak{X}_\scri(\bhR)$ and
\ben{horvals}
\begin{array}{rcl}
r^{-2} f^{11}_1 &=& \frac{\varkappa}{\sqrt{A}}, \\
r^{-2} f^{11}_a &=& 0, \qquad a\neq 1,
\end{array}
\een
on $\hor$.
\item Finally, we have
\be
A {K_aV}, \frac{K_aA}{A}\in C^\infty(\bhR), \qquad [K_a, W] \in \mathfrak{X}_\scri(\bhR),
\ee
where $V$, $W$ obey the assumptions of \S\ref{WPsec}.
\end{enumerate}
\end{Lemma}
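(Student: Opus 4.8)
The plan is to build the $K_a$ by hand near each boundary component, fill in the interior by a partition of unity, and then verify the seven properties in increasing order of difficulty, the deformation-tensor identities vi) being the crux.

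\emph{Construction.} By stationarity a field $K\in\mathfrak{X}(\bhR)$ with $\mathcal{L}_T K=0$ is determined by its restriction to the compact slice $\Sigma$, so it suffices to produce finitely many fields spanning the $\scri$-tangent directions over $\Sigma$. I would set $K_1 := \chi_\hor N$, where $N$ is the timelike stationary field of Definition \ref{RSdef} and $\chi_\hor$ is a stationary cutoff equal to $1$ near $\hor$ and supported away from $\scri$; this gives ii). For $K_2$ I take $\chi_\scri\, r^{-1}dr^\sharp$, with $\chi_\scri$ equal to $1$ near $\scri$ and supported away from $\hor$: a short computation with the asymptotics of Definition \ref{adsenddef} (in the boundary defining function $r^{-1}$) shows $r^{-1}dr^\sharp$ vanishes in the normal direction at $\scri$, hence lies in $\mathfrak{X}_\scri(\bhR)$, giving iii). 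Near $\scri$ I use the coordinates $x^\alpha$ of Definition \ref{adsenddef} and let $K_3,\dots,K_M$ be a finite family, each tangent to $\hor$ and, near $\scri$, equal to a coordinate field $\partial_{x^\alpha}$ (so $dr(K_a)=0$ there), chosen by a partition of unity over the compact $\Sigma$ so that together with $K_1,K_2$ they span the $\scri$-tangent subbundle pointwise; this gives iv) and v), while i) holds by construction.

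\emph{The algebraic properties vii) and the first part of vi).} Since tangent distributions to a submanifold are involutive, $\mathfrak{X}_\scri(\bhR)$ is a Lie subalgebra, so $[K_a,W]\in\mathfrak{X}_\scri(\bhR)$ because both fields are $\scri$-tangent. For $A\,K_aV$ and $K_aA/A$ I use $A\sim r^2$ together with the expansion \eq{vasy}, whose leading term is constant; a $\scri$-tangent derivative preserves the order of decay in $r^{-1}$, so both quantities are $O(1)$ and extend smoothly to $\scri$. For $r^2\nabla[r^{-2}g(K_a,K_b)]$ I note that $r^{-2}g$ extends as a smooth metric by Definition \ref{adsenddef} iii), so $r^{-2}g(K_a,K_b)$ is smooth up to $\scri$; choosing the near-$\scri$ generators adapted to the asymptotic product structure (the $\partial_{x^\alpha}$ and the radial $r^{-1}dr^\sharp$) makes the normal derivative of these inner products vanish to leading order at $\scri$, which is precisely what is needed for the rescaled gradient to remain $\scri$-tangent.

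\emph{The deformation tensors (main obstacle).} This is the hard step. For each $K_a$ I apply Lemma \ref{deformation} to write $\mathcal{L}_{K_a}g = -2\varkappa K_a^\flat\otimes_s dt + r^2\omega_0\otimes_s dt + r^2\sum_b\epsilon_b\,\omega_b\otimes_s\omega_b$. Multiplying by $r^{-2}$, the proof of that lemma guarantees $r^{-2}\mathcal{L}_{K_a}g$ is a smooth symmetric $2$-tensor; using $T\in\mathfrak{X}_\scri(\bhR)$ to re-expand the $dt$-factors and the fact that the $\omega_b$ annihilate the normal direction at $\scri$, each factor lies in the module generated by the $K_b^\flat$, and a partition-of-unity/module argument then solves ${}^{K_a}\Pi = r^{-2}\sum_{b,c} f_a^{bc}\,K_b^\flat\otimes_s K_c^\flat$ with smooth $f_a^{bc}=f_a^{cb}$, the condition $\nabla f_a^{bc}\in\mathfrak{X}_\scri(\bhR)$ following from the same smoothness. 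The horizon values \eq{horvals} I would read off by contracting with $T\otimes T$ on $\hor$: from the proof of Lemma \ref{deformation}, $(\mathcal{L}_{K_a}g)(T,T) = -2\varkappa\,g(K_a,T)$ there, while contracting the expansion gives $2r^{-2}f_a^{11}\,g(N,T)^2$ because $K_b^\flat(T)=g(K_b,T)$ vanishes on $\hor$ for $b\neq 1$ (the $K_b$ with $b\geq 3$ are tangent to $\hor$ while $T$ is its null generator, and $K_2$ vanishes near $\hor$). Hence $f_a^{11}=0$ for $a\neq 1$, whereas for $a=1$, using $K_1=N=n_\Sigma$ and $g(n_\Sigma,T)^2 = A$ on $\hor$ (from \eq{sigA}, since $n_\Sigma$ is the unit-normalised $dt^\sharp$ and $A=-1/g^{-1}(dt,dt)$), one obtains $r^{-2}f_1^{11}=\varkappa/\sqrt{A}$. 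The genuine difficulty is organising the construction so that the deformation tensors actually close up in the $\{K_b^\flat\}$ frame with the stated regularity; the horizon normalisations are then a direct consequence of the redshift identity \eq{sgdef}.
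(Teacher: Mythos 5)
Your construction and most of your verifications run parallel to the paper's proof: the choices of $K_1$, $K_2$, coordinate fields near $\scri$ cut off and patched over the compact slice $\Sigma$, and in particular your derivation of the horizon values \eq{horvals} by contracting with $T\otimes T$, using $(\mathcal{L}_{K_a}g)(T,T)=-2\varkappa\, g(K_a,T)$, $g(K_b,T)=0$ on $\hor$ for $b\neq 1$, and $g(N,T)^2=A$, is exactly how the paper reads these off (it notes $dt=-\frac{1}{\sqrt{A}}N^\flat$ on $\hor$ and invokes Lemma \ref{deformation}). However, there is a genuine gap at the crux, namely your argument for the decomposition ${}^{K_a}\Pi=r^{-2}\sum_{b,c}f_a^{bc}K_b^\flat\otimes_s K_c^\flat$ near $\scri$. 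You derive it from Lemma \ref{deformation} together with ``the fact that the $\omega_b$ annihilate the normal direction at $\scri$'' --- but Lemma \ref{deformation} asserts no such fact: its only constraint on the one-forms is $\omega_i(T)=0$ \emph{on the horizon}; at $\scri$ they are merely smooth elements of $\mathfrak{X}^*(\bhR)$, and its proof (pointwise linear algebra at $\hor$ plus a partition of unity elsewhere) supplies no control of normal components at infinity. Without that control the module argument fails by weight counting: in an adapted chart with $\rho=r^{-1}$ one has $K_2^\flat=-\rho^{-1}d\rho+\ldots$, so $r^{-2}f\,K_2^\flat\otimes K_2^\flat$ has $d\rho\otimes d\rho$ component of size $O(f)$, while a term $r^2\,\omega_b\otimes_s\omega_b$ with generic smooth $\omega_b$ (so $\omega_b(\partial_\rho)=O(1)$) has $d\rho\otimes d\rho$ component of size $O(\rho^{-2})$; the required coefficient would therefore blow up like $\rho^{-2}$, contradicting $f_a^{bc}\in C^\infty(\bhR)$.

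The fix --- and what the paper actually does --- is to abandon the soft argument near $\scri$ and compute the deformation tensors of the specific generators in a normal-form chart, $g=\rho^{-2}\left(d\rho^2+\mathfrak{g}_{\alpha\beta}(x^i)dx^\alpha dx^\beta+\O{\rho^2}\right)$, available by Definition \ref{adsenddef}: there one finds ${}^{K_2}\Pi=\rho^{-2}\left(\mathfrak{g}_{\alpha\beta}dx^\alpha dx^\beta+\O{\rho^2}\right)$ and ${}^{X^{(p)}_\gamma}\Pi=\rho^{-2}\left(\pi^{(\gamma)}_{\alpha\beta}dx^\alpha dx^\beta+\O{\rho^2}\right)$, i.e.\ the leading $\rho^{-2}$ part is \emph{purely tangential}, with the $d\rho\,d\rho$ and $d\rho\,dx$ components suppressed by enough powers of $\rho$ for the $f_a^{bc}$ to be smooth up to the boundary. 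Since your near-$\scri$ generators are the same coordinate fields, this computation is available to you; Lemma \ref{deformation} should be reserved, as in the paper, for the horizon normalisation \eq{horvals} alone. The same explicit chart is also what substantiates your otherwise waved-at claim that the normal derivative of $r^{-2}g(K_a,K_b)$ vanishes at $\scri$, which is genuinely needed for $r^2\nabla\left[r^{-2}g(K_a,K_b)\right]\in\mathfrak{X}_\scri(\bhR)$ and is \emph{not} automatic for arbitrary smooth inner products (in contrast to $\nabla f_a^{bc}\in\mathfrak{X}_\scri(\bhR)$, which does hold for any $f$ smooth up to the boundary, since the conformal factor supplies $\rho^2$ in $g^{-1}$). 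A minor further difference: the paper includes a generator of the form $\chi\, x_1\partial_{x_1}$ at the horizon so that fields tangent to $\hor$ lie in the span of the patch fields alone; your family can dispense with it for property v) only because $K_1=N$ is transverse there, so you should say explicitly that the transverse component of $X$ is peeled off as $(X^1/N^1)K_1$ with smooth quotient before expanding the remainder in the tangential fields.
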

\begin{proof}
First let us pick any stationary $K_1, K_2$ as in $\ref{K1def}), \ref{K2def})$. By modifying $\Sigma$ near $\scri$ if necessary, let us assume that $\Sigma_t$ is normal to $\scri$, and in particular $g^{-1}(dt, dr) = \O{r^{-2}}$. We note that any coordinate chart $(U,\varPsi)$ on $\Sigma$ can be pushed forward to a tubular coordinate patch on $\bhR$ by the map
\be 
\begin{array}{r c l}
\R_+ \times U & \to & \bhR, \\
(t, x) & \mapsto& \varphi_t\circ\varPsi(x).
\end{array}
\ee
In such a coordinate chart the metric functions are independent of $t$. Now pick any point $p$ on $\Sigma$. There are three possibilities:
\begin{enumerate}[1.]
\item $p\in \mathring{\Sigma}$. Then we can pick a coordinate chart $(U, \varPsi)$ such that $p \in \varPsi(U)$,  $\varPsi(U) \cc \mathring{\Sigma}$. Now pick a cut-off function $\chi$ which is the identity in a neighbourhood of $p$ and is supported inside $U$. We define vector fields on the tubular coordinate patch $\mathbb{R}_+\times U$ by $X^{(p)}_\mu = \chi \partial_\mu$, where $\partial_\mu$ are the coordinate vector fields $\partial_0 = \partial_t$, $\partial_i = \partial_{x^i}$. Since the coordinate chart is stationary, these are stationary vector fields. They are also clearly elements of $\mathfrak{X}(\bhR)$ tangent to $\hor$ and $\scri$, and furthermore ${}^{X^{(p)}_\mu}\Pi$ vanishes near $\hor$ and $\scri$.

\item $p \in \Sigma \cap \scri$. In this case we may choose a tubular coordinate patch containing $p$ such that the metric is independent of $t$ and takes the form\footnote{We say a tensor is $\O{\rho}$ if the components with respect to the coordinate basis are $\O{\rho}$.}
\be
g = \frac{d\rho^2 + \mathfrak{g}_{\alpha \beta}(x^i)dx^\alpha dx^\beta+ \O{\rho^2}}{\rho^2},
\ee
where $\rho = r^{-1}$, $x^\alpha = (t, x^l)$, $l = 1, d-1$. Now note that $K_2 = -\rho \partial_\rho + \O{\rho^3}$ in this chart. We calculate
\be
{}^{K_2}\Pi = \frac{\mathfrak{g}_{\alpha \beta}(x^i)dx^\alpha dx^\beta+ \O{\rho^2}}{\rho^2}.
\ee

Choose cut-off functions $\chi_1(\rho), \chi_2(x^l)$ such that $\chi_1(\rho) = 1$ for $\rho$ close to $0$ and $\chi_1(\rho)\chi_2(x^l)=1$ in a neighbourhood of $p$. Then define $X^{(p)}_\alpha = \chi_1(\rho)\chi_2(x^l)  \partial_\alpha$. We note that $d\rho(X^{(p)}_\alpha)=0$ near $\scri$ and we calculate:
\bean
{}^{X^{(p)}_\gamma} \Pi = \frac{\mathfrak{\pi}^{(\gamma)}_{\alpha \beta}(x^i)dx^\alpha dx^\beta + \O{\rho^2}}{\rho^2} ,
\eean
near $\scri$, where $\mathfrak{\pi}^{(\gamma)}$ is the deformation tensor of $X^{(p)}_\gamma|_\scri$ with respect to the metric $\mathfrak{g}$ on $\scri$.
\item $p \in \Sigma \cap \hor$. We can pick a chart $(U, \varPsi)$ on $\Sigma$ such that $\varPsi(U)$ contains $p$, $\hor$ is given locally by $x_1=0$ and $\partial_{x^1}$ is normal to $\hor$. We pick a cut-off function $\chi$ which is the identity near $p$ and has support in $U$. In the tubular neighbourhood in $\bhR$ generated by $U$, we define $X^{(p)}_0 =  \chi \partial_{t}, X^{(p)}_1 = \chi x_1 \partial_{x_1}, X^{(p)}_l = \chi \partial_{x_l}$ for $l = 2, \ldots, d$. These vectors are all smooth and tangent to $\hor$ and $\scri$. 
\end{enumerate}
Thus for each point $p\in \Sigma$ we have constructed a tubular neighbourhood $V_{(p)}$ and a set of vector fields $\{X^{(p)}_\mu \in \mathfrak{X}(\bhR)\}$ which are stationary and tangent to $\hor$, $\scri$. Furthermore, if $K \in \mathfrak{X}(\bhR)$ is a stationary vector field tangent to $\hor$, $\scri$ supported in $V_{(p)}$ then there exist stationary $k^\mu \in C^\infty(\bhR)$ such that
\be
K = k^\mu X^{(p)}_\mu.
\ee
By compactness of $\Sigma$, we may choose a finite set $\{p_i \in \Sigma\}$ such that $V_{(p_i)}$ cover $\bhR$. Taking $\{K_a\}_{a = 3, \ldots, M}$ to be the union of $\{X^{(p_i)}_\mu\}$, we thus have a finite set of vectors $K_a$, $a=1, \ldots, M$ such that for any $X\in C^\infty(\bhR)$ tangent to $\scri$, there exist (not necessarily unique) functions $x^a \in C^\infty(\bhR)$ such that
\be
X = \sum_a x^a K_a.
\ee
The collection of vector fields so constructed agrees with $i)-v)$ above. Now, recall Lemma \ref{deformation}. Since ${}^K\Pi=\frac{1}{2} \mathcal{L}_K g $ and $dt = -\frac{1}{\sqrt{A}} N^\flat$, the conditions on the deformation tensors on the horizon \eq{horvals} follow immediately. The calculations of $2.$ above allow us to control the deformation tensors near $\scri$, so that we can verify that $vi)$ holds. Finally, $vii)$ follows from a calculation in local coordinates.
\end{proof}

We will also require the following Lemma, which permits us to commute the wave operator with a vector field. It can be proven straightforwardly with a calculation in local coordinates:

\begin{Lemma}\label{commutation lemma}
Let $K\in\mathfrak{X}(\bhR)$, with deformation tensor ${}^K \Pi$. Then for any sufficiently smooth $\phi$ we have
\be
K \Box_g\phi = \Box_gK\phi + 2 \nabla_\mu \left({}^K\Pi^{\mu\nu} \nabla_\nu \phi\right) - \nabla_\nu({}^K\Pi_\mu{}^\mu) \nabla^\nu \phi.
\ee
\end{Lemma}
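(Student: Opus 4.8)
The plan is to avoid expanding the third covariant derivatives of $\phi$ by hand and instead to recognise the right-hand side as the infinitesimal variation of $\Box_g\phi$ when the metric is dragged along the flow of $K$. The operator $\Box_g$ is geometric, hence natural under diffeomorphisms: if $\varphi_\epsilon$ denotes the flow generated by $K$, then $\Box_{\varphi_\epsilon^* g}(\varphi_\epsilon^*\phi) = \varphi_\epsilon^*(\Box_g\phi)$. Differentiating at $\epsilon = 0$, and using $\mathcal{L}_K\phi = K\phi$ for a scalar, gives
\be
K(\Box_g\phi) - \Box_g(K\phi) = \left.\frac{d}{d\epsilon}\right|_{0}\Box_{g+\epsilon h}\phi, \qquad h_{\mu\nu} := \mathcal{L}_K g_{\mu\nu} = 2\,{}^K\Pi_{\mu\nu}.
\ee
This reduces the whole problem to the linearisation of the wave operator in the direction of the deformation tensor, with no higher-derivative bookkeeping.

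The second step is to compute this variation at the level of Christoffel symbols. Writing $\Box_g\phi = -g^{\mu\nu}(\partial_\mu\partial_\nu\phi - \Gamma^\lambda_{\mu\nu}\partial_\lambda\phi)$, using $\delta g^{\mu\nu} = -h^{\mu\nu}$ and the standard formula $g^{\mu\nu}\delta\Gamma^\lambda_{\mu\nu} = \nabla_\mu h^{\mu\lambda} - \tfrac{1}{2}\nabla^\lambda h$ (with $h = g^{\mu\nu}h_{\mu\nu}$), a short calculation yields
\be
\left.\frac{d}{d\epsilon}\right|_0 \Box_{g+\epsilon h}\phi = h^{\mu\nu}\nabla_\mu\nabla_\nu\phi + \left(\nabla_\mu h^{\mu\lambda} - \tfrac{1}{2}\nabla^\lambda h\right)\nabla_\lambda\phi.
\ee
Substituting $h^{\mu\nu} = 2\,{}^K\Pi^{\mu\nu}$ and $h = 2\,{}^K\Pi_\mu{}^\mu$, and then recombining $2\,{}^K\Pi^{\mu\nu}\nabla_\mu\nabla_\nu\phi + 2(\nabla_\mu {}^K\Pi^{\mu\nu})\nabla_\nu\phi$ into the single divergence $2\nabla_\mu({}^K\Pi^{\mu\nu}\nabla_\nu\phi)$, while the trace term supplies $-\nabla_\nu({}^K\Pi_\mu{}^\mu)\nabla^\nu\phi$, reproduces the stated formula exactly.

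Alternatively — and this is essentially the local-coordinate computation alluded to in the statement — one may expand both sides directly with $K = K^\mu\nabla_\mu$. The terms carrying three derivatives on $\phi$ combine, after using the symmetry of the scalar Hessian, into $K^\lambda\big(\Delta\nabla_\lambda\phi - \nabla_\lambda\Delta\phi\big)$ with $\Delta := g^{\mu\nu}\nabla_\mu\nabla_\nu$, which by the Ricci identity is a curvature term $K^\lambda R_\lambda{}^\sigma\nabla_\sigma\phi$; the terms carrying two derivatives on $K$ produce $(\nabla_\mu\nabla^\mu K^\nu)\nabla_\nu\phi$ together with a second curvature term arising from $\nabla_\mu\nabla^\nu K^\mu - \nabla^\nu\nabla_\mu K^\mu$. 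The main obstacle in this route is precisely the bookkeeping of these two Ricci contributions: one must verify that they combine with the derivatives of $K$ so as to reconstruct $2\nabla_\mu{}^K\Pi^{\mu\nu} - \nabla^\nu({}^K\Pi_\mu{}^\mu)$, after which all explicit curvature cancels from the final expression. The naturality argument is preferable exactly because it organises this cancellation automatically — curvature never appears — and the only points needing care are the justification of the variational identity and the derivation of $g^{\mu\nu}\delta\Gamma^\lambda_{\mu\nu}$.
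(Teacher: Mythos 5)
Your proof is correct. The paper, however, gives essentially no proof: it merely remarks that the lemma ``can be proven straightforwardly with a calculation in local coordinates'', i.e.\ the intended argument is the direct expansion you sketch only in your closing paragraph. Your primary route --- recognising $K\Box_g\phi - \Box_g K\phi$ as the linearisation of the wave operator in the metric direction $h_{\mu\nu} = \mathcal{L}_K g_{\mu\nu} = 2\,{}^K\Pi_{\mu\nu}$, via the naturality identity $\Box_{\varphi_\epsilon^* g}(\varphi_\epsilon^*\phi) = \varphi_\epsilon^*(\Box_g \phi)$ --- is therefore a genuinely different organisation of the computation, and it buys something concrete: the two Ricci-identity contributions that the coordinate route forces one to track (from commuting the Laplacian past $\nabla_\lambda$, and from antisymmetrised second derivatives of $K$) never appear, since naturality guarantees in advance that the commutator is first order in $h$ and curvature-free; the standard formulas $\delta g^{\mu\nu} = -h^{\mu\nu}$ and $g^{\mu\nu}\delta\Gamma^\lambda_{\mu\nu} = \nabla_\mu h^{\mu\lambda} - \tfrac{1}{2}\nabla^\lambda h$ then do all the work, and the recombination $2\,{}^K\Pi^{\mu\nu}\nabla_\mu\nabla_\nu\phi + 2(\nabla_\mu {}^K\Pi^{\mu\nu})\nabla_\nu\phi = 2\nabla_\mu({}^K\Pi^{\mu\nu}\nabla_\nu\phi)$ lands exactly on the stated formula. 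Two details you handled correctly but which deserve emphasis, since the identity is sensitive to both: the signs require the paper's convention $\Box_g = -g^{\mu\nu}\nabla_\mu\nabla_\nu$ (with the geometer's opposite sign the two correction terms would flip), and your expansion $\Box_g\phi = -g^{\mu\nu}(\partial_\mu\partial_\nu\phi - \Gamma^\lambda_{\mu\nu}\partial_\lambda\phi)$ matches this; likewise the normalisation ${}^K\Pi = \tfrac{1}{2}\mathcal{L}_K g$ agrees with the paper's usage in the proof of Theorem \ref{redshift}, so $h = 2\,{}^K\Pi$ is the right substitution. The only remaining loose ends are routine: differentiation at $\epsilon = 0$ is justified by smooth dependence of the Christoffel symbols on the metric, and a local flow of $K$ suffices because the identity is pointwise, so no completeness assumption on $K$ is needed.
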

We may now prove the main theorem of this section.
\begin{proof}[Proof of Theorem \ref{commute}]

The proof of part $i)$ of the theorem proceeds by commuting the equation $L\phi = f$ with the full set of vector fields $K_a$ constructed in the previous Lemma. This naturally gives rise to a system of equations for the field 
\ben{defPhi}
\Phi = (\Phi_i)_{i = 0, \ldots, M} = (\phi, K_1 \phi, \ldots, K_M \phi).
\een
This procedure enlarges the dimension of the target space for the scalar field from $N$ to $N(M+1)$, and as a consequence we have $NM$ first order differential constraints of the form $K_a \Phi_0 - \Phi_a = 0$. 

The first thing we must verify is that the enlarged system is a strongly hyperbolic system in the sense we have previously defined. This primarily requires us to verify the fall-off of various coefficients near infinity. We then wish to arrange that for the new system of equations the condition $w_{L'} = w_L - 2\varkappa$ holds. For this we make use of the freedom we have to modify the commuted system of equations near the horizon by adding multiples of the constraints. The commuted equations modified in such a way will still be satisfied for solutions  of the original equation $L\phi = f$. Using this freedom together with the behaviour of the deformation tensors of $K_a$ at the horizon we can indeed arrange that $w_{L'} = w_L - 2\varkappa$.

Finally, for part $ii)$ of the equation we must verify that the constraints are propagated by the equation, so that if they are satisfied on the initial data then they are satisfied throughout. The process by which we constructed the enlarged system essentially guarantees that this will be the case.

We now proceed to verify these steps in detail:
\begin{enumerate}[1.]
\item We start with the equation
\be
L \phi = A(\Box_g\phi + V\phi)+ W \phi = f,
\ee
and commute with the vector fields $K_a$ constructed in Lemma \ref{vflem}. Using Lemma \ref{commutation lemma} we have
\begin{align}
\nonumber K_a f - \frac{K_aA}{A} f &= A(\Box_gK_a \phi + V K_a \phi)+ W K_a \phi\\ &+ 2A \nabla_\mu \left({}^{K_a}\Pi^{\mu\nu} \nabla_\nu \phi\right)  \nonumber  \\&- A\nabla_\nu({}^{K_a} \Pi_\mu{}^\mu) \nabla^\nu \phi + \left[K_a, W \right ] \phi - \frac{K_a A}{A} W \label{comm1}  \phi\\&+A(K_a V) \phi.\nonumber 
\end{align}
\item  Consider the second line of \eq{comm1}. We have, making use of \eq{horvals}:
\begin{align*}
\nabla_\mu \left({}^{K_a}\Pi^{\mu\nu} \nabla_\nu \phi\right) &=  \nabla_\mu \left(r^{-2} f^{bc}_a (K_b)^\mu K_c \phi\right) \\ &=   K_b \left(r^{-2} f^{bc}_a\right ) K_c \phi + r^{-2} f^{bc}_a (\textrm{div }K_b) K_c \phi +  r^{-2} f^{bc}_a K_b K_c \phi .
\end{align*}
Here we understand repeated indices $b, c$ to be summed over $1, \ldots, M$. Now, since $f^{bc}_a \in C^\infty(\bhR)$ and $K_b$ is tangent to $\scri$, we deduce that $K_b \left(r^{-2} f^{bc}_a\right )  \in C^\infty(\bhR)$ and moreover, this function is $\O{r^{-2}}$ as we approach $\scri$. Thus $A K_b \left(r^{-2} f^{bc}_a\right )  \in C^\infty(\bhR)$. Now, consider the second term. Taking the trace of \eq{horvals} we have
\be
{}^{K_a}\Pi_\mu{}^\mu = \textrm{div }K_a =  r^{-2} f^{bc}_a g(K_b, K_c).
\ee
Since $r^{-2}g$ extends to $\scri$ as a smooth metric, we deduce that $\textrm{div }K_a \in C^\infty(\bhR)$ and thus $A r^{-2} f^{bc}_a (\textrm{div }K_b) \in C^\infty(\bhR)$. We also deduce from this calculation and part $vi)$ of Lemma \ref{vflem} that $A \nabla( \textrm{div }K_a) \in \mathfrak{X}_\scri(\bhR)$. Thus we can re-write \eq{comm1} as:
\begin{align}
K_a f - \frac{K_aA}{A} f &= A(\Box_gK_a \phi + V K_a \phi)+  \mathsf{W}_a{}^b K_b \phi \label{comm21}  \\& + \mathsf{W}_a\phi + A \mathsf{V}_a\phi.
\nonumber 
\end{align}
Here $\mathsf{W}_a{}^b\in\mathfrak{X}_\scri(\bhR)$ is given by
\be
\mathsf{W}_a{}^b =  \delta_a{}^b W + 2 A r^{-2} f^{cb}_a K_c,
\ee
$\mathsf{W}_a \in \mathfrak{X}_\scri(\bhR)$ is given by
\begin{align*}
\mathsf{W}_a &= 2A  \left[K_b \left(r^{-2} f^{bc}_a\right )  + r^{-2} f^{bc}_a (\textrm{div }K_b) \right] K_c \\& \qquad - A \nabla( \textrm{div }K_a) + \left[K_a, W\right] - \frac{K_aA}{A} W,
\end{align*}
and $\mathsf{V}_a \in C^\infty(\bhR)$ with $\mathsf{V}_a=\O{r^{-2}}$ is given by:
\be
\mathsf{V}_a = K_a V.
\ee
This verifies that the various coefficients have the correct behaviour near $\scri$ to ensure that the enlarged system (after replacing $K_a \phi \to \Phi_a$) will be of strongly hyperbolic form. 

\item Next we need to check that we can arrange that the enlarged system has $w_{L'} = w_L-2\varkappa$, which is a condition on the first order derivative terms at the horizon. Let us look at the term involving $\mathsf{W}_a{}^b$. We have, adding and subtracting the term $2  \varkappa \sqrt{A}  K_1 K_a\phi$:
\bean
&&  \mathsf{W}_a{}^b K_b \phi = (W+2 \iota \varkappa \sqrt{A} K_1) K_a\phi \\&& \quad +2A  r^{-2} f^{bc}_a K_b K_c \phi -2  \varkappa \sqrt{A}  K_1 K_a\phi.
\eean
Here, we recall that $\iota$ is the $N\times N$ identity matrix. Now consider the terms on the second line. We add a multiple of a commutator identity to this term to get:
\bean
A r^{-2} f^{bc}_a K_b K_c \phi - \varkappa \sqrt{A}  K_1 K_a\phi &=& A \left( r^{-2} f^{bc}_a-\frac{\varkappa}{\sqrt{A}} \delta_1^b \delta_a^c \right) K_b K_c \phi \\
&& \quad + A  \alpha^b_a (K_b K_1 \phi - K_1 K_b \phi +  \left[K_1, K_b \right] \phi) \\ &=& A\left( r^{-2} f^{bc}_a-\frac{\varkappa}{\sqrt{A}} \delta_1^b \delta_a^c + \alpha^b_a \delta_1^c - \delta^b_1 \alpha^c_a \right) K_b K_c \phi  \\
&& \quad + A  \alpha_a^b \left[K_1, K_b \right] \phi. \\
\eean
 Let us take $\alpha^b_a$ to be vanishing near $\scri$ and such that on $\hor$ they take the values $\alpha_a^1 = 0$ and $\alpha_a^b = r^{-2} f_a^{1b} - \varkappa A^{-\frac{1}{2}} \delta_a^b$ for $b \neq 1$. Then defining
 \be
 \mathsf{W}_a^{bc} = \left( r^{-2} f^{bc}_a-\frac{\varkappa}{\sqrt{A}} \delta_1^b \delta_a^c + \alpha^b_a \delta_1^c - \delta^b_1 \alpha^c_a \right),
 \ee
 we deduce that $\mathsf{W}_a^{1 b}$ vanishes on $\hor$. Next, notice that since $K_a$ span $\mathfrak{X}_\scri(\bhR)$ we may write
 \begin{align*}
\mathsf{W}_a + 2A \alpha_a^b \left [ K_1, K_b\right] \phi =  A \mathsf{V}_a{}^b K_b \phi.
\end{align*}
for some $\mathsf{V}_a{}^b\in C^\infty(\bhR)$, which are $\O{r^{-2}}$ near $\scri$.
 
 We return to \eq{comm21} and re-write it as:
 \begin{align}
K_a f - \frac{K_aA}{A} f &= A(\Box_gK_a \phi + (V\delta_a{}^b + \mathsf{V}_a{}^b) K_b \phi + \mathsf{V}_a\phi) \label{comm22} \\  & + (\delta_a{}^b (W + 2\iota \varkappa \sqrt{A} K_1)+  2 \iota  \mathsf{W}_a^{cb} K_c) K_b \phi .
\nonumber 
\end{align}
\item Introducing $\Phi_i$ as in \eq{defPhi} we may write \eq{comm22} as
\be
K_a f - \frac{K_aA}{A} f = A(\Box_g \Phi_a + V'_a{}^j \Phi_j )+ W'_a{}^j \Phi_j,
\ee
where summation over $j = 0, \ldots, M$ is implied and we have defined:
\be
W'_a{}^b = W + 2\iota \varkappa \sqrt{A} K_1+  2 \iota  \mathsf{W}_a^{cb} K_c, \qquad W'_a{}^0 = 0,
\ee
as well as
\be
V'_a{}^b = V \delta_a{}^b + \mathsf{V}_a{}^b, \qquad V_a{}^0 = \mathsf{V}_a.
\ee
This gives us good equations for $\Phi_a$. We need to supplement these with an equation for $\Phi_0 = \phi$. For this, we recall the equation for $\phi$:
\begin{align*}
f &= A(\Box_g\phi + V\phi)+ W \phi \\
&= A(\Box_g\Phi_0 + V\Phi_0)+ (W+2\iota \eta  \varkappa \sqrt{A}  K_1) \Phi_0 - 2 \varkappa \eta \sqrt{A} \Phi_1,
\end{align*}
for $\eta$ a cut-off function equal to the identity on $\hor$ and vanishing on $\scri$.  Here we have added a multiple of the constraint $K_1 \Phi_0 - \Phi_1 = 0$. We therefore take
\be
W'_0{}^i = (W+2\eta \iota \varkappa \sqrt{A} K_1)\delta_0{}^i,
\ee
and
\be
V'_0{}^i = V \delta_0{}^i -2\eta \iota \varkappa \sqrt{A}\delta_1{}^i.
\ee
\item We finally have that, defining $f'_0 = f$, $f'_a = K_a f - \frac{K_aA}{A} f$:
\be
f'_i = A(\Box_g \Phi_i + V'_i{}^j \Phi_a )+ W'_i{}^j \Phi_j =: (L' \Phi)_i, \label{comm23}  
\ee
with $W'_i{}^j$ and $V'_i{}^j$ satisfying the conditions at infinity, so that $L'$ is a strongly hyperbolic operator. Since $\phi$ is only differentiated in directions tangent to $\scri$, each $\Phi_a$ will inherit the same boundary conditions as $\phi$. Moreover we may verify that on $\hor$, as a consequence of $W_a^{1 b}$ vanishing there we have:
\be
g(T, W'_i{}^j) = (g(T, W)- 2  A \varkappa \iota) \delta_i{}^j,
\ee
so that $w_{L'}=w_L - 2\varkappa$. This completes the proof of part $i)$ of the theorem.

\item To prove part $ii)$ of the theorem, we consider solutions of the extended system of equations, and we wish to show that the constraints are propagated. Define $\delta \Phi_a = K_a \Phi_0 - \Phi_a$. We note that the $0^{th}$ component of \eq{Lpeqn} is given by
\be
L \Phi_0 + 2 \eta \varkappa \sqrt{A}\, \iota \delta \Phi_1 = f'_0.
\ee
Commuting this with $K_a$, we can repeat precisely the process of parts $1.)-4.)$ above to deduce that if the generalised system of equations holds, we have:
\begin{align*}
K_a f'_0 - \frac{K_aA}{A} f'_0 &= A(\Box_g (K_a\Phi_0) + V'_a{}^b (K_b \Phi_0) )+ W'_a{}^b K_b \Phi_0 + W'_a{}^0 \Phi_0+  A V'_a{}^0 \Phi_0 \\& +  2 K_a(\eta \varkappa \sqrt{A}\, \delta \Phi_1) - 2 \frac{K_aA}{A}(\eta \varkappa \sqrt{A}\, \delta \Phi_1).
\end{align*}
Now, suppose that the constraint holds for the source term:
\be
0 =   K_a f'_0 - \frac{K_a A}{A} f'_0 -  f'_a.
\ee
Inserting the result above, and the equations for $f'_a$, we deduce that
\begin{align*}
0 &= A(\Box_g (\delta \Phi_a) + V'_a{}^b (\delta \Phi_a) )+ W'_a{}^b \delta \Phi_b \\& +  2 K_a(\eta \varkappa \sqrt{A} \delta \Phi_1) - 2 \frac{K_aA}{A}(\eta \varkappa \sqrt{A} \delta \Phi_1).
\end{align*}
Clearly then, this can be written as
\begin{align*}
(L'' \delta\Phi)_a &:= A(\Box_g (\delta \Phi_a) + V''_a{}^b (\delta \Phi_a) )+ W''_a{}^b \delta \Phi_b = 0,
\end{align*}
where $L''$ is a strongly hyperbolic operator. We can check that
\begin{align}
g(T, W''_a{}^b) &= g(T, W_a{}^b) - 2  A \varkappa \iota \delta_a{}^1 \delta_1^b \nonumber\\ \nonumber
&= (g(T, W)- 2  A \varkappa \iota) \delta_i{}^j - 2  A \varkappa \iota \delta_a{}^1 \delta_1^b,
\end{align}
so that $w_{L''} = w_{L'}$. As a consequence the constraints are propagated. If initially $\delta\Phi = 0$ and $f'$ satisfies the constraints then $\delta\Phi$ remains zero and we can reduce the system back to the uncommuted equation.
\end{enumerate}
\end{proof}

A consequence of this theorem is that a result similar to Corollary \ref{cor2} holds, but with higher regularity Sobolev spaces replacing $\H^1$, $\L^2$. Before we can state this corollary, we must first define the higher regularity Sobolev spaces. This is a slightly technical process, because the spaces must encode the `compatibility' conditions at $\scri$. We achieve this by making use of the fact that the equation $L\psi=0$ permits us to determine all the derivatives of $\psi$ transverse to $\Sigma_t$ given only the restrictions of $\psi$ and $T\psi$ to $\Sigma_t$, provided they are sufficiently regular.
\begin{Definition}\label{Hkdef}
Suppose $L$ is a strongly hyperbolic operator on an aAdS black hole region $\bhR$, and let us fix compatible stationary homogeneous boundary conditions. Suppose $(\uppsi, \uppsi') \in H^k_{loc.}(\Sigma, \C^N) \times H^{k-1}_{loc.}(\Sigma, \C^N)$. There exists a unique $k$-jet $\Uppsi$ defined on $\Sigma_0$ determined by
\be
\left. \Uppsi \right|_{\Sigma_0} = \uppsi, \qquad \left. T \Uppsi \right|_{\Sigma_0} = \uppsi', \qquad \left. T^{i} L\Uppsi \right |_{\Sigma_0} = 0, \quad 0\leq i \leq k-2.
\ee
This satisfies $\left. T^{i} \Uppsi \right |_{\Sigma_0} \in H^{k-i}_{loc.}(\Sigma)$. Pick a set of vector fields $K_a$, $a = 1, \ldots, M$ satisfying the conditions of Lemma \ref{vflem}. We say that $(\uppsi, \uppsi') \in \mathbf{H}^k(\Sigma)$ if 
\be
\begin{array}{lcl}
\left. K_{A} \Uppsi \right |_{\Sigma_0} \in \H^1_\mathcal{D}(\Sigma, \kappa), &\quad& \abs{A}\leq k-1,\\
\left. K_{A} \Uppsi \right |_{\Sigma_0} \in \L^2(\Sigma), &\quad& \abs{A} = k,
\end{array}
\ee
where $A$ is a multi-index. We define a norm on $\mathbf{H}^k(\Sigma)$ by
\be
\norm{(\uppsi, \uppsi')}{\mathbf{H}^k(\Sigma)}^2 = \sum_{\abs{A}\leq k-1} \norm{\left. K_{A} \Uppsi \right |_{\Sigma_0}}{\H^1(\Sigma, \kappa)}^2 + \sum_{\abs{A}=k} \norm{\left. K_{A} \Uppsi \right |_{\Sigma_0}}{\L^2(\Sigma)}^2.
\ee
A different choice of $K_a$ gives rise to an equivalent norm. We note that the definitions depend on both $L$ and the choice of boundary conditions.
\end{Definition}

With this definition, we have:
\begin{Corollary}\label{cor3}
Fix stationary, homogeneous, boundary conditions for $L$, a strongly hyperbolic operator on an asymptotically AdS black hole $\bhR$. Suppose $\psi\in L^2(\R_+, \H^1(\Sigma, \kappa))$ with $T\psi \in L^2(R_+, \L^2(\Sigma))$ is a weak solution of the equation
\ben{Leq2}
L\psi = 0, \qquad \textrm{ in }\bhR,
\een
with the initial conditions:
\be
\psi|_{\Sigma_0} = \uppsi, \qquad T \psi|_{\Sigma_0} = \uppsi',
\ee
for  $(\uppsi, \uppsi') \in \mathbf{H}^k(\Sigma)$. Then in fact $(\psi, T\psi) \in C^0(\R_+, \mathbf{H}^k(\Sigma))$, with the estimate
\ben{supest2}
\sup_{t\geq 0} \norm{ \left. (\psi, T\psi)\right|_{\Sigma_t}}{\mathbf{H}^k(\Sigma)}^2 \leq C e^{M t} \norm{(\uppsi, \uppsi')}{\mathbf{H}^k(\Sigma)}^2.
\een
for some constants $C, M$ depending on $g, W, V, k$.
\end{Corollary}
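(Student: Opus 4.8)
The plan is to argue by induction on $k$, using the commutation Theorem \ref{commute} to trade one order of regularity for an enlargement of the target space, with the base case $k=1$ being precisely Corollary \ref{cor2} $i)$. The organising principle is that the $\mathbf{H}^k$ norm of Definition \ref{Hkdef} is, up to equivalence of norms, the $\mathbf{H}^{k-1}$ norm of the commuted field $\Phi = (\psi, K_1\psi, \ldots, K_M\psi)$, where the $K_a$ are the vector fields furnished by Lemma \ref{vflem}. Since the conclusion of the corollary is asserted for an arbitrary strongly hyperbolic operator, the inductive hypothesis applies to whatever strongly hyperbolic operator the commutation produces.

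First I would fix the $K_a$ as in Lemma \ref{vflem} and apply Theorem \ref{commute} $i)$ to the homogeneous equation $L\psi = 0$. Since $f = L\psi = 0$, the source $f'$ defined by \eq{fpdef} also vanishes, so $\Phi$ is a genuine weak solution of the enlarged \emph{homogeneous} system $L'\Phi = 0$, where $L'$ is again strongly hyperbolic and each component of $\Phi$ inherits the same stationary homogeneous boundary conditions as $\psi$. It is worth noting that for the crude $e^{Mt}$ bound claimed here the refined relation $w_{L'} = w_L - 2\varkappa$ plays no role: Corollary \ref{cor2} $i)$, and hence the inductive statement, holds for \emph{any} strongly hyperbolic operator, so strong hyperbolicity of $L'$ is all that is needed. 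The enhanced-redshift gain $w_{L'} = w_L - 2\varkappa$ will only become essential for the sharp half-plane statements of Theorem \ref{mainthm}.

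Next I would establish the norm equivalence between the $\mathbf{H}^k(\Sigma)$ norm of the data $(\uppsi, \uppsi')$ for $L$ and the $\mathbf{H}^{k-1}(\Sigma)$ norm (for $L'$) of the induced data for $\Phi$. Since the components of $\Phi$ are exactly $\psi$ and the $K_a\psi$, a multi-index derivative $K_B\Phi$ with $\abs{B}\le k-1$ is a collection of terms of total order at most $k$, so the $\mathbf{H}^{k-1}$ control on $\Phi$ promised by the inductive hypothesis reproduces exactly the $\H^1$ control on $K_A\Uppsi$ for $\abs{A}\le k-1$ and the $\L^2$ control on $K_A\Uppsi$ for $\abs{A}=k$ demanded by Definition \ref{Hkdef}. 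Applying the inductive hypothesis to $L'$ and $\Phi$ then yields
\be
\sup_{t\ge 0}\norm{(\Phi, T\Phi)|_{\Sigma_t}}{\mathbf{H}^{k-1}(\Sigma)}^2 \le C e^{M t}\norm{(\Phi, T\Phi)|_{\Sigma_0}}{\mathbf{H}^{k-1}(\Sigma)}^2,
\ee
which, after translating back through the norm equivalence, is the desired estimate \eq{supest2}; the continuity statement $(\psi, T\psi) \in C^0(\R_+, \mathbf{H}^k(\Sigma))$ follows in the same way from the $C^0$ conclusion of the inductive hypothesis. As usual the estimate is proved first for smooth data supported away from $\scri$, for which Theorem \ref{WPThm} guarantees a solution of the requisite regularity, and then extended to general $\mathbf{H}^k(\Sigma)$ data by density and approximation.

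The main obstacle I anticipate is the bookkeeping concealed in the norm equivalence above, which is really an identification of two \emph{different} jet constructions. Definition \ref{Hkdef} fixes the $\mathbf{H}^k$ norm through the $k$-jet $\Uppsi$, whose transverse-to-$\Sigma_t$ derivatives are not free but are pinned down by the conditions $T^i L\Uppsi|_{\Sigma_0} = 0$; because the vector fields entering the norm are the spatially-directed $K_a$, one must use the equation to convert high $T$-order derivatives into spatial derivatives of lower $T$-order, which is possible precisely because after division by $A$ the $TT$-term carries unit coefficient and the equation solves for $TT\psi$. Verifying that this $L$-based jet norm of $\psi$ genuinely agrees with the $L'$-based jet norm of $\Phi$, and that the induced compatibility and boundary conditions at $\scri$ match on both sides, is the delicate step. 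It is, however, exactly the mechanism for which Definition \ref{Hkdef} was engineered, so the verification should be routine if tedious, introducing no analytic content beyond Theorem \ref{commute} itself.
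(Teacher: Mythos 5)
Your proposal is correct and is essentially the paper's own argument: the paper proves Corollary \ref{cor3} exactly by commuting the equation with the vector fields $K_a$ an appropriate number of times (your induction on $k$) and applying Corollary \ref{cor2} $i)$ to the resulting strongly hyperbolic system, with the norm comparison absorbed into Definition \ref{Hkdef}. Your added remarks --- that $f'=0$ for the homogeneous equation, that the gain $w_{L'}=w_L-2\varkappa$ is irrelevant for the crude $e^{Mt}$ bound, and that the jet-based norm equivalence plus density is the only bookkeeping required --- are accurate glosses on what the paper leaves implicit.
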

\begin{proof}
This follows from commuting the equation with the vector fields $K_a$ an appropriate number of times and applying Corollary \ref{cor2} $i)$ to the resulting system.
\end{proof}
In fact, we may choose $M$ to be independent of $k$. This can be shown by repeatedly commuting the equation with $T$ and making use of elliptic estimates, see \cite[\S3.3.4]{Mihalisnotes}  for the case of a single component function obeying the wave equation outside a Schwarzschild black hole. Crucial to this result is the fact that commuting \emph{decreases} $w_L$ by $2\varkappa$. We shall not repeat this argument in detail.

\subsection{The solution operator semigroup}

Let us now define the solution operator for a strongly hyperbolic equation on an asymptotically AdS black hole background. It is essentially the operator which maps the state of the field at time $0$ to the state of the field we find at time $t$ by evolving with the equation $L\psi = 0$.
\begin{Definition}\label{Sdef}
Fix stationary, homogeneous, boundary conditions for $L$, a strongly hyperbolic operator on an asymptotically AdS black hole $\bhR$. Let $(\uppsi, \uppsi') \in \mathbf{H}^1(\Sigma)$, and let $\psi$ be the unique weak solution of
\ben{prob1}
L \psi = 0 \textrm{ in }\bhR, \qquad \left. \psi\right|_{\Sigma_0} = \uppsi, \quad  \left. T\psi\right|_{\Sigma_0} = \uppsi',
\een
subject to the boundary conditions at $\scri$. We define the solution operator $\cS(t)$ associated to the hyperbolic problem \eq{prob1} to be
\be
\begin{array}{rcrcl}
\cS(t) &:& \mathbf{H}^1(\Sigma) & \to & \mathbf{H}^1(\Sigma), \\
&& (\uppsi, \uppsi')&\mapsto& \left. (\psi, T\psi)\right|_{\Sigma_t}.
\end{array}
\ee
This definition makes sense as a result of Corollary \ref{cor2}.
\end{Definition}

The primary reason for considering the solution operator $\cS(t)$ is that it has the structure of a semigroup. See for example \cite[\S7.4]{Evans} or \cite{Hille} for the definition of a $C^0$ semigroup. We state this as:
\begin{Theorem}\label{semigroup}
The family of operators $\cS(t)$ define a $C^0$-semigroup on $\mathbf{H}^k(\Sigma)$. 
\end{Theorem}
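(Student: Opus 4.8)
The plan is to verify directly the three defining properties of a $C^0$-semigroup on the Banach (indeed Hilbert) space $X := \mathbf{H}^k(\Sigma)$: that $\cS(0)$ is the identity, that $\cS(t+s)=\cS(t)\cS(s)$ for all $t,s\geq 0$, and that $t\mapsto \cS(t)d$ is strongly continuous for each $d\in X$. Throughout I identify $\mathbf{H}^k(\Sigma_t)$ with $\mathbf{H}^k(\Sigma)$ via the pullback $\varphi_t^*$ along the flow of $T$; since $\mathcal{L}_T g = 0$ and $\mathcal{L}_T r = 0$ (hence also $\mathcal{L}_T A = \mathcal{L}_T \sigma = 0$), this pullback is an isometry of the spaces built in Definitions \ref{wpdefs} and \ref{Hkdef}, so the identification is harmless. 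With this understood, $\cS(0)=\mathrm{id}$ is immediate from Definition \ref{Sdef}. I would first record that $\cS(t)$ is a bounded operator on $X$: because $\mathbf{H}^k(\Sigma)\subset\mathbf{H}^1(\Sigma)$, Corollary \ref{cor3} shows both that $\cS(t)$ maps $\mathbf{H}^k(\Sigma)$ into itself and that $\|\cS(t)\|_{X\to X}\leq C^{1/2}e^{Mt/2}$; in particular the operator norms are uniformly bounded on compact $t$-intervals, a fact used below.

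For the semigroup law the essential input is stationarity, namely $\mathcal{L}_T L = 0$ (which holds since $T$ is Killing and $\mathcal{L}_T V = \mathcal{L}_T W = 0$ by Definition \ref{LLdef}) together with the stationarity of the boundary data. Fix $d=(\uppsi,\uppsi')$ and let $\psi$ be the unique weak solution of $L\psi=0$ with these data on $\Sigma_0$. Because $T$ generates $\varphi_s$ and $\mathcal{L}_T L=0$, the shifted field $\varphi_s^*\psi$ again solves $L(\varphi_s^*\psi)=0$, satisfies the same stationary boundary conditions, and (using that $T$ commutes with $\varphi_s^*$) has Cauchy data on $\Sigma_0$ equal to $\varphi_s^*\big[(\psi,T\psi)|_{\Sigma_s}\big]=\cS(s)d$. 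By the uniqueness clause of Theorem \ref{WPThm}, $\varphi_s^*\psi$ is therefore exactly the solution launched by $\cS(s)d$. Reading off its state on $\Sigma_t$ and pulling back by $\varphi_t$ gives $\varphi_{t+s}^*\big[(\psi,T\psi)|_{\Sigma_{t+s}}\big]$, which is on the one hand $\cS(t)\cS(s)d$ and on the other $\cS(t+s)d$; hence $\cS(t+s)=\cS(t)\cS(s)$.

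Strong continuity is then almost immediate. For any $d\in X$, after the isometric identification above the orbit $t\mapsto\cS(t)d$ coincides with $t\mapsto(\psi,T\psi)|_{\Sigma_t}$, which Corollary \ref{cor3} asserts lies in $C^0(\R_+,\mathbf{H}^k(\Sigma))$; in particular $\cS(t)d\to\cS(0)d=d$ as $t\to 0^+$. Should one prefer not to invoke the full continuity statement of Corollary \ref{cor3}, one can instead establish $\cS(t)d\to d$ only for $d$ in the dense subspace of smooth data coming from $C^\infty_{bc}(\bhR;\C^N)$ by a direct energy argument, and then extend to all of $X$ using the uniform bound $\|\cS(t)\|\leq C^{1/2}e^{Mt/2}$ together with a standard $\varepsilon/3$ argument.

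The genuinely substantive analytic content, the boundedness and continuity of the orbit in $\mathbf{H}^k$, has already been absorbed into Corollary \ref{cor3}, which rests on the Killing, redshift and commutation estimates of this section. The only real obstacle remaining is careful bookkeeping: one must check that the flow identification $\varphi_t^*$ is an isometry compatible with the $\mathbf{H}^k$ norm of Definition \ref{Hkdef} (the norm is built from the commutator fields $K_a$, so one verifies it is, up to equivalence, flow-invariant), and that time-translation genuinely preserves the \emph{weak} formulation of both the equation and the boundary conditions, so that the uniqueness half of Theorem \ref{WPThm} may legitimately be applied to $\varphi_s^*\psi$.
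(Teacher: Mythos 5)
Your proposal is correct and follows essentially the same route as the paper: the paper likewise verifies the three $C^0$-semigroup axioms directly, using Corollary \ref{cor3} both for the invariance of $\mathbf{H}^k(\Sigma)$ and for strong continuity, and deriving the semigroup law from stationarity of the problem (your pullback-plus-uniqueness argument is the same idea spelled out in more detail). The extra bookkeeping you flag, the flow-invariance of the $\mathbf{H}^k$ norm and the compatibility of time translation with the weak formulation, is a sound elaboration that the paper leaves implicit.
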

\begin{proof}
First we note that Corollary \ref{cor3} implies that $\cS(t)$ maps $\mathbf{H}^k(\Sigma)$ (thought of as a subspace of $\mathbf{H}^1(\Sigma)$) to itself. To show that $\cS(t)$ defines a $C^0$ semigroup, we must check three conditions:
\begin{enumerate}[1.]
\item \emph{$\cS(0) = I$, the identity on $\mathbf{H}^k(\Sigma)$}. This follows immediately from the definition of $\cS(0)$ and the well-posedness theorem for problem \eq{prob1}.
\item \emph{We have $\cS(t+t') = \cS(t)\cS(t')$ for all $t, t'\geq 0$}. This is a consequence of the stationarity of $\bhR$. The problem \eq{prob1} is invariant under $t\to t+t'$, $t'\geq0$, so solving the equation on the interval $[t', t+t']$ is equivalent to solving on $[0, t]$. Noting that to solve the equation on $[0, t+t']$ we can first solve on $[0, t']$ then on $[t', t+t']$ we are done.
\item \emph{$\cS(t)$ is continuous in the strong operator topology}. This follows from Corollary \ref{cor3}, since it is the statement that  for each $x \in \mathbf{H}^k(\Sigma)$,  the map $t \mapsto \cS(t)x$ is a $C^0$  curve in $\mathbf{H}^k(\Sigma)$.
\end{enumerate}
Thus $\cS(t)$ verifies the conditions to be a $C^0$-semigroup on $\mathbf{H}^k(\Sigma)$. 
\end{proof}

A standard result concerning $C^0$-semigroups is that associated with each semigroup is a closed  operator, in general unbounded, called the \emph{infinitesimal generator} of the semigroup.
\begin{Definition}\label{Adef}
We define
\be
D^k(\A) := \left\{\bm{\psi}\in \bm{H}^k(\Sigma): \lim_{t \to 0_+} \frac{\cS(t)\bm{\psi} - \bm{\psi}}{t}\textrm{ exists in }\bm{H}^k(\Sigma) \right\},
\ee
and
\be
\A \bm{\psi} := \lim_{t \to 0_+} \frac{\cS(t)\bm{\psi} - \bm{\psi}}{t}, \qquad \textrm{ for } \bm{\psi} \in D^k(\A).
\ee
We call the unbounded operator $(D^k(\A), \A)$ the infinitesimal generator of the semigroup $\cS(t)$ on $\mathbf{H}^k(\Sigma)$. Here $D^k(\A)$ is the domain of $\A$.
\end{Definition}
We note that $(D^{k-1}(\A), \A)$ extends $(D^{k}(\A), \A)$, i.e.\ $D^k(\A)\subset D^{k-1}(\A)$ and the operators agree where they are both defined. The main property of the generators that we shall require is \cite[\S7.4: Theorem 2] {Evans}, \cite[Theorem 11.6.1]{Hille}:
\begin{Theorem}\label{gentheorem}
\begin{enumerate}[i)]
\item The domain $D^k(\A)$ is dense in $\bm{H}^k(\Sigma)$.
\item $(D^k(\A), \A)$ is a closed operator.
\item The resolvent $(\A - s)^{-1}$ exists and is a bounded linear transformation of $\bm{H}^k(\Sigma)$ onto $D^k(\A)$ for $s$ in the half-plane $\Re(s) > M$, where $M$ is the exponent in Corollary \ref{cor3}.
\end{enumerate}
\end{Theorem}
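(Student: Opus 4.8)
The plan is to recognise that, once the semigroup structure of Theorem \ref{semigroup} and the uniform bound of Corollary \ref{cor3} are available, all three assertions are purely abstract facts about the generator of a $C^0$-semigroup, and I would ultimately cite \cite{Evans, Hille} for them. The single piece of input specific to the black hole is the growth estimate: Corollary \ref{cor3} furnishes constants $C, M$ such that
\[
\norm{\cS(t)\bm{\psi}}{\bm{H}^k(\Sigma)} \leq C e^{Mt} \norm{\bm{\psi}}{\bm{H}^k(\Sigma)}, \qquad t \geq 0,
\]
so that $\cS(t)$ has operator norm at most $Ce^{Mt}$. This is where all the redshift and commutation estimates of \S\ref{hypsec} have been spent; everything below uses only this bound together with the strong continuity $t \mapsto \cS(t)\bm{\psi}$ from Theorem \ref{semigroup}, and all integrals are Bochner integrals in $\bm{H}^k(\Sigma)$, convergent by these two properties.

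For part $i)$ I would mollify in time. Given $\bm{\psi}\in\bm{H}^k(\Sigma)$, set $\bm{\psi}_\tau := \tau^{-1}\int_0^\tau \cS(s)\bm{\psi}\,ds$. Using the semigroup law to write $\cS(h)\bm{\psi}_\tau - \bm{\psi}_\tau = \tau^{-1}\left(\int_\tau^{\tau+h} - \int_0^h\right)\cS(u)\bm{\psi}\,du$ and dividing by $h$, strong continuity gives $h^{-1}(\cS(h)-I)\bm{\psi}_\tau \to \tau^{-1}(\cS(\tau)\bm{\psi} - \bm{\psi})$ as $h\to 0^+$; hence $\bm{\psi}_\tau \in D^k(\A)$. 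Since $\bm{\psi}_\tau \to \bm{\psi}$ as $\tau \to 0^+$, again by strong continuity, $D^k(\A)$ is dense. For part $ii)$ I would use the identity $\cS(t)\bm{\psi} - \bm{\psi} = \int_0^t \cS(s)\A\bm{\psi}\,ds$, valid for $\bm{\psi}\in D^k(\A)$ because $s\mapsto\cS(s)\bm{\psi}$ is differentiable with derivative $\cS(s)\A\bm{\psi}$. If $\bm{\psi}_n \to \bm{\psi}$ and $\A\bm{\psi}_n \to \bm{\chi}$ in $\bm{H}^k(\Sigma)$, the growth bound makes $\cS(s)$ uniformly bounded on $[0,t]$, so I may pass to the limit under the integral to obtain $\cS(t)\bm{\psi} - \bm{\psi} = \int_0^t \cS(s)\bm{\chi}\,ds$; dividing by $t$ and letting $t\to 0^+$ shows $\bm{\psi}\in D^k(\A)$ with $\A\bm{\psi} = \bm{\chi}$, which is closedness.

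For part $iii)$ I would take the Laplace transform $R(s) := \int_0^\infty e^{-st}\cS(t)\,dt$, which converges absolutely for $\Re(s) > M$ precisely because $\norm{e^{-st}\cS(t)}{} \leq C e^{(M-\Re(s))t}$ in operator norm, giving $\norm{R(s)}{} \leq C/(\Re(s) - M)$. Manipulating $\cS(h)R(s)\bm{\psi}$ with the semigroup law shows that $R(s)$ maps $\bm{H}^k(\Sigma)$ into $D^k(\A)$ with $(sI-\A)R(s) = I$, while closedness permits commuting $\A$ past the integral to give $R(s)(sI - \A) = I$ on $D^k(\A)$. Thus $R(s) = (sI-\A)^{-1}$; in the paper's sign convention the resolvent is $(\A - s)^{-1} = -R(s)$, which is therefore a bounded operator carrying $\bm{H}^k(\Sigma)$ onto $D^k(\A)$ for $\Re(s) > M$.

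The honest assessment is that there is no genuine obstacle remaining at this stage: the only analytic difficulty, namely the exponential bound on $\cS(t)$, was already overcome in Corollary \ref{cor3}, and the dependence on $k$ is carried automatically, since $\cS(t)$ acts on $\bm{H}^k(\Sigma)$ by construction and the generators on different $\bm{H}^k$ are mutually consistent, as noted after Definition \ref{Adef}. Accordingly I would present the proof as the verification that the hypotheses of the standard generation theorems of \cite{Evans, Hille} hold, with Corollary \ref{cor3} supplying the required bound, rather than reproving the abstract theory in full.
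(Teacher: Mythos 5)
Your proposal is correct and coincides with the paper's treatment: the paper gives no proof of Theorem \ref{gentheorem} at all, simply citing \cite[\S7.4: Theorem 2]{Evans} and \cite[Theorem 11.6.1]{Hille} once Theorem \ref{semigroup} and the growth bound of Corollary \ref{cor3} are in place, which is exactly the reduction you perform (and your sketches of the standard density, closedness, and Laplace-transform arguments are the textbook proofs those references contain). The only cosmetic discrepancy is that Corollary \ref{cor3} bounds \emph{squared} norms, so the operator-norm bound is $\norm{\cS(t)}{}\leq \sqrt{C}\,e^{Mt/2}$ rather than $Ce^{Mt}$, which only enlarges the half-plane of convergence and affects nothing.
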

It will be useful to have an expression for $\A$ in terms of the operator $L$. To obtain this, we note that we can write:
\be
L \phi = P_2 \phi + P_1 T \phi + T T \phi,
\ee
where $P_i$ are matrix differential operators of order $i$ on $\Sigma$. $P_2$ is elliptic, degenerating on the horizon. For more details of this decomposition, see  \cite[\S4]{Holzegel:2012wt} (note that in the notation of that paper $P_2 = L$ and $P_1 =B$). Recalling that the semigroup evolves the pair $(\psi, T\psi)|_{\Sigma_t}$, we re-write the equation $L \psi = 0$ as a first order equation:
\be
T\left(\begin{array}{c}\psi \\ \psi' \end{array}\right)=\left(\begin{array}{cc}0 & 1 \\-P_2 & -P_1\end{array}\right)\left(\begin{array}{c}\psi \\ \psi' \end{array}\right).
\ee
From here we deduce
\ben{Akdef}
\A\left(\begin{array}{c}\psi \\ \psi' \end{array}\right)= \left(\begin{array}{cc}0 & 1 \\-P_2 & -P_1\end{array}\right)\left(\begin{array}{c}\psi \\ \psi' \end{array}\right).
\een
Note the operators for different $k$ differ only in their domain of definition.

\subsubsection{Definition of the quasinormal frequencies and modes}

We are now in a situation to be able to define the quasinormal modes. We note that since $(D^k(\A),\A)$ is a closed, densely defined operator, it makes sense to consider its spectrum $\sigma^k(\A)$. We know from Theorem \ref{gentheorem} that $\sigma^k(\A)\subset \{\Re(s)\leq M\}$. We would like to identify the eigenvalues of $(D^k(\A),\A)$ with the quasinormal frequencies. Unfortunately not all of the spectrum corresponds to what we would like to identify as QNFs. In particular, by considering specific examples one expects that every point in the half plane $\{ \Re(s)<c_k\}$ is an eigenvalue of $(D^k(\A),\A)$ for an appropriate $c_k$. We accordingly define the quasinormal frequencies and modes as follows:
\begin{Definition}\label{QNSdef}
Let $L$ be a strongly hyperbolic operator on a stationary black hole $\bhR$. Let $(D^k(\A),\A)$ be the infinitesimal generator of the solution semigroup on $\bm{H}^k(\Sigma)$. We say that $s\in \C$ belongs to the $\bm{H}^k$-quasinormal spectrum of $L$ if:
\begin{enumerate}[i)]
\item $\Re(s) >\frac{1}{2}  w_L + (\frac{1}{2} - k)\varkappa$,
\item $s$ belongs to the spectrum of $(D^k(\A),\A)$.
\end{enumerate}
Here $w_L$ is defined as in Definition \ref{RSdef} and $\varkappa$ is the surface gravity. If $s$ is an eigenvalue of $(D^k(\A),\A)$, we say $s$ is an $\bm{H}^k$-quasinormal frequency. The corresponding eigenfunctions are the $\bm{H}^k$-quasinormal modes.
\end{Definition}
We note that for each $k$ we restrict ourselves to eigenvalues in a right half-plane, however as $k$ increases the boundary of the half-plane moves further and further to the left. In other words the more rapidly decaying a quasinormal mode is, the higher the regularity we require in order to identify it. We can make sense of this in a heuristic fashion as follows. The horizon can be thought of as an unstable trapping surface. In the ray optics approximation, a photon which begins on the horizon and moves towards infinity will simply sit there for all time. Of course the ray optics approximation assumes an ideal `infinitely localised' wave packet. The higher the regularity we assume for our initial data, the further we are forced to depart from this idealisation. Crudely, one can estimate from dimensional considerations that a highly localised wave packet on the horizon with finite norm in $\bm{H}^k$ will decay\footnote{we refer here to decay of the wave packet itself at the horizon, imagining that we can somehow separate this from the rest of the field} roughly like $e^{-k \varkappa}$. As a result, any quasinormal mode with $\Re(s) \ll -k \varkappa$ will never be observable in the late time behaviour of the field if we consider data in $\bm{H}^k$. We need more regular data to identify the more rapidly decaying quasinormal modes.

\section{Properties of the quasinormal spectrum} \label{QNMsec}

We are now ready to state the main theorem of the paper. 
\begin{Theorem}\label{mainthm}
Let $L$ be a strongly hyperbolic operator on a stationary black hole $\bhR$, with boundary conditions fixed at infinity. Let $(D^k(\A),\A)$  be the infinitesimal generator of the solution semigroup on $\bm{H}^k(\Sigma)$. Then for $s$ in the half-plane $\Re(s) >\frac{1}{2}  w_L + (\frac{1}{2} - k)\varkappa$, either:
\begin{enumerate}[i)]
 \item $s$ belongs to the resolvent set of $(D^k(\A),\A)$,
 \\
 or:
 \item $s$ is an eigenvalue of $(D^k(\A),\A)$ with finite multiplicity.
 \end{enumerate}
Possibility $ii)$ holds only for isolated values of $s$. The resolvent is meromorphic on the half-plane $\Re(s) >\frac{1}{2}  w_L + (\frac{1}{2} - k)\varkappa$, with poles of finite rank at points satisfying $ii)$. The residues at the poles are finite rank operators.

Furthermore, if $k_1\geq k_2$, the eigenvalues and eigenfunctions of $(D^{k_1}(\A),\A)$, $(D^{k_2}(\A),\A)$ agree for $s$ in the half-plane $\Re(s) >\frac{1}{2}  w_L + (\frac{1}{2} - k_2)\varkappa$.
\end{Theorem}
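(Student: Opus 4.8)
The plan is to follow the template of Lemma \ref{toy lemma}, replacing its explicit one-dimensional computations with the geometric estimates of \S\ref{hypsec}. First I would record the reduction of the resolvent problem to an elliptic one. Writing $\A$ in the matrix form \eqref{Akdef}, the equation $(\A-s)(\phi_1,\phi_2)=(F_1,F_2)$ is equivalent, after eliminating $\phi_2 = s\phi_1 + F_1$, to the degenerate elliptic equation $\hat{L}_s\phi_1 := (P_2 + sP_1 + s^2)\phi_1 = g$, where $g$ is built linearly from $(F_1,F_2)$. Hence $s$ lies in the resolvent set of $(D^k(\A),\A)$ exactly when $\hat{L}_s$ is boundedly invertible on the appropriate space, while eigenfunctions of $\A$ correspond to nontrivial solutions of the homogeneous problem $\hat{L}_s u = 0$ lying in the relevant regularity class and satisfying the boundary conditions at $\scri$.

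The core of the argument is the elliptic estimate, which I would obtain by recycling the hyperbolic estimates applied to the mode $\phi = e^{st}u$. For such $\phi$ one has $T\phi = s\phi$ and $L\phi = e^{st}\hat{L}_s u$, while $\tfrac{d}{dt}E_\gamma[\phi] = 2\Re(s)\,e^{2\Re(s)t}E_\gamma[u]$ and similarly for $\mathcal{E}_\gamma$. Feeding this into the Killing estimate (Theorem \ref{killthm} $iii$) produces a bound on the degenerate energy $E_\gamma[u]$ by $\epsilon\norm{(\hat{L}_s+\gamma)u}{\L^2}^2 + C\epsilon^{-1}|s|^2\norm{u}{\L^2}^2$, the analogue of \eqref{first estimate}; feeding it into the redshift estimate (Theorem \ref{redshift} $ii$) gives $(2\Re(s) - w_L + \varkappa - \epsilon)\mathcal{E}_\gamma[u] \leq C_\epsilon(\norm{(\hat{L}_s+\gamma)u}{\L^2}^2 + E_\gamma[u])$, the analogue of \eqref{second estimate}. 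When $\Re(s) > \tfrac12 w_L - \tfrac12\varkappa$ the coefficient on the left is positive for small $\epsilon$, and since $\mathcal{E}_\gamma$ is coercive over the full $\H^1$ norm (Theorem \ref{redshift} $i$) whereas $E_\gamma$ controls only the tangential derivatives, I would combine the two — using the spectral-shift device of step 3 of Lemma \ref{toy lemma} (legitimate since $\hat{L}_s - \hat{L}_{s'}$ is first order) to extend the $E_\gamma$ bound below $\Re(s)=0$ at the cost of a small multiple of $\norm{u}{\H^1}^2$, then absorbing and taking $\gamma$ large — to reach the coercive bound $\norm{u}{\H^1}^2 \leq C\norm{(\hat{L}_s+\gamma)u}{\L^2}^2$, valid precisely for $\Re(s) > \tfrac12 w_L - \tfrac12\varkappa$. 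This gives injectivity of $\hat{L}_s+\gamma$ and closedness of its range.

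For surjectivity I would pass to the adjoint $L^\dagger$ of Definition \ref{LLdef}, for which $w_{L^\dagger} = -w_L^*$. The dual problem is the backward evolution with Dirichlet conditions on $\hor$, and the horizon-vanishing estimates (Theorem \ref{killthm} $iv$, Theorem \ref{redshift} $iii$) yield the analogous coercive estimate for $\hat{L}^\dagger_{\overline{s}}+\gamma$ in the same half-plane, so the cokernel is trivial. Thus $(\hat{L}_s+\gamma)^{-1}$ exists as a bounded map $\L^2\to\H^1$, hence is compact on $\L^2$ by Rellich--Kondrachov (using compactness of $\Sigma$). Since $\hat{L}_s - \hat{L}_{s_0}$ is of lower order, $s\mapsto(\hat{L}_s+\gamma)^{-1}$ is an analytic family of compact operators, and the analytic Fredholm theorem delivers the dichotomy, the meromorphy of the resolvent on $\Re(s) > \tfrac12 w_L - \tfrac12\varkappa$, finite multiplicity of the eigenvalues, isolation of case $ii)$, and finite rank of the residues; translating back through the reduction above transfers all of this to $(\A-s)^{-1}$.

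To reach general $k$ I would invoke the commutation theorem (Theorem \ref{commute}): commuting $k-1$ times with the vector fields $K_a$ yields a strongly hyperbolic system $L'$ on an enlarged target with $w_{L'} = w_L - 2(k-1)\varkappa$, whose $\bm{H}^1$ theory is governed by $\Re(s) > \tfrac12 w_{L'} - \tfrac12\varkappa = \tfrac12 w_L + (\tfrac12 - k)\varkappa$, exactly the stated threshold. Part $ii)$ of Theorem \ref{commute} propagates the constraints $\delta\Phi=0$, so solutions of the extended homogeneous problem descend to solutions of $\hat{L}_s u = 0$, matching the $\bm{H}^k$-eigenfunctions of $\A$ with the $\bm{H}^1$-eigenfunctions of $L'$. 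Finally, for $k_1\geq k_2$, the half-plane for $k_1$ contains that for $k_2$, so any $s$ in the $k_2$-half-plane satisfies the $k_1$-estimates; an $\bm{H}^{k_2}$-eigenfunction may then be bootstrapped to $\bm{H}^{k_1}$ via the higher-order elliptic estimates, while $D^{k_1}(\A)\subset D^{k_2}(\A)$ is automatic, so the eigenvalues and eigenspaces coincide there. I expect the main obstacle to be the elliptic estimate at the degenerate horizon: organizing the interplay between the degenerate Killing energy and the redshift upgrade so the threshold $\tfrac12 w_L + (\tfrac12 - k)\varkappa$ emerges cleanly, and checking that the horizon-Dirichlet adjoint estimates close in exactly the same half-plane to secure surjectivity.
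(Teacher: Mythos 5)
Your skeleton matches the paper's proof closely --- the reduction $(\A-s)\leftrightarrow\hL$ via Lemma \ref{Lslemma}, the recycled Killing and redshift estimates applied to $e^{st}\upsilon$ with the spectral-shift device (Theorem \ref{Linj}), the adjoint problem with horizon-Dirichlet data, compactness via the twisted Rellich--Kondrachov theorem plus analytic Fredholm, and the commutation step with $w_{L'}=w_L-2\varkappa$. But there is one genuine gap, and it sits exactly where you flagged your residual worry: the adjoint estimates do \emph{not} close in the same half-plane as the direct ones. For the backward problem with data vanishing on $\hor$, the redshift becomes a blueshift: the dual estimate \eq{dualrsest} carries the coefficient $(-w_L^*+\varkappa+\epsilon)$ rather than $(w_L-\varkappa+\epsilon)$, so applying it to $e^{-\overline{s}t}\upsilon$ requires $2\Re(s)-w_L-\varkappa>0$. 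This is the content of Theorem \ref{Lsur}: injectivity of $\hL^\dagger+\gamma$, hence surjectivity of $\hL+\gamma$, is obtained only for $\Re(s)>\frac{1}{2}w_L+\frac{1}{2}\varkappa$, a full $\varkappa$ to the right of the half-plane $\Re(s)>\frac{1}{2}w_L-\frac{1}{2}\varkappa$ in which your coercive injectivity estimate holds. If you then run your commutation induction, each level inherits this deficit, and your final threshold comes out as $\Re(s)>\frac{1}{2}w_L+(\frac{3}{2}-k)\varkappa$, one $\varkappa$ short of the statement.

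The missing idea is the density bootstrap in step 6 of the proof of Theorem \ref{Lsinv}. One first proves invertibility of $\hL+\gamma$ on $\L^{2,k}$ only for $\Re(s)>\frac{1}{2}w_L-(k-\frac{1}{2})\varkappa$ (base case $k=0$ being Theorem \ref{l2inv}, i.e.\ the adjoint-limited half-plane). Then, for $k=0$ and $s$ in the larger region $\Re(s)>\frac{1}{2}w_L-\frac{1}{2}\varkappa$, one solves $(\hL+\gamma)u=f$ for the dense class $f\in\L^{2,1}(\Sigma)$ using the once-commuted ($k=1$) result; uniqueness and the closed-range estimate \eq{Lsest} of Theorem \ref{Linj} --- which \emph{do} hold in the larger half-plane --- allow passage to the limit for arbitrary $f\in\L^2(\Sigma)$. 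Rerunning the induction from this improved base case recovers the full range $\Re(s)>\frac{1}{2}w_L-(k+\frac{1}{2})\varkappa$, which via Lemma \ref{Lslemma} (with $k-1$ derivatives) yields precisely the stated threshold $\frac{1}{2}w_L+(\frac{1}{2}-k)\varkappa$ for $(D^k(\A),\A)$. With this repair your argument goes through; one further small caution is that the compactness of $(\hL+\gamma)^{-1}$ is not plain Rellich--Kondrachov on the compact $\Sigma$ but the twisted version of \cite{Holzegel:2012wt}, since the relevant spaces carry the renormalised weights at $\scri$.
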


In particular, this implies that the quasinormal spectrum of $L$ consists of at most a countable discrete set of quasinormal frequencies, and each quasinormal frequency corresponds to a finite number of quasinormal modes.

We give here a brief overview of the proof of Theorem \ref{mainthm}. The first step is to relate the invertibility of $(\A-s)$ to that of a simpler operator, $\hL$, which is obtained by formally Laplace transforming the operator $L$. The advantage of considering $\hL$ is that for sufficiently large $\gamma$, the hyperbolic estimates of \S\ref{hypsec} enable us to show that for $s$ in the half-plane $\Re(s) >\frac{1}{2}  w_L + (\frac{1}{2} - k)\varkappa$, we can find $\gamma$ such that $(\hL + \gamma)^{-1}$ exists and can be shown to be compact.  Crudely speaking, the compactness arises since it is a bounded operator from $L^2$ into $H^1$. The appropriate Sobolev spaces here are of course twisted Sobolev spaces, so we have to call on a twisted version of the Rellich-Kondrachov theorem proven in \cite{Holzegel:2012wt} to establish compactness. Although $(\hL + \gamma)^{-1}$ depends on $s$ in a less direct way than $(\A-s)^{-1}$, it is nevertheless holomorphic in $s$. An application of the analytic Fredholm theorem establishes a Fredholm alternative style theorem for $\hL$ from which the result follows.

\subsection{The Laplace transformed operator}\label{LTop}
When trying to construct the spectrum of  $(D^k(\A),\A)$, an important issue to confront is that $\A$ is not uniformly elliptic. In fact, it degenerates at the horizon as a result of $T$ becoming null. This is a problem in applying the usual approach to elliptic eigenvalue problems based on G\r{a}rding's inequality, since the `obvious' G\r{a}rding estimate fails to control the full $H^1$ norm at the horizon. In order to derive useful estimates for $\A$, we need to make use of the fact that $\A$ degenerates in a very particular way. In the time dependent problem studied above, we were able to get good control in spite of the degeneration by using the redshift vector field. It turns out that for the resolvent $(\A-s)^{-1}$ we can make use of precisely the same approach, however wherever the vector field $T$ acts in the time dependent problem, we instead multiply by $s$. Since $T$ is a Killing field all of the calculations may be repeated as before. 

We can in fact directly recycle the estimates above. In order to do this, we first observe that any function $u:\Sigma \to \C^N$, may be lifted uniquely to a function $\upsilon:\bhR \to \C^N$ satisfying
\be
\left. \upsilon \right|_{{\Sigma}_0} =  u, \quad T\upsilon = 0.
\ee
It's convenient to define some new functions spaces at this point. Recall that in \S\ref{comsec} we constructed a set of vector fields $K_a$. We say that $u \in \H^{1, k}_\mathcal{D}(\Sigma, \kappa)$, resp. $u \in \L^{2, k}(\Sigma)$ if
\be
\left.
\begin{array}{l}
\left. K_{A} \upsilon \right |_{\Sigma_0} \in \H^1_\mathcal{D}(\Sigma, \kappa), \\
\left. K_{A} \upsilon \right |_{\Sigma_0} \in \L^2(\Sigma), 
\end{array}
\right \} \qquad \abs{A} \leq k,
\ee
where $A$ is a multi-index and $K_a$ are as in Lemma \ref{vflem}.

We now define a second order operator, the Laplace transform of $L$. On $C^\infty_{bc}(\Sigma; \C^N)$ we define: 
\ben{lhatdef}
\hL u =  P_2 u + s P_1 u + s^2 u.
\een
We fix a family of domains for $\hat{L}_s$ by requiring $D^{k}(\hat{L}_s)$ for $k =1,2, \ldots$  to be the closure of $C^\infty_{bc}(\Sigma; \C^N)$ with respect to the graph norm $\norm{u}{\L^{2,{k-1}}(\Sigma)}^2 + ||\hL u ||_{\L^{2,{k-1}}(\Sigma)}^2$.  On each domain, $\hL$ defines a closed, densely defined operator.

For any value of $t$, we may express $\hL$ in terms of $L$ as follows:
\ben{lhatrel}
\hL u =\left .e^{-st} L\left( e^{st} \upsilon \right) \right|_{\Sigma_t}.
\een
The justification for considering $\hL$ is the following Lemma:
\begin{Lemma}\label{Lslemma}
The resolvent $(\A - s)^{-1}$ exists and is a bounded linear transformation of $\bm{H}^k(\Sigma)$ onto $D^k(\A)$ if and only if $\hL^{-1}:\L^{2,{k-1}}(\Sigma) \to D^k(\hL)$ exists as a bounded operator with $D^k(\hL) \subset \H^{1, k-1}_\mathcal{D}(\Sigma, \kappa)$. Furthermore, we have
\ben{residen}
(\A - s)^{-1} = \left(\begin{array}{cc}-1 & 0 \\-s & 1\end{array}\right) \left(\begin{array}{cc}\hL^{-1} & 0 \\0 & 1\end{array}\right)\left(\begin{array}{cc}P_1 + s & 1 \\1 & 0\end{array}\right).
\een
\end{Lemma}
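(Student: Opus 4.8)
The plan is to treat the statement as the bookkeeping attached to a single algebraic elimination. Writing $\bm\phi = (\phi_1,\phi_2)$ and a generic right-hand side $\bm g = (g_1,g_2)$, the equation $(\A-s)\bm\phi = \bm g$ decouples row by row using \eq{Akdef}: the first row reads $\phi_2 - s\phi_1 = g_1$, so $\phi_2 = s\phi_1 + g_1$, and substituting this into the second row $-P_2\phi_1 - P_1\phi_2 - s\phi_2 = g_2$ collapses everything to the single second-order equation $\hL\phi_1 = -(P_1+s)g_1 - g_2$, by the definition \eq{lhatdef} of $\hL$. Solving $\phi_1 = -\hL^{-1}\big((P_1+s)g_1 + g_2\big)$ and reading off $\phi_2 = s\phi_1 + g_1$ reproduces the matrix identity \eq{residen} verbatim; running the same computation backwards shows that any bounded inverse of $\A - s$ yields a bounded inverse of $\hL$ on the relevant spaces. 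Thus both the existence equivalence and the explicit formula are consequences of this elimination, and the only real work is to check that it respects the Hilbert-space structure.

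Second, I would pin down concrete descriptions of the two domains. Unwinding Definition \ref{Hkdef} together with \eq{Akdef}, a pair $\bm\phi$ lies in $D^k(\A)$ precisely when $\bm\phi \in \bm H^k(\Sigma)$ and $\A\bm\phi\in\bm H^k(\Sigma)$; since $\A$ interchanges the two slots and differentiates, this is equivalent to $\phi_1,\phi_2 \in \H^{1,k-1}_\mathcal{D}(\Sigma,\kappa)$ together with $P_2\phi_1 + P_1\phi_2 \in \L^{2,k-1}(\Sigma)$. On the other side, the hypothesis $D^k(\hL)\subset\H^{1,k-1}_\mathcal{D}(\Sigma,\kappa)$ together with the graph-norm definition of $D^k(\hL)$ says $u\in D^k(\hL)$ iff $u\in\H^{1,k-1}_\mathcal{D}(\Sigma,\kappa)$ and $\hL u\in\L^{2,k-1}(\Sigma)$. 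The substitution $\phi_1 = u$, $\phi_2 = su + g_1$ is then intended to be a bounded linear bijection between these two descriptions, with bounded inverse, because $P_1$ is first order (so maps $\H^{1,k-1}$ into $\L^{2,k-1}$) and multiplication by $s$ is harmless.

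For the forward direction I would define $R$ to be the right-hand side of \eq{residen} and verify it is bounded $\bm H^k(\Sigma)\to D^k(\A)$: given $\bm g\in\bm H^k(\Sigma)$ we have $g_1\in\H^{1,k-1}_\mathcal{D}$ and $g_2\in\L^{2,k-1}$, hence $(P_1+s)g_1+g_2\in\L^{2,k-1}$, hence $\phi_1=\hL^{-1}(\,\cdot\,)\in D^k(\hL)\subset\H^{1,k-1}_\mathcal{D}$ and $\phi_2=s\phi_1+g_1\in\H^{1,k-1}_\mathcal{D}$, with $P_2\phi_1+P_1\phi_2\in\L^{2,k-1}$ following from $\hL\phi_1\in\L^{2,k-1}$; the algebra of the first paragraph then gives $R(\A-s)=I$ on $D^k(\A)$ and $(\A-s)R=I$ on $\bm H^k(\Sigma)$. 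Conversely, applying the assumed bounded $(\A-s)^{-1}$ to data $\bm g=(0,-h)$ with $h\in\L^{2,k-1}(\Sigma)$ returns a pair whose first component is $\hL^{-1}h$, which by the mapping property of $(\A-s)^{-1}$ lands in $\H^{1,k-1}_\mathcal{D}$ and in fact in $D^k(\hL)$, establishing the converse together with the stated mapping property.

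The hard part, and the place where I expect the argument to require genuine care rather than formal manipulation, is the domain identification in the second paragraph. The spaces $\bm H^k(\Sigma)$ and hence $D^k(\A)$ are not naive products of $\H^{1,k-1}_\mathcal{D}$ and $\L^{2,k-1}$: Definition \ref{Hkdef} builds them out of the spacetime $k$-jet $\Uppsi$ determined by the equation, so that the compatibility conditions at $\scri$ are baked in. Verifying that $(0,-h)$ genuinely lies in $\bm H^k(\Sigma)$ for $h\in\L^{2,k-1}$, and that the decomposition $D^k(\A)\cong D^k(\hL)\times\H^{1,k-1}_\mathcal{D}(\Sigma,\kappa)$ is a topological isomorphism with constants uniform in the elimination, requires tracking the $K_a$-commutators and the inheritance of boundary conditions supplied by Lemma \ref{vflem} and Theorem \ref{commute}. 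This is where the equivalence of the jet-based norms with the $K_a$-derivative norms must be used, and it is the only step that is not purely algebraic.
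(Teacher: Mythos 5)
Your proposal is correct and is essentially the paper's own argument: the paper proves \eq{residen} by factorizing $(\A-s)$ as a product of two explicitly invertible triangular matrix operators with $\mathrm{diag}(\hL,1)$ in the middle — which is exactly your row-by-row elimination written multiplicatively — and then, as you do, checks boundedness and the mapping properties of the domains, obtaining the converse by rewriting \eq{residen} as a formula for $\hL^{-1}$ in terms of $(\A-s)^{-1}$ applied to data of the form $(0,h)$. One caution: your assertion that $D^k(\A)$ equals the maximal domain $\{\bm{\phi}\in\bm{H}^k(\Sigma):\A\bm{\phi}\in\bm{H}^k(\Sigma)\}$ does not follow merely by ``unwinding Definition \ref{Hkdef} together with \eq{Akdef}'', since $D^k(\A)$ is defined through the semigroup limit and the identification requires the propagation estimate of Corollary \ref{cor3} applied also to $\A\bm{\phi}$ (using that $T$ commutes with $L$ and the boundary conditions are stationary) — but you rightly flag the domain bookkeeping as the only substantive step, which the paper itself treats tersely.
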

\begin{proof}
By direct calculation, we may show that
\be
(\A - s) = \left(\begin{array}{cc}0 & 1 \\1 & -(P_1 + s) \end{array}\right) \left(\begin{array}{cc}\hL & 0 \\0 & 1\end{array}\right)\left(\begin{array}{cc}-1 & 0 \\-s & 1\end{array}\right).
\ee
The first and last operators on the right may be easily inverted, so that formally at least \eq{residen} holds. Checking the domains of definition of the various operators, one sees that if $\hL^{-1}:\L^{2,{k-1}}(\Sigma) \to D^k(\hL)$ exists:
\be
(\A - s)^{-1} \circ (\A - s) = \textrm{id}_{D^k(\A)}, \qquad (\A - s) \circ (\A - s)^{-1} = \textrm{id}_{\bm{H}^k(\Sigma)}.
\ee
It is straightforward to check that the right hand side of \eq{residen} defines a bounded operator on $\bm{H}^k(\Sigma)$ provided $\hL^{-1}$ is a bounded operator from $\L^{2, k-1}(\Sigma)$ to $\H^{1, k-1}_\mathcal{D}(\Sigma, \kappa)$. The converse follows in a similar fashion since we can re-write \eq{residen} as an equation for $\hL^{-1}$.
\end{proof}

As a result of this Lemma, we can focus our attention on the invertibility of $\hL$. Since $\hL^{-1}$ improves the regularity, it will turn out that by considering $\hL$ rather than $(\A-s)$ we gain some compactness, which allows us to derive a result similar to the Fredholm alternative for $\A$, even though $(\A-s)^{-1}$ is not compact.

We now derive some estimates for $\hL$, first with domain $D^1(\hL)$, but later for the higher regularity domains.
\begin{Theorem}\label{Linj}
Let $L$ be a strongly hyperbolic operator on an asymptotically AdS black hole $\bhR$, and fix stationary homogeneous boundary conditions at $\scri$. Fix $\Omega \subset \{s \in \C: \Re(s) >\frac{1}{2}  w_L -\frac{1}{2}\varkappa\}$ a compact domain. Then there exists $\gamma_1$ depending on $\Omega, L$ such that $\hL + \gamma$ is injective for any $\gamma>\gamma_1$ and $D^1(\hL)\subset \H^1(\Sigma, \kappa)$ for any $s$ in $\Omega$. Furthermore, we have the estimate:
\ben{Lsest}
\norm{u}{\H^1(\Sigma, \kappa)}^2 \leq C \norm{(\hL + \gamma) u}{\L^2(\Sigma)}^2,
\een
for any $u \in D^1(\hL)$, where $C$ depends only on $\Omega, L, \gamma$.
\end{Theorem}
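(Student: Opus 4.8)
The plan is to prove the a priori estimate \eq{Lsest} for $u \in C^\infty_{bc}(\Sigma;\C^N)$, uniformly for $s \in \Omega$ and $\gamma > \gamma_1$, and then deduce injectivity of $\hL + \gamma$ together with the inclusion $D^1(\hL) \subset \H^1(\Sigma,\kappa)$ by a density/closure argument. For the latter, recall that $D^1(\hL)$ is the graph-norm closure of $C^\infty_{bc}$; given $u \in D^1(\hL)$ with approximating sequence $u_n$, the estimate applied to $u_n - u_m$ shows $\{u_n\}$ is Cauchy in $\H^1(\Sigma,\kappa)$, so its limit lies in $\H^1(\Sigma,\kappa)$ and the estimate passes to the limit; injectivity is then immediate since $(\hL+\gamma)u = 0$ forces $\norm{u}{\H^1(\Sigma,\kappa)} = 0$.

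To prove the a priori estimate I would recycle the hyperbolic estimates of Theorems \ref{killthm} and \ref{redshift} through the device \eq{lhatrel}. Given $u$, lift it to the $T$-invariant $\upsilon$ on $\bhR$ and set $\phi = e^{st}\upsilon$, so that $T\phi = s\phi$ and $(L+\gamma)\phi|_{\Sigma_0} = (\hL+\gamma)u$; by stationarity every quadratic energy scales as $e^{2\Re(s)t}$, hence $\frac{d}{dt}E_\gamma(t)[\phi] = 2\Re(s)\,E_\gamma(0)[\phi]$ and likewise for $\mathcal{E}_\gamma$. Under the substitution $T\phi \to s\phi$ the differential inequalities \eq{killest} and \eq{rsest} become algebraic inequalities at $t=0$. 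From \eq{rsest} one obtains $(2\Re(s) - w_L + \varkappa - \epsilon)\,\mathcal{E}_\gamma(0)[\phi] \le C_\epsilon\big(\norm{(\hL+\gamma)u}{\L^2(\Sigma)}^2 + E_\gamma(0)[\phi]\big)$. Since $\Omega$ is a compact subset of $\{\Re(s) > \frac{1}{2}w_L - \frac{1}{2}\varkappa\}$, the coefficient $2\Re(s) - w_L + \varkappa$ is bounded below by a positive constant on $\Omega$, so choosing $\epsilon$ small (uniformly in $\Omega$) yields $\mathcal{E}_\gamma(0)[\phi] \le C\big(\norm{(\hL+\gamma)u}{\L^2(\Sigma)}^2 + E_\gamma(0)[\phi]\big)$.

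The remaining task, and the main obstacle, is to absorb the degenerate Killing energy $E_\gamma(0)[\phi]$. The Killing estimate \eq{killest} gives $2\Re(s)\,E_\gamma(0)[\phi] \le \epsilon\big(\norm{u}{\H^1(\Sigma,\kappa)}^2 + \norm{(\hL+\gamma)u}{\L^2(\Sigma)}^2\big) + C\epsilon^{-1}|s|^2\norm{u}{\L^2(\Sigma)}^2$, which produces a useful upper bound on $E_\gamma$ only when $\Re(s) > 0$, whereas the region in question generically contains points with $\Re(s) \le 0$. I would therefore import the shift argument from the proof of Lemma \ref{toy lemma}: fix $\lambda = \lambda(\Omega) > 0$ with $\Re(s+\lambda) \ge 1$ for all $s \in \Omega$, apply \eq{killest} to $\phi' = e^{(s+\lambda)t}\upsilon$, and use that $\hat L_{s+\lambda} - \hL = \lambda P_1 + \lambda(2s+\lambda)$ is first order, so that $\norm{(\hat L_{s+\lambda}+\gamma)u}{\L^2(\Sigma)}^2 \le C\big(\norm{(\hL+\gamma)u}{\L^2(\Sigma)}^2 + \norm{u}{\H^1(\Sigma,\kappa)}^2 + \norm{u}{\L^2(\Sigma)}^2\big)$, together with the fact that the $s$- and $(s+\lambda)$-energies share their top-order spatial part and differ only by terms controlled by $\norm{u}{\H^1(\Sigma,\kappa)}\norm{u}{\L^2(\Sigma)}$ and $\norm{u}{\L^2(\Sigma)}^2$. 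Invoking the equivalences $\norm{u}{\H^1(\Sigma,\kappa)}^2 + \gamma\norm{u}{\L^2(\Sigma)}^2 \le C\,\mathcal{E}_\gamma(0)[\phi]$ from Theorem \ref{redshift} i), and crucially that the constants in \eq{killest} and \eq{rsest} are independent of $\gamma$, one first takes $\epsilon$ small and then $\gamma$ large (with $|s|$ bounded on $\Omega$) to reach $E_\gamma(0)[\phi] \le \delta\,\mathcal{E}_\gamma(0)[\phi] + C_\delta\norm{(\hL+\gamma)u}{\L^2(\Sigma)}^2$.

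Combining this with the inequality of the second paragraph and choosing $\delta$ small gives $\mathcal{E}_\gamma(0)[\phi] \le C\norm{(\hL+\gamma)u}{\L^2(\Sigma)}^2$, whence \eq{Lsest} follows from $\norm{u}{\H^1(\Sigma,\kappa)}^2 \le C\,\mathcal{E}_\gamma(0)[\phi]$, with all constants depending only on $\Omega$, $L$ and $\gamma$. The hard part is precisely the shift step: one must keep $C$ uniform over the compact set $\Omega$ while simultaneously absorbing the first-order errors generated by the shift into the small multiple of $\mathcal{E}_\gamma$ and into the $\norm{(\hL+\gamma)u}{\L^2(\Sigma)}^2$ term. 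This is what forces the two-stage choice of constants (first $\epsilon$, then $\gamma$) and the bookkeeping that the discrepancy between the $s$- and $(s+\lambda)$-energies is genuinely of lower order, so that it does not spoil the coercivity extracted from the redshift inequality.
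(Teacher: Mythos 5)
Your proposal is correct and is essentially the paper's own argument: both prove \eq{Lsest} for $u\in C^\infty_{bc}(\Sigma;\C^N)$ via the stationary lift and the substitution $T\phi\to s\phi$, both use a shifted Killing estimate to control the degenerate energy when $\Re(s)\le 0$ (the paper applies \eq{killest} to $e^{(s+c)t}\upsilon$, your $\lambda$ playing the role of the paper's $c$), both feed this into the redshift estimate whose coefficient $2\Re(s)-w_L+\varkappa$ is uniformly positive on the compact set $\Omega$, both absorb the residual $\L^2$ term by taking $\gamma$ large using $\gamma$-independence of the constants (the paper via $\mathcal{E}_\gamma=\mathcal{E}_{\gamma'}+(\gamma-\gamma')\norm{u}{\L^2(\Sigma)}^2$, you via the equivalent $\gamma\norm{u}{\L^2(\Sigma)}^2\lesssim \mathcal{E}_\gamma$), and both conclude by density in the graph norm. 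The only difference is cosmetic: your explicit comparison of the $s$- and $(s+\lambda)$-energies at $t=0$ spells out a point the paper leaves implicit in its display \eq{ellkillest}.
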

\begin{proof}
Obviously it suffices to prove estimate \eq{Lsest} for $u\in C^\infty_{bc}(\Sigma; \C^N)$. The rest of the conclusions then follow by the density of this set in $D^1(\hL)$. We now establish that estimate:
\begin{enumerate}[1)]
\item Recall that $\upsilon$ is the stationary lift of $u$ to a function on $\bhR$. We first apply the estimate \eq{killest} from part $ii)$ of Theorem \ref{killthm} to the function $e^{(s+c)t}\upsilon$ for some $c$ to be determined. We assume $\gamma>\gamma_0$. We calculate:
\bean
2\Re[c+s] E_\gamma(t)[e^{st}\upsilon] &\leq& \epsilon \left(\norm{e^{st} \upsilon}{\H^1(\Sigma, \kappa)}^2 + \norm{(L+\gamma)e^{st}\upsilon+ (c P_1 + 2 c+c^2)e^{st}\upsilon}{\L^2(\Sigma)}^2 \right)\\ && \quad + \frac{C\abs{c+s}^2}{\epsilon}\norm{e^{st}\upsilon}{\L^2(\Sigma)}.
\eean
Choosing $c$ such that $\Re(c+s)>0$ on $\Omega$ and expanding the second term on the right hand side, making use of the fact that $P_1$ is a bounded map from $\H^1(\Sigma, \kappa)$ to $\L^2(\Sigma)$, and noting that the term involving $P_1$ is multiplied by $\epsilon$, we deduce that given $\delta>0$, there exists a constant $C_\delta$ depending on $\Omega$ but independent of $\gamma$ such that for any $s \in \Omega$:
\ben{ellkillest}
E_\gamma(t)[e^{st}\upsilon] \leq \delta \left(\norm{e^{st} \upsilon}{\H^1(\Sigma, \kappa)}^2 + \norm{(L+\gamma)e^{st}\upsilon}{\L^2(\Sigma)}^2 \right)+ C_{\delta}\norm{e^{st}\upsilon}{\L^2(\Sigma)}.
\een

\item Now let us look at the redshift estimate \eq{rsest} from part $ii)$ of Theorem \ref{redshift}, and this time we apply it to the function $e^{s t} \upsilon$. We deduce
\be
\left(2 \Re(s) - w_L + \varkappa-\epsilon\right) \mathcal{E}_\gamma(t)[e^{s t} \upsilon] \leq C_\epsilon\left(\norm{(L+\gamma)e^{st}\upsilon}{\L^2(\Sigma)}^2+E_\gamma(t)[e^{st}\upsilon] \right).
\ee
Now we use \eq{ellkillest} to control the term $C_\epsilon E_\gamma(t)[e^{st}\upsilon]$, and we choose $\delta$ sufficiently small that we have
\be
\delta C_\epsilon \norm{e^{st} \upsilon}{\H^1(\Sigma, \kappa)}^2 \leq \epsilon \mathcal{E}_\gamma (t)[e^{st}\upsilon].
\ee
This can be done independently of $\gamma$ by  part $i)$ of Theorem \ref{redshift}. Picking $\epsilon$ sufficiently small that $\left(2 \Re(s) - w_L + \varkappa-2\epsilon\right)>0$ on $\Omega$, we deduce that
\ben{ellrsest}
\mathcal{E}_\gamma(t)[e^{s t} \upsilon] \leq C_1\left(\norm{(L+\gamma)e^{st}\upsilon}{\L^2(\Sigma)}^2+\norm{e^{st}\upsilon}{\L^2(\Sigma)}^2 \right),
\een
for $C_1$ independent of $\gamma$.
\item Now, note that $\mathcal{E}_\gamma(t)[\phi] = \mathcal{E}_{\gamma'}(t)[\phi] + (\gamma-\gamma') \norm{\phi}{\L^2(\Sigma)}^2$. Let us pick $\gamma$ such that $\gamma-C_1>\gamma'$. Then we can absorb the last term on the right of \eq{ellkillest} to obtain:
\ben{ellrsest2}
\mathcal{E}_{\gamma'}(t)[e^{s t} \upsilon] \leq C\norm{(L+\gamma)e^{st}\upsilon}{\L^2(\Sigma)}^2,
\een
for some $\gamma'>\gamma_0$. Multiplying by $e^{-2 \Re(s)t}$, both sides of the inequality become independent of time. Making use of  part $i)$ of Theorem \ref{redshift}, we can deduce that:
\be
\norm{u}{\H^1(\Sigma, \kappa)}^2\leq C e^{-2 \Re(s)t}\mathcal{E}_{\gamma'}(t)[e^{s t} \upsilon],
\ee
so that together with \eq{lhatrel}  we conclude that 
\be
\norm{u}{\H^1(\Sigma, \kappa)}^2 \leq C \norm{(\hL + \gamma) u}{\L^2(\Sigma)}^2,
\ee
holds for some $C, \gamma$ which depends on $\Omega, L$ and for any $s \in \Omega$.
\end{enumerate}
\end{proof}
\textbf{Remark:} Note that a consequence of the estimate \eq{Lsest} together with the fact that $(D^1(\hL), \hL)$ is closed is the fact that $\hL+\gamma$ has closed range in $\L^2(\Sigma)$.

\subsubsection{The adjoint of $\hL$}

The previous section establishes injectivity of $\hL+\gamma:D^1(\hL)\to \L^2(\Sigma)$, for suitable $s$. To prove that this operator is invertible we need to supplement this with a statement of surjectivity. Since $\hL+\gamma$ is a closed, densely defined operator on a Hilbert space with a closed range, surjectivity follows from injectivity of the adjoint operator. We give a brief proof of this fact here for completeness:
\begin{Lemma}\label{surjinjd}
Let $A$ be a closed, densely defined operator on a Hilbert space $H$ with closed range. Then $A$ is surjective if and only if the adjoint\footnote{for the purposes of this section, ${}^*$ will denote the Hilbert space adjoint.}  $A^*$ is injective.
\end{Lemma}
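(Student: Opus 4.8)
The plan is to deduce the equivalence from the fundamental orthogonality relation between the range of $A$ and the kernel of its adjoint, exploiting the hypothesis that $\operatorname{Ran}(A)$ is closed. Since $A$ is densely defined, its Hilbert space adjoint $A^*$ is well defined, and the first step is to establish the identity
\[
\ker(A^*) = \operatorname{Ran}(A)^{\perp}.
\]
This is a direct unwinding of definitions. By definition $y \in \ker(A^*)$ means $y \in D(A^*)$ with $A^* y = 0$, which by the defining property of the adjoint is equivalent to $\langle Ax, y\rangle = \langle x, 0\rangle = 0$ for all $x \in D(A)$, i.e. $y \perp \operatorname{Ran}(A)$. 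No closedness is needed at this stage; only density of $D(A)$ enters, to ensure the orthogonality condition genuinely places $y$ in $D(A^*)$ rather than merely being implied by membership.

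The second step uses the closed-range hypothesis together with the standard Hilbert space fact that $(S^{\perp})^{\perp} = \overline{S}$ for any subspace $S \subseteq H$. Applying this with $S = \operatorname{Ran}(A)$ and invoking $\overline{\operatorname{Ran}(A)} = \operatorname{Ran}(A)$ gives
\[
\operatorname{Ran}(A) = \left(\operatorname{Ran}(A)^{\perp}\right)^{\perp} = \ker(A^*)^{\perp}.
\]

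Finally I would string together the chain of equivalences: $A$ is surjective iff $\operatorname{Ran}(A) = H$, iff $\ker(A^*)^{\perp} = H$; and since $\ker(A^*)$ is a (closed) subspace, $\ker(A^*)^{\perp} = H$ holds precisely when $\ker(A^*) = \{0\}$, that is, precisely when $A^*$ is injective. This closes the argument in both directions simultaneously.

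The only point requiring genuine care — and the closest thing to an obstacle — is the correct bookkeeping of domains in the definition of the unbounded adjoint $A^*$, so that the membership $y \in \ker(A^*)$ is truly \emph{equivalent} to the orthogonality condition rather than merely implied by it; this is exactly the place where density of $D(A)$ is indispensable. I would note in passing that closedness of $A$ is not actually used in the argument as written: one needs only that $A$ is densely defined (so that $A^*$ exists and the kernel identity holds) and that $\operatorname{Ran}(A)$ is closed (so that the double complement returns $\operatorname{Ran}(A)$ itself rather than merely its closure). The hypothesis that $A$ is closed is consistent with, but stronger than, what the proof requires.
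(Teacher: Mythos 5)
Your proof is correct and follows essentially the same route as the paper's: both establish $\mathrm{Ker}(A^*)=\mathrm{Ran}(A)^\perp$ directly from the definition of the adjoint (using density of $D(A)$), then apply the double orthocomplement together with the closed-range hypothesis to get $\mathrm{Ran}(A)=\mathrm{Ker}(A^*)^\perp$. Your parenthetical observation that closedness of $A$ is never actually invoked is accurate — the paper's own argument likewise uses only density of the domain and closedness of the range.
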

\begin{proof}
Pick $v \in \textrm{Ran}(A)^\perp$. For any $u \in D(A)$ we have
\be
(v, Au) = 0,
\ee
from whence it follows that $v \in D(A^*)$. We deduce that for any $u \in D(A)$
\be
(A^* v, u) = 0.
\ee
Since $A$ is densely defined, this implies that $A^* v = 0$. Conversely, if $A^* v = 0$ then $(v, Au) = 0$ for any $u \in D(A)$. Thus $\textrm{Ran}(A)^\perp = \textrm{Ker}(A^*)$. Now since $\textrm{Ran}(A)$ is a closed subspace of $H$, $\textrm{Ran}(A) = \overline{\textrm{Ran}(A)} = \textrm{Ran}(A)^\perp{}^\perp$, whence $\textrm{Ran}(A) = \textrm{Ker}(A^*)^\perp$ and the result follows.
\end{proof}

Now, let us define the adjoint operator. We first define the adjoint on $C^\infty_{bc*}(\Sigma; \C^N)$ which consists of functions in $C^\infty_{bc}(\Sigma; \C^N)$ vanishing near the horizon. Suppose $u\in C^\infty_{bc*}(\Sigma; \C^N)$, and again denote its stationary lift to $\bhR$ by $\upsilon$. Acting on $C^\infty_{bc*}(\Sigma; \C^N)$, we define $\hL^\dagger$ by
\ben{Lhatrel2}
\hL^\dagger u = \left. e^{ \overline{s} t} L^\dagger (e^{- \overline{s} t} \upsilon)\right |_{\Sigma_t}.
\een
We fix the domain of $\hat{L}_s^\dagger$, $D(\hat{L}_s^\dagger)$  to be the closure of $C^\infty_{bc*}(\Sigma; \C^N)$ with respect to the graph norm $\norm{u}{\L^2(\Sigma)}^2 + ||\hL^\dagger u ||_{\L^2(\Sigma)}^2$.  With this domain, $\hL^\dagger$ defines a closed, densely defined operator.

\begin{Lemma}
The operator $\hL^\dagger$ is the adjoint of $\hL$ with respect to the inner product:
\be
\ip{u'}{u}{\L^2(\Sigma)} = \int_\Sigma \overline{u'}\cdot {u} \frac{1}{\sqrt{A}} dS,
\ee
i.e.\ we have  $\hL^* = \hL^\dagger$.
\end{Lemma}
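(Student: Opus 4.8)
The plan is to deduce the identity $\hL^* = \hL^\dagger$ from the spacetime adjointness relation of Corollary \ref{adjrel}, the point being that the right-hand side of that relation—a sum of three boundary terms—vanishes identically under the present hypotheses. First I would establish the forward inclusion $\hL^\dagger \subseteq \hL^*$. Take $u \in C^\infty_{bc}(\Sigma;\C^N)$ and $u' \in C^\infty_{bc*}(\Sigma;\C^N)$ with stationary lifts $\upsilon,\upsilon'$, and apply Corollary \ref{adjrel} to $\phi_2 = e^{st}\upsilon$ and $\phi_1 = e^{-\overline{s}t}\upsilon'$. Writing $\widehat{(\cdot)}$ for the stationary lift and using the definitions \eqref{lhatrel}, \eqref{Lhatrel2}, the exponential factors cancel in the weighted pairing: on each $\Sigma_t$ one has $\overline{\phi_1}\cdot L\phi_2 = \overline{\upsilon'}\cdot\widehat{\hL u}$ and $\overline{L^\dagger\phi_1}\cdot\phi_2 = \overline{\widehat{\hL^\dagger u'}}\cdot\upsilon$, so the left-hand side of Corollary \ref{adjrel} is exactly $\ip{u'}{\hL u}{\L^2(\Sigma)} - \ip{\hL^\dagger u'}{u}{\L^2(\Sigma)}$, independent of $t$.

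It then remains to show the three boundary terms vanish. For the current $K_\mu = \overline{\phi_1}\cdot(\tn_\mu\phi_2) - (\tn_\mu\overline{\phi_1})\cdot\phi_2 - A^{-1}\overline{\phi_1}\cdot W_\mu\phi_2$, each summand pairs a barred factor carrying $e^{-st}$ with an unbarred factor carrying $e^{st}$, so $K_\mu$ is a stationary covector field; since $\varphi_t$ is an isometry this makes $\int_{\Sigma_t}K^\mu n_\mu\,dS$ independent of $t$ and kills the $\frac{d}{dt}$ term. The horizon term $\int_{\hor\cap\Sigma_t}K^\mu T_\mu\,d\sigma$ vanishes because every term of $K_\mu$ contains a factor $\overline{\phi_1}$ or $\tn_\mu\overline{\phi_1}$, and $u'\in C^\infty_{bc*}$ vanishes in a neighbourhood of $\hor$, so $\phi_1$ and all its derivatives vanish there. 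The term at $\scri$ vanishes because the homogeneous Dirichlet/Robin conditions are self-adjoint for the antisymmetric boundary form $\overline{\phi_1}\,\tn_r\phi_2 - \tn_r\overline{\phi_1}\,\phi_2$ (the $W$ contribution drops out since $W\in\mathfrak{X}_\scri(\bhR)$ is tangent to $\scri$ and carries the $A^{-1}\sim r^{-2}$ decay); concretely this is controlled by the twisted trace identity, Lemma \ref{twisted trace}, exactly as in Step~3 of the proof of Theorem \ref{killthm}. Hence $\ip{u'}{\hL u}{\L^2(\Sigma)} = \ip{\hL^\dagger u'}{u}{\L^2(\Sigma)}$ for all such $u,u'$; by density of $C^\infty_{bc}$ in $D^1(\hL)$ and closedness of $\hL^*$ this yields $\hL^\dagger \subseteq \hL^*$.

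The reverse inclusion $\hL^* \subseteq \hL^\dagger$ is the main obstacle, and is where the degeneracy at the horizon enters. I would argue by regularity: if $v\in D(\hL^*)$ with $\hL^* v = w \in \L^2(\Sigma)$, then testing against $u\in C_c^\infty(\mathring{\Sigma})$ shows $v$ solves the formal adjoint equation $\hL^\dagger v = w$ in the interior, where $\hL$ is non-degenerate elliptic, so interior and $\scri$-elliptic regularity place $v$ in the appropriate twisted space and force the adjoint boundary conditions at $\scri$ (again via the trace theory of \cite{Warnick:2012fi}). The delicate region is the horizon, where $\hL$ loses ellipticity: I must show that membership of both $v$ and $\hL^\dagger v$ in $\L^2(\Sigma)$ forces $v$ into the closure of $C^\infty_{bc*}$, i.e. to be approximable by functions vanishing near $\hor$. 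This is precisely the statement that the dual of the forward problem (which needs no horizon boundary condition) is the backward problem with Dirichlet data on $\hor$, and it is controlled by the sign of the surface gravity: the relevant inputs are the ``vanishing on the horizon'' estimates \eqref{killest2} and \eqref{dualrsest}, which are the redshift estimates for the strongly hyperbolic operator $L^\dagger$ (with $w_{L^\dagger}=-w_L^*$) applied to functions supported away from $\hor$. These furnish a coercive bound for $\hL^\dagger+\overline{\gamma}$ of the type in Theorem \ref{Linj}, selecting the horizon-vanishing branch of the indicial behaviour at $\rho=0$ and excluding the growing solution on $\L^2$ grounds.

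I expect this last step—verifying that the $\L^2$ graph domain of the adjoint coincides with the horizon-vanishing closure rather than being strictly larger—to be the crux, since the formal pairing above only yields the two inclusions $\hL^\dagger\subseteq\hL^*$ and $\hL\subseteq(\hL^\dagger)^*$, which do not by themselves force equality of domains. The payoff is immediate: with $\hL^*=\hL^\dagger$ established, Lemma \ref{surjinjd} converts the injectivity estimate for $\hL^\dagger+\overline{\gamma}$ into surjectivity of $\hL+\gamma$, which is exactly what the subsequent analytic Fredholm argument requires.
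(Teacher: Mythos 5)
Your first half is sound and coincides with the paper's argument: applying Corollary \ref{adjrel} to $\phi_1 = e^{-\overline{s}t}\upsilon'$, $\phi_2 = e^{st}\upsilon$ and observing that the current $K_\mu$ is stationary, vanishes near $\hor$ because $\phi_1$ does, and has no flux through $\scri$ by the boundary conditions, gives exactly the pairing identity $\ip{\hL^\dagger u'}{u}{\L^2(\Sigma)} = \ip{u'}{\hL u}{\L^2(\Sigma)}$; and you correctly diagnose that this alone yields only the inclusions $\hL^\dagger \subseteq \hL^*$ and $\hL \subseteq (\hL^\dagger)^*$, so the crux is the domain equality. The gap is in how you propose to close it. You attack the direction $\hL^* \subseteq \hL^\dagger$, i.e.\ you must force an arbitrary $v \in D(\hL^*)$ into the horizon-vanishing closure of $C^\infty_{bc*}(\Sigma;\C^N)$, and the mechanism you offer fails on two counts. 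First, your reduction ``membership of both $v$ and $\hL^\dagger v$ in $\L^2(\Sigma)$ forces $v$ into the closure of $C^\infty_{bc*}$'' is false as stated: near $\rho = 0$ the adjoint equation has two indicial branches, and for the relevant $s$ \emph{both} are locally in $\L^2$ (e.g.\ the branch asymptotic to a constant at the horizon), so no $\L^2$ criterion on the pair $(v, \hL^\dagger v)$ can select the vanishing branch. The maximal domain $\{v : v,\ \hL^\dagger v \in \L^2 \text{ weakly}\}$ is strictly larger than $D(\hL^\dagger)$; the extra condition defining $D(\hL^*)$ comes only from testing against elements of $D^1(\hL)$ that are \emph{nonzero} on $\hor$, and extracting a horizon trace pairing from this weak formulation for a degenerate-elliptic operator is precisely the hard step your sketch does not supply. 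Second, invoking \eq{killest2}, \eq{dualrsest} and the coercivity of Theorem \ref{Lsur} here is circular: those estimates are proved for functions vanishing on (a neighbourhood of) the horizon, i.e.\ for elements of the very closure you are trying to place $v$ in, and in addition they hold only for $\Re(s) > \frac{1}{2}w_L + \frac{1}{2}\varkappa$, whereas the lemma is a structural identity with no restriction on $s$.

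The paper avoids all of this by proving the equivalent identity $(\hL^\dagger)^* = \hL$ instead, which is the \emph{easy} direction: since $D^1(\hL)$ imposes no boundary condition at $\hor$, one takes $u \in D((\hL^\dagger)^*)$, notes $\hL u \in \L^2(\Sigma)$ weakly so $u \in H^2_{loc.}(\Sigma)$, splits $u = \chi u + (1-\chi)u$ with $\chi \equiv 1$ near $\scri$ and $\mathrm{supp}(d\chi) \cc \Sigma_c$ away from $\hor$, approximates $(1-\chi)u$ by plain mollification (no horizon condition to respect), and handles $\chi u$ on the compact collar $\Sigma_c$, where $\hL^\dagger$ is \emph{uniformly} elliptic, by identifying $(\hL^\dagger|_{\Sigma_c})^* = \hL|_{\Sigma_c}$ via standard elliptic theory. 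This shows $D((\hL^\dagger)^*)$ is contained in the graph closure of $C^\infty_{bc}(\Sigma;\C^N)$, hence $(\hL^\dagger)^* = \hL$, and then $\hL^* = (\hL^\dagger)^{**} = \hL^\dagger$ follows for free because both operators are closed and densely defined. So the fix for your proposal is to flip the direction of the domain argument: characterise the adjoint of the operator whose domain \emph{does} carry the horizon condition, rather than trying to manufacture that condition for the adjoint of the operator whose domain does not.
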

\begin{proof}
It will in fact be easier for us to establish $( \hL^\dagger)^* = \hL$, which is equivalent to the required result since $\hL$, $\hL^\dagger$ are closed operators. From the definition of the adjoint, $u \in D(( \hL^\dagger)^*)$ if there exists $f \in \L^2(\Sigma)$ such that
\ben{adjdef}
\ip{\hL^\dagger u'}{u}{\L^2(\Sigma)} = \ip{u'}{f}{\L^2(\Sigma)}, \qquad \forall \ \ u' \in C^\infty_{bc*}(\Sigma; \C^N).
\een
and for such $u$, we then define $(\hL^\dagger)^*u:=f$.

Now suppose $u \in C^\infty_{bc}(\Sigma; \C^N)$, $u' \in C^\infty_{bc*}(\Sigma; \C^N)$ and denote their lifts to $\bhR$ by $\upsilon, \upsilon'$. Recall Corollary \ref{adjrel}. Setting $\phi_1 = e^{-\overline{s} t} \upsilon'$, $\phi_2 = e^{s t} \upsilon$ we deduce:
\ben{adjeq}
\ip{\hL^\dagger u'}{ u}{\L^2(\Sigma)}=\ip{u'}{\hL u}{\L^2(\Sigma)}.
\een
Here we use the fact that $K_\mu$ is independent of time by construction, vanishes near $\hor$ since $\phi_1$ does and furthermore vanishes at $\scri$ by the boundary conditions. Since $\hL$, $\hL^\dagger$ are closed, the result holds for $u\in D^1(\hL)$, $u' \in D(\hL^\dagger)$. We immediately deduce that $\hL \subset (\hL^\dagger)^*$. To establish the result, it will suffice to show that any element of $(\hL^\dagger)^*$ can be approximated in the norm $\norm{u}{\L^2(\Sigma)}^2 + ||\hL u ||_{\L^2(\Sigma)}^2$ by elements of $C^\infty_{bc}(\Sigma; \C^N)$.

Now from  \eq{adjdef}, \eq{adjeq} we deduce that if $u \in D(( \hL^\dagger)^*)$ then $\hL u$ exists in a weak sense and belongs to $\L^2(\Sigma)$. As a result, a standard elliptic estimate implies that $u \in H^2_{loc.}(\Sigma)$. Let $\Sigma_c$ be a compact subset of $\Sigma$ which is a neighbourhood of $\scri$, has a smooth boundary, and remains a non-zero distance from $\hor$. Let $\chi:\Sigma \to [0, 1]$ be a cut-off function such that $\chi\equiv 1$ in a neighbourhood of $\scri$ and $\textrm{supp}(d \chi) \cc \Sigma_c$.  We deduce that $\hL (\chi u) := g \in \L^2(\Sigma)$. Now let us define $ \hL^\dagger|_{\Sigma_c}$ to be the closure of the operator $\hL^\dagger$ acting on functions in $C^\infty_{bc}(\Sigma_c; \C^N)$ which vanish on the inner boundary of $\Sigma_c$. Applying \eq{adjeq} with $\chi u'$ as a test function, we deduce that $\chi u \in D((\hL^\dagger|_{\Sigma_c})^*)$. Since $\hL^\dagger$ is uniformly elliptic on $\Sigma_c$, it is a straightforward matter to show that $D((\hL^\dagger|_{\Sigma_c})^*) = D(\hL^\dagger|_{\Sigma_c})$ and $(\hL^\dagger|_{\Sigma_c})^* = \hL|_{\Sigma_c}$. As a result, for any $\epsilon>0$ we can approximate $\chi u$ with an element, $u_1^\epsilon$ of $C^\infty_{bc*}(\Sigma; \C^N)$ such that
\be
\norm{\chi u - u_1^\epsilon}{\L^2(\Sigma)} + \norm{\hL(\chi u - u_1^\epsilon)}{\L^2(\Sigma)} < \epsilon.
\ee
Now consider $(1-\chi)u$. This vanishes near $\scri$, so given $\epsilon>0$, with a simple mollification we can find a $u_2^\epsilon$ which is smooth on $\Sigma$ and vanishes near infinity such that
\be
\norm{(1-\chi) u - u_2^\epsilon}{\L^2(\Sigma)} + \norm{\hL((1-\chi) u - u_2^\epsilon)}{\L^2(\Sigma)} < \epsilon.
\ee
Taking these together, the result follows.
\end{proof}

Having identified the adjoint of $\hL$, we next show that for $s$ belonging to a subset of the complex plane, $\hL^\dagger+\gamma$ is indeed injective.
\begin{Theorem}\label{Lsur}
Let $L$ be a strongly hyperbolic operator on an asymptotically AdS black hole $\bhR$, and fix stationary homogeneous boundary conditions at $\scri$. Fix $\Omega \subset \{s \in \C: \Re(s) >\frac{1}{2}  w_L +\frac{1}{2}\varkappa\}$ a compact domain. Then there exists $\gamma_1$ depending on $\Omega, L$ such that $\hL^\dagger + \gamma$ is injective for $\gamma>\gamma_1$ and $D(\hL^\dagger)\subset \H^1(\Sigma, \kappa)$ for any $s$ in $\Omega$. Furthermore, we have the estimate:
\ben{Lsest2}
\norm{u}{\H^1(\Sigma, \kappa)}^2 \leq C \norm{(\hL^\dagger + \gamma) u}{\L^2(\Sigma)}^2,
\een
for any $u \in D(\hL^\dagger)$, where $C$ depends only on $\Omega, L$.
\end{Theorem}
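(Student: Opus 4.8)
The plan is to run the argument of Theorem~\ref{Linj} essentially verbatim, but for the adjoint operator $L^\dagger$ in place of $L$, exploiting the \emph{dual} hyperbolic estimates. The essential structural observation is that every element of $C^\infty_{bc*}(\Sigma;\C^N)$ vanishes near $\hor$, so its stationary lift $\upsilon$ vanishes on the horizon on each slice $\Sigma_t$; consequently the test function $\phi = e^{-\overline{s}t}\upsilon$ entering the definition \eq{Lhatrel2} of $\hL^\dagger$ also vanishes on $\hor$. This is precisely the hypothesis under which parts~$iv)$ of Theorem~\ref{killthm} and $iii)$ of Theorem~\ref{redshift} apply. By density of $C^\infty_{bc*}(\Sigma;\C^N)$ in $D(\hL^\dagger)$ in the graph norm, it suffices to establish \eq{Lsest2} for such smooth $u$; injectivity of $\hL^\dagger + \gamma$ and the inclusion $D(\hL^\dagger)\subset \H^1(\Sigma,\kappa)$ then follow at once from the estimate.

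First I would run the dual Killing step: applying \eq{killest2} (which requires vanishing on $\hor$, now available) to the shifted lift $e^{-\overline{(s+c)}t}\upsilon$, with $c$ chosen so that $\Re(s+c)>0$ on the compact set $\Omega$. Since the energy scales as $e^{-2\Re(s+c)t}$, the term $-\tfrac{d}{dt}E_\gamma$ becomes $2\Re(s+c)E_\gamma$, and the shift contributes only lower-order terms (proportional to $P_1$ and to constants) controlled via boundedness of $P_1:\H^1(\Sigma,\kappa)\to\L^2(\Sigma)$. This yields the dual analogue of \eq{ellkillest}. Next I would apply the dual redshift estimate \eq{dualrsest}, but for $L^\dagger$, so that $w_L^*$ is replaced by $w_{L^\dagger}^* = -w_L$ (Definition~\ref{RSdef}) and the coefficient on the right becomes $w_L + \varkappa + \epsilon$. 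Applied to $\phi = e^{-\overline{s}t}\upsilon$, the left side is $-\tfrac{d}{dt}\mathcal{E}_\gamma = 2\Re(s)\mathcal{E}_\gamma$, so after rearranging one obtains
\be
(2\Re(s) - w_L - \varkappa - \epsilon)\,\mathcal{E}_\gamma(t)[e^{-\overline{s}t}\upsilon] \leq C_\epsilon\left(\norm{(L^\dagger+\gamma)e^{-\overline{s}t}\upsilon}{\L^2(\Sigma)}^2 + E_\gamma(t)[e^{-\overline{s}t}\upsilon]\right).
\ee
The prefactor is strictly positive exactly when $\Re(s) > \tfrac12 w_L + \tfrac12\varkappa$, which is the half-plane defining $\Omega$.

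To finish, I would feed the dual Killing estimate into the $C_\epsilon E_\gamma$ term, choose $\delta$ small (using part~$i)$ of Theorem~\ref{redshift} to bound $\norm{\cdot}{\H^1}^2$ by $\mathcal{E}_\gamma$ independently of $\gamma$) to absorb the resulting $\H^1$ contribution, and then take $\gamma$ large to absorb the remaining $\L^2$ term via $\mathcal{E}_\gamma = \mathcal{E}_{\gamma'} + (\gamma-\gamma')\norm{\cdot}{\L^2}^2$. Multiplying through by $e^{2\Re(s)t}$ renders every quantity time-independent, and combining with \eq{Lhatrel2} and the coercivity in part~$i)$ of Theorem~\ref{redshift} converts the surviving inequality into \eq{Lsest2}. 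The main obstacle, and the only genuinely delicate point, is the sign bookkeeping: one must verify that replacing $L$ by $L^\dagger$ turns the coefficient $-w_L^* + \varkappa$ of \eq{dualrsest} into $w_L + \varkappa$ through $w_{L^\dagger}^* = -w_L$, so that the threshold lands at $\Re(s) = \tfrac12 w_L + \tfrac12\varkappa$ — shifted by $+\tfrac12\varkappa$ relative to Theorem~\ref{Linj} rather than $-\tfrac12\varkappa$. This half-plane shift, forced by the backward-in-time character of the dual (Dirichlet-on-horizon) problem, is the analytic heart of the statement; everything else is a routine parallel of the proof of Theorem~\ref{Linj}.
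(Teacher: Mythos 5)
Your proposal is correct and follows essentially the same route as the paper's own proof: both rerun the argument of Theorem \ref{Linj} for $L^\dagger$ applied to $\phi = e^{-\overline{s}t}\upsilon$, invoking the dual estimates \eq{killest2} and \eq{dualrsest} (legitimately, since elements of $C^\infty_{bc*}(\Sigma;\C^N)$ vanish near $\hor$), with the identity $w_{L^\dagger}^* = -w_L$ from Definition \ref{RSdef} turning the redshift coefficient into $w_L+\varkappa+\epsilon$ and hence placing the threshold at $\Re(s) > \frac{1}{2}w_L + \frac{1}{2}\varkappa$, exactly as you flag. Your absorption scheme --- feeding the dual Killing estimate into the $C_\epsilon E_\gamma$ term, shrinking $\delta$ via the $\gamma$-independent coercivity of part $i)$ of Theorem \ref{redshift}, enlarging $\gamma$ through $\mathcal{E}_\gamma = \mathcal{E}_{\gamma'} + (\gamma-\gamma')\norm{\cdot}{\L^2(\Sigma)}^2$, and rendering the inequality time-independent by multiplying by $e^{2\Re(s)t}$ --- is the paper's argument step for step.
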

\begin{proof}
The proof follows in precisely the same way as for Theorem \ref{Linj}, however we use the estimates previously derived for functions on $\bhR$ vanishing on the horizon. Once again it suffices to prove estimate \eq{Lsest2} for $u\in C^\infty_{bc*}(\Sigma; \C^N)$. The rest of the conclusions then follow by the density of this set in $D(\hL^\dagger)$. We now establish that estimate:
\begin{enumerate}[1)]
\item Recall that $\upsilon$ is the stationary lift of $u$ to a function on $\bhR$. We first apply the estimate \eq{killest2} from part $ii)$ of Theorem \ref{killthm} to the function $e^{(-\overline{s}-c)t}\upsilon$ for some $c$ to be determined, and where this time we use $L^\dagger$ in place of $L$. We assume $\gamma>\gamma_0$. We calculate:
\bean
2\Re[c+s] E_\gamma(t)[e^{-\overline{s}t}\upsilon] &\leq& \epsilon \left(\norm{e^{-\overline{s}t} \upsilon}{\H^1(\Sigma, \kappa)}^2 + \norm{(L^\dagger+\gamma)e^{-\overline{s}t}\upsilon+ (c P'_1 + 2 c+c^2)e^{-\overline{s}t}\upsilon}{\L^2(\Sigma)}^2 \right)\\ && \quad + \frac{C\abs{c+\overline{s}}^2}{\epsilon}\norm{e^{-\overline{s}t}\upsilon}{\L^2(\Sigma)}.
\eean
Choosing $c$ such that $\Re(c+s)>0$ on $\Omega$ and expanding the second term on the right hand side, making use of the fact that $P'_1$ is a bounded map from $\H^1(\Sigma, \kappa)$ to $\L^2(\Sigma)$, and noting that the term involving $P_1' \upsilon$ is multiplied by $\epsilon$, we deduce that given $\delta>0$, there exists a constant $C_\delta$ depending on $\Omega$ but independent of $\gamma$ such that for any $s \in \Omega$:
\ben{ellkillest2}
E_\gamma(t)[e^{-\overline{s}t}\upsilon] \leq \delta \left(\norm{e^{-\overline{s}t} \upsilon}{\H^1(\Sigma, \kappa)}^2 + \norm{(L^\dagger+\gamma)e^{-\overline{s}t}\upsilon}{\L^2(\Sigma)}^2 \right)+ C_{\delta}\norm{e^{-\overline{s}t}\upsilon}{\L^2(\Sigma)}.
\een

\item Now let us look at the redshift estimate \eq{dualrsest} from part $ii)$ of Theorem \ref{redshift}, and this time we apply it to the function $e^{-\overline{s} t} \upsilon$, again with $L$ replaced by $L^\dagger$. We pick some $\lambda>\lambda_0$, recall that $w^*_{L^\dagger} = -w_L$, and deduce
\be
\left(2 \Re(s) - w_L - \varkappa-\epsilon\right) \mathcal{E}_\gamma(t)[e^{-\overline{s} t} \upsilon] \leq C_\epsilon\left(\norm{(L^\dagger+\gamma)e^{-\overline{s}t}\upsilon}{\L^2(\Sigma)}^2+E_\gamma(t)[e^{-\overline{s}t}\upsilon] \right).
\ee
Now we use \eq{ellkillest2} to control the term $C_\epsilon E_\gamma(t)[e^{-\overline{s}t}\upsilon]$, and we choose $\delta$ sufficiently small that we have
\be
\delta C_\epsilon \norm{e^{-\overline{s}t} \upsilon}{\H^1(\Sigma, \kappa)}^2 \leq \epsilon \mathcal{E}_\gamma (t)[e^{-\overline{s}t}\upsilon].
\ee
This can be done independently of $\gamma$ by  part $i)$ of Theorem \ref{redshift}. Picking $\epsilon$ sufficiently small that $\left(2 \Re(s) - w_L - \varkappa-2\epsilon\right)>0$ on $\Omega$, we deduce that
\ben{ellrsest3}
\mathcal{E}_\gamma(t)[e^{-\overline{s} t} \upsilon] \leq C_1\left(\norm{(L^\dagger+\gamma)e^{-\overline{s}t}\upsilon}{\L^2(\Sigma)}^2+\norm{e^{-\overline{s}t}\upsilon}{\L^2(\Sigma)}^2 \right),
\een
for $C_1$ independent of $\gamma$.
\item Now, note that $\mathcal{E}_\gamma(t)[\phi] = \mathcal{E}_{\gamma'}(t)[\phi] + (\gamma-\gamma') \norm{\phi}{\L^2(\Sigma)}^2$. Let us pick $\gamma$ such that $\gamma-C_1>\gamma'$. Then we can absorb the last term on the right of \eq{ellkillest2} to obtain:
\ben{ellrsest4}
\mathcal{E}_{\gamma'}(t)[e^{-\overline{s} t} \upsilon] \leq C\norm{(L^\dagger+\gamma)e^{-\overline{s}t}\upsilon}{\L^2(\Sigma)}^2,
\een
for some $\gamma'>\gamma_0$. Multiplying by $e^{2 \Re(s)t}$, both sides of the inequality become independent of time. Making use of  part $i)$ of Theorem \ref{redshift}, we can deduce that:
\be
\norm{u}{\H^1(\Sigma, \kappa)}^2\leq C e^{2 \Re(s)t}\mathcal{E}_{\gamma'}(t)[e^{-\overline{s} t} \upsilon],
\ee
so that together with \eq{Lhatrel2}  we conclude that 
\be
\norm{u}{\H^1(\Sigma, \kappa)}^2 \leq C \norm{(\hL^\dagger + \gamma) u}{\L^2(\Sigma)}^2,
\ee
holds for some $C, \gamma$ which depends on $\Omega, L$ and for any $s \in \Omega$.
\end{enumerate}
\end{proof}

\textbf{Remarks:} Note that we don't claim that $\hL^\dagger+\gamma$ for sufficiently large $\gamma$ fails to be injective outside the half-plane $\Re(s) > \frac{1}{2} w_L + \frac{1}{2}\varkappa$. In fact, we shall later show that injectivity holds at least for $\Re(s) > \frac{1}{2} w_L - \frac{1}{2}\varkappa$, which is the range in which we previously established that $\hL$ is injective. For $\Re(s) < \frac{1}{2} w_L + \frac{1}{2}\varkappa$ it is no longer the case that $D(\hL^\dagger)\subset \H^1(\Sigma, \kappa)$, so it becomes more difficult to establish injectivity with energy estimates.

\begin{Theorem}\label{l2inv}
Let $L$ be a strongly hyperbolic operator on an asymptotically AdS black hole $\bhR$, and fix stationary homogeneous boundary conditions at $\scri$. Fix $\Omega \subset \{s \in \C: \Re(s) >\frac{1}{2}  w_L +\frac{1}{2}\varkappa\}$ a compact domain. Then there exists $\gamma_1$ depending on $\Omega, L$ such that $\hL+ \gamma:D^1(\hL)\to \L^2(\Sigma)$ is invertible for $\gamma>\gamma_1$ and $(\hL+ \gamma)^{-1}$ maps $\L^2(\Sigma)$ into $\H^1(\Sigma, \kappa)$. Furthermore, we have the estimate:
\ben{Lsest3}
\norm{(\hL + \gamma)^{-1}f}{\H^1(\Sigma, \kappa)}^2 \leq C \norm{f}{\L^2(\Sigma)}^2,
\een
for any $f \in \L^2(\Sigma)$, where $C$ depends only on $\Omega, L, \gamma$.
\end{Theorem}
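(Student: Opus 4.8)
The plan is to obtain invertibility of $\hL + \gamma$ by combining the injectivity estimate of Theorem \ref{Linj} with a surjectivity argument via duality, using the injectivity of the adjoint operator established in Theorem \ref{Lsur}. The key observation is that the range $\Re(s) > \frac{1}{2} w_L + \frac{1}{2}\varkappa$ appearing in the present statement is precisely the intersection of the two half-planes on which the two injectivity results hold: Theorem \ref{Linj} gives injectivity of $\hL + \gamma$ on the larger half-plane $\Re(s) > \frac{1}{2}w_L - \frac{1}{2}\varkappa$, whereas the dual estimate of Theorem \ref{Lsur} only controls $\hL^\dagger + \gamma$ on the smaller half-plane $\Re(s) > \frac{1}{2}w_L + \frac{1}{2}\varkappa$. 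Since surjectivity of $\hL + \gamma$ will be deduced from injectivity of its adjoint, the more restrictive range is forced upon us, which explains the hypothesis on $\Omega$.

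First I would set $\gamma_1$ to be the larger of the two constants furnished by Theorems \ref{Linj} and \ref{Lsur} for the given compact domain $\Omega$; note that $\Omega \subset \{\Re(s) > \frac{1}{2}w_L + \frac{1}{2}\varkappa\} \subset \{\Re(s) > \frac{1}{2}w_L - \frac{1}{2}\varkappa\}$, so both theorems apply on $\Omega$. For $\gamma > \gamma_1$, Theorem \ref{Linj} yields that $\hL + \gamma : D^1(\hL) \to \L^2(\Sigma)$ is injective together with the a priori estimate \eq{Lsest}; as recorded in the remark following that theorem, this estimate combined with closedness of $(D^1(\hL), \hL)$ implies that $\hL + \gamma$ has closed range in $\L^2(\Sigma)$.

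Next I would invoke duality. Since $\gamma$ is real and $\hL^* = \hL^\dagger$ by the lemma identifying the adjoint, the Hilbert-space adjoint of the closed, densely defined operator $\hL + \gamma$ is exactly $\hL^\dagger + \gamma$. Theorem \ref{Lsur} guarantees that $\hL^\dagger + \gamma$ is injective on $\Omega$ for $\gamma > \gamma_1$. Lemma \ref{surjinjd}, applied to $A = \hL + \gamma$ — which is closed, densely defined, and has closed range — then converts injectivity of the adjoint into surjectivity of $\hL + \gamma$. Combining injectivity and surjectivity, $\hL + \gamma$ is a bijection from $D^1(\hL)$ onto $\L^2(\Sigma)$, so $(\hL + \gamma)^{-1}$ exists.

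Finally I would extract the quantitative bound. Setting $u = (\hL + \gamma)^{-1} f$ for arbitrary $f \in \L^2(\Sigma)$, the estimate \eq{Lsest} reads $\norm{u}{\H^1(\Sigma, \kappa)}^2 \leq C \norm{f}{\L^2(\Sigma)}^2$, which is precisely \eq{Lsest3} and simultaneously shows that the inverse maps $\L^2(\Sigma)$ into $\H^1(\Sigma, \kappa)$. The main obstacle in this argument is not any single estimate — those are all inherited from Theorems \ref{Linj} and \ref{Lsur} — but rather the careful bookkeeping needed to match up the two half-planes and to verify that the adjoint of $\hL + \gamma$ is genuinely $\hL^\dagger + \gamma$ with the same real shift $\gamma$, so that Lemma \ref{surjinjd} can be applied cleanly. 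The asymmetry between the two injectivity ranges, namely the loss of $\varkappa$ when passing to the adjoint, is the conceptual heart of why invertibility holds only on the smaller half-plane.
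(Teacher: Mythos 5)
Your proposal is correct and follows essentially the same route as the paper's proof: surjectivity of $\hL+\gamma$ via Lemma \ref{surjinjd} together with injectivity of the adjoint $\hL^\dagger+\gamma$ from Theorem \ref{Lsur}, injectivity from Theorem \ref{Linj} (after increasing $\gamma$ if necessary), and the bound \eq{Lsest3} read off directly as \eq{Lsest}. The extra bookkeeping you supply --- the closed-range remark following Theorem \ref{Linj}, the identification $(\hL+\gamma)^* = \hL^\dagger+\gamma$ for real $\gamma$, and the matching of the two half-planes --- is exactly what the paper leaves implicit.
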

\begin{proof}
By Theorem \ref{Lsur} and Lemma \ref{surjinjd}, we know that for $\gamma$ large enough, $(\hL + \gamma): D^1(\hL) \to \L^2(\Sigma)$ is surjective. By Theorem \ref{Linj}, $(\hL+ \gamma)$ is injective, after increasing $\gamma$ if necessary. Thus $(\hL + \gamma)^{-1}$ exists. The estimate \eq{Lsest3} is simply \eq{Lsest} from Theorem \ref{Linj}.
\end{proof}

\subsubsection{Commuting the operator}
Recall that in \S\ref{comsec} we constructed a set of vector fields $K_a$ such that commuting the equation $Lu = f$ with $K_a$ we once recovered an equation of the form $L'u = f'$ for a strongly hyperbolic operator $L'$ which acted on a higher dimensional space of scalar functions. In this way we were able to establish that higher regularity norms are propagated by the equation $Lu = 0$. In this section we will make use of a `Laplace transformed' version of this argument to show that we can uniquely solve the equation $(\hL+\gamma)u=f$ for $s$ in a larger set, provided that we assume $f$ belongs to a more regular function space.

\begin{Theorem}\label{Lsinv}
Let $L$ be a strongly hyperbolic operator on an asymptotically AdS black hole $\bhR$, and fix stationary homogeneous boundary conditions at $\scri$. Let $k \geq 0$ be an integer and fix $\Omega \subset \{s \in \C: \Re(s) >\frac{1}{2}  w_L-\left(k+\frac{1}{2}\right)\varkappa \}$ a compact domain. Then there exists $\gamma_k$ depending on $\Omega, L, k$ such for $\gamma>\gamma_k$ the equation 
\ben{eqn56}
(\hL+ \gamma) u = f,
\een
admits a unique solution $u$ for any $f \in \L^{2, k}(\Sigma)$. Furthermore $u\in \H^{1, k}(\Sigma, \kappa)$ with the estimate:
\ben{Lsest4}
\norm{u}{\H^{1,k}(\Sigma, \kappa)}^2 \leq C \norm{f}{\L^{2,k}(\Sigma)}^2,
\een
where $C$ depends only on $\Omega, L, \gamma$.
\end{Theorem}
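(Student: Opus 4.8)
The plan is to induct on $k$, treating all strongly hyperbolic operators simultaneously, with the commutation construction of Theorem \ref{commute} as the engine. The decisive input is \eq{neww}: passing to the once-commuted operator $L'$ replaces $w_L$ by $w_{L'} = w_L - 2\varkappa$, and therefore shifts every half-plane of the form $\{\Re(s) > \tfrac12 w - c\varkappa\}$ exactly one unit of $\varkappa$ to the left. This is precisely the displacement needed to advance the index $k$ by one, so the whole theorem is a bookkeeping of this shift against the regularity gained by commuting.

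First I would establish the a priori bound \eq{Lsest4} together with injectivity of $\hL + \gamma$ on $\H^{1,k}(\Sigma,\kappa)$. For $u \in C^\infty_{bc}(\Sigma;\C^N)$ with $(\hL+\gamma)u = f$, commute $k$ times to reach the operator $L^{(k)}$, whose Laplace transform acts on the enlarged field $\hat\Phi = (K_A u)_{|A|\le k}$ by $(\hat{L}^{(k)}_s + \gamma)\hat\Phi = \hat{f}^{(k)}$, with $\hat{f}^{(k)}$ built from $f$ and its $K_a$-derivatives of order $\le k$ as in \eq{fpdef}. Iterating \eq{neww} gives $w_{L^{(k)}} = w_L - 2k\varkappa$, so the region $\Re(s) > \tfrac12 w_L - (k+\tfrac12)\varkappa$ is exactly the half-plane $\Re(s) > \tfrac12 w_{L^{(k)}} - \tfrac12\varkappa$ on which Theorem \ref{Linj} applies to $L^{(k)}$ (for $\gamma$ above the finitely many thresholds that appear). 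Its conclusion $\norm{\hat\Phi}{\H^1(\Sigma,\kappa)}^2 \le C \norm{\hat{f}^{(k)}}{\L^2(\Sigma)}^2$ becomes \eq{Lsest4} once one notes that $\norm{\hat\Phi}{\H^1(\Sigma,\kappa)}^2$ is equivalent to $\norm{u}{\H^{1,k}(\Sigma,\kappa)}^2$ and that $\norm{\hat{f}^{(k)}}{\L^2(\Sigma)}^2 \lesssim \norm{f}{\L^{2,k}(\Sigma)}^2$. Density of $C^\infty_{bc}$ in the graph norm extends the estimate to the domain and yields uniqueness.

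For existence I would commute only once. Given $f \in \L^{2,k}(\Sigma)$, the commuted source $\hat{f'}$, with $\hat{f'}_0 = f$ and $\hat{f'}_a = K_a f - \tfrac{K_a A}{A} f$, lies in $\L^{2,k-1}(\Sigma;\C^{N'})$. Because $w_{L'} = w_L - 2\varkappa$, the hypothesis $\Re(s) > \tfrac12 w_L - (k+\tfrac12)\varkappa$ is identical to the level-$(k-1)$ region $\Re(s) > \tfrac12 w_{L'} - ((k-1)+\tfrac12)\varkappa$ for $L'$; when $k=0$ it is the region of Theorem \ref{l2inv} for $L'$. Applying the inductive hypothesis (respectively Theorem \ref{l2inv}) to $L'$ produces $\hat\Phi \in \H^{1,k-1}(\Sigma,\kappa;\C^{N'})$ solving $(\hat{L}'_s + \gamma)\hat\Phi = \hat{f'}$, and I set $u := \hat\Phi_0$.

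The crux, and the step I expect to be the main obstacle, is to show that $u$ solves $(\hL + \gamma)u = f$ and in fact lies in $\H^{1,k}$; this is the stationary analogue of the constraint-propagation statement Theorem \ref{commute} $ii)$. Writing $\delta\hat\Phi_a := K_a u - \hat\Phi_a$, the converse commutation identity shows $\delta\hat\Phi$ solves the homogeneous equation $(\hat{L}''_s + \gamma)\delta\hat\Phi = 0$ for a strongly hyperbolic $L''$ with $w_{L''} = w_{L'} = w_L - 2\varkappa$. Whereas in the time domain one annihilates $\delta\Phi$ using initial data, here one must invoke injectivity of $\hat{L}''_s + \gamma$ on the stated region; since $w_{L''} = w_L - 2\varkappa$, that region coincides with the level-$(k-1)$ injectivity half-plane for $L''$ furnished by the first part of the argument. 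The delicate point is a regularity mismatch: differentiating $u = \hat\Phi_0$ costs a derivative, so $\delta\hat\Phi$ is a priori only in $\H^{1,k-2}$, one order below where this injectivity bites. I would close the gap by an elliptic bootstrap for the homogeneous equation — uniformly elliptic away from $\hor$, and controlled at $\hor$ by the very commuted redshift estimates that underlie Theorem \ref{Linj} — promoting $\delta\hat\Phi$ to $\H^{1,k-1}$, whence $\delta\hat\Phi = 0$. Then $\hat\Phi_a = K_a u$ for all $a$; as the $K_a$ span $\mathfrak{X}_\scri(\bhR)$ this upgrades $u$ from $\H^{1,k-1}$ to $\H^{1,k}$, and the zeroth component of the enlarged equation collapses to $(\hL + \gamma)u = f$. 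The estimate \eq{Lsest4} and uniqueness supplied by the first part complete the induction, and general $f \in \L^{2,k}(\Sigma)$ is reached from smooth $f$ by density.
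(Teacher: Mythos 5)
Your proposal reproduces the paper's strategy almost exactly: a single commutation per inductive step with the bookkeeping $w_{L'}=w_L-2\varkappa$, the base case from Theorem \ref{l2inv} applied to the once-commuted operator, the a priori estimate and uniqueness from ($k$-fold commuted) Theorem \ref{Linj}, constraint propagation through the homogeneous $\hat{L}''_s$-equation, and a final density argument; all of your half-plane arithmetic is correct, and you have located the crux in the right place. But the device you propose at the crux --- an elliptic bootstrap promoting $\delta\hat\Phi$ from $\H^{1,k-2}(\Sigma,\kappa)$ to $\H^{1,k-1}(\Sigma,\kappa)$ --- fails precisely on the strip $\frac{1}{2}w_L-(k+\frac{1}{2})\varkappa<\Re(s)\leq \frac{1}{2}w_L-(k-\frac{1}{2})\varkappa$, which is exactly the new territory the step $k$ is supposed to conquer. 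The equation $(\hat{L}''_s+\gamma)w=0$ is genuinely \emph{not} injective on the class $\H^{1,k-2}$ there, for any $\gamma$: at the horizon the operator has a regular singular point whose bad indicial branch (schematically $\rho^{-s-1}$ for the once-commuted operator; the zeroth-order term $\gamma$ does not move the indicial roots) lies in $\H^{1,k-2}$ but not in $\H^{1,k-1}$ exactly on this strip, and imposing the boundary condition at $\scri$ cuts only one dimension, leaving nontrivial homogeneous solutions at every such $s$. This is the same phenomenon by which, in the example of \S\ref{example}, the entire half-plane $\Re(s)<-(j-\frac{1}{2})\varkappa$ belongs to the point spectrum of $(D^j(\A),\A)$. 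The commuted redshift estimates are a priori estimates for functions already known to lie in the higher space; they cannot exclude these lower-regularity kernel elements, so no regularity promotion of a general $\H^{1,k-2}$ homogeneous solution is possible.

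The paper avoids this with a two-pass structure instead of a bootstrap. Pass one proves the theorem only on the shifted half-planes $\Re(s)>\frac{1}{2}w_L-(k-\frac{1}{2})\varkappa$ (assuming temporarily that $f$ is smooth in the interior, so the commuted solution is classical there); on these regions the bookkeeping closes with no promotion needed, since $\delta\hat\Phi\in\H^{1,k-2}$ and the $(k-2)$-fold commuted injectivity of $\hat{L}''_s+\gamma$ --- your part 1 applied to $L''$ --- is valid exactly for $\Re(s)>\frac{1}{2}w_{L''}-(k-\frac{3}{2})\varkappa=\frac{1}{2}w_L-(k-\frac{1}{2})\varkappa$. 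Pass two recovers the missing $\varkappa$-strip by density in the \emph{source}, exploiting that the full level-$k$ half-plane coincides with the weak level-$(k+1)$ half-plane: for $f$ in the dense subspace $\L^{2,k+1}(\Sigma)$, pass one at level $k+1$ produces a solution in $\H^{1,k+1}$, and your part-1 estimate \eq{Lsest4} --- which does hold on the full level-$k$ region for solutions of level-$k$ regularity --- gives the uniformity needed to pass to the limit for general $f\in\L^{2,k}(\Sigma)$; the paper carries this out at $k=0$ and then reruns the induction from the improved base case. Note that your closing density remark upgrades only the class of $f$, not the admissible region of $s$, so as written it does not repair the inductive step; reorganizing your induction over the weak half-planes first and then invoking your own uniform estimate together with density converts your proposal into a complete proof.
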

\begin{proof}
Let us define the Laplace transformed commutators $\hat{K}_a$ by
\be
\hat{K}_a u = \left. e^{- s t} K_a e^{s t} \upsilon \right |_{\Sigma_t}.
\ee
The right hand side here is independent of $t$. These operators map $\H^{1, k}_\mathcal{D}(\Sigma, \kappa)$ to $\H^{1, k-1}_\mathcal{D}(\Sigma, \kappa)$ and $\L^{2, k}(\Sigma)$ to $\L^{2, k-1}(\Sigma)$. We will prove the theorem by inductively commuting with $\hat{K}_a$. Unfortunately, we can initially only invert $(\hL+ \gamma)$ when $s$ belongs to $\Omega \subset \{s \in \C: \Re(s) >\frac{1}{2}  w_L-\left(k-\frac{1}{2}\right)\varkappa \}$. To improve this to the full range we make use of an approximation argument after commuting.
\begin{enumerate}[1.]
\item We first establish that the theorem holds when $\Omega \subset \{s \in \C: \Re(s) >\frac{1}{2}  w_L-\left(k-\frac{1}{2}\right)\varkappa \}$. The $k=0$ case is then simply Theorem \ref{l2inv}.
\item Fix $k > 0$,  $\Omega \subset \{s \in \C: \Re(s) >\frac{1}{2}  w_L-\left(k-\frac{1}{2}\right)\varkappa \}$ a compact subset and assume that the theorem holds for $k-1$ with $\Omega \subset \{s \in \C: \Re(s) >\frac{1}{2}  w_L-\left(k-\frac{3}{2}\right)\varkappa \}$. We assume that $f$ is smooth in the interior of $\Sigma$ for the time being.
\item Commuting \eq{eqn56} with the operators $\hat{K}_a$ and applying Theorem \ref{commute}, we deduce that if a solution to \eq{eqn56} exists, the vector $u'=(u, \hat{K}_au)$ must satisfy the equation
\ben{eqn58}
(\hat{L}'_s+ \gamma) u' = f',
\een
for a new strongly hyperbolic operator $L'$, with $w_{L'} = w_L - 2 \varkappa$. 
\item Now $\Omega \subset \{s \in \C: \Re(s) >\frac{1}{2}  w_{L'}-\left(k-\frac{3}{2}\right)\varkappa \}$, and $f' \in \L^{2, k-1}(\Sigma)$, so by the induction assumption, for sufficiently large $\gamma$, there exists a unique solution $u' = (u, u_a)$ to \eq{eqn58}. From standard elliptic estimates, we know that $u'$ will in fact be smooth in the interior of $\Sigma$. Making use of the second part of Theorem \ref{commute} we can show that $w = (u_a - \hat{K}_a u)$ solves $(\hat{L}''_s+\gamma) w = 0$ which forces $w=0$ provided $\gamma$ is sufficiently large. From this we further deduce that in fact $u$ solves \eq{eqn56}.
\item From the definitions of the norms, the estimate
\be
\norm{u'}{\H^{1,{k-1}}(\Sigma, \kappa)}^2 \leq C \norm{f'}{\L^{2,{k-1}}(\Sigma)}^2,
\ee
which follows from the induction assumption implies \eq{Lsest4}. A continuity argument allows us to relax the assumption that $f$ is smooth to an assumption that $f\in \L^{2, k}(\Sigma)$.
\item Now consider the case $k=0$ once again. If we assume that $f\in \L^{2, 1}(\Sigma)$, then by the above we know that for sufficiently large $\gamma$, \eq{eqn56} admits a solution in $\H^{1, 1}(\Sigma, \kappa)$ when $s\in \Omega \subset \{s \in \C: \Re(s) >\frac{1}{2}  w_{L}-\frac{1}{2} \varkappa \}$. Now by Theorem \ref{Linj} this solution is in fact unique in $\H^1(\Sigma, \kappa)$, and satisfies
\be
\norm{u}{\H^{1}(\Sigma, \kappa)}^2 \leq C \norm{f}{\L^{2}(\Sigma)}^2.
\ee
Since $\L^{2, 1}(\Sigma)$ is dense in $\L^{2}(\Sigma)$, we deduce that for any $f\in \L^2(\Sigma)$, \eq{eqn56} admits a unique solution for $s \in \Omega$. This proves the theorem for $k=0$. Repeating the induction argument above, but using this as our starting point, we deduce the result.  
\end{enumerate}
\end{proof}

We are now ready to prove a version of the Fredholm alternative for $\hL$. This, together with Lemma \ref{Lslemma} immediately implies Theorem \ref{mainthm}. 
\begin{Theorem}\label{lhatthm}
Let $L$ be a strongly hyperbolic operator on a asymptotically AdS black hole $(\bhR, g)$ and fix boundary conditions on $\scri$ for $L$. Then for any integer $k \geq 0$ and any $s$ with $\Re(s) > \frac{1}{2} w_L - (k+\frac{1}{2})\varkappa$ one of the following holds:\\

Either:
\begin{enumerate}[1.]
\item  $\hL^{-1}$ exists as a bounded map from $\L^{2,k}(\Sigma)$ to $\H^{1, k}(\Sigma)$, \\ or:
\item There exists a finite-dimensional family of solutions to 
\be
\hL u = 0.
\ee
\end{enumerate}

Possibility 2.\ obtains only when $s \in \Lambda^k_{QNF}$, where $\Lambda^k_{QNF}$ is a discrete set of points which accumulates only at infinity.  The solutions $u$ in fact belong to $C^\infty_{bc}(\Sigma; \C^N)$, so that $\Lambda^{k}_{QNF} \subset \Lambda^{k+1}_{QNF}$. The function $s\mapsto \hL$ is meromorphic on $\{s:\Re(s) > \frac{1}{2} w_L - (k+\frac{1}{2})\varkappa\}$, with poles at $\Lambda^k_{QNF}$. The residues at the poles are finite rank operators. If $s \in \Lambda^k_{QNF}$ then the equation
\be
\hL u = f,
\ee
has a solution if and only if $f \in \mathrm{coKer}(\hL)^\perp$ where $\mathrm{coKer}(\hL)$ is a finite dimensional subspace of $\L^{2, k}(\Sigma)$ with dimension equal to that of $\mathrm{Ker}(\hL)$.
\end{Theorem}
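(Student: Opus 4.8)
The plan is to reduce the statement to the analytic Fredholm theorem applied to a holomorphic family of compact operators, exactly as sketched after Theorem \ref{mainthm}. Fix an integer $k\geq 0$ and work on the half-plane $H_k = \{s\in\C : \Re(s) > \frac12 w_L - (k+\frac12)\varkappa\}$, which is convex, hence connected, and unbounded to the right. For any compact domain $\Omega\subset H_k$, Theorem \ref{Lsinv} supplies a $\gamma=\gamma(\Omega)$ so that $(\hL+\gamma)^{-1}:\L^{2,k}(\Sigma)\to\H^{1,k}(\Sigma,\kappa)$ exists and is bounded uniformly for $s\in\Omega$. Composing with the twisted Rellich--Kondrachov embedding $\H^{1,k}(\Sigma,\kappa)\hookrightarrow\L^{2,k}(\Sigma)$ of \cite{Holzegel:2012wt} shows that, regarded as an operator on $\L^{2,k}(\Sigma)$, $(\hL+\gamma)^{-1}$ is \emph{compact}. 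This is where the gain of one derivative furnished by the degenerate-elliptic estimates is converted into compactness.

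Next I would record the holomorphy. Since $\hL = P_2 + sP_1 + s^2$ depends polynomially on $s$ and $\hat{L}_{s_1}-\hat{L}_{s_2}$ is a lower-order operator (bounded $\H^1\to\L^2$), the domains $D^k(\hL)$ coincide for all $s$ and the graph norms are equivalent; the standard resolvent/Neumann-series argument then gives that $s\mapsto(\hL+\gamma)^{-1}$ is holomorphic on $\mathring\Omega$. Writing
\[
\hL = (\hL+\gamma)\bigl(I-\gamma(\hL+\gamma)^{-1}\bigr),
\]
and setting $T(s) := \gamma(\hL+\gamma)^{-1}$, I obtain a holomorphic family of compact operators on $\mathring\Omega$ for which $\hL$ is invertible iff $I-T(s)$ is, and $\ker\hL = \ker(I-T(s))$. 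To exclude the degenerate alternative of the analytic Fredholm theorem I would take $\Omega$ connected and containing in its interior a point $s_*$ with $\Re(s_*)>M$; there Theorem \ref{gentheorem} $iii)$ together with Lemma \ref{Lslemma} guarantees $\hL^{-1}$ exists, so $I-T(s_*)$ is invertible. The analytic Fredholm theorem then yields: $I-T(s)$, hence $\hL$, is invertible off a set discrete in $\mathring\Omega$; $s\mapsto\hL^{-1}$ is meromorphic with finite-rank residues; and at the exceptional points $\ker\hL$ is finite-dimensional. Since $I-T(s)$ is a compact perturbation of the identity it is Fredholm of index $0$, which delivers $\dim\ker\hL = \dim\operatorname{coKer}\hL<\infty$ and the solvability criterion $f\in\operatorname{coKer}(\hL)^\perp$.

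To globalise, I would exhaust $H_k$ by connected compact domains $\Omega_n\uparrow H_k$, each containing the fixed $s_*$, with $\gamma_n=\gamma(\Omega_n)$. The exceptional set and the meromorphic continuation of $\hL^{-1}$ are intrinsic to $\hL$ and independent of the auxiliary $\gamma$, so the pieces agree on overlaps and patch to a meromorphic family on all of $H_k$ with a pole set $\Lambda^k_{QNF}$ discrete in $H_k$. Finally, elliptic regularity in the interior (where $\hL$ is elliptic), the boundary structure encoded in $C^\infty_{bc}$ near $\scri$, and repeated use of the commuted higher-order estimates (Theorem \ref{Lsinv} for all $k$) to bootstrap regularity up to the degenerate horizon show that any solution of $\hL u=0$ lies in every $\H^{1,m}(\Sigma,\kappa)$, hence in $C^\infty_{bc}(\Sigma;\C^N)$. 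This immediately gives $\Lambda^k_{QNF}\subset\Lambda^{k+1}_{QNF}$, and since $\Lambda^{k+1}_{QNF}$ is discrete in the larger half-plane $H_{k+1}\supset\overline{H_k}$, the set $\Lambda^k_{QNF}$ cannot accumulate on $\partial H_k$ either and so accumulates only at infinity.

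I expect the main obstacle to be twofold: first, the bookkeeping needed to run the Fredholm argument with a $\gamma$ that must grow as $\Omega$ exhausts $H_k$, together with the verification that the resulting exceptional set is genuinely $s$- and $\gamma$-independent; and second --- more seriously --- the regularity theory at the horizon, where $\hL$ is only degenerate elliptic, so that smoothness of quasinormal modes up to the horizon cannot be read off from classical interior elliptic regularity and instead relies on recycling the redshift-type estimates through the commutation scheme of \S\ref{comsec}.
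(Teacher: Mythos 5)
Your proposal is correct and follows essentially the same route as the paper's proof: the compact holomorphic family $\gamma(\hL+\gamma)^{-1}$ (the paper's $B(s)=\lambda(\hL+\lambda)^{-1}$) obtained from Theorem \ref{Lsinv} and the twisted Rellich--Kondrachov theorem, invertibility at some $s$ with $\Re(s)>M$ via Theorem \ref{gentheorem} and Lemma \ref{Lslemma} to exclude the degenerate alternative, the analytic Fredholm theorem for discreteness and meromorphy with finite-rank residues, the index-zero Fredholm property of a compact perturbation of the identity for the kernel/cokernel statement, and elliptic regularity together with the commuted estimates for smoothness of the modes up to the degenerate horizon. Your closing observation that accumulation on the boundary line $\Re(s)=\frac{1}{2}w_L-\left(k+\frac{1}{2}\right)\varkappa$ is excluded by $\Lambda^{k}_{QNF}\subset\Lambda^{k+1}_{QNF}$ together with discreteness in the strictly larger half-plane is a detail the paper leaves implicit, but it does not alter the argument.
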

\begin{proof}
Fix a compact connected subset $\Omega \subset \{s:\Re(s) > \frac{1}{2} w_L - (k+\frac{1}{2})\varkappa\}$. By Theorem \ref{Lsinv}, there exists $\lambda$ such that the operator
\be
B(s) := \lambda(\hL + \lambda)^{-1},
\ee
exists as a bounded operator everywhere on $\Omega$. Furthermore $B(s)$ maps $\L^{2, k}(\Sigma)$ into $\H^{1, k}(\Sigma, \kappa)$, so by the twisted version of the Rellich-Konrachov theorem (Theorem 4.1 of \cite{Holzegel:2012wt}) $B(s)$ is a compact operator. Finally the map $s \mapsto B(s)$ may be verified to be analytic on $\Omega$ with
\be
\lim_{s' \to s} \frac{B(s')-B(s)}{s'-s} = \lambda (\hL + \lambda)^{-1} (P_1 + 2 s) (\hL + \lambda)^{-1}, 
\ee
which is a bounded operator on $\L^{2, k}(\Sigma)$. $B(s)$ thus satisfies the conditions for the analytic Fredholm theorem \cite[Theorem 7.92]{renardy}. Note that
\be
\hL u = f \quad \iff \quad [1-B(s)] u = (\hL + \lambda)^{-1} f,
\ee
so $(1-B(s))^{-1}$ exists if and only if $\hL^{-1}$ exists. Recall that by Theorem \ref{gentheorem}, we can always guarantee that $\hL$ exists for some $s \in \Omega$ after enlarging $\Omega$ if necessary. The theorem then follows immediately from the analytic Fredholm theorem applied to $B(s)$, together with an elliptic regularity argument to justify that solutions of $\hL u =0$ are in fact smooth. To deduce that the dimension of the kernel and cokernel agree, we use the fact that $(1-B(s))$ is a compact perturbation of the identity and hence is Fredholm of index 0.
\end{proof}

Before we consider black holes which are merely locally stationary, we'll note a Lemma which will be of use when we relate our method to the approach based on resonances.
\begin{Lemma}\label{comp}
Suppose that $\bm{f}\in \bm{H}^k(\Sigma)$ has support only within a set $W \cc \Sigma\setminus \hor$. Then if $\Re(s)> \frac{1}{2} w_L - (k-\frac{1}{2})\varkappa$ and $s \not \in \Lambda_{QNF}$, we have that  $(\A - s)^{-1} \bm f \in \bm{H}^{k+1}(\Sigma)$, with the estimate
\be
\norm{(\A - s)^{-1} \bm f }{\bm{H}^{k+1}(\Sigma)} \leq C_{W, s} \norm{\bm{f}}{ \bm{H}^k(\Sigma)}.
\ee
\end{Lemma}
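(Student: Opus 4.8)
The plan is to reduce the statement, via Lemma \ref{Lslemma}, to a one-order gain of regularity for the degenerate-elliptic resolvent $\hL^{-1}$, and then to exploit the support hypothesis by localising: away from $\hor$ the operator $\hL$ is uniformly elliptic and classical estimates apply, whereas in a neighbourhood of $\hor$ the source vanishes and the Fredholm theory of Theorem \ref{lhatthm} supplies the missing derivative at the degenerate locus.

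First I would unwind \eqref{residen}. Writing $\bm f = (f_1, f_2)$ and setting $g := (P_1 + s) f_1 + f_2$, Lemma \ref{Lslemma} gives $(\A - s)^{-1}\bm f = (-u,\, -s u + f_1)$ with $u := \hL^{-1} g$, the resolvent existing since $s \notin \Lambda_{QNF}$. As $P_1$ is a differential operator, $g$ is again supported in $\overline{W} \cc \Sigma \setminus \hor$, and $g \in \L^{2, k-1}(\Sigma)$; the level-$(k-1)$ case of Theorem \ref{lhatthm}, whose half-plane $\Re(s) > \tfrac12 w_L - (k - \tfrac12)\varkappa$ is precisely the hypothesis, shows $u \in \H^{1, k-1}(\Sigma, \kappa)$. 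Because $f_1$ is supported away from $\hor$ it lies in $\L^{2,k}(\Sigma)$, so it will suffice to upgrade $u$ to $\H^{1,k}(\Sigma,\kappa)$ with the matching bound; this is exactly what membership in $\bm H^{k+1}(\Sigma)$ amounts to.

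Next I would pick a stationary cut-off $\eta$ equal to $1$ near $\hor$ and vanishing on $W$, with $\{d\eta \neq 0\}$ a compact subset of the interior lying strictly between $\hor$ and $\overline{W}$. There $g = 0$, so $\hL u = 0$; since $\hL$ is smooth and uniformly elliptic off the horizon, interior elliptic regularity makes $u$ smooth on $\{d\eta\neq0\}$. More globally, away from $\hor$ the operator is uniformly elliptic up to $\scri$, and the twisted elliptic estimates already used in \S\ref{LTop} turn $g \in \L^{2,k-1}(\Sigma)$ into the two-order gain $(1-\eta) u \in \H^{1,k}(\Sigma, \kappa)$. The remaining piece $\eta u$ solves $\hL(\eta u) = h$ with $h := [\hL, \eta] u$; as $[\hL,\eta]$ is first order and supported in $\{d\eta \neq 0\}$, where $u$ is smooth, $h$ is smooth and compactly supported away from $\hor$, so $h \in \L^{2,k}(\Sigma)$. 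Now the hypothesis places $s$ inside the level-$k$ half-plane $\Re(s) > \tfrac12 w_L - (k+\tfrac12)\varkappa$ of Theorem \ref{lhatthm}, and $s \notin \Lambda^k_{QNF}$; hence $\hL^{-1} : \L^{2,k}(\Sigma) \to \H^{1,k}(\Sigma, \kappa)$ is bounded, and as $\hL$ is injective this is the same solution operator as before, giving $\eta u = \hL^{-1} h \in \H^{1,k}(\Sigma,\kappa)$. Summing, $u \in \H^{1,k}(\Sigma,\kappa)$, and chaining the constants from Theorem \ref{lhatthm}, the interior estimate, and the fixed choice of $\eta$ yields the bound with $C_{W,s}$ depending on $W$ and $s$.

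The main obstacle is the horizon, where $\hL$ degenerates: there one can neither apply plain elliptic regularity nor invoke Theorem \ref{Lsinv} at level $k$, since $g$ is genuinely only in $\L^{2,k-1}(\Sigma)$. The support hypothesis is exactly what removes this obstacle, because near $\hor$ the source vanishes and after localisation the effective right-hand side $h$ becomes smooth; the enhanced-redshift Fredholm theory then provides the extra order of regularity at the degenerate locus. The one point requiring care is verifying that $h$ genuinely lies in $\L^{2,k}(\Sigma)$, which is why the interior smoothing of $u$ on $\{d\eta \neq 0\}$ must be established before the level-$k$ resolvent is applied.
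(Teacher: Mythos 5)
Your proposal is correct and follows essentially the same route as the paper's own (much terser) proof: the paper likewise reduces via Lemma \ref{Lslemma} to showing that $\hL^{-1}$ maps sources in $\L^{2,k-1}(\Sigma)$ supported away from $\hor$ into $\H^{1,k}(\Sigma,\kappa)$, and your localisation --- interior elliptic regularity where the source vanishes, then the smooth commutator source $[\hL,\eta]u$ fed into the level-$k$ Fredholm theory of Theorem \ref{lhatthm} near the horizon, with uniqueness at level $k-1$ identifying $\eta u$ with $\hL^{-1}h$ --- is the natural expansion of the single sentence ``an elliptic estimate shows\ldots'' in the paper, with the half-plane bookkeeping handled correctly. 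One correction of wording only: near $\scri$ the operator is degenerate elliptic rather than ``uniformly elliptic up to $\scri$,'' so the two-order gain for $(1-\eta)u$ rests on twisted elliptic regularity at conformal infinity (in the spirit of \cite{Warnick:2012fi}), which is precisely the estimate the paper also invokes without further detail.
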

\begin{proof}
From above, we know that $(\A - s)^{-1} \bm f \in \bm{H}^{k}(\Sigma)$. An elliptic estimate shows that if  $f\in \L^{2, k-1}(\Sigma)$ is supported away from $\hor$ then $(\hL)^{-1} f\in \H^{1, k}(\Sigma)$, which suffices to prove the result.
\end{proof} 
We note that the restriction to functions supported away from the horizon is necessary. If Lemma \ref{comp} were to hold for any $f\in \bm{H}^{k}(\Sigma)$, then $\A$ would have compact resolvent, which is certainly not the case.

\section{Locally stationary black holes}\label{locstat}

In the previous section, we established that the quasinormal spectrum of a globally stationary asymptotically anti-de Sitter black hole is well defined via the regular semigroup approach, and furthermore it consists only of isolated quasinormal frequencies associated to smooth quasinormal modes. We noted that our proof readily extends to the case of an asymptotically de Sitter spacetime. Unfortunately the condition of global stationarity excludes many interesting examples of black holes, both de Sitter and anti-de Sitter. In this section we shall relax this condition to local stationarity.

Consider the Kerr-AdS black hole (see \cite{HolSmul,  Holzegel:2012wt, Holbnd} for a detailed treatment of the Klein-Gordon equation in this spacetime). When the rotation parameter is below the Hawking-Reall bound, the Kerr-AdS black hole is globally stationary. For a sufficiently rapidly rotating black hole, however, the Killing generator of the horizon, $T$, ceases to be everywhere timelike. There is necessarily an ergoregion in the spacetime. The spacetime is, however, locally stationary in the sense that for any point there exists a Killing field which is stationary in a neighbourhood of that point. Put another way, at every point, $p$ the subspace of $T_p \bhR$ spanned by vectors tangent to a Killing field is timelike (if $p$ is not on the horizon) or null (if $p$ is on the horizon). In order to capture the stationarity of a black hole containing an ergoregion, we must necessarily incorporate in our prescription some additional symmetries. 

\begin{Definition}\label{locstatbh}
We say that $(\bhR, \hor, \scri, \Sigma, g, r, T)$ is a locally stationary, asymptotically anti-de Sitter, black hole space time with AdS radius $l$ if $i)-vii)$ of Definition \ref{adsbh} hold, together with
\begin{enumerate}[i)]
\item[viii)'] $(\bhR, g)$ admits a faithful action by isometries of a compact Lie group $G$, the axial symmetry group, which commutes with $\varphi_t$ and preserves $\Sigma$. Furthermore if $\{\Phi_a\in \mathfrak{X}_\scri(\bhR) \}$ is a finite collection of Killing fields which generates the $G$ action, then $\mathrm{span} \{T, \Phi_a\}$ is timelike at all points in $\bhR\setminus \hor$ and null for points on $\hor$.
\end{enumerate}
The definition of a strongly hyperbolic operator remains unchanged in this case.
\end{Definition}

With this definition, we can essentially prove a similar result regarding the existence of a semigroup as for the globally stationary black hole, i.e.\ a strongly hyperbolic equation on this background generates a $C^0-$semigroup acting on $\mathbf{H}^k(\Sigma)$. In order to define the quasinormal spectrum, we note that the semigroup acts separately on each irreducible representation of $G$. The QNF corresponding to any finite sum of representations are isolated and have finite multiplicity, as in the globally stationary case. We would ideally like to show that the set of all QNF consists of isolated points (i.e.\ QNF corresponding to different irreducible representations of $G$ cannot accumulate) however, we are currently unable to show this.

Our results should be contrasted to the results of Vasy \cite{vasy10} which show discreteness of the QNF spectrum in the rotating case, under mild non-trapping assumptions, \emph{without} the assumption of symmetry and without restricting to a finite number of angular modes.

It is perhaps rather surprising that we are able to prove any such result since the generator $\A$ of the semigroup now fails badly to be elliptic at the boundary of the ergoregion. We are able to circumvent this by restricting our attention to functions which belong to a finite sum of irreducible representations of $G$. This restriction recovers enough ellipticity for the results to hold.

In order to avoid the issues of representation theory that arise in considering a general compact group $G$, we shall instead prove the result under the assumption that $G = U(1)$, which is the situation that usually occurs in practice (see, for example \cite{Alexakis:2010fk} for a proof that this occurs in the neighbourhood of a bifurcate horizon for stationary vacuum Einstein manifolds). There appears to be no obstacle to extending our proof to a general compact group.

We note that our definition is sufficiently general to include the most interesting case, that of the Kerr-AdS black hole:

\begin{Lemma}
The Kerr-AdS black hole (see \cite[\S 5]{Holzegel:2012wt} for a definition) with $\abs{a/l}<1$ is a locally stationary, asymptotically AdS black hole with $G = U(1)$. For $\abs{a} l < r_+^2$ it is in fact globally stationary.
\end{Lemma}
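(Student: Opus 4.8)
The plan is to check the axioms of Definitions \ref{adsbh} and \ref{locstatbh} directly against the explicit Kerr-AdS metric, borrowing the coordinate description and the basic analytic facts (smoothness away from $\scri$, location of the horizon, the asymptotic structure) from \cite[\S5]{Holzegel:2012wt}. Writing the metric in Boyer--Lindquist-type coordinates $(t,r,\theta,\phi)$ as
\begin{align*}
g &= -\frac{\Delta_r}{\rho^2}\left(dt - \frac{a\sin^2\theta}{\Xi}d\phi\right)^2 + \frac{\rho^2}{\Delta_r}dr^2 \\ & \quad + \frac{\rho^2}{\Delta_\theta}d\theta^2 + \frac{\Delta_\theta\sin^2\theta}{\rho^2}\left(a\,dt - \frac{r^2+a^2}{\Xi}d\phi\right)^2,
\end{align*}
with $\rho^2 = r^2 + a^2\cos^2\theta$, $\Delta_r = (r^2+a^2)(1+r^2/l^2) - 2mr$, $\Delta_\theta = 1 - (a^2/l^2)\cos^2\theta$, $\Xi = 1 - a^2/l^2$ and $r_+$ the largest root of $\Delta_r$, the hypothesis $|a/l|<1$ yields $\Xi>0$ and $\Delta_\theta \geq \Xi > 0$ everywhere---a fact used throughout. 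I take $\Phi = \partial_\phi$ to generate the $G = U(1)$ action and $T = \partial_t + \Omega_H\partial_\phi$, $\Omega_H = a\Xi/(r_+^2+a^2)$, as the horizon Killing generator.

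Conditions $i)$--$vii)$ of Definition \ref{adsbh} I would dispatch quickly. The spatial slice is topologically a compact manifold with two $S^2$ boundary components (at $\hor$ and, after compactifying $r=\infty$, at $\scri$); smoothness and the spacelike/null characters of $\Sigma$ and $\hor$ are standard, and expanding the metric as $r\to\infty$ and checking that $r^{-2}g$ extends smoothly verifies that $\scri$ is an asymptotically AdS end of radius $l$ in the sense of Definition \ref{adsenddef}. The field $T$ is Killing with $\mathcal{L}_T r = 0$ (since $r$ is independent of $t,\phi$), is null and normal to $\hor$ because $\Delta_r(r_+)=0$, and has surface gravity $\varkappa = \Delta_r'(r_+)/\bigl(2(r_+^2+a^2)\bigr)$, which is constant on $\hor$ and positive in the non-extremal case; the flow of $T$ produces the foliation $\Sigma_t$.

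The core of the argument is the causal character of the group orbits, which follows from one identity: completing the square gives
\be
g_{tt}g_{\phi\phi} - g_{t\phi}^2 = -\frac{\Delta_r\Delta_\theta\sin^2\theta}{\Xi^2}.
\ee
For local stationarity $viii)'$, observe that $\Phi=\partial_\phi$ generates an isometric $U(1)$ action commuting with $\varphi_t$ and preserving $\Sigma$, and that $\mathrm{span}\{T,\Phi\} = \mathrm{span}\{\partial_t,\partial_\phi\}$. Away from the axis the displayed determinant is negative exactly where $\Delta_r>0$, so this $2$-plane is Lorentzian on $\bhR\setminus\hor$ and degenerates to a null plane on $\hor$ where $\Delta_r=0$; on the axis $\sin\theta=0$ the field $\Phi$ vanishes and one checks instead that $g(T,T) = -\Delta_r/\rho^2$ is negative off $\hor$ and zero on it. This gives Definition \ref{locstatbh} for all $|a/l|<1$ (and shows $T$ is timelike near $\hor$, so condition $v)$ also holds).

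For global stationarity under $r_+^2 > |a|l$, I would compute $g(T,T)$ for $T=\partial_t+\Omega_H\partial_\phi$ outright; substituting $\Omega_H$ and simplifying the two brackets to $r_+^2+a^2\cos^2\theta$ and $a(r_+^2-r^2)$ respectively gives
\be
g(T,T) = \frac{-\Delta_r(r_+^2+a^2\cos^2\theta)^2 + a^2\Delta_\theta\sin^2\theta\,(r^2-r_+^2)^2}{\rho^2(r_+^2+a^2)^2},
\ee
so timelikeness off $\hor$ reduces to $\Delta_r(r_+^2+a^2\cos^2\theta)^2 > a^2\Delta_\theta\sin^2\theta\,(r^2-r_+^2)^2$ for all $r>r_+$. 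The main obstacle is proving this holds for every $\theta$ precisely when $r_+^2>|a|l$: the tightest case is equatorial ($\theta=\pi/2$), where as $r\to\infty$ the growth $\Delta_r\sim r^4/l^2$ forces the leading balance $r_+^4/l^2 > a^2$, i.e.\ $r_+^2>|a|l$; conversely this bound is sufficient uniformly in $(r,\theta)$, which I would confirm by factoring $\Delta_r=(r-r_+)q(r)$ and bounding the resulting polynomial inequality, using $F(r_+)=0$ and $F'(r_+)>0$ for $F(r)=\Delta_r r_+^4 - a^2(r^2-r_+^2)^2$. This recovers the Hawking--Reall bound and shows $T$ is timelike on $\bhR\setminus\hor$, establishing condition $viii)$ and hence global stationarity.
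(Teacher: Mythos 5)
Your verification is essentially correct, but note that the paper offers \emph{no} proof of this lemma: it is asserted outright, with the definition of Kerr-AdS deferred to \cite[\S 5]{Holzegel:2012wt} and the causal-character facts left implicit (the condition $\abs{a}l<r_+^2$ is the Hawking--Reall bound, known from the literature). So your proposal supplies the missing argument rather than paralleling an existing one. The two pillars are sound: the Gram-determinant identity $g_{tt}g_{\phi\phi}-g_{t\phi}^2=-\Delta_r\Delta_\theta\sin^2\theta/\Xi^2$ (which drops out of the double-square form of the metric, since $\frac{r^2+a^2}{\Xi}-a\cdot\frac{a\sin^2\theta}{\Xi}=\frac{\rho^2}{\Xi}$) correctly shows that $\mathrm{span}\{T,\Phi\}$ is Lorentzian precisely where $\Delta_r>0$ and degenerates to a null plane on $\hor$, with the axis handled separately via $g(T,T)=-\Delta_r/\rho^2$; and your expression for $g(T,T)$ with $T=\partial_t+\Omega_H\partial_\phi$, $\Omega_H=a\Xi/(r_+^2+a^2)$, is exactly right. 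One caveat on coordinates: Boyer--Lindquist coordinates are singular at $\hor$, so the horizon statements and the regular foliation in condition $vii)$ must be read in the regular ($t^*$-type) coordinates of \cite[\S 5]{Holzegel:2012wt}, as you implicitly do by ``borrowing''; since the quantities you compute are Killing invariants they extend continuously to $\hor$, so the off-horizon computations suffice.

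One refinement is needed in the final step. Using $(r_+^2+a^2\cos^2\theta)^2\geq r_+^4$ and $\Delta_\theta\sin^2\theta\leq 1$, the uniform-in-$\theta$ inequality does reduce to $F(r):=\Delta_r\, r_+^4-a^2(r^2-r_+^2)^2>0$ for $r>r_+$, as you say; but $F(r_+)=0$ together with $F'(r_+)>0$ only gives positivity \emph{near} $r_+$ and does not by itself exclude a later sign change of the quartic. The clean closure: the constant term of $F$ cancels, so $F(r)=r\,G(r)$ with
\be
G(r)=\left(\frac{r_+^4}{l^2}-a^2\right)r^3+\left[r_+^4\left(1+\frac{a^2}{l^2}\right)+2a^2r_+^2\right]r-2mr_+^4,
\ee
and under $r_+^2>\abs{a}l$ every coefficient of $G'$ is positive, so $G$ is strictly increasing on $(0,\infty)$; since $G(r_+)=F(r_+)/r_+=0$, it follows that $F>0$ on $(r_+,\infty)$. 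With that substitution your sketch becomes a complete proof of both assertions of the lemma.
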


\subsection{Generalising the globally stationary Quasinormal Spectrum}

Rather than repeat in detail all of the arguments of \S\ref{hypsec}, \ref{QNMsec}, we shall instead briefly sketch how the proofs from the globally stationary case need to be modified in light of the fact that $T$ is no longer assumed to be everywhere timelike. The only place where the timelike nature of $T$ is made use of in \S\ref{hypsec} is in the Killing estimates of Theorem \ref{killthm}. Recall that $\bhR$ admits a $U(1)$ action, generated by $\Phi$ such that $\mathrm{span}\{T, \Phi\}$ is timelike. An immediate consequence of this together with the compactness of $\Sigma$ is:
\begin{Lemma}\label{modkil}
There exists a function $\chi \in C^\infty(\bhR)$, such that $T\chi = \Phi \chi = 0$ everywhere, $\chi$ vanishes near $\hor$, and finally
\be
\mathcal{T} = T + \chi \Phi,
\ee
belongs to $\mathfrak{X}_\scri(\bhR)$ and is timelike on $\bhR \setminus \hor$. The deformation tensor of $\mathcal{T}$ is given by
\be
{}^\mathcal{T}\Pi = d\chi \otimes_s \Phi^\flat,
\ee
and satisfies ${}^\mathcal{T}\Pi_\mu {}^\mu = 0$.
\end{Lemma}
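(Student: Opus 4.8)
The plan is to tilt $T$ into $\Phi$ by exactly the amount needed to make the combination timelike, choosing the tilt $\Phi$- and $T$-invariant, and then to read off the deformation tensor directly from the Killing property. I begin with the pointwise algebra. At $p\in\bhR\setminus\hor$ the quadratic
\be
q_p(\lambda) = g(T,T) + 2\lambda\, g(T,\Phi) + \lambda^2\, g(\Phi,\Phi)
\ee
equals $g(T+\lambda\Phi,\,T+\lambda\Phi)$, and the hypothesis that $\mathrm{span}\{T,\Phi\}$ is timelike off $\hor$ is precisely the statement that the Gram determinant $g(T,T)g(\Phi,\Phi)-g(T,\Phi)^2$ is negative there. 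Where $\Phi\neq0$ this forces $g(\Phi,\Phi)>0$ (a causal $\Phi$ would produce closed causal curves, its orbits being closed), so $q_p$ is a convex parabola with minimiser
\be
\chi_* := -\frac{g(T,\Phi)}{g(\Phi,\Phi)}, \qquad q_p(\chi_*) = \frac{g(T,T)g(\Phi,\Phi)-g(T,\Phi)^2}{g(\Phi,\Phi)} < 0 .
\ee
Moreover the ``bad set'' $\mathcal{E}=\{g(T,T)\geq0\}\setminus\hor$, where $T$ itself fails to be timelike, is $T$- and $\Phi$-invariant with compact intersection with $\Sigma$; it is bounded away from $\hor$ (since $T$ is timelike in a punctured neighbourhood of the horizon) and disjoint from the axis $\{\Phi=0\}$ (on which $\mathrm{span}\{T,\Phi\}=\mathrm{span}\{T\}$ must itself be timelike). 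Hence $g(\Phi,\Phi)$ is bounded below by a positive constant on $\mathcal{E}$.

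Next I globalise. Set $\mathcal{V}=\{g(\Phi,\Phi)>0\}$ and define $\chi:=\eta\,\chi_*$ on $\mathcal{V}$ and $\chi:=0$ elsewhere, where $\eta$ is a smooth cutoff equal to $1$ on a neighbourhood of $\mathcal{E}$ and supported in $\mathcal{V}$ away from both $\hor$ and the axis. Such an $\eta$ can be manufactured from the invariant functions $r$ and $g(\Phi,\Phi)$, both annihilated by $T$ and $\Phi$; since $g(T,T),g(T,\Phi),g(\Phi,\Phi)$ are invariant under the commuting Killing flows of $T$ and $\Phi$, so is $\chi_*$, and hence $T\chi=\Phi\chi=0$ while $\chi$ vanishes near $\hor$. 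To verify that $\mathcal{T}=T+\chi\Phi$ is timelike on $\bhR\setminus\hor$: on $\mathrm{supp}(\eta)\subset\mathcal{V}$ the parabola $q_p$ is convex, $q_p(\chi_*)<0$ (the plane being timelike off $\hor$) and $q_p(0)=g(T,T)<0$ on $\mathrm{supp}(d\eta)$ (which lies outside $\mathcal{E}$), so convexity gives $q_p(\eta\chi_*)<0$ for the intermediate value $\eta\chi_*$; off $\mathrm{supp}(\eta)$ one has $\chi=0$ and $q_p(0)=g(T,T)<0$. Finally $\mathcal{T}\in\mathfrak{X}_\scri(\bhR)$ automatically, since $T,\Phi\in\mathfrak{X}_\scri(\bhR)$ and $\chi$ is smooth up to $\scri$.

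The deformation tensor is then immediate. For a function $\chi$ and vector field $\Phi$ one has $\mathcal{L}_{\chi\Phi}g = \chi\,\mathcal{L}_\Phi g + 2\,d\chi\otimes_s\Phi^\flat$, and as $T,\Phi$ are Killing ($\mathcal{L}_T g=\mathcal{L}_\Phi g=0$),
\be
{}^\mathcal{T}\Pi = \tfrac12\,\mathcal{L}_{T+\chi\Phi}\,g = d\chi\otimes_s\Phi^\flat .
\ee
Tracing gives ${}^\mathcal{T}\Pi_\mu{}^\mu = g^{\mu\nu}\partial_\mu\chi\,\Phi_\nu = \Phi\chi = 0$, the last step by the invariance of $\chi$.

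The main obstacle is the geometric placement of the cutoff in the second step. One must confirm, using only invariant functions, that $\mathcal{E}$ is genuinely separated from both the horizon — where $T$ collapses onto the null generator — and the rotation axis — where $\Phi$ vanishes and $\chi_*$ may blow up — while remaining inside the region $\mathcal{V}$ on which $\chi_*$ is smooth. This is exactly the delicate interplay at the loci where one of $T,\Phi$ fails to contribute, and it is what makes the existence of a globally timelike invariant combination a genuine consequence of hypothesis $viii)'$ rather than a formality; once the placement is arranged, the algebra, the invariance and the deformation-tensor identity are all routine.
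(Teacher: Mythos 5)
Your proof is correct and is essentially the paper's own argument made explicit: the paper gives no written proof, asserting the lemma as an ``immediate consequence'' of hypothesis $viii)'$ together with the compactness of $\Sigma$, and your pointwise minimisation of the causal quadratic $q_p$, the flow-invariance of $\chi_*$, the compactness argument controlling $\mathcal{E}=\{g(T,T)\geq 0\}\setminus\hor$, and the Killing identity $\mathcal{L}_{T+\chi\Phi}\,g = 2\,d\chi\otimes_s\Phi^\flat$ are precisely the intended chain of reasoning. One small repair to your cutoff step: a function built only from $r$ and $g(\Phi,\Phi)$ cannot in general detect $\mathcal{E}$ (which is cut out by $g(T,T)$, a quantity that is also small near $\hor$), but since $\mathcal{E}$, $\hor$ and the axis are all invariant sets, one simply chooses a cutoff on the compact slice $\Sigma$, averages it over the compact group $U(1)$ to enforce $\Phi\eta=0$, and extends it to $\bhR$ by $T\eta=0$ along the foliation $\Sigma_t$, after which your convexity argument goes through verbatim.
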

Once we have made a choice of $\mathcal{T}$, we can define 
\begin{Definition}\label{Kildef2}
Fix homogeneous boundary conditions for the strongly hyperbolic operator $L$, as in Definition \ref{wpdefs}, and assume that the Robin functions are stationary and axismmetric, i.e. $\mathcal{L}_T \beta_I = \mathcal{L}_\Phi\beta_I = 0$. For a function $\phi\in C^\infty_{bc}(\bhR; \C^N)$, we define:
\be
(\curJ^\mathcal{T}_\gamma)^\mu[\phi] = \mathcal{T}^\nu \emT_\nu{}^\mu[\phi] - \frac{1}{2A} \gamma \abs{\phi}^2 \mathcal{T}^\mu.
\ee
We also define the Killing energy on $\Sigma_t$ to be
\be
E_\gamma(t)[\phi] = \int_{\Sigma_t} (\curJ^\mathcal{T}_\gamma)^\mu[\phi] n_\mu\ dS + \frac{1}{2}\sum_{I \in \{1, \dots, N\}\setminus \mathcal{D}} \int_{\scri \cap \Sigma_t}  \overline{\phi}^I \phi^I \beta_I r^{-2\kappa_I} \sqrt{A} d \mathcal{K}.
\ee
\end{Definition}
With this definition, we can prove a modified version of Theorem \ref{killthm}
\begin{Theorem}[The Killing estimate]\label{killthm2}
For $\phi\in C^\infty_{bc}(\bhR; \C^N)$, we have:
\begin{enumerate}[i)]
\item Given $\gamma \geq 0$, there exists $c>0$ independent of $\gamma$ such that
\be
E_\gamma(t)[\phi] \leq  c \left( \norm{\phi}{\H^1(\Sigma_t, \kappa)}^2 + \norm{T\phi}{\L^2(\Sigma_t)}^2+ \gamma \norm{\phi}{\L^2(\Sigma_t)}^2\right),
\ee
for all $\phi$.
\item There exists $\gamma_0$ such that for any $\gamma > \gamma_0$ and for any $X \in \mathfrak{X}_\hor(\bhR)$ we can find $C_{X, \gamma_0}>0$ independent of $\gamma$ such that
\be
\norm{\tilde{X} \phi}{\L^2(\Sigma_t)}^2+(\gamma-\gamma_0) \norm{\phi}{\L^2(\Sigma_t)}^2 \leq C_{X, \gamma_0} E_\gamma(t)[\phi],
\ee
for all $\phi$.
\item There exists a constant $C$ independent of $\gamma$ such that for any $\epsilon>0$ the following estimate holds:
\ben{killest22}
\frac{d}{dt} E_{\gamma}(t)[\phi] \leq \epsilon\left( \norm{\phi}{\H^1(\Sigma_t, \kappa)}^2 + \norm{(L+\gamma)\phi}{\L^2(\Sigma_t)}^2\right) + \frac{C}{\epsilon} \left (\norm{T\phi}{\L^2(\Sigma_t)}^2+\norm{\Phi\phi}{\L^2(\Sigma_t)}^2\right ).
\een
\item \emph{If $\phi$ is additionally assumed to vanish on the horizon}, there exists a constant $C$ independent of $\gamma$ such that for any $\epsilon>0$ the following estimate holds, 
\ben{killest222}
-\frac{d}{dt} E_{\gamma}(t)[\phi] \leq \epsilon\left( \norm{\phi}{\H^1(\Sigma_t, \kappa)}^2 + \norm{(L+\gamma)\phi}{\L^2(\Sigma_t)}^2\right)+ \frac{C}{\epsilon} \left (\norm{T\phi}{\L^2(\Sigma_t)}^2+\norm{\Phi\phi}{\L^2(\Sigma_t)}^2\right ).
\een
\end{enumerate}
\end{Theorem}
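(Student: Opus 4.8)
The plan is to run the proof of Theorem \ref{killthm} essentially verbatim, but with the timelike stationary field $\mathcal{T} = T + \chi\Phi$ furnished by Lemma \ref{modkil} playing the role previously taken by $T$ as the multiplier vector field. Two features of Lemma \ref{modkil} are what make this substitution succeed. First, $\chi$ vanishes near $\hor$, so $\mathcal{T} = T$ on a neighbourhood of the horizon; consequently the near-horizon structure of the current $\curJ^\mathcal{T}_\gamma$, of the energy $E_\gamma$, and of every horizon flux is literally unchanged from the globally stationary case. Second, $\mathcal{T}$ is timelike on $\bhR\setminus\hor$, so it gives pointwise control of the \emph{full} twisted gradient away from the horizon — exactly what is missing for $T$ itself inside the ergoregion.

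For parts $i)$ and $ii)$ I would argue as in the first step of the proof of Theorem \ref{killthm}. Near $\hor$ the current coincides with $\curJ^T_\gamma$, so the coercivity of $E_\gamma$ over tangential derivatives at the horizon, for $\gamma$ sufficiently large, is the same local-coordinate computation; away from $\hor$ the timelike character of $\mathcal{T}$ supplies pointwise control of $|\tilde\nabla\phi|^2$. The surface term at $\scri$ is absorbed using the twisted Sobolev trace inequality (Lemma \ref{twisted trace}) exactly as before, now relying on the Robin data being stationary \emph{and} axisymmetric so that the boundary integrand is genuinely stationary. This yields the two-sided bounds $i)$, $ii)$.

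For parts $iii)$ and $iv)$ I would apply the local-in-time divergence theorem (Lemma \ref{litdivlem}) to $K=\curJ^\mathcal{T}_\gamma$ and treat the four resulting terms in turn. The first reproduces $\tfrac{d}{dt}E_\gamma$; the $\scri$-flux combines with the boundary term in $E_\gamma$ as in Theorem \ref{killthm}, using axisymmetry of the $\beta_I$ together with $g(\mathcal{T},m)=0$ at $\scri$ (both $T$ and $\Phi$ being tangent to the level sets of $r$); and the horizon flux $-\int_{\hor\cap\Sigma_t}(\curJ^\mathcal{T}_\gamma)^\mu T_\mu\,d\sigma = -\int_{\hor\cap\Sigma_t}|T\phi|^2\,d\sigma \le 0$ keeps its good sign precisely because $\mathcal{T}=T$ on $\hor$. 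The new contributions appear in the divergence term, which I would expand via Lemma \ref{emprop} as $\nabla_\mu(\curJ^\mathcal{T}_\gamma)^\mu = -\Re[\tfrac1A\tilde{\mathcal{T}}\overline\phi\cdot(L\phi+\gamma\phi)] + {}^\mathcal{T}\Pi_{\mu\nu}\emT^{\mu\nu} + \mathcal{T}^\nu\emS_\nu + (\text{lower order})$, the deformation term ${}^\mathcal{T}\Pi_{\mu\nu}\emT^{\mu\nu}$ being the price of $\mathcal{T}$ failing to be Killing.

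The main obstacle, and the origin of the extra $\norm{\Phi\phi}{\L^2(\Sigma_t)}^2$ on the right-hand sides of \eq{killest22} and \eq{killest222}, is this deformation term. Lemma \ref{modkil} gives the explicit \emph{trace-free} form ${}^\mathcal{T}\Pi = d\chi\otimes_s\Phi^\flat$ with $d\chi$ supported in a compact set away from both $\hor$ and $\scri$; tracelessness moreover kills the stray $\gamma$-weighted term coming from $\text{div}\,\mathcal{T}$ (and $\mathcal{T}(A)=\chi\Phi(A)=0$). Writing out ${}^\mathcal{T}\Pi_{\mu\nu}\emT^{\mu\nu}$, one factor is always a $\Phi$-derivative $\tilde\Phi\phi$ and the other is $(\nabla\chi)\cdot\tilde\nabla\phi$; since $\nabla\chi$ is compactly supported in the bulk, a Cauchy--Schwarz split bounds it by $\epsilon\norm{\phi}{\H^1(\Sigma_t,\kappa)}^2 + \tfrac{C}{\epsilon}\norm{\Phi\phi}{\L^2(\Sigma_t)}^2$. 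The same mechanism in the source term, after decomposing $\tilde{\mathcal{T}} = \tilde T + \chi\tilde\Phi$, produces both $\norm{T\phi}{\L^2(\Sigma_t)}^2$ and $\norm{\Phi\phi}{\L^2(\Sigma_t)}^2$ on applying Cauchy--Schwarz. The final check is that the twisted correction drops out: $T^\nu\emS_\nu=0$ by stationarity as in Theorem \ref{killthm}, while $\Phi^\nu\emS_\nu=0$ follows from $G$-invariance of the data ($\mathcal{L}_\Phi V=\mathcal{L}_\Phi W=0$ and $\mathcal{L}_\Phi r=0$, whence $\mathcal{L}_\Phi f=0$), so $\mathcal{T}^\nu\emS_\nu=0$. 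Collecting these bounds and taking $\epsilon$ small gives \eq{killest22}; part $iv)$ follows identically upon multiplying by $-1$ and discarding the horizon flux, which is now absent because $\phi$ vanishes on $\hor$.
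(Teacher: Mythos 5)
Your proposal is correct and takes essentially the same route as the paper: the paper likewise reruns the proof of Theorem \ref{killthm} with $\mathcal{T}=T+\chi\Phi$ as multiplier in the local-in-time divergence theorem, identifies the deformation term ${}^\mathcal{T}\Pi^{\mu\nu}\emT_{\mu\nu}$ as the only genuinely new contribution, and controls it via Lemma \ref{modkil} by exactly your Cauchy--Schwarz split $\epsilon\norm{\phi}{\H^1(\Sigma,\kappa)}^2+\frac{C}{\epsilon}\norm{\Phi\phi}{\L^2(\Sigma)}^2$. Your additional checks (the decomposition $\tilde{\mathcal{T}}=\tilde{T}+\chi\tilde{\Phi}$ in the source term, the vanishing of $\mathcal{T}^\nu\emS_\nu$ via axisymmetry of $r$, $V$, $W$, and tracelessness of ${}^\mathcal{T}\Pi$ killing the $\gamma$-weighted terms) simply make explicit the steps the paper subsumes under ``dealt with precisely as in the proof of Theorem \ref{killthm}''.
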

\begin{proof}
The first two parts are consequences of the fact that $\mathcal{T}$ is timelike, and becomes null on  the horizon. For part $iii)$, we again make use of the local in term divergence theorem applied to $\curJ^\mathcal{T}_\gamma$. The only term which cannot be dealt with precisely as in the proof of Theorem \ref{killthm} is ${}^\mathcal{T}\Pi^{\mu \nu} \emT_{\mu \nu}$. This, by Lemma \ref{modkil}, can be controlled by 
\be
\abs{\int_{\Sigma_t} {}^\mathcal{T}\Pi^{\mu \nu} \emT_{\mu \nu}\sqrt{A} dS} \leq \epsilon \norm{\phi}{\H^1(\Sigma, \kappa)}^2 + \frac{C}{\epsilon} \norm{\Phi \phi}{\L^2(\Sigma)},
\ee
whence we are done.
\end{proof}

Now Theorem \ref{redshift} holds as stated for the locally stationary case with stationary axisymmetric boundary conditions, provided we understand $N$ to be a timelike vector field which agrees with $n_{\Sigma_t}$ near $\hor$ and $\mathcal{T}$ near $\scri$. Thus Corollary \ref{cor2} part $i)$ holds. The commutation argument of Theorem \ref{commute} goes through unchanged, so in particular Corollary \ref{cor3} again holds in the locally stationary case. As a result, the definition of the solution semigroup in Definition \ref{Sdef} and Theorem \ref{semigroup} is valid.

At this stage, it will be useful to decompose the space $\mathbf{H}^k(\Sigma)$ into irreducible representations of the axial $U(1)$ symmetry group. We define linear subspaces
\be
\mathbf{H}^k_m(\Sigma) = \left\{ \bm{\psi} \in \bm{H}^k(\Sigma) : (\Phi  - i m) \bm{\psi}=0\right\}.
\ee
This is a closed subspace, provided we understand the equation to hold in a weak fashion. It is straightforward to show that
\be
\bm{H}^k(\Sigma) = \bigoplus_{m=-\infty}^\infty\mathbf{H}^k_m(\Sigma),
\ee
i.e.\ any $\bm{\psi}$ may be written
\be
\bm{\psi} = \sum_{m=-\infty}^\infty \bm{\psi}_m, \qquad \bm{\psi}_m \in \bm{H}^k_m(\Sigma),
\ee
where the sum converges in the $\bm{H}^k(\Sigma)$ norm and the projections $\bm{\psi}_m$ are determined uniquely. The spaces $\mathbf{H}^k_m(\Sigma)$ with different $m$ are orthogonal to one another. The reason to introduce this splitting is that the solution semigroup for a strongly hyperbolic operator with stationary axisymmetric boundary conditions respects the splitting:
\begin{Lemma}
If $L$ is a strongly hyperbolic operator on a locally stationary black hole $\bhR$ with stationary axisymmetric boundary conditions, then the family of solution operators $\mathcal{S}(t)$ defines a $C^0$-semigroup on $\bm{H}_m^k(\Sigma)$.
\end{Lemma}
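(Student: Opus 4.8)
The plan is to deduce the result from the fact, already established in the locally stationary case via Corollary \ref{cor2} $i)$, Corollary \ref{cor3}, Definition \ref{Sdef} and Theorem \ref{semigroup}, that $\cS(t)$ is a $C^0$-semigroup on $\bm{H}^k(\Sigma)$. Since $\bm{H}^k_m(\Sigma)$ is a closed subspace of $\bm{H}^k(\Sigma)$, it suffices to prove the single invariance statement $\cS(t)\,\bm{H}^k_m(\Sigma)\subset\bm{H}^k_m(\Sigma)$ for every $t\geq 0$: the identity $\cS(0)=I$, the composition law $\cS(t+t')=\cS(t)\cS(t')$ and strong continuity all restrict automatically from $\bm{H}^k(\Sigma)$ to any closed $\cS(t)$-invariant subspace, so the restricted family is again a $C^0$-semigroup. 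Thus the entire content of the lemma is invariance of the isotypic components under time evolution.

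The cleanest route to invariance, and the one I would take to avoid the subtleties of the unbounded operator $\Phi$, is to work at the group level. Let $R_\theta$ denote the isometry given by the flow of $\Phi$ at parameter $\theta$, so that $\theta\mapsto R_\theta$ is a periodic one-parameter family generating the $U(1)$ action. Because $R_\theta$ is an isometry of $(\bhR,g)$ that preserves $\Sigma$, commutes with $\varphi_t$, and (by the stationary axisymmetric assumption $\mathcal{L}_\Phi V=\mathcal{L}_\Phi W=\mathcal{L}_\Phi\beta_I=0$) preserves $V$, $W$ and the homogeneous boundary conditions, it maps solutions of $L\psi=0$ to solutions and preserves each slice $\Sigma_t$. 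Consequently the pullback $R_\theta^\ast$ is a bounded (indeed norm-preserving) operator on $\bm{H}^k(\Sigma)$ which commutes with the solution operator, $\cS(t)\,R_\theta^\ast = R_\theta^\ast\,\cS(t)$, the key point being that $[\Phi,T]=0$ lets $R_\theta$ commute with the $T$-evolution used to define $\cS(t)$. Since $\bm{H}^k_m(\Sigma)$ is by definition precisely the eigenspace on which $R_\theta^\ast$ acts as multiplication by $e^{im\theta}$ for all $\theta$, any operator commuting with the representation $\theta\mapsto R_\theta^\ast$ must leave it invariant, and invariance of $\bm{H}^k_m(\Sigma)$ under $\cS(t)$ follows at once.

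Alternatively, and more in the spirit of the uniqueness arguments used elsewhere in the paper, one can argue infinitesimally: the hypotheses give $\Phi t=0$ and $\mathcal{L}_\Phi A=\mathcal{L}_\Phi\sigma=0$, whence $[\Phi,L]=0$, so for data $(\uppsi,\uppsi')\in\bm{H}^k_m(\Sigma)$ the field $\chi:=(\Phi-im)\psi$ obtained from the solution $\psi$ again solves $L\chi=0$, with vanishing initial data $\chi|_{\Sigma_0}=(\Phi-im)\uppsi=0$ and $T\chi|_{\Sigma_0}=(\Phi-im)\uppsi'=0$ (using $[\Phi,T]=0$); the well-posedness theorem, Theorem \ref{WPThm}, then forces $\chi\equiv 0$ on $\bhR$, and restricting to $\Sigma_t$ shows both $(\Phi-im)\psi|_{\Sigma_t}=0$ and $(\Phi-im)(T\psi)|_{\Sigma_t}=T\chi|_{\Sigma_t}=0$, i.e. $\cS(t)(\uppsi,\uppsi')\in\bm{H}^k_m(\Sigma)$.

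The main obstacle I anticipate is purely technical: in the infinitesimal argument $\Phi$ is unbounded and the data lie only in $\bm{H}^k_m(\Sigma)$, so the commutation $L\chi=0$ and the identity $T\chi=(\Phi-im)T\psi$ hold a priori only in a weak sense. I would discharge this either by the group-level argument above, which never differentiates in $\theta$ and so sidesteps the issue entirely, or by a density argument: approximate $(\uppsi,\uppsi')$ by smooth data lying in the same $m$-eigenspace and supported away from $\hor$, run the classical computation, and pass to the limit using the continuity estimate of Corollary \ref{cor3} (valid in the locally stationary case) together with the closedness of $\bm{H}^k_m(\Sigma)$. Everything else is routine bookkeeping.
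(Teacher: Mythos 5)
Your proposal is correct, and your second (infinitesimal) argument is essentially the paper's own proof: the paper likewise notes that since $\Phi$ is Killing one has $(\Phi - i m)\bm{\psi}(t) = \mathcal{S}(t)(\Phi - i m)\bm{\psi}_0$ for smooth solutions, concludes invariance of the closed subspace $\bm{H}^k_m(\Sigma)$, and removes the smoothness assumption by approximation, exactly as you propose. Your group-level argument via the commuting operators $R_\theta^\ast$ is just a repackaging of the same commutation fact that sidesteps differentiating in $\theta$; it is a clean alternative but not a genuinely different route.
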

\begin{proof}
By Theorem  \ref{semigroup}, $\mathcal{S}(t)$ define a $C^0-$semigroup on $\bm{H}^k(\Sigma)$. It remains to show that $\bm{H}^k_m(\Sigma)$ is an invariant subspace. Since $\Phi$ is a Killing vector, if $\bm{\psi}(t) = \mathcal{S}(t) \bm{\psi}_0$ is smooth then 
\be
(\Phi - i m)\bm{\psi}(t) = \mathcal{S}(t) (\Phi - i m) \bm{\psi}_0.
\ee
Thus if initially $ \bm{\psi}_0 \in \bm{H}^k_m(\Sigma)$, $ \bm{\psi}(t) \in \bm{H}^k_m(\Sigma)$ for all $t$. By approximation we may drop the requirement that $\bm{\psi}$ be smooth.
\end{proof}
As in definition \ref{Adef}, we define the unbounded operator $(D^k_{\leq M}(\A), A)$ to be the infinitesimal generator of the $C^0$-semigroup $\mathcal{S}(t)$ defined on 
\be
\bm{H}^k_{\leq M}(\Sigma):=  \bigoplus_{m=-M}^M\mathbf{H}^k_m(\Sigma).
\ee
We have the following result 
\begin{Theorem}\label{mainthm2}
If $L$ is a strongly hyperbolic operator on a locally stationary asymptotically AdS black hole $\bhR$, with stationary axisymmetric boundary conditions, then Theorem \ref{mainthm} holds as stated for $(D^k_{\leq M}(\A), A)$ the infinitesimal generator of the solution semigroup on $\bm{H}^k_{\leq M}(\Sigma)$.
\end{Theorem}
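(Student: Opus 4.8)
The plan is to reduce, via Lemma \ref{Lslemma}, to the invertibility of the Laplace-transformed operator $\hL$ on the finite-mode space, and then observe that the entire chain of Theorems \ref{Linj}--\ref{lhatthm} survives because the \emph{only} ingredient that changes between the globally and locally stationary settings is the Killing estimate. First I would use the semigroup-invariant orthogonal splitting $\bm{H}^k_{\leq M}(\Sigma) = \bigoplus_{|m|\leq M}\bm{H}^k_m(\Sigma)$, so that $(D^k_{\leq M}(\A), A)$ is block diagonal and it suffices to establish the dichotomy on each $\bm{H}^k_m(\Sigma)$ and take the union over the finitely many $|m|\leq M$. On such a subspace $\Phi$ acts as multiplication by $im$, so $\norm{\Phi\phi}{\L^2(\Sigma_t)}^2 = m^2\norm{\phi}{\L^2(\Sigma_t)}^2 \leq M^2\norm{\phi}{\L^2(\Sigma_t)}^2$; hence in the modified estimate \eq{killest22} of Theorem \ref{killthm2} the extra term is merely an $\L^2$ term of exactly the type that is absorbed by taking $\gamma$ large in the globally stationary proofs.

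Concretely, I would re-run the proof of Theorem \ref{Linj}: applying \eq{killest22} to $e^{(s+c)t}\upsilon$ reproduces \eq{ellkillest} except for an added summand $\tfrac{C}{\epsilon}\norm{\Phi e^{st}\upsilon}{\L^2(\Sigma)}^2 \leq \tfrac{CM^2}{\epsilon}\norm{e^{st}\upsilon}{\L^2(\Sigma)}^2$, which only enlarges the constant multiplying the harmless $\L^2$ term. Since the redshift estimate of Theorem \ref{redshift} holds verbatim once $N$ is taken to agree with $\mathcal{T}$ near $\scri$, the remaining steps give injectivity of $\hL+\gamma$ and the bound $\norm{u}{\H^1(\Sigma,\kappa)}^2 \leq C\norm{(\hL+\gamma)u}{\L^2(\Sigma)}^2$ on $\bm{H}^1_m$. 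Because $\Phi$ is Killing, commutes with $T$, and $L$ is axisymmetric, both the adjoint $L^\dagger$ and the commuted operators of Theorem \ref{commute} respect the splitting, so Theorem \ref{Lsur} (injectivity of $\hL^\dagger+\gamma$, hence surjectivity through Lemma \ref{surjinjd}) and Theorem \ref{l2inv} follow with the same cosmetic change. For the inductive commutation of Theorem \ref{Lsinv} I would choose the vector fields $K_a$ to be $U(1)$-invariant (possible by averaging over the compact group) together with $\Phi$ itself, which acts as a scalar on each mode; this keeps the commuted system inside the finite-mode space while preserving the enhanced-redshift relation $w_{L'}=w_L-2\varkappa$ (note $\mathcal{T}=T$ near $\hor$, so the surface gravity $\varkappa$ enters exactly as before).

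With these estimates in place, the analytic Fredholm argument of Theorem \ref{lhatthm} applies unchanged. The $\H^1$ bound, which comes from the energy estimates and not from ellipticity, shows $(\hL+\lambda)^{-1}$ maps $\L^{2,k}(\Sigma)$ boundedly into $\H^{1,k}(\Sigma,\kappa)$; restricting to the closed subspaces $\H^{1,k}_m(\Sigma,\kappa)\hookrightarrow\L^{2,k}_m(\Sigma)$ this remains compact by the twisted Rellich--Kondrachov theorem (Theorem 4.1 of \cite{Holzegel:2012wt}), since a compact embedding restricts compactly to a closed subspace. This yields the Fredholm alternative, the meromorphy of $s\mapsto\hL^{-1}$, finite-rank residues and discreteness of $\Lambda^k_{QNF}$ on each $\bm{H}^k_m$; Lemma \ref{Lslemma} then transports all of these to the resolvent of $(D^k_{\leq M}(\A), A)$. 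A finite union of discrete sets accumulating only at infinity is again such a set, and the smoothness of the modes gives the cross-$k$ agreement of eigenvalues and eigenfunctions, so Theorem \ref{mainthm} holds as stated.

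The hard part is the elliptic regularity used to conclude that the quasinormal modes are smooth and, already earlier, that $\hL^{-1}f$ lies in $H^2_{\textrm{loc.}}(\Sigma)$ when identifying the adjoint. On the full space $\hL$ is \emph{not} elliptic throughout the ergoregion, where $T$ becomes spacelike and the spatial principal symbol loses definiteness; this is the ``failure of ellipticity'' that genuinely threatens the argument. The resolution is that on a fixed mode the $\Phi$-derivatives are frozen to the scalar $im$, so $\hL$ descends to an operator on the quotient whose principal symbol is the restriction of $g^{-1}$ to covectors annihilating $\Phi$; away from $\hor$ this is definite precisely because $\mathrm{span}\{T,\Phi\}$ is timelike, its $g$-orthogonal complement being positive definite. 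Thus the finite-mode restriction is exactly what restores the ellipticity powering the regularity arguments, and it is also why this method cannot rule out accumulation of quasinormal frequencies belonging to different irreducible representations as $m\to\infty$.
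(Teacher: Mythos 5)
Your proposal is correct and follows essentially the same route as the paper: the paper's own (much terser) proof of Theorem \ref{mainthm2} is precisely to re-run the proof of Theorem \ref{mainthm} with the modified Killing estimate of Theorem \ref{killthm2}, using $\Phi u = i m u$ on each mode to convert the extra $\norm{\Phi u}{\L^2(\Sigma)}^2$ terms into $M^2\norm{u}{\L^2(\Sigma)}^2$ terms absorbed by taking $\gamma$ large, exactly as you do. Your supplementary details --- the $U(1)$-invariant choice of the commutators $K_a$ so the enlarged system stays in the finite-mode space, the restriction of the compact embedding to the closed mode subspaces, and the mode-wise recovery of ellipticity from the timelike character of $\mathrm{span}\{T,\Phi\}$ --- are all sound and merely flesh out steps the paper leaves implicit.
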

\begin{proof}
The proof follows in the same way as for Theorem \ref{mainthm}, however we use the Killing estimates of Theorem \ref{killthm2} in place of those of Theorem \ref{killthm}. Since for $u \in D^k_m(\A)$ we have $\Phi u = i m u$, we are able to convert the $\norm{\Phi u}{\L^2(\Sigma)}^2$ terms in our estimates into terms of the form $M^2 \norm{u}{\L^2(\Sigma)}$, which can be absorbed by taking $\gamma$ large enough in a similar fashion to how we handled terms of the form $\norm{T u}{\L^2(\Sigma)}^2$ in the proof of Theorem \ref{mainthm}.
\end{proof}

We would like to be able to show that this holds with $M=\infty$, which would correspond to being able to choose $\gamma$ uniformly in $M$. This would mean that the quasinormal modes for all $m$ cannot accumulate. Unfortunately with the assumptions we have made so far this appears not to be possible. It will not surprise those familiar with rotating black holes that the problem is in the `superradiant modes', i.e.\ those with large $m$ for fixed $s$. If these can be shown to be non-trapping there may be some hope of a result for all $M$.

\section{An example; (In)completeness of the quasinormal modes}\label{example}

In this section, we shall treat a simple $(1+1)$-dimensional asymptotically AdS black hole in considerable detail. The motivation for this is twofold. Firstly, it permits us to give a concrete application of the methods developed in the preceding sections. In particular, the relationship between regularity of data at the horizon and the quasinormal modes is somewhat more clearly shown. Secondly, our simple example provides a direct counterexample to any conjecture that the quasinormal modes are complete. In particular, we can exhibit solutions which remain arbitrarily far from the vacuum solution for an arbitrarily long time but whose projection onto the quasinormal modes vanishes.

Let us take $\bhR$ to be the manifold $\bhR = [0, 1]\times [0, \infty)$ with coordinates $(\rho, t), 0\leq\rho\leq 1, 0\leq t<\infty$. We identify the components of its boundary as $\scri = \{1\}\times [0, \infty)$, $\hor = \{0\}\times [0, \infty)$, $\Sigma = [0, 1]\times \{0\}$. We endow $\bhR$ with the metric
\ben{exmet}
g = \frac{1}{(1-\rho)^2} \left[ - (1-(1-\rho)^2)dt^2 + 2 (1-\rho)^2 dt d\rho + (1+(1-\rho)^2)d\rho^2 \right].
\een
We note that $T = \partial_t$ is a Killing vector, and we introduce $r = (1-\rho)^{-1}$. With these definitions, it is straightforward to show that
\begin{Lemma}
$(\bhR, \hor, \scri, \Sigma, g, r, T)$ is a globally stationary, asymptotically anti-de Sitter, black hole space time with AdS radius $l=1$ and surface gravity $\varkappa=1$. The black hole is foliated by spacelike surfaces $\Sigma_t  = [0, 1]\times \{t\}$.
\end{Lemma}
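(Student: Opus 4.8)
The plan is to verify directly each of the conditions $i)$--$viii)$ of Definition \ref{adsbh}; since the metric \eqref{exmet} is given explicitly, this reduces to a sequence of computations, the only genuinely geometric point being the identification $\bhR=\mathcal{D}^+(\Sigma)$. First I would record the components in the $(t,\rho)$ chart, $g_{tt}=-\frac{1-(1-\rho)^2}{(1-\rho)^2}$, $g_{t\rho}=1$, $g_{\rho\rho}=\frac{1+(1-\rho)^2}{(1-\rho)^2}$, and compute $\det g=-(1-\rho)^{-4}$. This is smooth and non-degenerate for $\rho\in[0,1)$, giving $ii)$, while $i)$ and $vii)$ hold by construction of $\bhR=[0,1]\times[0,\infty)$ (the slices $\Sigma_t=[0,1]\times\{t\}$ are the orbits of $\Sigma$ under the flow of $T=\partial_t$). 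Inverting, $g^{tt}=-(1-\rho)^2(1+(1-\rho)^2)$, $g^{t\rho}=(1-\rho)^4$, and $g^{\rho\rho}=(1-\rho)^2(1-(1-\rho)^2)$; since $g^{tt}<0$ on $\rho\in[0,1)$ the slices $\Sigma_t$ are spacelike, and since $g^{\rho\rho}$ vanishes at $\rho=0$ the surface $\hor$ is null, which is $iv)$.

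For the asymptotically AdS end $iii)$ I would pass to the coordinate $r=(1-\rho)^{-1}$, noting that $r^{-1}=1-\rho$ is a boundary defining function for $\scri=\{\rho=1\}$. A short computation gives $g_{rr}=r^{-2}+r^{-4}$, $g_{tr}=r^{-2}$ and $g_{tt}=-(r^2-1)$, so that in the notation of Definition \ref{adsenddef} one has $g_{rr}=l^2 r^{-2}+\mathcal{O}(r^{-4})$, $g_{tr}=\mathcal{O}(r^{-2})$ and $g_{tt}=r^2\mathfrak{g}_{tt}+\mathcal{O}(1)$ with $l=1$ and conformal boundary metric $\mathfrak{g}=-dt^2$. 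Moreover, writing $u=1-\rho$, the rescaled metric $r^{-2}g=-(1-u^2)dt^2-2u^2\,dt\,du+(1+u^2)du^2$ extends smoothly and non-degenerately across $u=0$, verifying the last clause of Definition \ref{adsenddef}. Since $r$ is manifestly smooth and positive on $\mathring{\bhR}$, and $T=\partial_t$ is tangent to $\scri$ with $\mathcal{L}_T r=0$, this also disposes of $vi)$.

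Next I would treat the Killing field. $T=\partial_t$ is Killing because $g$ is $t$-independent, and is transverse to $\Sigma=\{t=0\}$. From $\sigma=-g(T,T)=-g_{tt}=r^2-1$ (cf.\ \eqref{sigA}) one reads off that $\sigma\ge 0$ with equality exactly on $\hor$, so $T$ is timelike on $\bhR\setminus\hor$ and null on $\hor$, giving $v)$ and $viii)$. For the surface gravity I would evaluate $\nabla\rho=g^{\mu\rho}\partial_\mu$ on $\hor$, where $g^{t\rho}=1$ and $g^{\rho\rho}=0$, so $\nabla\rho=T$ there; since $\partial_\rho\sigma=2r^3$ equals $2$ at $r=1$, this yields $\nabla\sigma=(\partial_\rho\sigma)\,\nabla\rho=2T$ on $\hor$, whence \eqref{sgdef} gives the constant surface gravity $\varkappa=1>0$ and confirms that $\hor$ is a non-extremal Killing horizon generated by $T$.

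The one step requiring more than algebra is $\bhR=\mathcal{D}^+(\Sigma)$, which I expect to be the main obstacle. Here I would compute the radial characteristics from $g_{\rho\rho}\dot\rho^2+2g_{t\rho}\dot\rho\,\dot t+g_{tt}\dot t^2=0$, obtaining the two branches $d\rho/dt=-1$ (ingoing) and $d\rho/dt=\frac{1-(1-\rho)^2}{1+(1-\rho)^2}$ (outgoing). The ingoing ray reaches $\hor$ at finite future $t$, whereas the outgoing ray has $d\rho/dt\to 0$ as $\rho\to 0$ and hence only asymptotes to $\hor$ as $t\to-\infty$. Consequently, along any past-directed inextendible timelike curve from an interior point the time function $t$ decreases strictly and the curve cannot reach $\hor$ at any finite $t>0$; it must therefore either meet $\Sigma$ at $t=0$ or approach a point of $\scri\cap I^{+}(\Sigma)$. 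This is precisely the requirement $\bhR=\mathcal{D}^+(\Sigma)$ and is the only place where the global causal structure is used, the remaining items being immediate from the explicit components above.
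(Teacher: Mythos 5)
Your proposal is correct: all the component computations check out ($\det g=-(1-\rho)^{-4}$, the inverse metric, the asymptotic form $g_{rr}=r^{-2}+r^{-4}$, $g_{tr}=r^{-2}$, $g_{tt}=-(r^2-1)$ giving $l=1$, and $\nabla\sigma=2T$ on $\hor$ giving $\varkappa=1$), as does the characteristic analysis — the outgoing branch $d\rho/dt=\frac{1-(1-\rho)^2}{1+(1-\rho)^2}\approx\rho$ near $\hor$ bounds any past-directed causal curve, so $\rho$ decays at most exponentially and the horizon cannot be reached backwards in finite time, which is the key point for $\bhR=\mathcal{D}^+(\Sigma)$. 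The paper gives no proof of this lemma (it is asserted to be straightforward), and your item-by-item verification of Definition \ref{adsbh} is precisely the direct computation the paper intends, so there is nothing to compare beyond noting that you have correctly identified the one non-algebraic step.
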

This metric in fact has constant curvature, so may be identified with a region of the global anti-de Sitter spacetime in $(1+1)$-dimensions. Most of what we shall say does not depend on the details of the metric, however this is a convenient choice since the null curves of this metric are straightforward to find. In fact, if we define
\be
u = t + \rho - 1, \qquad v = t + 2 \tanh^{-1}(1-\rho) + \rho-1,
\ee
then the metric takes the form
\ben{conf}
g = -\textrm{sech}^2\left(\frac{v-u}{2} \right) du dv.
\een
Thus $\bhR$ is conformally related to a region of (compactified) Minkowsi space, with $\hor$ mapped onto future null infinity and $\scri$ mapped onto the line $x=0$.

We shall consider solutions of the massless Klein-Gordon equation with Dirichlet boundary conditions:
\ben{exKG}
\Box_g \psi = 0, \qquad \psi(\cdot, 0) = \uppsi^0,\ \  \psi_t(\cdot, 0) = \uppsi^1,\ \  \psi(1, t) = 0.
\een
For convenience we assume that $\psi$ is complex. In $(1+1)$-dimensions, the massless Klein-Gordon equation corresponds to $\kappa=\frac{1}{2}$, for which value the twisted Sobolev space $\H^1(\Sigma, \kappa) = H^1(\Sigma)$, which simplifies matters. Furthermore, in this dimension the Klein-Gordon operator is conformally invariant, so we can readily write down the solutions of \eq{exKG} by making use of \eq{conf}:
\begin{Lemma}
The solution of \eq{exKG} is given by
\be
\psi(\rho, t) = \Psi[t + \rho - 1] - \Psi[t + 2 \tanh^{-1}(1-\rho) + \rho-1],
\ee
for a continuous function $\Psi:[-1, \infty) \to \C$. For $x\geq 0$, $\Psi$ is given by
\be
\Psi(x) =- \int_{0}^{P(x)}\left[ \frac{1+(1-y)^2}{2} (\uppsi^0_\rho(y) - \uppsi^1(y)) \right] dy,
\ee
where $P(x)$ is the unique solution of
\be
x = 2 \tanh^{-1}(1-P(x)) + (P(x)-1),
\ee
with $0\leq P(x) \leq 1$. For $-1\leq x < 0$, $\Psi$ is given by
\be
\Psi(x) =  -\int_{0}^1\left[ \frac{1+(1-y)^2}{2} (\uppsi^0_\rho(y) - \uppsi^1(y)) \right] dy - \int_{1+x}^1 \left[\uppsi^1(y)+\frac{1-(1-y)^2}{2} (\uppsi^0_\rho(y) - \uppsi^1(y)) \right] dy.
\ee
\end{Lemma}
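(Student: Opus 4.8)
The plan is to exploit the conformal invariance of the massless Klein--Gordon operator in $(1+1)$ dimensions. First I would pass to the null coordinates $u,v$ of \eq{conf}, in which $g = -\operatorname{sech}^2\!\left(\tfrac{v-u}{2}\right) du\,dv$ is conformal to the flat metric $-du\,dv$. Since in two dimensions $\Box_g$ is conformally invariant, the equation $\Box_g\psi = 0$ of \eq{exKG} is equivalent to $\partial_u\partial_v \psi = 0$, whose general solution is $\psi = F(u) + G(v)$ for arbitrary continuous functions $F,G$. The conformal boundary $\scri = \{\rho = 1\}$ is precisely the locus $u = v = t$, so the Dirichlet condition $\psi(1,t)=0$ forces $F(t) + G(t) = 0$ for all $t\ge 0$; hence $G = -F$ and, writing $\Psi := F$, every solution has the stated form $\psi(\rho,t) = \Psi(u) - \Psi(v)$ with $u = t+\rho-1$ and $v = t + 2\tanh^{-1}(1-\rho)+\rho-1$.

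It remains to pin down $\Psi$ on its domain $[-1,\infty)$ from the Cauchy data. The key observation is that on the initial slice $\{t=0\}$ the coordinate $u = \rho - 1$ sweeps out $[-1,0]$ while $v = 2\tanh^{-1}(1-\rho) + \rho - 1$ sweeps out $[0,\infty)$ (strictly increasing from $0$ at $\rho=1$ to $+\infty$ at $\rho=0$), so the two characteristic families together determine $\Psi$ on all of $[-1,\infty)$. Using $\partial_t u = \partial_t v = 1$, the two initial conditions read $\Psi(u) - \Psi(v) = \uppsi^0$ and $\Psi'(u) - \Psi'(v) = \uppsi^1$ at $t=0$. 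Differentiating the first in $\rho$, with Jacobians $\partial_\rho u = 1$ and $\partial_\rho v = -\tfrac{1+(1-\rho)^2}{1-(1-\rho)^2}$, and combining linearly with the second decouples the system into
\be
\Psi'\big(v(\rho,0)\big) = \tfrac12\big(1-(1-\rho)^2\big)\big(\uppsi^0_\rho-\uppsi^1\big), \qquad \Psi'\big(u(\rho,0)\big) = \uppsi^1 + \tfrac12\big(1-(1-\rho)^2\big)\big(\uppsi^0_\rho-\uppsi^1\big).
\ee

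Integrating these from the junction $\rho=1$ (where $u=v=0$) produces the two displayed formulas. For $x\ge 0$ I would substitute $\rho = P(x)$, using $dv = -\tfrac{1+(1-\rho)^2}{1-(1-\rho)^2}\,d\rho$ to convert the $\Psi'(v)$ integral into $-\int_0^{P(x)}\tfrac{1+(1-y)^2}{2}(\uppsi^0_\rho - \uppsi^1)\,dy$, after fixing the additive gauge freedom $\Psi\mapsto\Psi+c$ (which leaves $\psi$ unchanged) by the choice $\Psi(0) = -\int_0^1\tfrac{1+(1-y)^2}{2}(\uppsi^0_\rho - \uppsi^1)\,dy$; for $-1\le x<0$ I would substitute $\rho = 1+x$ directly into the $\Psi'(u)$ expression to obtain the second formula. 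One must check that $\rho\mapsto v$ is a bijection $[0,1]\to[0,\infty)$ so that $P(x)$ is well defined and smooth, which is immediate from its strict monotonicity, and that the two pieces agree at $x=0$, which holds automatically since both are integrated from the common basepoint value $\Psi(0)$.

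The main obstacle is bookkeeping rather than conceptual: correctly tracking the change of variables between $v$ and $\rho$ (in particular the signs and the factors $\tfrac{1\pm(1-\rho)^2}{1\mp(1-\rho)^2}$) so that the weight $\tfrac{1+(1-y)^2}{2}$ in the $x\ge0$ formula and $\tfrac{1-(1-y)^2}{2}$ in the $x<0$ formula emerge correctly. Finally I would address regularity: since $\uppsi^0\in H^1$ and $\uppsi^1\in L^2$, the integrands are $L^1$ on $[0,1]$ (the weights being bounded), so the integrals define an absolutely continuous, hence continuous, function $\Psi$ on $[-1,\infty)$. Uniqueness of the resulting $\psi$ as \emph{the} solution of \eq{exKG} then follows from the well-posedness Theorem \ref{WPThm}.
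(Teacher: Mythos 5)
Your proof is correct and is essentially the paper's own argument: the paper likewise uses the conformal flatness exhibited in \eq{conf} to write the general solution as $\psi_L(u)+\psi_R(v)$, imposes the Dirichlet condition on the locus $u=v$ (i.e.\ $\rho=1$) to reduce to a single function, and determines that function from the initial data. Your explicit bookkeeping — the Jacobian $\partial_\rho v=-\tfrac{1+(1-\rho)^2}{1-(1-\rho)^2}$, the decoupled expressions for $\Psi'$ along the two characteristic families, and the normalisation $\Psi(0)=-\int_0^1\tfrac{1+(1-y)^2}{2}\left(\uppsi^0_\rho-\uppsi^1\right)dy$ fixing the additive gauge — reproduces the stated formulas exactly, merely filling in details the paper leaves to the reader.
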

\begin{proof}
It is immediate that a general solution of $\Box_g \psi = 0$ is given by $\psi_L(u) + \psi_R(v)$. Imposing the boundary conditions enables us to relate $\psi_L$ to $\psi_R$ and reduce dependence to a single function, which can be determined from the initial conditions.
\end{proof}

We're primarily interested in the behaviour of $\psi$ for late times. The late time behaviour is controlled by  $\Psi(x)$ for $x$ large, which corresponds to $P(x)$ approaching zero. In other words, the late time behaviour is governed by $\uppsi^0_\rho - \uppsi^1$ in the neighbourhood of $\hor$. This makes sense, because this is the right moving part of the initial wave. The longest surviving piece of the initial wave is that which starts just outside the horizon and moves away from the horizon. 

We'll now analyse $\Psi(x)$ for $x$ large. It's convenient to introduce
\be
f(y) = - \frac{1+(1-y)^2}{2} (\uppsi^0_\rho(y) - \uppsi^1(y)),
\ee
so that $\Psi(x) = \int_0^{P(x)} f(y) dy$. Let us first assume that the initial data belong to $\bm{H}^1(\Sigma) = H^1(\Sigma)\times L^2(\Sigma)$. Then $f(y)\in L^2(\Sigma)$ and we have the estimate
\be
\abs{\Psi(x)} \leq C_{\uppsi, \uppsi'} \sqrt{P(x)},
\ee
which follows easily from the Cauchy-Schwartz inequality. This estimate is essentially sharp: we can find initial data in $\bm{H}^1(\Sigma) $ such that $\Psi(x) = P(x)^{\frac{1}{2}+\epsilon}$ for any $\epsilon>0$. Now, we note that there exists $c$ such that
\be
P(x) \leq c e^{-x}.
\ee
We then immediately have that there exists $C$, depending on the initial data, and $t_0$ such that:
\be
\sup_{\rho \in \Sigma}\abs{\psi(\rho, t)} \leq C e^{-\frac{1}{2}t }, \qquad \textrm{for all } \ \ t  > t_0.
\ee
Thus for initial data in $\bm{H}^1(\Sigma)$, the optimal decay we can expect is $e^{-\frac{1}{2}t}$. Furthermore, this rate of decay is clearly determined by properties of the initial data at the horizon.

Let us now assume that the initial data belongs to $\bm{H}^k(\Sigma)$. This implies that $f\in H^{k-1}(\Sigma) \subset C^{k-2, \frac{1}{2}}(\Sigma)$. As a result, we have an expansion:
\be
f(y) = f_1 + 2 f_2 y + \ldots + (k-1)  f_{k-1} y^{k-2} + F_k(y),
\ee
where 
\be
\sup_{y\in[0, 1]} F_k(y) y^{2-k-\frac{1}{2}} <\infty.
\ee
Integrating, we deduce
\be
\Psi(x) = f_1 P(x) + f_2 P(x)^2 + \ldots + f_{k-1} P(x)^{k-1} + R_k(P(x)),
\ee
where
\be
R_k(y) \leq C y^{k-\frac{1}{2}}.
\ee
Now let us note that $P(x) = \sum_{n=1}^\infty P_n e^{-n x}$, with the sum converging uniformly for $x>X$. As a result, we deduce:
\begin{Lemma}\label{exQNMexp}
Suppose $\psi$ is a solution of \eq{exKG}, where $(\uppsi^0, \uppsi^1) \in \bm{H}^k(\Sigma)$. Then there exist constants $f'_i$ depending on the initial data such that
\be
\psi(\rho, t) = \sum_{n=1}^{k-1} f'_n e^{-n t} u_n(\rho) + \psi_k(\rho, t),
\ee
where
\ben{exQNM}
u_n(\rho) = e^{-n \rho}\left[1-\left(\frac{\rho}{2-\rho}\right)^n \right],
\een
and furthermore there exists $C$, depending on the initial data, such that:
\ben{decest}
\sup_{\rho \in \Sigma} \abs{\psi_k(\rho, t)} \leq C e^{-\left (k-\frac{1}{2}\right )t}.
\een
\end{Lemma}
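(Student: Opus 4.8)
The plan is to work directly from the explicit representation $\psi(\rho,t)=\Psi[u]-\Psi[v]$ of the preceding lemma, with $u=t+\rho-1$ and $v=t+2\tanh^{-1}(1-\rho)+\rho-1$, and to feed into each copy of $\Psi$ two asymptotic expansions: the near-origin Taylor expansion of the data profile $f$, and the small-$e^{-x}$ expansion of the auxiliary function $P(x)$. The case $k=1$ is precisely the $e^{-t/2}$ bound derived just before the statement, so I would assume $k\ge 2$ throughout and recover the claimed sum of modes from the combination of these two expansions.

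First I would record the regularity input. Since $(\uppsi^0,\uppsi^1)\in\bm{H}^k(\Sigma)=H^k(\Sigma)\times H^{k-1}(\Sigma)$, the profile $f(y)=-\tfrac12(1+(1-y)^2)(\uppsi^0_\rho-\uppsi^1)$ lies in $H^{k-1}(\Sigma)$, and the one-dimensional Sobolev embedding gives $f\in C^{k-2,\frac12}(\Sigma)$. Taylor expanding $f$ about $y=0$ to order $k-2$ with H\"older-$\tfrac12$ remainder yields $f(y)=\sum_{i=1}^{k-1} i f_i\,y^{i-1}+F_k(y)$ with $|F_k(y)|\le C\,y^{k-3/2}$, and integrating from $0$ to $P(x)$ produces $\Psi(x)=\sum_{n=1}^{k-1} f_n P(x)^n+R_k(P(x))$ with $|R_k(y)|\le C\,y^{k-1/2}$, exactly as quoted above. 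Next I would pin down the expansion of $P$: its defining relation is equivalent to $e^{-x}=\tfrac{P e^{1-P}}{2-P}$, so $e^{-x}$ is an analytic function of $P$ near $P=0$ with non-vanishing derivative, and the inverse function theorem gives a series $P(x)=\sum_{m\ge1}P_m e^{-mx}$ converging uniformly for $x\ge X$, with $P_1=2/e$.

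Substituting this series into the finite sum $\sum_{n=1}^{k-1} f_n P(x)^n$ and collecting powers of $e^{-x}$ produces constants $a_1,\dots,a_{k-1}$, depending only on the data, with
\begin{equation*}
\sum_{n=1}^{k-1} f_n P(x)^n=\sum_{N=1}^{k-1} a_N e^{-Nx}+\O{e^{-kx}},\qquad x\ge X.
\end{equation*}
Evaluating at $x=u$ and $x=v$ and using $e^{-u}=e^{1-\rho}e^{-t}$ together with $e^{-2\tanh^{-1}(1-\rho)}=\tfrac{\rho}{2-\rho}$, hence $e^{-v}=e^{1-\rho}\tfrac{\rho}{2-\rho}e^{-t}$, the difference of the leading sums collapses termwise, since $e^{-Nu}-e^{-Nv}=e^{N}e^{-N\rho}\bigl[1-(\tfrac{\rho}{2-\rho})^N\bigr]e^{-Nt}=e^N u_N(\rho)\,e^{-Nt}$. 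This identifies $f'_N=a_N e^N$ and the modes $u_N$ of \eqref{exQNM}; note $u_N(1)=0$ encodes the Dirichlet condition at $\scri$.

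The remaining and most delicate step is to bound the error $\psi_k$ uniformly over the entire slice $\rho\in[0,1]$, including the endpoints $\rho=0$ and $\rho=1$. This error comprises the two remainders $R_k(P(u))$, $R_k(P(v))$ together with the two higher-order tails. The decisive observation is that on $\Sigma_t$ one has $v-u=2\tanh^{-1}(1-\rho)\ge0$, so $v\ge u\ge t-1$; consequently $P(u),P(v)\le c\,e^{-(t-1)}$ and each error term is dominated by a fixed multiple of $e^{-(k-1/2)u}\le e^{k-1/2}e^{-(k-1/2)t}$, uniformly in $\rho$. This simultaneous control through $v\ge u$ is what allows the single rate $e^{-(k-1/2)t}$ to govern both arguments even though $v\to+\infty$ near the horizon; the remainder of the argument is the bookkeeping of composing the two power series and verifying that all constants depend only on the initial data.
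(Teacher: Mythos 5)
Your proposal is correct and follows essentially the same route as the paper: the paper likewise uses $f\in H^{k-1}(\Sigma)\subset C^{k-2,\frac{1}{2}}(\Sigma)$ to expand $f$ to order $k-2$ with H\"older-$\frac{1}{2}$ remainder, integrates to obtain $\Psi(x)=\sum_{n=1}^{k-1}f_nP(x)^n+R_k(P(x))$ with $R_k(y)\leq Cy^{k-\frac{1}{2}}$, and substitutes the uniformly convergent expansion $P(x)=\sum_{m\geq 1}P_m e^{-mx}$ into the two arguments of $\Psi$. The extra details you supply --- the inverse-function-theorem justification of the series for $P$ from $e^{-x}=Pe^{1-P}/(2-P)$, the explicit termwise collapse $e^{-Nu}-e^{-Nv}=e^{N}u_N(\rho)e^{-Nt}$, and the uniform error control via $v\geq u\geq t-1$ --- are precisely the steps the paper leaves implicit behind ``as a result, we deduce,'' and they check out.
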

Again, we can show that the estimate \eq{decest} is essentially sharp by choosing $f(y) = y^{k-\frac{3}{2}+\epsilon}$ for any $\epsilon>0$. There is no reason in general that $C$ should decrease as $k$ increases, so the expansion of Lemma \ref{exQNMexp} is at best an asymptotic expansion.

We note here that our intuition from the remarks following Definition \ref{QNSdef} appears to be sound. For data in $\bm{H}^k(\Sigma)$ there is an optimal decay rate of roughly $e^{-k \varkappa t}$ that can possibly be achieved. The barrier to improved decay is trapping at the horizon. By Laplace transforming, we have the following corollary:
\begin{Corollary}
Let $\hat{\psi}(\rho, s)$ be the Laplace transform of $\psi(\rho, t)$:
\be
\hat{\psi}(\rho, s) = \int_0^\infty \psi(\rho, t)e^{-s t} dt.
\ee
Then $\hat{\psi}(\rho, s)$ has a meromorphic extension to the half-plane $\Re(s) > -(k-\frac{1}{2})$, with poles at $s = -1, -2, \ldots, -(k-1)$. The residue at the pole at $s=-n$ is
\be
\mathrm{Res}(\hat{\psi}(\cdot, s), -n)= f'_n u_n.
\ee
\end{Corollary}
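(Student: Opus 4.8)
The plan is to obtain the claimed meromorphic extension by Laplace transforming, term by term, the asymptotic expansion furnished by Lemma \ref{exQNMexp}. Since the initial data lie in $\bm{H}^k(\Sigma)$, that Lemma writes $\psi(\rho, t) = \sum_{n=1}^{k-1} f'_n e^{-n t} u_n(\rho) + \psi_k(\rho, t)$, so the corollary will follow once we recognise each of the finitely many exponential terms as the source of a simple pole, while the remainder $\psi_k$ contributes a term holomorphic on the whole half-plane.

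First I would observe that, for $\Re(s) > 0$, the integral defining $\hat\psi(\rho, s)$ converges absolutely and uniformly in $\rho$: the modes $u_n$ of \eq{exQNM} are bounded on $[0,1]$ and $\psi_k$ decays, so $\psi$ is bounded on $\bhR$ and $\hat\psi(\cdot, s)$ is a well-defined, holomorphic, $C^0(\Sigma)$-valued function there. On this half-plane I would substitute the expansion and integrate each piece. Each exponential term transforms according to
\[
\int_0^\infty f'_n e^{-n t} u_n(\rho)\, e^{-s t}\, dt = \frac{f'_n u_n(\rho)}{s+n}, \qquad \Re(s) > -n,
\]
which continues to a function meromorphic on all of $\C$ with a single simple pole at $s = -n$ whose residue is exactly $f'_n u_n$. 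Summing over $n = 1, \ldots, k-1$ produces a function with simple poles precisely at $s = -1, \ldots, -(k-1)$ and the asserted residues.

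The core of the argument is to control the transform of the remainder, $\hat\psi_k(\rho, s) := \int_0^\infty \psi_k(\rho, t) e^{-s t}\, dt$. Here I would invoke the uniform decay estimate \eq{decest}, $\sup_{\rho}\abs{\psi_k(\rho, t)} \le C e^{-(k-\frac{1}{2})t}$: for any $s$ with $\Re(s) > -(k-\frac{1}{2})$ the integrand is dominated by $C e^{-(\Re(s) + k - \frac{1}{2})t}$, so the integral converges absolutely and locally uniformly in $s$, again as a $C^0(\Sigma)$-valued function. Holomorphy on $\{\Re(s) > -(k-\frac{1}{2})\}$ then follows either by differentiating under the integral sign or by Morera's theorem combined with Fubini. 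Combining the three contributions gives, for $\Re(s) > 0$,
\[
\hat\psi(\rho, s) = \sum_{n=1}^{k-1} \frac{f'_n u_n(\rho)}{s+n} + \hat\psi_k(\rho, s),
\]
and since the right-hand side is meromorphic on the strictly larger half-plane $\{\Re(s) > -(k-\frac{1}{2})\}$ with poles only at the stated points, the identity theorem identifies it as the meromorphic continuation of $\hat\psi$, with residues as claimed.

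The main obstacle is not in any single step but in making the remainder estimate do its work cleanly: one must ensure that \eq{decest} holds uniformly in $\rho$ (which it does, being a sup-norm bound) so that the continuation is valid simultaneously for all $\rho \in \Sigma$ and the residues are genuine functions on $\Sigma$ rather than pointwise limits. A minor remark worth recording is that the residue functions $u_n$ of \eq{exQNM} are exactly the $\bm{H}^k$-quasinormal modes of this black hole at the frequencies $s = -n$, so that this corollary exhibits, in a completely explicit example, the meromorphic structure predicted in general by Theorem \ref{mainthm}.
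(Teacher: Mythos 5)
Your proposal is correct and follows essentially the same route as the paper, which obtains the corollary precisely by Laplace transforming the expansion of Lemma \ref{exQNMexp} term by term: each mode contributes the simple pole $f'_n u_n/(s+n)$, while the uniform decay bound \eq{decest} renders the transform of the remainder $\psi_k$ holomorphic on $\Re(s) > -(k-\frac{1}{2})$. The details you supply that the paper leaves implicit (absolute convergence, holomorphy via Morera/Fubini or differentiation under the integral, and the identity theorem to identify the continuation) are exactly the standard steps needed to make the argument rigorous.
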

This result is closely related to that of Theorem \ref{mainthm}, since for a semigroup $\mathcal{S}(t)$, the resolvent acting on $u$, i.e.\ $(\A-s)^{-1}u$ is given by the Laplace transform of $\mathcal{S}(t) u$ for $\Re(s)$ sufficiently large. We know that $(\A-s)^{-1}$ is meromorphic on $\Re(s) > -(k-\frac{1}{2})$, so by the uniqueness of the meromorphic extension, $(\A-s)^{-1}$ has poles at $s = -1, -2, \ldots, -(k-1)$.

We may directly verify that $u_n(\rho)$ are the quasinormal modes of the strongly hyperbolic operator $L = A \Box_g$, and that the quasinormal frequencies, as we've defined them, are indeed $s=-n$. We can easily find $\hL$ to be
\be
\hL u =  \frac{-1}{1+(1-\rho)^2} \left[ \frac{\partial}{\partial \rho}\left(\rho(2-\rho) \frac{\partial u}{\partial \rho} \right) + s (1-\rho)^2 \frac{\partial u}{\partial \rho}+ s\frac{\partial}{\partial \rho}\left( (1-\rho)^2 u\right)  \right] + s^2 u.
\ee
This is an ordinary differential operator on $[0, 1]$, with a regular singular point at $\rho=0$ whose indicial equation has roots $0, -s$. If we seek a solution of $\hL u = 0$ in $H^k(\Sigma)$, then regularity at the origin permits us to discard the solution behaving like $\rho^{-s}$ provided that $\Re(s)>-(k-\frac{1}{2})$, \emph{and} that $-s \not \in \mathbb{N}$. For $\Re(s)<-(k-\frac{1}{2})$ both branches belong to $H^k(\Sigma)$. At $\rho =1$, we have a Dirichlet boundary condition. Our previous results imply that there exist at most a countable number of isolated points $s$ with $\Re(s)>-(k-\frac{1}{2})$ such that an $H^k(\Sigma)$ solution to the equation $\hL u =0$ satisfying the boundary conditions exists.

In fact, a general solution of $\hL u = 0$, for $s\neq 0$ is given by a linear combination of the functions
\be
w_1(\rho) = e^{s \rho}, \qquad w_2(\rho) = e^{s \rho}\left[ 1-\left(\frac{2-\rho}{\rho} \right)^s \right].
\ee
These are always linearly independent. Clearly $w_2$ is the unique solution, up to a constant multiple, which satisfies the Dirichlet condition at $\rho = 1$. This solution belongs to $H^k(\Sigma)$ if either $\Re(s)<-(k-\frac{1}{2})$ or if $s = -1-2, \ldots, -(k-1)$. The point spectrum\footnote{The case $s=0$ can be dealt with separately, and does not yield an eigenvalue.} of $(D^k(\A), \A)$ can be seen to be the union of these two sets. In fact, a little more work shows the line $\Re(s)=-(k-\frac{1}{2})$ belongs to the continuous spectrum. With our definition the quasinormal spectrum is simply
\be
\Lambda^k_{QNF} = \{-1-2, \ldots, -(k-1) \}, \qquad \Lambda_{QNF} = -\mathbb{N}.
\ee
As advertised above, each quasinormal frequency corresponds to a smooth quasinormal mode, given by \eq{exQNM}. The spectrum is shown schematically in Figure \ref{fig4}.

\begin{figure}[t]
\centering
\input{spectrum.tex}
\caption{The spectrum of $(D^k(\A),\A)$}
\label{fig4}
\end{figure}

\subsection{Completeness of the quasinormal mode spectrum}

We have now shown how our previous results apply to a concrete example for which calculations are relatively straightforward. We will conclude this section by discussing the completeness of the quasinormal modes. By analogy with the case of modes in a finite domain, it is sometimes supposed that the quasinormal modes are in some sense complete. In other words, that an arbitrary solution of a strongly hyperbolic equation may be expanded as
\be
\psi \stackrel{?}{=} \sum_{n=1}^\infty f_n e^{s_n t} u_n,
\ee
where $ \Lambda_{QNF}= \{s_n: n=1, \ldots\}$ with corresponding quasinormal modes $u_n$, where the sum should converge in some appropriate space. From the preceding discussion, we can deduce this that this is emphatically \emph{not} the case. It is clear that by taking initial data of compact support we can construct a solution whose quasinormal expansion is identically zero, but which remains arbitrarily large for an arbitrarily long time. The best we can achieve is an asymptotic expansion for large times. By restricting the data such that $\Psi$ is real analytic on $[-1, \infty]$ and equal to its expansion around $\infty$, we could produce solutions which are equal to their QNM expansions, however real analyticity is a very strong condition and is not consistent with Lorentzian causality.

The incompleteness is also apparent from the fact that $\A$ is not self-adjoint (or even normal), so we have no right to expect its spectrum to be complete. We conjecture that this is a general feature of quasinormal modes, i.e.\ the quasinormal spectrum is \emph{never} complete.

\section{Relation to resonances}\label{ressec}

In this section we shall relate our definition of the quasinormal modes back to the standard language of  resonances. We need to first adjust our definition of asymptotically AdS black holes. In order to define the unitary resolvent, whose meromorphic extension we wish to construct, we need to consider our black hole region $\bhR$ as embedded in a larger manifold, the extended black hole region, which includes a bifurcation surface. We do not need to consider a region with the full bifurcate horizon, but in practical examples, one would expect a white hole horizon to be present.

Having defined the extended black hole region $\bhR_e$, we are then able to define the unitary resolvent and show directly from our previous results that it admits a meromorphic extension.

\subsection{The extended black hole region}
Let us now define the extended black hole region.
\begin{Definition}\label{extadsbh}
We say that $(\bhR_e, \mathscr{H}, \scri, \Sigma', \Sigma, g, r, T)$ is an extended globally stationary, asymptotically anti-de Sitter, black hole space time with AdS radius $l$ if the following holds
\begin{enumerate}[i)]
\item $\bhR_e$ is a $(d+1)$-dimensional manifold with stratified boundary $ \mathscr{H}\cup \Sigma'\cup \scri$, where $\Sigma'$ is a compact manifold whose boundary has two components: $S:= \mathscr{H} \cap \Sigma'$ and $\Sigma' \cap\scri$ which are compact, connected, manifolds. We denote by $\hor$ the surface $ \mathscr{H} \setminus S$.
\item $g$ is a smooth Lorentzian metric on $\bhR_e\setminus \scri$.
\item $\scri$ is an asymptotically AdS end of $(\mathring{\bhR}_e, g)$ of AdS radius $l$, asymptotic radial coordinate $r$ and such that
\be
\bhR_e = \mathcal{D}^+(\Sigma').
\ee 
We assume $r$ extends to a smooth positive function throughout $\bhR_e$.
\item $\Sigma'$ is everywhere spacelike with respect to $g$, whereas $ \mathscr{H}$ is null. $\Sigma$ is an everywhere spacelike surface to the future of $\Sigma'$, intersecting $\hor$ and $\scri$, such that every future directed timelike curve which does not end on $ \mathscr{H}$ or $\scri$ eventually enters $\mathcal{D}^+(\Sigma) =: \bhR$.
\item $T$ is a Killing field of $g$ which vanishes on $S$, is normal to $\hor$,  transverse to $\Sigma'\setminus S$ and timelike in a neighbourhood of $\hor$, thus $\hor$ is a Killing horizon generated by $T$, which we assume to be a non-extremal black hole horizon.
\item $T\in\mathfrak{X}_\scri(\bhR)$, as a result, w.l.o.g. we may assume $\mathcal{L}_T r=0$. 
 \item If $\varphi_t$ is the one-parameter family of diffeomorphisms generated by $T$, then $\bhR_e \setminus  \mathscr{H}$ is smoothly foliated by  $\varphi_\tau(\Sigma'):=\Sigma'_\tau$, $\tau \geq 0$. $\bhR$ is smoothly foliated by $\varphi_t(\Sigma):=\Sigma_t$, $t\geq 0$, and $t$ extends to a time coordinate on $\bhR_e$ satisfying $\left. t\right|_{\Sigma_0}=0$, $T(t)=1$.
 \item $T$ is timelike on $\bhR_e \setminus\mathscr{H}$.
\end{enumerate}
\end{Definition}
We can extend this to a definition for locally stationary black holes by replacing assumption $viii)$ with assumption $viii)'$ of Definition \ref{locstatbh}. We will refer to the slicing by $\Sigma'_\tau$ as the unitary slicing, since for this slicing the $T-$energy is conserved for solutions of the Klein-Gordon equation. This is a consequence of the fact that all the $\Sigma'_\tau$ slices have a common boundary, $S$. We refer to the slicing by $\Sigma_t$ as the regular slicing, since it is regular on $\hor$. Figures \ref{fig5}, \ref{fig6} show these two slicings, together with the region they foliate.

\subsubsection{AdS-Schwarzschild}

Consider the region of the plane
\be
\Delta = \{ (U, V) \in \R^2: U \geq 0, V \geq U, UV \leq 1 \}.
\ee
We define a manifold with stratified boundary:
\be
\mathscr{R}_e = \Delta_{U,V} \times S^2_\omega,
\ee
where $\Delta$ carries the canonical differential structure inherited from $\R^2$, and we identify the boundary components as:
\begin{align*}
\mathscr{H} &= \{(U, V) \in\Delta, U=0\} \times S^2, \\
\scri &= \{(U, V) \in\Delta, UV=1\} \times S^2, \\
\Sigma' &= \{ (U, V) \in\Delta, V=U\} \times S^2, \\
S&= \{ (0,0)\in \Delta\} \times S^2.
\end{align*}
We take $r_+$ to be the largest positive root of the function
\be
f(x) = 1 - \frac{2m}{x} + \frac{x^2}{l^2}.
\ee
We note that $\lambda:=f'(r_+)>0$ so we may define a diffeomorphism $R: (0, 1) \to (r_+ , \infty)$ by the implicit condition
\be
-\lambda \int_{R(x)}^\infty \frac{ds}{f(s)}= \log x, \qquad \textrm{for all } x \in (0,1).
\ee
Since $f(s)$ has a simple zero at $s=r_+$, and for large $s$ we have $f(x) = l^{-2} s^2 + \O{1}$, we may easily show that
\be
R(x) = r_++ x + \O{x^2} \qquad \textrm{ as }x \to 0,
\ee
and
\be
\frac{1}{R(x)} = \frac{1-x}{l^2\lambda^2}+ \O{(1-x)^2} \qquad \textrm{ as }x \to 1.
\ee
so we may take as asymptotic radial coordinate the function $r: \mathscr{R}_e \setminus \scri \to [r_+, \infty)$ defined by:
\be
r(U, V, \omega) = R(UV)  
\ee
It is straightforward to verify that $r^{-1}$ is a boundary defining function for $\scri = \{UV=1\}$. We define the metric on $\bhR\setminus \scri$ to be
\be
g = \frac{4}{\lambda^2 }\frac{f(r)}{UV} dU dV + r^2 d\Omega_2^2.
\ee
We see that since $(R(x)-r_+)$ has a simple zero at $x=0$, this metric is regular at $\mathscr{H}$. $g$ admits a Killing vector, given by:
\be
T = \frac{\lambda}{2} \left(V \frac{\partial}{\partial V} - U \frac{\partial}{\partial U}\right).
\ee
A simple calculation shows that $g(T, T) = -f(r)$, so that $T$ is timelike in $\bhR_e \setminus \mathscr{H}$, tangent to $\scri$ and  null on $\hor$. Moreover $T$ vanishes on the sphere $S=\{U=V=0\}$. 

Let us define further the function
\be
t(U, V) = \frac{2}{\lambda} \log V - l \arctan \frac{R(UV)}{l}.
\ee
We leave it as an exercise for the interested reader to show that taking $(t, r, \omega)$ as coordinates on the region $\{V>0\}$, the metric has the form \eq{gdd} previously given for the Schwarzschild-AdS metric. Thus we may take $\Sigma = \{t=0\}$. Finally then we have

\begin{Lemma}
$(\bhR_e, \mathscr{H}, \scri, \Sigma', \Sigma, g, r, T)$ as defined above constitute an extended globally stationary, asymptotically anti-de Sitter black hole spacetime. The AdS-Schwarzschild black hole defined in \S\ref{schwarzschild section} may be identified with the region $\mathcal{D}^+(\Sigma)$ within this extended spacetime.
\end{Lemma}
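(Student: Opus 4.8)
The plan is to verify, one clause at a time, that the data $(\bhR_e, \mathscr{H}, \scri, \Sigma', \Sigma, g, r, T)$ satisfy the eight conditions $i)$--$viii)$ of Definition \ref{extadsbh}, and then to exhibit the coordinate change identifying $\mathcal{D}^+(\Sigma)$ with the metric \eqref{gdd}. The great majority of these checks are direct computations resting on the two algebraic facts $T(UV)=0$ and $g(T,T)=-f(r)$. From $T(UV)=0$ one gets at once that $T$ preserves $r=R(UV)$ and the conformal factor $f(r)/(UV)$, while $\mathcal{L}_T\,dU=-\tfrac{\lambda}{2}dU$ and $\mathcal{L}_T\,dV=\tfrac{\lambda}{2}dV$ give $\mathcal{L}_T(dU\,dV)=0$; together these show $T$ is Killing, which is the core of $v)$. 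The relation $g(T,T)=-f(r)$ then yields $viii)$ and the timelikeness clauses of $v)$ immediately, since $f>0$ precisely for $r>r_+$, i.e. $UV>0$; it also shows $T$ is null on $\hor=\{U=0,\,V>0\}$, and $T$ vanishes at $S=\{U=V=0\}$ by inspection. That $T$ is normal to the null surface $\{U=0\}$ follows from $g^{UU}=0$ (the only nonzero inverse component being $g^{UV}=(g_{UV})^{-1}$), and transversality to $\Sigma'\setminus S$ is the observation that $T\propto\partial_V-\partial_U$ there while $\Sigma'$ is tangent to $\partial_U+\partial_V$. The structural claims in $i)$ are read off from $\Delta$: $\Sigma'=\{U=V\}\cong[0,1]\times S^2$ is compact with boundary $S\sqcup(\Sigma'\cap\scri)$, both copies of $S^2$. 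Finally $vi)$ is $T(r)=0$ plus tangency of $T$ to $\{UV=1\}$, and the identity $T(t)=1$ needed for $vii)$ is the one-line computation $T(\tfrac{2}{\lambda}\log V)=1$, $T(\arctan(R(UV)/l))=0$.

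The analytic heart of the verification is the smoothness of $g$ across $\mathscr{H}$ required by $ii)$. Here I would use that $f$ has a simple zero at $r_+$ with $\lambda=f'(r_+)>0$ together with the stated expansion $R(x)=r_++x+\O{x^2}$, so that $f(R(P))=\lambda P+\O{P^2}$ as $P=UV\to0$. Consequently $h(P):=f(R(P))/P$ extends to a smooth, strictly positive function of the single variable $P$ on $[0,1)$ with $h(0)=\lambda$; since $UV$ is smooth on $\Delta\times S^2$ and $r=R(UV)$ is smooth, the coefficient $\tfrac{4}{\lambda^2}h(UV)$ is smooth and $g$ is a nondegenerate Lorentzian metric on $\bhR_e\setminus\scri$, regular across $\mathscr{H}$. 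For the asymptotically anti-de Sitter end $iii)$ I would pass to the $(t,r,\omega)$ coordinates: the expansion $1/R(x)\sim(1-x)/(l^2\lambda^2)$ shows $r^{-1}$ is a boundary defining function for $\{UV=1\}$ vanishing to first order, and rather than recompute $g_{rr},g_{r\alpha},g_{\alpha\beta}$ directly I would invoke the coordinate identity (asserted in the construction) that in $(t,r,\omega)$ the metric takes the form \eqref{gdd}, so the conditions of Definition \ref{adsenddef} are inherited verbatim from the verification already carried out in \S\ref{schwarzschild section}. In particular the AdS radius is $l$ and the surface gravity equals the $\varkappa$ of \S\ref{schwarzschild section}, non-extremal because $\lambda>0$.

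The genuinely geometric, and I expect hardest, part is the global causal and foliation content of $iii)$, $iv)$ and $vii)$, together with the concluding identification. The flow of $T$ is explicitly $\varphi_\tau:(U,V)\mapsto(e^{-\lambda\tau/2}U,\,e^{\lambda\tau/2}V)$, which preserves $UV$ and sends $\{U=V\}$ to $\{V/U=e^{\lambda\tau}\}$; letting $\tau$ range over $[0,\infty)$ sweeps out exactly $\{0<U\le V,\,UV\le1\}\times S^2$ and degenerates along $\mathscr{H}$, giving the foliation $\bhR_e\setminus\mathscr{H}=\bigcup_{\tau\ge0}\Sigma'_\tau$ of $vii)$. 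Because $T(t)=1$, the regular slices are the level sets $\Sigma_t=\{t=\mathrm{const}\}$; evaluating $t$ on $\{U=0\}$ shows $\Sigma=\{t=0\}$ meets $\hor$ at $V=\exp(\tfrac{\lambda l}{2}\arctan(r_+/l))>0$, i.e. strictly to the future of $S$, which is precisely what makes the slicing regular rather than unitary. The spacelike character of $\Sigma'$ and $\Sigma$ follows from $g(\partial_U+\partial_V,\partial_U+\partial_V)=\tfrac{4f}{\lambda^2UV}>0$. The causality statements $\bhR_e=\mathcal{D}^+(\Sigma')$ and the requirement in $iv)$ that every future timelike curve not ending on $\mathscr{H}\cup\scri$ enters $\mathcal{D}^+(\Sigma)=\bhR$ I would prove from the null-coordinate picture: future-directed causal curves are monotone in each of $U$ and $V$, the achronal loci $\mathscr{H}=\{U=0\}$ and $\scri=\{UV=1\}$ bound $\Delta$, so any such curve either escapes to $\scri$, crosses $\hor$, or enters the future region of $\Sigma$. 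The main obstacle is organising this causal bookkeeping cleanly in the presence of the timelike conformal boundary $\scri$, for which I would lean on the initial--boundary-value framework of Theorem \ref{WPThm} rather than on ordinary global hyperbolicity. The lemma then closes by noting that the $(t,r,\omega)$ chart realises $\mathcal{D}^+(\Sigma)$ as exactly the $\{V>0\}$ exterior carrying the metric \eqref{gdd}, which is the AdS-Schwarzschild black hole of \S\ref{schwarzschild section}.
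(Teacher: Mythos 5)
Your proposal is correct and coincides with the paper's (implicit) argument: the paper states this Lemma without a proof, the verification being exactly the inline computations of the construction — $T(UV)=0$ and $g(T,T)=-f(r)$ for the Killing, normality and causal-character clauses, the expansions of $R(x)$ at $x=0$ and $x=1$ for smoothness of $g$ across $\mathscr{H}$ and for $r^{-1}$ being a boundary defining function, and the change to $(t,r,\omega)$ coordinates recovering \eq{gdd} so that the asymptotically AdS end and the identification with \S\ref{schwarzschild section} are inherited — which is precisely what you flesh out clause by clause. Two cosmetic touch-ups only: the spacelikeness of $\Sigma$ (as opposed to $\Sigma'$) is read off from the Riemannian induced metric of \eq{gdd} on $\{t=\mathrm{const.}\}$ rather than from the $\partial_U+\partial_V$ computation, and $\mathcal{D}^+(\Sigma)$ is the $\{t\geq 0\}$ portion of $\{V>0\}$, matching $[0,\infty)_t$ in \S\ref{schwarzschild section}.
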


\begin{figure}[t]
\begin{minipage}[b]{0.45\linewidth}
\centering
\input{exadsbh1.tex}
\caption{Schematic Penrose diagram showing the unitary slicing of $\bhR_e$}
\label{fig5}
\end{minipage}
\hspace{0.8cm}
\begin{minipage}[b]{0.45\linewidth}
\centering
\input{exadsbh2.tex}
\caption{Schematic Penrose diagram showing the regular slicing of $\bhR$}
\label{fig6}
\end{minipage}
\end{figure}

\subsection{The unitary resolvent}\label{unitary}

We'll consider again the initial-boundary value problem for a strongly hyperbolic operator $L$, defined as in previous sections. After fixing stationary homogeneous boundary conditions at infinity, we consider solutions of
\ben{resKG}
L \psi = 0\textrm{ in }\bhR_e, \qquad \left. \psi \right |_{\Sigma'_0} = \uppsi, \ \left.T\psi \right |_{\Sigma'_0} = \uppsi',
\een
We want to think of the evolution in $\tau$ of this system as a semigroup on a Hilbert space, as in previous sections. To do this, we note that $\uppsi \in H^1_{loc.}(\Sigma')$ and $\uppsi' \in L^2_{loc.}(\Sigma')$ determine a unique 1-jet, $\Psi$ defined on $\Sigma'_0$ such that
\be
\left. \Psi \right |_{\Sigma_0} = \uppsi, \qquad \left.T  \Psi \right |_{\Sigma_0} = \uppsi'.
\ee
We then define the $\bm{H}^1(\Sigma')$ norm to be
\be
\norm{(\uppsi, \uppsi')}{\bm{H}^1(\Sigma')}^2 = \norm{\Psi}{\H^1(\Sigma'_0, \kappa)}^2 +  \norm{\hat{n}_{\Sigma'_0} \Psi}{\L^2(\Sigma'_0)}^2,
\ee
where we recall that $\hat{n}_{\Sigma'_0}= r n_{\Sigma'_0}$ is the rescaled normal of $\Sigma'$. It's also useful to introduce the norms on the marginal spaces\footnote{We apologise for the proliferation of different function spaces. In this case they are made necessary by the fact that $T$ vanishes on $S$.} $\overline{H}^1(\Sigma', \kappa)$, $\overline{L}^2(\Sigma')$ by
\be
\norm{\psi}{\overline{H}^1(\Sigma', \kappa)} = \norm{(\psi, 0)}{\bm{H}^1(\Sigma')}, \qquad \norm{\psi}{\overline{L}^2(\Sigma')} = \norm{(0,\psi)}{\bm{H}^1(\Sigma')}.
\ee

By Theorem \ref{WPThm}, we know that a unique solution of \eq{resKG} exists for $(\uppsi, \uppsi') \in\bm{H}^1(\Sigma')$. Constructing a vector field which agrees with $n_{\Sigma'_\tau}$ near $S$ and $T$ near $\scri$ to use as a multiplier for the energy-momentum tensor, it is straightforward to show that:
\ben{reseng}
\norm{\left.(\psi,T\psi)\right|_{\Sigma'_\tau}}{\bm{H}^1(\Sigma')}^2 \leq Ce^{M\tau}\norm{(\uppsi, \uppsi')}{\bm{H}^1(\Sigma')}^2.
\een
Notice here that these are \emph{not} the spaces associated with the $T$-energy on $\Sigma'_\tau$, which is degenerate on the bifurcation surface $S$. Since we are going to be constructing a meromorphic extension by restricting to functions vanishing near the horizon, this is not especially important. 

An immediate consequence of this is the following:
\begin{Lemma}\label{unires}
Let $\overline{\mathcal{S}}(\tau) (\uppsi, \uppsi') = \left . (\psi, T\psi)\right|_{\Sigma'_\tau}$, where $\psi$ solves \eq{resKG}. Then $\overline{S}(\tau)$ is a $C^0$-semigroup on $\bm{H}^1(\Sigma')$. The infinitesimal generator of this semigroup, $(D(\overline{A}),\overline{\A})$ is a closed, unbounded, operator such that the resolvent $(\overline{\A}-s)^{-1}$ is holomorphic in the half-plane $Re(s)>M$. 
\end{Lemma}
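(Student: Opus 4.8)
The plan is to mirror the proofs of Theorem \ref{semigroup} and Theorem \ref{gentheorem}, with the estimate \eq{reseng} now playing the role that Corollary \ref{cor2} played there. First I would observe that Theorem \ref{WPThm}, applied on the extended region $\bhR_e$ with the unitary slicing, guarantees that for every $(\uppsi,\uppsi')\in\bm{H}^1(\Sigma')$ there is a unique weak solution $\psi$ of \eq{resKG}, so that $\overline{\mathcal{S}}(\tau):\bm{H}^1(\Sigma')\to\bm{H}^1(\Sigma')$ is well-defined; the bound \eq{reseng} then shows it is a bounded operator with $\|\overline{\mathcal{S}}(\tau)\|\leq \sqrt{C}\,e^{M\tau/2}$.

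Next I would verify the three defining properties of a $C^0$-semigroup, exactly as in Theorem \ref{semigroup}. The identity $\overline{\mathcal{S}}(0)=I$ is immediate from well-posedness. The semigroup law $\overline{\mathcal{S}}(\tau+\tau')=\overline{\mathcal{S}}(\tau)\overline{\mathcal{S}}(\tau')$ follows from the invariance of \eq{resKG} under the flow $\varphi_{\tau'}$ of the Killing field $T$ (which preserves $\bhR_e\setminus\mathscr{H}$ and the slicing $\Sigma'_\tau$, fixing the common boundary $S$) together with uniqueness: solving on $[0,\tau+\tau']$ may be done by first solving on $[0,\tau']$ and then on $[\tau',\tau+\tau']$. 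Strong continuity reduces to showing that $\tau\mapsto\overline{\mathcal{S}}(\tau)x$ is a $C^0$ curve for each fixed $x$, and this is where the real work sits.

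For strong continuity I would first establish it on the dense subspace of smooth data in $C^\infty_{bc}(\bhR_e;\C^N)$ launching smooth solutions, where the time-regularity of $\psi$ makes $\tau\mapsto(\psi,T\psi)|_{\Sigma'_\tau}$ manifestly continuous in $\bm{H}^1(\Sigma')$ — this being the analogue on $\bhR_e$ of Corollary \ref{cor3}. I would then extend to all of $\bm{H}^1(\Sigma')$ by density: given $x_n\to x$ with each curve $\tau\mapsto\overline{\mathcal{S}}(\tau)x_n$ continuous, the bound \eq{reseng}, uniform for $\tau$ in compact intervals, upgrades pointwise to uniform-on-compacta convergence of the curves, so the limit curve $\tau\mapsto\overline{\mathcal{S}}(\tau)x$ is continuous. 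This is precisely the standard argument behind the strong continuity clause of Theorem \ref{semigroup}.

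Finally, once $\overline{\mathcal{S}}(\tau)$ is known to be a $C^0$-semigroup with the exponential bound above, the abstract theory invoked in Theorem \ref{gentheorem} applies verbatim: the generator $(D(\overline{\A}),\overline{\A})$ exists, is densely defined and closed, and for $\Re(s)>M$ the resolvent is given by the Laplace representation
\be
(\overline{\A}-s)^{-1} = -\int_0^\infty e^{-s\tau}\,\overline{\mathcal{S}}(\tau)\,d\tau,
\ee
a Bochner integral converging absolutely in operator norm (since $\Re(s)>M$ dominates the growth rate $M/2$), and holomorphic in $s$ by differentiation under the integral sign. The main obstacle is the strong-continuity step. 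The one feature special to the extended region is that $\bm{H}^1(\Sigma')$ is deliberately built from the rescaled normal $\hat{n}_{\Sigma'_0}$ rather than the $T$-energy, which degenerates on the bifurcation surface $S$ where $T$ vanishes; because \eq{reseng} is an estimate in this non-degenerate norm, the degeneracy of $T$ at $S$ never enters the semigroup argument.
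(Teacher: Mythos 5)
Your proposal is correct and follows essentially the same route as the paper, which presents Lemma \ref{unires} as ``an immediate consequence'' of the estimate \eq{reseng} obtained from well-posedness (Theorem \ref{WPThm}) and a multiplier interpolating between $n_{\Sigma'_\tau}$ near $S$ and $T$ near $\scri$, with the semigroup and resolvent conclusions imported verbatim from the arguments of Theorems \ref{semigroup} and \ref{gentheorem}. Your explicit treatment of strong continuity by density from smooth data, and the Laplace-transform representation of the resolvent, are exactly the standard steps the paper leaves implicit.
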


We will refer to $\overline{\mathcal{S}}(\tau)$ as the unitary semigroup and $(\overline{\A}-s)^{-1}$ as the unitary resolvent. The reason for this is that the backward evolution problem for this slicing is well-posed, i.e.\ we can extend the definition of $\overline{\mathcal{S}}(\tau)$ to $-\infty<\tau<\infty$ such that:
\be
\overline{\mathcal{S}}(\tau_1)\overline{\mathcal{S}}(\tau_2) = \overline{\mathcal{S}}(\tau_1+\tau_2), \qquad \textrm{ for all } -\infty<\tau_i<\infty,
\ee
In other words, $\overline{\mathcal{S}}(\tau)$ can be viewed as a representation of the group $(\R, +)$ as operators on $\bm{H}^1(\Sigma)$. This captures the notion that `information is not lost' when evolving with $\overline{\mathcal{S}}(\tau)$, whereas it inevitably is when evolving on the regular slicing with $\mathcal{S}(t)$. To justify the use of the nomenclature unitary, consider an element $\bm{u}$ of $\bm{H}^1(\Sigma'_\tau)$. Our usual approach is to use the isometry $\Sigma'_\tau \cong \Sigma'_0$, together with the embedding map $\iota:\Sigma' \to \Sigma'_0$ to induce an isometry $\bm{H}^1(\Sigma'_\tau)\cong \bm{H}^1(\Sigma')$. We could instead identify $\bm{u}\in \bm{H}^1(\Sigma'_\tau)$ with $(\overline{\mathcal{S}}(-\tau)\bm{u})\circ\iota \in \bm{H}^1(\Sigma)$. With respect to this identification of $\bm{H}^1(\Sigma'_\tau) \cong \bm{H}^1(\Sigma')$, the semigroup $\overline{\mathcal{S}}(\tau)$ is represented by unitary operators (in fact, the identity).

More usefully, for suitable $L$ (such as the Klein-Gordon operator with positive mass), the $T-$energy is conserved and positive definite, so $\overline{\mathcal{S}}(\tau)$ is represented by unitary operators on the Hilbert space derived from the $T-$energy. For general $L$ this need not be the case.

We can express $\overline{\A}$ in terms of the operator $L$ as follows. First we note that $L$ may be decomposed as
\be
L\psi  = \Omega^2 \left( \overline{P}_2 \psi + T \overline{P}_1 \psi +TT\psi \right ),
\ee
where $\overline{P}_i$ are differential operators on $\Sigma'$ of order $i$ and $\Omega$ is defined to be
\be
\Omega^2:= \frac{g^{-1}(d\tau, d\tau)}{g^{-1}(dt,dt)}.
\ee
$\Omega$ is smooth and bounded on any compact subset of $\Sigma'$ which avoids $S$. In terms of the operators $\overline{P}_i$, we may write
\ben{Akdef2}
\overline{\A}\left(\begin{array}{c}\psi \\ \psi' \end{array}\right)= \left(\begin{array}{cc}0 & 1 \\-\overline{P}_2 & -\overline{P}_1\end{array}\right)\left(\begin{array}{c}\psi \\ \psi' \end{array}\right).
\een

It will be useful to define the (rescaled) Laplace transform of $L$ with respect to the time function $\tau$ by
\be
\overline{L}_s u := \left.  \Omega^{-2} e^{-s \tau}L e^{s \tau} \upsilon \right|_{\Sigma'_\tau} = \overline{P}_2 u + s \overline{P}_1 u +s^2 u.
\ee
This operator acts on functions defined on $\Sigma'$ and we again denote by $\upsilon$ the lift of the function $u$ which satisfies $\left. \upsilon \right|_{\Sigma'_0} = u$, $T\upsilon = 0$. As for the case of the regular slicing, we have:
\begin{Lemma}\label{Lslemma2}
The resolvent $(\overline{\A} - s)^{-1}$ exists and is a bounded linear transformation of $\bm{H}^1(\Sigma')$ onto $D(\overline{\A})$ if and only if $\overline{L}_s^{-1}:\overline{L}^{2}(\Sigma) \to D(\overline{L}_s)$ exists as a bounded operator and furthermore $D(\overline{L}_s) \subset \overline{H}^{1}(\Sigma, \kappa)$. Furthermore, we have
\ben{residen2}
(\overline{\A} - s)^{-1} = \left(\begin{array}{cc}-1 & 0 \\-s & 1\end{array}\right) \left(\begin{array}{cc}(\overline{L}_s)^{-1} & 0 \\0 & 1\end{array}\right)\left(\begin{array}{cc}\overline{P}_1 + s & 1 \\1 & 0\end{array}\right).
\een
\end{Lemma}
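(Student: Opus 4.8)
The plan is to mirror the proof of Lemma \ref{Lslemma} almost verbatim, the only genuinely new feature being that the marginal spaces $\overline{H}^1(\Sigma', \kappa)$ and $\overline{L}^2(\Sigma')$ replace $\H^1(\Sigma, \kappa)$ and $\L^2(\Sigma)$. First I would record the purely algebraic factorisation
\be
(\overline{\A} - s) = \left(\begin{array}{cc}0 & 1 \\1 & -(\overline{P}_1 + s) \end{array}\right) \left(\begin{array}{cc}\overline{L}_s & 0 \\0 & 1\end{array}\right)\left(\begin{array}{cc}-1 & 0 \\-s & 1\end{array}\right),
\ee
which one verifies by multiplying out the three factors and using $\overline{L}_s = \overline{P}_2 + s\overline{P}_1 + s^2$ from \eq{Akdef2}: the contributions $s\overline{P}_1$ and $s^2$ cancel against the off-diagonal terms exactly as in the globally stationary case. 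The two outer factors are triangular with $\pm 1$ on the diagonal and hence invertible by inspection, their inverses being
\be
\left(\begin{array}{cc}-1 & 0 \\-s & 1\end{array}\right) \qquad \textrm{and} \qquad \left(\begin{array}{cc}\overline{P}_1 + s & 1 \\1 & 0\end{array}\right)
\ee
respectively. Inverting the product formally then reproduces \eq{residen2}.

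The second step is to upgrade this formal calculation to an honest operator identity by bookkeeping domains and mapping properties. Assuming $(\overline{L}_s)^{-1}: \overline{L}^2(\Sigma') \to D(\overline{L}_s)$ exists and is bounded with $D(\overline{L}_s)\subset \overline{H}^1(\Sigma', \kappa)$, I would check that the right-hand side of \eq{residen2} carries $\bm{H}^1(\Sigma')$ into $D(\overline{\A})$ and that its compositions with $(\overline{\A}-s)$ on both sides equal the respective identities, exactly as in Lemma \ref{Lslemma}. The boundedness is the crux: since $\overline{P}_1 + s$ is first order it maps $\overline{H}^1(\Sigma', \kappa)$ boundedly into $\overline{L}^2(\Sigma')$, so the rightmost factor sends $\bm{H}^1(\Sigma') = \overline{H}^1(\Sigma', \kappa) \times \overline{L}^2(\Sigma')$ boundedly into $\overline{L}^2(\Sigma') \times \overline{H}^1(\Sigma', \kappa)$; applying $(\overline{L}_s)^{-1}$ to the first slot lands back in $\overline{H}^1(\Sigma', \kappa)$, and the final triangular factor returns a bounded map into $\bm{H}^1(\Sigma')$. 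The converse direction follows by solving \eq{residen2} for $(\overline{L}_s)^{-1}$ — that is, multiplying through by the (explicitly invertible) outer factors and reading off the $(1,1)$ block — so that boundedness of $(\overline{\A}-s)^{-1}$ on $\bm{H}^1(\Sigma')$ forces $(\overline{L}_s)^{-1}$ to exist with the stated mapping property.

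The main obstacle — indeed the only point requiring care beyond transcribing Lemma \ref{Lslemma} — is the bookkeeping of the marginal spaces near the bifurcation sphere $S$. Because $T$ vanishes on $S$, the spaces $\overline{H}^1(\Sigma', \kappa)$, $\overline{L}^2(\Sigma')$ are genuinely distinct from the (degenerate) $T$-energy spaces, as noted after \eq{reseng}; they are instead defined through the $1$-jet of the solution and the rescaled normal $\hat{n}_{\Sigma'_0}$. I would therefore verify that the conformal factor $\Omega^2$ and the coefficients of $\overline{P}_1, \overline{P}_2$ are smooth and bounded on a neighbourhood of $S$ within $\Sigma'$, so that the first-order mapping property of $\overline{P}_1 + s$ used above holds uniformly up to $S$; away from $S$ the argument is word-for-word that of the globally stationary case. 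I expect no new analytic input beyond this compatibility check, since the content of the Lemma is algebraic and the required boundedness is inherited from the well-posedness estimate \eq{reseng}.
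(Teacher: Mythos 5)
Your proposal is correct and is exactly the paper's route: the paper gives no separate proof of Lemma \ref{Lslemma2}, precisely because the factorisation of $(\overline{\A}-s)$ into triangular factors, their explicit inversion, and the domain bookkeeping from the proof of Lemma \ref{Lslemma} carry over verbatim, which is what you reproduce. The one inaccuracy is in your final paragraph: $\Omega$ is smooth and bounded only on compact subsets of $\Sigma'$ \emph{avoiding} $S$ (as the paper notes just after the decomposition of $L$ in \S\ref{unitary}), so the mapping property of $\overline{P}_1+s$ near $S$ is not obtained from coefficient boundedness up to $S$ but is instead encoded in the marginal norms $\overline{H}^1(\Sigma',\kappa)$, $\overline{L}^2(\Sigma')$ themselves, which are defined through the $1$-jet and the rescaled normal $\hat{n}_{\Sigma'_0}$ precisely so as to absorb the vanishing of $T$ at the bifurcation surface.
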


It is the resolvent $(\overline{L}_s)^{-1}$ whose meromorphic extension is usually used to define the quasinormal modes. We will take the liberty of also referring to this as the unitary resolvent, since it is so closely related to $(\overline{\A}-s)^{-1}$. We note that often $\overline{L}_s$ is a self-adjoint operator with respect to an appropriate $L^2$ inner product for functions on $\Sigma'$.

\subsection{The meromorphic extension}

An immediate consequence of Lemmas \ref{unires}, \ref{Lslemma2} is that $(\overline{L}_s)^{-1}:\overline{L}^2(\Sigma') \to \overline{H}^1(\Sigma', \kappa)$ exists as a bounded operator, holomorphic in $s$ for $\Re(s) > M$. Now, unlike in the case of $(\hL)^{-1}$, we cannot improve the range on which $(\overline{L}_s)^{-1}$ exists by considering $\overline{L}_s$ as an operator on a more regular subspace of $\overline{L}^2(\Sigma')$. This approach fails because of the unsuitability of the $\Sigma'_\tau$ slicing for capturing behaviour at $\hor$. We instead seek a meromorphic extension of $(\overline{L}_s)^{-1}:\overline{L}^2_c(\Sigma') \to \overline{L}^2_{loc.}(\Sigma')$. Here $\overline{L}^2_c(\Sigma')$ denotes the space of functions in $\overline{L}^2(\Sigma')$ whose essential support is contained in a compact set away from $S$ and $\overline{L}^2_{loc.}(\Sigma')$ is the space of functions $u$ on $\Sigma'$ such that $\chi u \in \overline{L}^2(\Sigma')$, where $\chi$ is a cut-off function vanishing near $S$. To do this, let us first introduce the following map which takes functions defined a.e.\ on $\Sigma'\setminus \mathscr{H}$ to functions defined a.e.\ on $\Sigma\setminus \mathscr{H}$. Suppose $u:\Sigma' \setminus \mathscr{H}\to \C^N$ is a function defined a.e.\ on $\Sigma'\setminus \mathscr{H}$. We define the operator $Q_s$ by:
\be
Q_s  :  u \mapsto \left. e^{s \tau} \upsilon\right |_{\Sigma_0}.
\ee
We will require the following properties of $Q_s$:
\begin{Lemma}\label{changeslice}
The map $Q_s$ is invertible, with $(Q_s)^{-1}$ taking functions defined a.e.\  on $\Sigma\setminus \mathscr{H}$ to functions defined a.e.\ on $\Sigma'\setminus \mathscr{H}$. Let $\chi:\Sigma' \to \R$ be any smooth cut-off function which is equal to zero near $S$. Then $Q_s \circ \chi : \overline{L}^2(\Sigma') \to \L^2(\Sigma)$ is a holomorphic family of bounded operators, as is $\chi \circ (Q_s)^{-1} : \L^2(\Sigma) \to \overline{L}^2(\Sigma')$. Furthermore
\be
 \Omega^{2}\overline{L}_s  = (Q_s)^{-1} \circ \hL \circ Q_s.
\ee
\end{Lemma}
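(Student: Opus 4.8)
The plan is to exploit that the two time functions $t$ and $\tau$ both satisfy $T(t)=T(\tau)=1$, so that $h := t-\tau$ is a $T$-invariant function on $\bhR_e\setminus\mathscr{H}$ encoding the shift between the slicings. First I would introduce the orbit reparametrisation map $\Phi:\Sigma'\setminus\mathscr{H}\to\Sigma\setminus\mathscr{H}$ sending $p$ to the unique point $\varphi_{-h(p)}(p)$ at which the (future-directed, timelike off $\mathscr{H}$) integral curve of $T$ through $p$ meets $\Sigma_0=\{t=0\}$. That this is a diffeomorphism onto $\Sigma\setminus\mathscr{H}$ follows from the spacelike character of the slices together with the foliation properties in Definition \ref{extadsbh}: every such orbit crosses each of $\Sigma'$ and $\Sigma$ exactly once, and the excluded orbits are precisely the horizon generators. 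Unwinding the definition of $Q_s$ and using $T\upsilon=0$ I would then show that $Q_s$ is the weighted pullback $(Q_s u)(\Phi(p)) = e^{-sh(p)}u(p)$, equivalently $(Q_s)^{-1}w(p)=e^{sh(p)}w(\Phi(p))$; this establishes invertibility and exhibits the inverse at once.

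For the boundedness and holomorphy statements the role of the cut-off $\chi$ (vanishing near $S$) is decisive: on $\operatorname{supp}\chi$ the flow time $-h$ needed to reach $\Sigma$ stays bounded, so the weight $e^{-sh}$ is bounded uniformly for $s$ in compact sets, and $\Phi$ restricts to a diffeomorphism onto a region of $\Sigma$ bounded away from $\hor$. Near $\scri$ both slicings are asymptotic to the integral curves of $T$ and the metrics coincide to leading order, so $h$ remains bounded there and the densities relating the $\overline{L}^2(\Sigma')$ and $\L^2(\Sigma)$ norms are comparable under $\Phi$. Together these give $Q_s\circ\chi:\overline{L}^2(\Sigma')\to\L^2(\Sigma)$ (and $\chi\circ(Q_s)^{-1}$) bounded, with norm locally uniform in $s$. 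Holomorphy then follows since the only $s$-dependence is the entire factor $e^{s\tau}$, so one may differentiate under the integral (or invoke Morera) using the uniform bounds.

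The operator identity is where the structure pays off, and I expect it to be the cleanest part. I would use that $L$ has $T$-invariant coefficients ($\mathcal{L}_T V=\mathcal{L}_T W=0$ and $T$ Killing), so $L$ commutes with the pullback $\varphi_\lambda^*$. Since $\tau\circ\varphi_\lambda=\tau+\lambda$ and $\upsilon$ is $T$-invariant, $e^{s\tau}\upsilon$ is an eigenfunction of $\varphi_\lambda^*$ with eigenvalue $e^{s\lambda}$; hence so is $G:=L(e^{s\tau}\upsilon)$. Writing $\hL(Q_s u)$ via \eq{lhatrel} and checking that the $T$-invariant lift appearing there equals $e^{-sh}\upsilon$, one finds $\hL(Q_s u)=G|_{\Sigma_0}$, whereas $\Omega^2\overline{L}_s u=e^{-s\tau}G|_{\Sigma'_\tau}$. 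Applying $(Q_s)^{-1}$ and using the scaling $G(\varphi_{-h(p)}(p))=e^{-sh(p)}G(p)$ exactly cancels the weight $e^{sh(p)}$ in $(Q_s)^{-1}$, yielding $(Q_s)^{-1}\hL Q_s\,u = G|_{\Sigma'_0}=\Omega^2\overline{L}_s u$.

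The main obstacle is the second paragraph: making the function-space bookkeeping rigorous, in particular verifying that the weighted, twisted $\overline{L}^2(\Sigma')$ and $\L^2(\Sigma)$ norms are genuinely comparable under $\Phi$ near conformal infinity, and that the boundedness is uniform for $s$ in compact sets so that holomorphy is legitimate. By contrast, the algebraic identity and the invertibility of $Q_s$ are essentially formal once the $T$-invariance of $h$ and the eigenfunction property of $e^{s\tau}\upsilon$ are in hand.
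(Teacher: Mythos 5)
The paper states Lemma \ref{changeslice} without proof, and your argument is precisely the computation it leaves implicit: writing $Q_s u = e^{-sh}\,u\circ \Phi^{-1}$ with $h = t-\tau$ a $T$-invariant function and $\Phi$ the flow map between the slices, then using $\varphi_\lambda^*\circ L = L\circ \varphi_\lambda^*$ (as $T$ is Killing with $\mathcal{L}_T V = \mathcal{L}_T W = 0$) together with $\varphi_\lambda^*(e^{s\tau}\upsilon) = e^{s\lambda}e^{s\tau}\upsilon$ to get the conjugation identity — all of which checks out, including your identification of the $T$-invariant lift of $Q_s u$ as $e^{-sh}\upsilon$. Your key structural observation is also the right one: since $\Sigma'\cap\mathscr{H} = S$, a cut-off vanishing near $S$ has support bounded away from all of $\mathscr{H}$, so $h$ and the Jacobian of $\Phi$ stay bounded up to $\scri$ (where $t$, $\tau$ and $T\in\mathfrak{X}_\scri$ extend smoothly in the conformal picture), which makes the operator bounds locally uniform in $s$ and the holomorphy (indeed entire dependence, via the norm-convergent series for the multiplier $e^{-sh}$) routine.
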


At this stage, we need to prove a meromorphic extension result for the resolvent on $\Sigma$. This is a straightforward consequence of the resolvent identity, together with the results we have previously established.
\begin{Lemma}
The resolvent $ (\A-s)^{-1} : \bm{H}_c^1(\Sigma) \to \bm{H}^1(\Sigma)$, which is meromorphic on  $\{\Re(s)>-\frac{1}{2}\varkappa\}$, admits a meromorphic extension, $R(s, \A)$, to $\C$ with poles of finite order, whose residues are finite rank operators. The poles may only occur at points of $\Lambda_{QNF}$, but there may be points of $\Lambda_{QNF}$ at which the meromorphic extension of $R(s, \A)$ is regular.
\end{Lemma}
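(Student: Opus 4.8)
The plan is to transfer the statement to the Laplace--transformed operator $\hL$ and to continue its inverse across the half--planes $\{\Re(s)>\tfrac12 w_L-(k+\tfrac12)\varkappa\}$ one at a time, using the resolvent identity. By the formula \eqref{residen} of Lemma \ref{Lslemma}, for $\bm{f}=(f_1,f_2)\in\bm{H}^1_c(\Sigma)$ one has $(\A-s)^{-1}\bm{f}=\bigl(-\hL^{-1}g,\,-s\,\hL^{-1}g+f_1\bigr)$ with $g=(P_1+s)f_1+f_2$. Since $P_1$ is a local first--order operator and $f_1,f_2$ are supported in some $W\Subset\Sigma\setminus\hor$, the datum $g\in\L^2(\Sigma)$ is again supported in $W$ and depends polynomially on $s$, while the remaining factors in \eqref{residen} are polynomial in $s$ with bounded coefficients. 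Hence it suffices to continue meromorphically the operator family $s\mapsto\hL^{-1}$, restricted to act on $\L^2$--functions supported in a fixed $W\Subset\Sigma\setminus\hor$ and valued in $\H^1(\Sigma,\kappa)$, from the base half--plane $\{\Re(s)>\tfrac12 w_L-\tfrac12\varkappa\}$ (where it exists by Theorem \ref{l2inv}) to all of $\C$; meromorphy and the pole structure then pull back verbatim to $(\A-s)^{-1}$.

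Fix a reference point $s_0$ with $\Re(s_0)$ large and $s_0\notin\Lambda_{QNF}$. Using $\hL-\hat{L}_{s_0}=(s-s_0)(P_1+s+s_0)$, the resolvent identity reads
\[ \hL^{-1}=\hat{L}_{s_0}^{-1}-(s-s_0)\,\hL^{-1}(P_1+s+s_0)\,\hat{L}_{s_0}^{-1}. \]
The gain comes from the compact support of the datum: by the elliptic--regularity mechanism in the proof of Lemma \ref{comp}, $w_0:=\hat{L}_{s_0}^{-1}g\in\H^{1,1}(\Sigma,\kappa)$, so that $(P_1+s+s_0)w_0\in\L^{2,1}(\Sigma)$ with polynomial $s$--dependence. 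By Theorem \ref{lhatthm} at level $k=1$, the family $\hL^{-1}$ acting on $\L^{2,1}(\Sigma)$ is meromorphic on the larger half--plane $\{\Re(s)>\tfrac12 w_L-\tfrac32\varkappa\}$. The right--hand side above is therefore meromorphic there, valued in $\H^{1}$, and agrees with $\hL^{-1}g$ on the base half--plane, extending the continuation by one unit of $\varkappa$.

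To iterate, I would exploit the observation that the resolvent applied to a source supported away from $\hor$ is smooth near the horizon. Near $\hor$ the function $w_0$ solves the homogeneous equation $\hat{L}_{s_0}w_0=0$, and for $s_0$ in the base half--plane the only homogeneous solution lying in $\H^1$ near $\hor$ is the regular one, the singular indicial branch failing to be $\H^1$; thus $w_0$, and with it $(P_1+s+s_0)w_0$, is smooth in a neighbourhood of $\hor$, its regularity deficit being confined to $W$. Splitting $(P_1+s+s_0)w_0=h_{\mathrm{near}}+h_{\mathrm{far}}$ by a cutoff about $\hor$, the piece $h_{\mathrm{near}}$ is smooth, hence lies in $\L^{2,k}(\Sigma)$ for every $k$, so $\hL^{-1}h_{\mathrm{near}}$ is meromorphic on all of $\C$ by Theorem \ref{lhatthm}; the piece $h_{\mathrm{far}}\in\L^{2,1}(\Sigma)$ is compactly supported away from $\hor$, and the argument of the previous paragraph applies to it with $k$ raised by one, pushing the continuation a further unit of $\varkappa$ to the left. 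Induction on $k$, refreshing the compact support at each stage, continues $\hL^{-1}g$ to every half--plane $\{\Re(s)>\tfrac12 w_L-(k+\tfrac12)\varkappa\}$, hence to $\C$; the successive extensions agree on overlaps by uniqueness of analytic continuation, so $R(s,\A)$ is globally well defined.

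For the poles, at each stage the only obstruction to invertibility is supplied by the analytic Fredholm theorem in Theorem \ref{lhatthm}, which places poles of finite order with finite--rank residues exactly at the discrete set $\Lambda^k_{QNF}\subset\Lambda_{QNF}$; pulling back through \eqref{residen} gives the same structure for $R(s,\A)$. Since $R(s,\A)$ is evaluated only on data supported away from $\hor$, a given residue (whose range consists of quasinormal modes) may annihilate such data, which is precisely why the extension can be regular at some points of $\Lambda_{QNF}$. I expect the main obstacle to be the bookkeeping of the iteration in the third paragraph: one must check that the ``refreshed'' near--horizon smoothness genuinely holds for the chosen reference points (which forces one to avoid the exceptional indicial values $-s_0\in\mathbb{N}$, arranged by taking $\Re(s_0)$ large) and that the cutoff splitting does not reintroduce a horizon obstruction. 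This is exactly where the enhanced redshift, i.e. the shift $w_{L'}=w_L-2\varkappa$ of Theorem \ref{commute} underlying Theorem \ref{lhatthm}, does the essential work of moving the half--plane leftwards.
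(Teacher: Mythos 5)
Your proposal is correct, and it rests on the same two pillars as the paper's own proof: a resolvent identity anchored at a reference point in the base half-plane, and the one-order regularity gain for sources supported away from $\hor$ (Lemma \ref{comp}), which unlocks the level-$(k+1)$ meromorphy of the resolvent on the next half-plane supplied by the enhanced redshift. The packaging differs: the paper works directly with $R(s;\A)$ and the first resolvent identity $R(s;\A)=R(s';\A)+(s-s')R(s;\A)R(s';\A)$, noting via Lemma \ref{comp} that $R(s';\A)\bm{u}\in\bm{H}^2(\Sigma)$ for $\bm{u}\in\bm{H}^1_c(\Sigma)$ and then invoking the $\bm{H}^2$-level meromorphy of $R(s;\A)$ on the next strip; you descend to the Laplace-transformed operator via \eq{residen} and use the second resolvent identity $\hL^{-1}=\hat{L}_{s_0}^{-1}-(s-s_0)\,\hL^{-1}(P_1+s+s_0)\,\hat{L}_{s_0}^{-1}$ — equivalent routes, by Lemma \ref{Lslemma}. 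Where your write-up genuinely adds something is the inductive step: after one application of the resolvent the source is no longer supported away from the horizon, so the compact-support gain cannot be reapplied as it stands, and your near/far cutoff splitting — the near piece smooth because the resolvent output solves the homogeneous equation near $\hor$ and, for $\Re(s_0)$ large, the singular branch is excluded at every order, hence lies in $\L^{2,k}(\Sigma)$ for all $k$; the far piece supported away from $\hor$ with one extra order of regularity — is exactly the mechanism the paper compresses into the words ``working inductively''. One caveat: in the general non-separable setting the near-horizon smoothness should be justified by iterating the commutation/elliptic-regularity argument underlying Lemma \ref{comp} rather than by a literal indicial-root computation (which is only available in the toy model), but you flag this, and it is the same assertion the paper itself relies on for the smoothness of quasinormal modes in Theorem \ref{lhatthm}.
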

\begin{proof}
First note that Theorem \ref{mainthm} immediately gives us that $R(s, \A)$ is meromorphic on $\{\Re(s)>-\frac{1}{2}\varkappa\}$. Fix a set $W\cc \Sigma \setminus \hor$, and assume $\bm{u}\in \bm{H}^2(\Sigma)$ is supported in $W$. Suppose now that $-\frac{3}{2}\varkappa<\Re(s) \leq \frac{1}{2}\varkappa$ and that $s \not \in \Lambda_{QNF}$. We know from Theorem \ref{mainthm} that $R(s, \A)u$ exists and belongs to $\bm{H}^2(\Sigma)$. Pick some $s' \in \{\Re(s)>-\frac{1}{2}\varkappa, s \not \in \Lambda_{QNF}\}$. We know from Lemma \ref{comp} that $R(s', \A)\bm{u}$ exists and furthermore
\be
\norm{R(s', \A)\bm{u}}{\bm{H}^2(\Sigma)} \leq C_{W, s'} \norm{\bm{u}}{\H^1(\Sigma)}.
\ee
Consider now the resolvent identity
\ben{resident}
R(s; \A) = R(s'; A) + (s-s') R(s; \A) R(s'; \A).
\een
From here we deduce that
\be
\norm{R(s; \A)\bm{u}}{\bm{H}^2(\Sigma)} \leq C_{W,s'} \norm{R(s; \A)}{\bm{H}^2(\Sigma) \to \bm{H}^2(\Sigma)}\norm{\bm{u}}{\bm{H}^1(\Sigma)}.
\ee
Thus by continuity we can define $R(s; \A): \bm{H}_c^1(\Sigma) \to \bm{H}^1(\Sigma)$ in $\{\Re(s)>-\frac{3}{2}\varkappa, s \not \in \Lambda_{QNF}\}$, and furthermore by \eq{resident},  $R(s; \A)$ is holomorphic in this domain. If $R(s; \A): \bm{H}^2(\Sigma) \to \bm{H}^2(\Sigma)$ has a pole of order $k$ at some point $s\in \{\Re(s)>-\frac{3}{2}\varkappa, s \in \Lambda_{QNF}\}$, then the meromorphic extension $R(s; \A): \bm{H}_c^1(\Sigma) \to \bm{H}^1(\Sigma)$ has a pole of order $\leq k$ at $s$.
As a result, the set of poles of the meromorphic extension is a subset of $\Lambda_{QNF}$. Working inductively, we may extend this proof to $\C$.
\end{proof}

Now, since $R(s; \A)$ and $\hL^{-1}$ are closely related, an equivalent statement holds for $\hL^{-1}$. Putting this together with Lemma \ref{changeslice}, we have the following theorem
\begin{Theorem}\label{ancon}
The unitary resolvent $(\overline{L}_s)^{-1}:\overline{L}^2(\Sigma')\to \overline{H}^1(\Sigma', \kappa)$, defined for $\Re(s)> M$ admits a meromorphic extension to $\C$ as an operator\footnote{i.e.\ $\chi_1 (\overline{L}_s)^{-1} \chi_2:\L^2(\Sigma) \to \L^2(\Sigma)$ is a meromorphic operator for any cut-off cuntions $\chi_i$ vanishing near the horizon} $(\overline{L}_s)^{-1}:\overline{L}^2_c(\Sigma') \to \overline{L}^2_{loc.}(\Sigma')$. The set of poles of this operator is a subset of $\Lambda_{QNF}$, the poles have finite order and the residues at the poles are finite rank operators. To each pole is associated a finite dimensional space of smooth solutions satisfying the boundary conditions at $\scri$ to the equation: 
\be
\overline{L}_s u = 0, \qquad \textrm{ such that }e^{s \tau} u \textrm{ is regular at }\hor.
\ee
\end{Theorem}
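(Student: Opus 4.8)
The plan is to transport the meromorphic extension of $\hL^{-1}$ already established on the regular slice $\Sigma$ across to the unitary operator $\overline{L}_s$ on $\Sigma'$ by means of the intertwining relation of Lemma \ref{changeslice}. For $\Re(s) > M$ the operator $(\overline{L}_s)^{-1}$ exists and is bounded by Lemmas \ref{unires} and \ref{Lslemma2}, and inverting $\Omega^2 \overline{L}_s = (Q_s)^{-1} \hL Q_s$ yields
\be
(\overline{L}_s)^{-1} = (Q_s)^{-1}\, \hL^{-1}\, Q_s\, \Omega^2,
\ee
valid on $\{\Re(s) > M\}$. The virtue of this factorisation is that the only genuinely $s$-dependent piece requiring continuation is the middle factor $\hL^{-1}$, for which a meromorphic extension $\L^2_c(\Sigma) \to \L^2_{loc.}(\Sigma)$ with poles contained in $\Lambda_{QNF}$, of finite order and with finite-rank residues, is already in hand (this is the statement transferred from the meromorphic extension of $R(s;\A)$ established above).

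Next I would sandwich by cut-off functions $\chi_1, \chi_2 : \Sigma' \to \R$ vanishing near $S$ and write
\be
\chi_1 (\overline{L}_s)^{-1} \chi_2 = \big(\chi_1 (Q_s)^{-1}\big)\, \hL^{-1}\, \big(Q_s\, \Omega^2 \chi_2\big).
\ee
By Lemma \ref{changeslice} the outer operators $\chi_1 (Q_s)^{-1} : \L^2(\Sigma) \to \overline{L}^2(\Sigma')$ and $Q_s\,\Omega^2\chi_2 : \overline{L}^2(\Sigma') \to \L^2(\Sigma)$ are holomorphic families of bounded operators on all of $\C$; here one writes $\Omega^2 \chi_2 = g\, \chi_2'$ with $\chi_2'$ a genuine cut-off vanishing near $S$ and $g$ smooth and bounded, so that the singularity of $\Omega$ at $S$ is harmless. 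Since $\mathcal{L}_T r = 0$, the orbits of $T$ have constant $r$, so functions supported away from $S$ (that is, away from $r = r_+$) on $\Sigma'$ correspond under $Q_s^{\pm 1}$ to functions supported away from $\hor$ on $\Sigma$; consequently the middle factor genuinely acts as the extended $\hL^{-1} : \L^2_c(\Sigma) \to \L^2_{loc.}(\Sigma)$, and one may insert cut-offs vanishing near $\hor$ on either side without altering the composition. As the composition of a meromorphic operator-valued function with holomorphic bounded families on both sides is again meromorphic, with poles and residues no worse than those of the middle factor, $\chi_1(\overline{L}_s)^{-1}\chi_2$ extends meromorphically to $\C$ with poles in $\Lambda_{QNF}$, of finite order and finite rank. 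By uniqueness of meromorphic continuation this extension agrees with $(\overline{L}_s)^{-1}$ on $\{\Re(s) > M\}$, and since $\chi_1,\chi_2$ were arbitrary cut-offs vanishing near $S$ it defines the asserted extension $\overline{L}^2_c(\Sigma') \to \overline{L}^2_{loc.}(\Sigma')$.

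Finally, the characterisation of the poles follows from the intertwining. The relation $\Omega^2 \overline{L}_s = (Q_s)^{-1}\hL Q_s$ shows $\overline{L}_s v = 0$ if and only if $\hL(Q_s v) = 0$, so $Q_s$ sets up a bijection between $\mathrm{Ker}(\overline{L}_s)$ and the space of quasinormal modes $\mathrm{Ker}(\hL)$ for any $s \in \Lambda_{QNF}$; this space is finite-dimensional and, by the elliptic regularity used in Theorem \ref{lhatthm}, consists of smooth functions obeying the boundary conditions at $\scri$. For $u = Q_s v \in \mathrm{Ker}(\hL)$ one has by definition $u = e^{s\tau}\upsilon_v|_{\Sigma_0}$, with $\upsilon_v$ the stationary lift of $v$, so the spacetime solution $e^{s\tau}\upsilon_v$ restricts on the regular slice to the regular function $u$ — precisely the statement that $e^{s\tau}v$ is regular at $\hor$.

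I expect the one delicate point to be the cut-off bookkeeping of the second paragraph: one must verify carefully that passing through $Q_s^{\pm1}$ preserves ``supported away from $\hor$ / $S$'', so that the middle factor may legitimately be replaced by the continued $\hL^{-1}$ acting between compactly-supported and locally-$\L^2$ spaces. This is also the mechanism responsible for a possible drop in pole order under the extension, and hence for the caveat that $(\overline{L}_s)^{-1}$ may be regular at some points of $\Lambda_{QNF}$.
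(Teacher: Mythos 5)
Your proposal is correct and follows essentially the same route as the paper: the paper proves Theorem \ref{ancon} precisely by combining the intertwining relation $\Omega^2\overline{L}_s = (Q_s)^{-1}\circ\hL\circ Q_s$ of Lemma \ref{changeslice} with the meromorphic extension of $\hL^{-1}$ inherited from $R(s;\A)$, sandwiching with cut-offs vanishing near $S$ exactly as you do. The only difference is one of explicitness — the paper states ``putting this together with Lemma \ref{changeslice}'' and leaves the assembly implicit, whereas you spell out the factorisation $(\overline{L}_s)^{-1} = (Q_s)^{-1}\hL^{-1}Q_s\Omega^2$, the support bookkeeping under $Q_s^{\pm 1}$ (which, as you correctly flag, is the delicate point and also the mechanism allowing poles to drop out), and the kernel bijection yielding the regularity-at-$\hor$ characterisation of the modes.
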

Note that some QNF may not appear as poles of the resolvent $(\overline{L}_s)^{-1}$. This in fact occurs in the example of \S\ref{example}. The QNF we found previously are are not poles of this meromorphic extension of $(\overline{L}_s)^{-1}$. Our analysis of that problem shows that they nevertheless deserve to be considered `honest' quasinormal frequencies. The reason that we can lose some QNM in using the resolvent approach is that to meromorphically extend the resolvent, we have to restrict to functions supported away from the horizon. At a particular QNF, $s$, It may happen that such functions are all in the range of $(\A-s)$, even though the range is not all of the space $\bm{H}^k(\Sigma)$. In order to characterise more precisely which of the QNF appear as poles of the resolvent, we introduce the idea of `ingoing' boundary conditions.

\subsection{Ingoing boundary conditions} The `ingoing' boundary conditions may be introduced as  follows. Let $X$ be the set of initial data on $\Sigma$ which vanish near the horizon and launch a smooth solution of the Klein-Gordon equation on $\bhR$. We define $Y = \{ \mathcal{S}(t) \bm{\psi} : \bm{\psi} \in X, t\geq 0\}$. Clearly $Y$ is a linear subspace of $\bm{H}^k(\Sigma)$, which is preserved by $\mathcal{S}(t)$. We define $\bm{H}^k_{in.}(\Sigma)$ to be the closure of $Y$ with respect to the norm $\bm{H}^k(\Sigma)$, and $(D^k_{in.}(\A), \A)$ to be the closure of $(Y, \A)$ as an unbounded operator on $\bm{H}_{in.}^k(\Sigma)$. We can think of $\bm{H}^k_{in.}(\Sigma)$ as the smallest closed subspace of $\bm{H}^k(\Sigma)$  containing $\bm{H}^k_c(\Sigma)$ which is preserved by $\mathcal{S}(t)$.

As a result of $\bm{H}_{in.}^k(\Sigma)$ being preserved by $\mathcal{S}(t)$, we can define a restricted resolvent $(\A-s)^{-1}: \bm{H}^k_{in.}(\Sigma)\to D^k_{in.}(\A)$. We state without proof the following result:
\begin{Lemma}\label{reslem}
The restricted resolvent $(\A-s)^{-1}: \bm{H}^k_{in.}(\Sigma)\to D^k_{in.}(\A)$ is meromorphic for $\Re(s) >-(k-\frac{1}{2})\varkappa$. The restricted resolvent has poles only at the location of the poles of the unrestricted resolvent $(\A-s)^{-1}: \bm{H}^k(\Sigma)\to D^k(\A)$, but they may be of lower order. In particular the restricted resolvent may be holomorphic at some of the quasinormal frequencies. Denoting by $\Lambda^k_{QNF, in.}$ the locations of the poles of the restricted resolvent, we have
\be
\Lambda_{QNF, in.}^k \subset \Lambda_{QNF}^k.
\ee
Furthermore, to each pole of the restricted resolvent is associated a finite dimensional space of smooth solutions to
\be
(\A-s) \bm{u} = 0,
\ee
with $\bm{u} \in \bm{H}^k_{in.}(\Sigma)$ for all $k$. We refer to such $\bm{u}$ as ingoing quasinormal modes.
\end{Lemma}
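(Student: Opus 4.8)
The plan is to exploit the fact that $\bm{H}^k_{in.}(\Sigma)$ is, by construction, a closed $\mathcal{S}(t)$-invariant subspace of $\bm{H}^k(\Sigma)$, and to deduce every property of the restricted resolvent from the corresponding property of the full resolvent established in Theorem \ref{mainthm}. First I would record the standard semigroup-theoretic fact that the restriction of $\mathcal{S}(t)$ to a closed invariant subspace is again a $C^0$-semigroup, whose generator is precisely the part of $\A$ in $\bm{H}^k_{in.}(\Sigma)$, namely $(D^k_{in.}(\A),\A)$. Since for $\Re(s) > M$ the resolvent admits the Laplace representation $(s-\A)^{-1} = \int_0^\infty e^{-st}\mathcal{S}(t)\, dt$, the invariance of $\bm{H}^k_{in.}(\Sigma)$ under each $\mathcal{S}(t)$ forces $(\A-s)^{-1}$ to map $\bm{H}^k_{in.}(\Sigma)$ into $D^k_{in.}(\A)$ on this half-plane, so that the restricted resolvent there agrees with the restriction of the full one.

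The central step is to propagate this invariance to the whole meromorphy half-plane $\Re(s) > -(k-\tfrac12)\varkappa$. Here I would argue by analytic continuation of scalar quantities: fix $\bm{f}\in\bm{H}^k_{in.}(\Sigma)$ and any bounded functional $\ell$ annihilating $\bm{H}^k_{in.}(\Sigma)$; by Theorem \ref{mainthm} the map $s \mapsto \ell\big((\A-s)^{-1}\bm{f}\big)$ is meromorphic on $\Re(s) > -(k-\tfrac12)\varkappa$ and vanishes on $\Re(s)>M$, hence vanishes identically on the connected half-plane. Since $\bm{H}^k_{in.}(\Sigma)$ is closed it is the intersection of the kernels of its annihilators, so $(\A-s)^{-1}\bm{f}\in\bm{H}^k_{in.}(\Sigma)$ wherever the full resolvent is holomorphic. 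Thus the restricted resolvent is a genuine meromorphic family $\bm{H}^k_{in.}(\Sigma)\to D^k_{in.}(\A)$ whose poles lie among those of $(\A-s)^{-1}$, i.e.\ in $\Lambda^k_{QNF}$.

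To control the pole orders and the ingoing modes I would examine the Laurent expansion of $(\A-s)^{-1}$ at a point $s_0\in\Lambda^k_{QNF}$. By Theorem \ref{mainthm} the principal part has finite-rank coefficients, and these are recovered as contour integrals of $(\A-s)^{-1}$, so they too preserve $\bm{H}^k_{in.}(\Sigma)$. Restricting a finite-rank operator to an invariant subspace can only lower its rank, possibly to zero; this yields simultaneously the inclusion $\Lambda^k_{QNF, in.}\subset\Lambda^k_{QNF}$ and the possibility that the restricted resolvent is holomorphic at some $s_0\in\Lambda^k_{QNF}$, exactly the phenomenon exhibited in \S\ref{example}. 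The ranges of the restricted residue operators then furnish the finite-dimensional spaces of ingoing quasinormal modes $\bm{u}$ solving $(\A-s)\bm{u}=0$.

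The main obstacle I anticipate is the final claim that these $\bm{u}$ are smooth and lie in $\bm{H}^{k'}_{in.}(\Sigma)$ for \emph{every} $k'$. Smoothness and $k$-independence of the eigenfunctions come directly from Theorem \ref{mainthm}, but verifying that a mode ingoing at level $k$ is approximable by elements of the orbit $Y$ in every higher $\bm{H}^{k'}$-norm requires genuine work: one must check that the successive closures defining $\bm{H}^{k'}_{in.}(\Sigma)$ nest compatibly, which I would do by combining the smoothing of the resolvent on data supported away from $\hor$ (Lemma \ref{comp}) with the density of $Y$ and with the observation that an eigenfunction satisfies $\mathcal{S}(t)\bm{u}=e^{st}\bm{u}$, so that its membership in the $\mathcal{S}(t)$-invariant ingoing subspace is stable under the regularity improvements.
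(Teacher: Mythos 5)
There is nothing in the paper to compare against here: the statement is introduced with ``We state without proof the following result,'' so your proposal is supplying an argument the paper omits. Your outline is correct and is the natural one: the restriction of $\cS(t)$ to the closed invariant subspace $\bm{H}^k_{in.}(\Sigma)$ is a $C^0$-semigroup; the Laplace representation identifies the restricted resolvent with the restriction of the full one for $\Re(s)>M$; the Hahn--Banach continuation argument (test against functionals annihilating $\bm{H}^k_{in.}(\Sigma)$, use the identity theorem for the scalar meromorphic functions supplied by Theorem \ref{mainthm}) propagates the invariance to the whole half-plane; and since the Laurent coefficients at a pole $s_0$ are contour integrals of the resolvent, they preserve the closed subspace, so restricting the finite-rank principal part can only lower pole orders, giving $\Lambda^k_{QNF,in.}\subset\Lambda^k_{QNF}$ and the possible holomorphy at some quasinormal frequencies. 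The identity $(\A-s_0)A_{-q}=A_{-q-1}$ among Laurent coefficients then shows the leading \emph{restricted} coefficient has range in $\ker(\A-s_0)\cap\bm{H}^k_{in.}(\Sigma)$, producing the ingoing modes. One small point you assert without checking: that the closure of $(Y,\A)$ is the generator of the restricted semigroup; this follows from the standard core criterion (a dense, semigroup-invariant subspace of the domain is a core) and deserves a line.

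The one step where your sketch would not go through as written is the final claim that the ingoing modes lie in $\bm{H}^{k'}_{in.}(\Sigma)$ for \emph{every} $k'$: Lemma \ref{comp} is not the right tool, since it requires data supported away from $\hor$, and elements of $Y$ (unlike elements of $X$) have no such support property after evolution. The clean fix avoids smoothing altogether. Every element of $Y$ is the data of a smooth solution, hence lies in $\bm{H}^{k'}(\Sigma)$, and trivially in $\bm{H}^{k'}_{in.}(\Sigma)$, for all $k'$. By the last clause of Theorem \ref{mainthm} the eigenvalues for different $k$ agree on the common half-plane, so a single small contour around $s_0$ avoids the quasinormal spectrum at every level $k'\geq k$; applying the contour-integral formula for the leading coefficient $A_{-q}$ to $\bm{f}\in Y$ and using that each $\bm{H}^{k'}_{in.}(\Sigma)$ is closed and resolvent-invariant along the contour yields $A_{-q}(Y)\subset\bm{H}^{k'}_{in.}(\Sigma)$ for all $k'$. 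Finally, since $A_{-q}$ is bounded with finite-dimensional (hence closed) range on $\bm{H}^k_{in.}(\Sigma)$ and $Y$ is dense there, $A_{-q}(Y)$ spans the whole restricted eigenspace, so every ingoing mode belongs to every $\bm{H}^{k'}_{in.}(\Sigma)$; smoothness is inherited from Theorem \ref{mainthm} as you say. With this replacement your proof is complete.
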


We say a smooth solution $u$ to $\overline{L}_s u=0$ is `ingoing with order $k$ at the horizon' if $(e^{s \tau}u, s e^{s \tau}u)|_{\Sigma_0}\in \bm{H}^k_{in.}(\Sigma)$. In the Schwarzschild-AdS case, a function is ingoing at the horizon to all orders iff:
\be
e^{m r_*} \left(\frac{\partial u}{\partial r_*} -s u \right) \to 0, \qquad \textrm{ as } r_* \to -\infty,  \textrm{ for any } m .
\ee
Here $r_*$ is the Regge-Wheeler tortoise coordinate which tends to minus infinity on $\hor$ (see \cite{Holbnd}, \S2.1). An analogous definition can be constructed for other black holes.

We may relate this back to the meromorphic extension of $(\overline{L}_s)^{-1}:\overline{L}^2_c(\Sigma') \to \overline{L}^2_{loc.}(\Sigma')$ as follows:
\begin{Lemma}
The meromorphic extension of the resolvent $(\overline{L}_s)^{-1}:\overline{L}^2_c(\Sigma') \to \overline{L}^2_{loc.}(\Sigma')$. Has poles at precisely those points $s\in \Lambda_{QNF, in.}$. To each such pole is associated a finite number of smooth solutions to 
\be
 \overline{L}_s u=0,
\ee
where $u$ is ingoing at the horizon to all orders and satisfies the boundary conditions at $\scri$. Such solutions are in one-to-one correspondence with the ingoing quasinormal modes.
\end{Lemma}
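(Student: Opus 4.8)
The plan is to transport the unitary resolvent back to the regular slicing via the intertwining operator $Q_s$ of Lemma \ref{changeslice}, and then read off the pole structure from the restricted (ingoing) resolvent of Lemma \ref{reslem}. First I would record the consequence of the factorisation $\Omega^{2}\overline{L}_s = (Q_s)^{-1}\circ \hL \circ Q_s$, namely $(\overline{L}_s)^{-1} = (Q_s)^{-1}\circ(\hL)^{-1}\circ Q_s\circ \Omega^{2}$, where on the right $(\hL)^{-1}$ is interpreted in the sense of the meromorphic extension of $R(s,\A)$ restricted to data of compact support away from $\hor$ (using Lemmas \ref{Lslemma}, \ref{Lslemma2} to pass between the resolvents of $\hL$, $\overline{L}_s$ and those of $\A$, $\overline{\A}$, and Lemma \ref{comp} for the elliptic gain on the regular side). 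Since $Q_s$, $(Q_s)^{-1}$ and multiplication by $\Omega^{2}$ are holomorphic, bounded families of operators between the relevant cut-off spaces (Lemma \ref{changeslice}), this conjugation preserves the location and order of poles and the rank of residues: $s_0$ is a pole of the extension of $(\overline{L}_s)^{-1}$ of order $m$ if and only if it is a pole of the same order of the compactly-supported regular resolvent.

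Second I would identify that compactly-supported regular resolvent with the restricted resolvent on the ingoing subspace. Any $\bm{f}$ of compact support away from $\hor$ lies in $\bm{H}^k_c(\Sigma)\subset \bm{H}^k_{in.}(\Sigma)$, and $\bm{H}^k_{in.}(\Sigma)$ is by construction the smallest closed $\mathcal{S}(t)$-invariant subspace containing $\bm{H}^k_c(\Sigma)$; hence $(\A-s)^{-1}$ acting on such data agrees with the restricted resolvent $(\A-s)^{-1}:\bm{H}^k_{in.}(\Sigma)\to D^k_{in.}(\A)$. By Lemma \ref{reslem} the latter is meromorphic for $\Re(s)>-(k-\frac{1}{2})\varkappa$ with poles exactly at $\Lambda^k_{QNF,in.}$, these coinciding for all $k$ large enough that $\Re(s_0)>-(k-\frac{1}{2})\varkappa$. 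A residue of the extended $(\overline{L}_s)^{-1}$ at $s_0$ fails to vanish precisely when the residue of the restricted resolvent is nonzero on some compactly supported datum, so the poles of the extension coincide with $\Lambda_{QNF,in.}=\bigcup_k\Lambda^k_{QNF,in.}$, and finiteness of their order together with the finite rank of the residues is inherited from Lemma \ref{reslem}.

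Third, for the eigenfunction correspondence I would examine the residue at a pole $s_0\in\Lambda_{QNF,in.}$. The residue of the restricted resolvent projects onto the finite-dimensional space of ingoing quasinormal modes $\bm{u}=(w,s_0 w)\in\bigcap_k\bm{H}^k_{in.}(\Sigma)$ solving $(\A-s_0)\bm{u}=0$, equivalently with $w$ a smooth solution of the homogeneous problem $\hL w=0$ with $s=s_0$ satisfying the boundary conditions at $\scri$. Setting $u:=(Q_{s_0})^{-1}w$, the factorisation gives $\overline{L}_{s_0}u=0$, with $u$ smooth and satisfying the boundary conditions at $\scri$ because $w$ does. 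By the defining phrase, $u$ is \emph{ingoing at the horizon to all orders} exactly when $(e^{s_0\tau}u,\,s_0 e^{s_0\tau}u)|_{\Sigma_0}=\bm{u}\in\bm{H}^k_{in.}(\Sigma)$ for every $k$, which is what membership of $\bm{u}$ in $\bigcap_k\bm{H}^k_{in.}(\Sigma)$ asserts. Since $Q_{s_0}$ is invertible and $\bm{u}$ is determined by its first component $w$, the assignment $\bm{u}\mapsto u$ is a linear bijection between the ingoing quasinormal modes at $s_0$ and the ingoing-to-all-orders solutions of $\overline{L}_{s_0}u=0$.

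The main obstacle will be the bookkeeping in the first two steps: one must verify that the cut-offs used to build the meromorphic extension of the unitary resolvent (restriction to data supported away from $S$ and $\hor$) are compatible with, and in fact generate, the ingoing subspace $\bm{H}^k_{in.}(\Sigma)$, so that conjugation by $Q_s$ transfers poles and residues faithfully, neither creating nor destroying any. In particular one needs that no pole of the full regular resolvent lying in $\Lambda_{QNF}\setminus\Lambda_{QNF,in.}$ can survive once the resolvent is restricted to compactly supported data — precisely the content of Lemma \ref{reslem} — and that this restriction is preserved under transport through $Q_s$ to the unitary side.
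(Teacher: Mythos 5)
Your architecture is sound and your steps 1 and 3 (conjugation by $Q_s$ with the cut-off bookkeeping of Lemma \ref{changeslice}, and the bijection $\bm{u}\mapsto (Q_{s_0})^{-1}w$ between ingoing quasinormal modes and ingoing-to-all-orders solutions of $\overline{L}_{s_0}u=0$) are correct and match what the paper treats as immediate. But there is a genuine gap at the crux of step 2, and you have misattributed it. You need the implication: if $s_0\in\Lambda_{QNF,in.}$, then the residue of the restricted resolvent at $s_0$ is nonzero \emph{on some compactly supported datum} — otherwise the transported unitary resolvent, which only ever sees data in $\overline{L}^2_c(\Sigma')$, could be regular at $s_0$ even though the restricted resolvent has a pole there. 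You assert this is ``precisely the content of Lemma \ref{reslem}'', but it is not: Lemma \ref{reslem} says the restricted resolvent is meromorphic with poles at $\Lambda^k_{QNF,in.}$ as an operator on $\bm{H}^k_{in.}(\Sigma)$; it says nothing about whether those poles are detected by the subspace $\bm{H}^k_c(\Sigma)$, which is in general \emph{not} dense in $\bm{H}^k_{in.}(\Sigma)$ (the closure of $\bm{H}^k_c(\Sigma)$ alone need not be $\mathcal{S}(t)$-invariant, which is exactly why $\bm{H}^k_{in.}(\Sigma)$ is defined as the smallest closed invariant subspace containing it).

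The missing argument is short but is the entire content of the paper's proof, so you cannot cite your way past it. Let $P_{s_0}$ be the (finite rank) residue of the restricted resolvent at $s_0$; since the resolvent commutes with the semigroup on $\bm{H}^k_{in.}(\Sigma)$, so does $P_{s_0}$, hence $\ker P_{s_0}$ is a closed $\mathcal{S}(t)$-invariant subspace. If $P_{s_0}$ annihilated every compactly supported datum, then $\ker P_{s_0}$ would contain $\bm{H}^k_c(\Sigma)$ (equivalently, the set $Y$ generating $\bm{H}^k_{in.}(\Sigma)$), and by minimality $\ker P_{s_0}=\bm{H}^k_{in.}(\Sigma)$, i.e.\ $P_{s_0}=0$, contradicting $s_0\in\Lambda_{QNF,in.}$. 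This is exactly the paper's one-line proof, phrased contrapositively: regularity of the extended unitary resolvent at $s_0$ would, via $Q_{s_0}$, produce a proper closed invariant subspace containing $Y$, which is absurd by construction of $\bm{H}^k_{in.}(\Sigma)$. (For the converse direction you use, note also the easy point that a nonzero residue output cannot be killed by all cut-offs $\chi_1$ away from $\hor$, since an element of $\bm{H}^k(\Sigma)$ supported on $\hor$ vanishes; with that, your identification of the pole sets is complete.)
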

\begin{proof}
Suppose $s \in \Lambda_{QNF, in.}$ but $s$ is not a pole of $(\overline{L}_s)^{-1}:\overline{L}^2_c(\Sigma') \to \overline{L}^2_{loc.}(\Sigma')$. This, making use of $Q_s$, implies that there exists a proper subspace of $\bm{H}^k_{in.}(\Sigma)$ which contains $Y$ and is preserved by $\mathcal{S}(t)$. By the construction of $\bm{H}^k_{in.}(\Sigma)$ this is absurd.
\end{proof}

Thus by using the restricted resolvent we recover completely the traditional definition of quasinormal modes as `ingoing' solutions to $\overline{L}_s u = 0$. The price we pay however is that we must consider solutions of the Klein-Gordon equation which arise from perturbations which vanish on the horizon. This is a fairly strong (and unphysical) restriction, which is readily seen to be unnecessary with our more general definition of quasinormal modes.

\subsection{Asymptotically hyperbolic manifolds}

In this subsection we shall briefly sketch how our results may be applied to show the meromorphicity of the resolvent for asymptotically hyperbolic manifolds. This is a classic result of Melrose and Mazzeo \cite{Mazzeo1987260}, see also the refinement of this result due to Guillarmou \cite{MR2153454}. Our approach exploits a close relationship between static Killing horizons and asymptotically hyperbolic manifolds  \cite{Gibbons:2008hb}. 

We will show how to recover the results for even manifolds (in a sense we shall define). We note that the results of \cite{MR2153454} which obtain meromorphicity in a half-plane requiring only even expansions up to a certain order should also be obtainable with our method, since so long as we work only to a finite order in our Sobolev spaces, the full smoothness of the metric is not required. As we have proven our results everywhere assuming smooth metrics we will not pursue this possibility.

Let $X$ be a $(n+1)-$dimensional compact manifold with boundary, and let $x$ be a boundary defining function for $\partial X$. Let $h$ be a  smooth Riemannian metric on $\mathring{X}$. We say that $(X, h)$ is an asymptotically hyperbolic manifold if $x^{2} h$ extends smoothly to $\partial X$ and furthermore all the sectional curvatures of $h$ approach $-1$ near the boundary. We say that $(X, h)$ is even if there exist local coordinates $y^i$ on the surfaces $x=\textrm{const.}$ such that we have
\be
h = \frac{dx^2 + h_{ij}(y, x^2)dy^i dy^j}{x^2},
\ee
for some smooth functions $h_{ij}$.

Let us define $\mathscr{M}:= [0, \infty)_\tau \times \mathring{X}$ and consider the following Lorentzian metric, defined on $\mathscr{M}$:
\be
\tilde{g} = -d\tau^2 + h.
\ee
On this background we will consider the Klein-Gordon equation:
\be
\Box_{\tilde{g}} \phi - \frac{n^2}{4} \phi = \frac{\partial^2 \phi }{\partial \tau^2}+ \Delta_h\phi - \frac{n^2}{4} \phi =0.
\ee
Laplace transforming in $\tau$, we see that the resolvent we should study for this problem is
\ben{hyperbolic resolvent}
\left( \Delta_h + s^2 -  \frac{n^2}{4} \right)^{-1} = \left( \Delta_h - \lambda(n-\lambda) \right)^{-1},
\een
where we have made the trivial shift in the spectral parameter $\lambda = s +\frac{n}{2}$. This is the object studied in \cite{MR2153454}.

In order to relate the problem on $(\mathscr{M}, \tilde{g})$ to a black hole problem, we perform a conformal transformation. Let us introduce the metric $g$, defined on $\mathscr{M}$ by
\be
g = x^2 \tilde{g} = -x^2 d\tau^2 + x^2 h
\ee
Near $\partial X$ we have:
\be
g  = -x^2 d\tau^2 + dx^2 + h_{ij}(y, x^2)dy^i dy^j.
\ee
Making the change of variables near $[0, \infty)_\tau \times \partial X$ given by $u(x,t) = x e^{-\tau}$, $v(x,t) = x e^{\tau}$, we have 
\be
g  =   du dv + h_{ij}(y^i, uv)dy^i dy^j.
\ee
Now, let us define a manifold with stratified boundary by glueing the limit points $u \to 0$ to $\mathscr{M}$ as a boundary: $\bhR_e:=\mathscr{M}\cup \{u=0\}$. We extend the differentiable structure of $\mathscr{M}$ to $\bhR_e$ by defining the smooth functions on $\bhR_e$ to be those which extend to $\{u=0\}$ as smooth functions of $u, v, y^i$. It is here that the evenness condition is required: if $h_{ij}$ were permitted to depend on odd powers of $x$ we would have terms in the expansion of $g$ near the boundary of the form $u^{n+\frac{1}{2}}$.

On $\bhR_e$ we have a Killing field given by
\be
T = v \frac{\partial}{\partial v} - u \frac{\partial}{\partial v},
\ee
which is null on $\hor = \{u=0, v>0\}$, vanishes on $S = \{u=0,v=0\}$ and is timelike everywhere else. The surface $\Sigma' = \{\tau = 0\}$ is everywhere timelike, and by extending the surface given near the boundary by $\{v-u =1\}$ we can construct a surface $\Sigma$ which is timelike and everywhere to the future of $\Sigma'$. We define $\mathscr{H}:= \{u=0, v\geq0\} = \hor\cup S$

As a consequence, $(\bhR_e, \mathscr{H}, \Sigma, \Sigma', g, T)$ fits our definition of extended asymptotically anti-de Sitter black holes, if we take the trivial choices $r=1$ and $\scri = \emptyset$. This is a considerable abuse of notation, since this manifold is really asymptotically de Sitter and contains no anti-de Sitter end, but our results nevertheless apply. 

We can verify that
\be
\left[\Box_{\tilde{g}} - \frac{n^2}{4} \right ] (\phi) = \rho^{1+\frac{n}{4}}\left[ \Box_g + \frac{n}{4 (n+1)} V\right] \left( \rho^{-\frac{n}{4}} \phi \right),
\ee
where
\be
 V =  R_{g} - \frac{R_{\tilde{g}} + n(n+1)}{x^2}.
\ee
Since the sectional curvatures of $h$ approach $-1$ to $\O{x^2}$ near $\partial X$ we have that $V$ is smooth on $\bhR_e$. We can thus apply our results to study solutions of
\be
\left( \Box_g + \frac{n}{4 (n+1)} V\right) \psi = 0.
\ee
In particular, the unitary resolvent associated to this equation will be meromorphic. Since we restrict to functions vanishing near the horizon, this can be seen to imply the meromorphicity of \eq{hyperbolic resolvent} as an operator $L^2_c(X) \to L^2_{loc.}(X)$.

\newpage
 \vfill
\appendix 
\section{Purely oscillatory QNM of hermitian operators}\label{oscapp}
In this brief appendix we shall prove that for \emph{hermitian} strongly hyperbolic operators satisfying a fairly weak condition on the horizon, quasinormal modes cannot be purely oscillatory. In particular, this will allow us to `upgrade' the boundedness proofs of \cite{Holzegel:2012wt} to establish decay of solutions, albeit without a quantitative decay rate. The result may be seen as a generalisation of Theorem 1.2 of \cite{HolSmul} to more general black holes and operators.

As usual, we consider a strongly hyperbolic operator, $L$, on a globally stationary asymptotically anti-de Sitter background with stationary, homogeneous, boundary conditions prescribed at infinity. We say that $L$ is \emph{hermitian} if $L$ is formally self-adjoint, i.e.\ if for any function $\phi \in C^\infty_{bc}(\mathring{\bhR}, \C^N)$, we have $L\phi = L^\dagger \phi$. Note that $L$ is \emph{not} self-adjoint, since the operator and its adjoint have different domains. In particular the operator $L$ associated with the scalar Klein-Gordon equation is hermitian.

\begin{Lemma}\label{hermlem}
Let $L$ be a hermitian strongly hyperbolic operator on an asymptotically AdS black hole $\bhR$. Suppose furthermore that $T^\mu W_\mu = 0$ on $\hor$, where $W$ is the vector field part of the operator $L$, as in Definition \ref{LLdef}. Then $L$ has no non-zero, purely imaginary, quasinormal frequencies.
\end{Lemma}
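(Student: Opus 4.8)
The plan is to exploit hermiticity together with the boundary-flux identity of Corollary \ref{adjrel} to convert a hypothetical purely oscillatory mode into a statement that it must vanish on the horizon, and then to rule this out by a local analysis at the horizon. Suppose, for contradiction, that $s = i\omega$ with $\omega \in \R\setminus\{0\}$ is a quasinormal frequency, with smooth quasinormal mode $u$; let $\upsilon$ be its stationary lift ($T\upsilon = 0$) and $\phi = e^{st}\upsilon$, so that $L\phi = 0$ on $\bhR$, $\phi$ is smooth up to $\hor$, and $\phi$ satisfies the (self-adjoint) boundary conditions at $\scri$. First I would apply Corollary \ref{adjrel} with $\phi_1 = \phi_2 = \phi$. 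Since $L$ is hermitian, $L^\dagger\phi = L\phi = 0$, so the left-hand side $\pair{\phi}{L\phi}_{\Sigma_t} - \langle L^\dagger\phi,\phi\rangle_{\Sigma_t}$ vanishes, leaving
\be
0 = \frac{d}{dt}\int_{\Sigma_t} K^\mu n_\mu\, dS - \int_{\scri\cap\Sigma_t} K^\mu m_\mu\,\sqrt{A}\,d\mathcal{K} + \int_{\hor\cap\Sigma_t} K^\mu T_\mu\, d\sigma,
\ee
with $K_\mu = \overline{\phi}\cdot\tn_\mu\phi - \tn_\mu\overline{\phi}\cdot\phi - A^{-1}\overline{\phi}\cdot W_\mu\phi$ as in Corollary \ref{adjrel}.

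Next I would evaluate the three terms. Because $\phi = e^{st}\upsilon$ with $|e^{st}| = 1$, every bilinear product in $K_\mu$ has its time-dependent phases cancel, so $K_\mu$ is a stationary field; by stationarity of $\bhR$ the integral $\int_{\Sigma_t}K^\mu n_\mu\,dS$ is independent of $t$ and the first term vanishes. The $\scri$-flux vanishes because the prescribed Dirichlet/Robin conditions (with real Robin function $\beta_I$) are self-adjoint: the Wronskian-type combination $\overline{\phi}\tn_m\phi - \tn_m\overline{\phi}\,\phi$ is real to leading order at $\scri$, while the $A^{-1}\overline{\phi}W_m\phi$ piece decays since $W\in\mathfrak{X}_\scri(\bhR)$ — this is exactly the computation already used in the proofs of Theorems \ref{killthm} and \ref{redshift}, together with the twisted trace estimate \eq{traceeq}. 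For the horizon term, $\tn_T\phi = s\phi$ and $\tn_T\overline{\phi} = \overline{s}\,\overline{\phi}$ (using $T\upsilon = 0$ and $\mathcal{L}_T f = 0$, so $f$ is $T$-invariant), and the hypothesis $T^\mu W_\mu = 0$ on $\hor$ kills the $W$-term, giving $K^\mu T_\mu|_{\hor} = (s-\overline{s})|\phi|^2 = 2i\omega|\upsilon|^2$. Hence $2i\omega\int_{\hor\cap\Sigma_t}|\upsilon|^2\,d\sigma = 0$, and since $\omega\neq 0$ we conclude $\upsilon|_{\hor} = 0$, i.e. $u$ vanishes on the horizon cross-section.

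The final step is to show that a smooth solution of $\hL u = 0$ vanishing on $\hor$ is trivial, which is the main obstacle. Here I would use the degenerate structure of $\hL$ at the horizon: $\hor$ is a regular singular surface for the radial part of $\hL$, with indicial roots $0$ and (a multiple of) $-s/\varkappa$, corresponding locally to the smooth branch $\rho^0$ and the ingoing branch $\rho^{-s/\varkappa}$, where $\rho$ is a defining function for $\hor$ (cf. the explicit roots $0,-s$ in \S\ref{example}). For $s = i\omega$ the second root is purely imaginary and non-zero, so it differs from $0$ by a non-integer and the corresponding branch $\rho^{-i\omega/\varkappa}$ is not smooth; smoothness of $u$ therefore forces it to be the root-$0$ Frobenius solution, whose leading coefficient is precisely $u|_{\hor}$. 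Since $u|_{\hor} = 0$ and there is no logarithmic coupling between the two roots, this Frobenius solution vanishes to infinite order at $\hor$; as $\hL$ is uniformly elliptic away from $\hor$, the Aronszajn–Carleman unique continuation principle then yields $u \equiv 0$ on $\Sigma$, contradicting non-triviality of the mode. I expect the bookkeeping at the horizon — verifying the indicial roots in the general (matrix, multi-component) setting and that $u|_{\hor}=0$ genuinely suppresses the smooth branch to infinite order — to be the delicate part, whereas the flux identity and the vanishing of the boundary terms follow directly from the estimates already established in \S\ref{hypsec}.
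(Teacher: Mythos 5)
Your flux argument reproduces the paper's own first step exactly: applying Corollary \ref{adjrel} with $\phi_1=\phi_2=e^{st}\upsilon$, using hermiticity to kill the left-hand side, stationarity of the current $K$ for purely imaginary $s$ to kill the $d/dt$ term, the boundary conditions to kill the $\scri$-flux, and the hypothesis $T^\mu W_\mu=0$ on $\hor$ to reduce the horizon flux to $\Im(s)\int_{\hor\cap\Sigma_t}|u|^2\,d\sigma=0$, whence $u$ vanishes on the horizon. Your indicial-root discussion is also a correct account of the next step, which the paper phrases as repeatedly differentiating $\hL u=0$: since $s\neq -k\varkappa$ for every $k\in\N$ when $s$ is purely imaginary and non-zero, the recursion for the Taylor coefficients of $u$ at $\hor$ is invertible at every order, so $u$ vanishes to infinite order on the horizon.

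The genuine gap is in your final step. First, in this multi-dimensional, non-analytic setting (the paper deliberately assumes neither analyticity nor separability) the Frobenius expansion is only a \emph{formal} Taylor series, not a convergent representation of $u$; infinite-order vanishing of the coefficients therefore does not by itself give $u\equiv 0$ near $\hor$. Second, Aronszajn--Carleman unique continuation cannot bridge this: that principle requires the locus of infinite-order vanishing to lie \emph{inside} the region where the operator is elliptic with suitably regular coefficients, whereas here the infinite-order vanishing sits precisely on $\hor$, where the ellipticity of $\hL$ degenerates. You have no vanishing at any interior point of the elliptic region from which to launch the continuation, and extending $u$ by zero across $\hor$ does not help, since the operator is not elliptic across that surface. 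The missing ingredient is a unique continuation theorem adapted to the Fuchsian degeneracy at the horizon: the paper invokes the result of Roberts \cite{roberts} for Fuchsian-type operators (applicable because the principal part is scalar, so the microlocal factorisation extends to the system) to conclude $u=0$ in a neighbourhood of $\hor$, and only \emph{then} applies the Calder\'on uniqueness theorem for elliptic operators to propagate $u\equiv 0$ over the rest of $\Sigma$. Without a substitute for the Fuchsian step, your argument does not close.
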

\begin{proof}
Suppose $L$ has a purely imaginary quasinormal frequency $s\neq 0$, with corresponding smooth quasinormal mode $u$. If $\upsilon$ is the stationary lift of $u$ to $\bhR$, we have $L e^{s t} \upsilon = 0$. Now apply Corollary \ref{adjrel} with $\phi_1 = \phi_2 = e^{s t} \upsilon$. Both $L\phi_1$ and $L^\dagger \phi_2 = L \phi_2 $ vanish. Since $s$ is purely imaginary, the current $K$ is stationary, and the boundary conditions imply $K^\mu m_\mu = 0$ on $\scri$. Thus the adjointness relation reduces to the simple statement:
\be
\Im(s) \int_\hor \abs{u}^2 d\sigma = 0,
\ee
whence $u$ vanishes on the horizon. 

Repeatedly differentiating the equation $\hL u = 0$, we can show that this implies that $u$ vanishes to all orders on $\hor$. Restricting attention to a coordinate patch near the horizon, $\hL$ is a Fuchsian type operator of the kind considered in \cite{roberts} (of the form of Example (1)). Invoking the unique continuation result proven in that paper\footnote{The result in \cite{roberts} applies to \emph{scalar} equations, however our assumptions are such that the principle part of the operator is scalar, and so the microlocal factorisation approach in this paper can be extended to our situation.}, we can conclude that $u=0$ in a neighbourhood of the horizon. Since $\hL$ is elliptic away from the horizon, the Calder\'{o}n uniqueness theorem \cite{calderon} for the Cauchy problem for elliptic operators implies that $u\equiv0$.
\end{proof}

\section{Supplementary proofs} \label{supplapp}
\subsection{Proof of Lemma \ref{emprop}} \label{Lemma appendix}
For convenience we re-state the lemma:
\begin{Lem*}[Lemma \ref{emprop}]
For a sufficiently regular $\phi$, we have
\ben{divt2}
\nabla_\mu \emT^\mu{}_\nu[\phi] = -\Re\left[\tn_\nu \overline{\phi}\cdot \left(\Box_g\phi+ \frac{1}{2} (V+V^*) \phi  \right) \right]+ \emS_\nu[\phi],
\een
where
\be
2 \emS_\nu[\phi] = \Re\left\{ \overline{\phi} \cdot \left[ \tn^\dagger_\nu ( F f) f^{-1} \phi\right]+  (\tn_\sigma \overline{\phi}) \cdot \left[f^{-1} (\tn^\dagger_\nu  f)  (\tn^\sigma \phi)\right]\right \}.
\ee
\end{Lem*}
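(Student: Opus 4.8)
The plan is to establish \eqref{divt2} by direct computation, expanding $\nabla_\mu\emT^\mu{}_\nu$ from the definition \eqref{twem} and then repackaging the resulting terms using the gauge-covariant structure of the twisting operators. The observation that organizes everything is that, since $\kappa$ is real and diagonal, the matrix $f$ of \eqref{fdef} is real, positive and \emph{diagonal}; hence $f=f^*$, $f$ commutes with $\nabla_\mu f$, and the twisted derivatives \eqref{twist} take the connection form
\begin{align*}
\tn_\mu\phi = \nabla_\mu\phi - \omega_\mu\phi, \qquad \tn^\dagger_\mu\psi = -\nabla_\mu\psi - \omega_\mu\psi, \qquad \omega_\mu := f^{-1}\nabla_\mu f.
\end{align*}
Because $\omega_\mu$ is a gradient one has $\nabla_\nu\omega_\mu = \nabla_\mu\omega_\nu$, and $F$ is Hermitian (both $\tfrac12(V+V^*)$ and $f^{-1}\nabla_\mu\nabla^\mu f$ are). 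These two facts are precisely what let the symmetrized bilinear terms be rewritten as real parts and what eliminate spurious curvature contributions.

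First I would write $\nabla_\mu\emT^\mu{}_\nu = \tfrac12\nabla^\mu(\tn_\mu\overline\phi\cdot\tn_\nu\phi + \tn_\nu\overline\phi\cdot\tn_\mu\phi) - \tfrac12\nabla_\nu(\tn_\sigma\overline\phi\cdot\tn^\sigma\phi + \overline\phi\cdot F\phi)$ and distribute by Leibniz. In the symmetric-product piece, the terms in which $\nabla^\mu$ falls on the $\mu$-derivative collapse, via $\nabla^\mu\tn_\mu\phi = -\Box_g\phi - (\nabla^\mu\omega_\mu)\phi - \omega_\mu\nabla^\mu\phi$, into the principal term $-\Re[\tn_\nu\overline\phi\cdot\Box_g\phi]$ together with $\omega$-dependent corrections. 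The terms in which $\nabla^\mu$ falls on the $\nu$-derivative are handled by commuting, $\nabla^\mu\tn_\nu\phi = \tn_\nu\nabla^\mu\phi - (\nabla^\mu\omega_\nu)\phi$; since $\phi$ is a $\C^N$-valued scalar, $[\nabla_\mu,\nabla_\nu]\phi = 0$, so no Riemann curvature of $g$ appears, and these pieces combine with the kinetic trace term $-\tfrac12\nabla_\nu(\tn_\sigma\overline\phi\cdot\tn^\sigma\phi)$, the leading parts cancelling up to further $\omega$-corrections (here the symmetry $\nabla_\nu\omega_\mu = \nabla_\mu\omega_\nu$ is essential). Finally, expanding $-\tfrac12\nabla_\nu(\overline\phi\cdot F\phi)$ and using Hermiticity of $F$ produces $-\Re[\nabla_\nu\overline\phi\cdot F\phi] - \tfrac12\overline\phi\cdot(\nabla_\nu F)\phi$; replacing $\nabla_\nu\to\tn_\nu$ and recalling $F=\tfrac12(V+V^*)-f^{-1}\nabla_\mu\nabla^\mu f$ converts the first of these into the target term $-\Re[\tn_\nu\overline\phi\cdot\tfrac12(V+V^*)\phi]$ plus still more $\omega$- and $\nabla_\nu F$-corrections.

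The remaining, and principal, task is to verify that every correction generated above — those from $\nabla^\mu\tn_\mu\phi$, from the commutator $\nabla^\mu\tn_\nu\phi$, from the $f^{-1}\nabla_\mu\nabla^\mu f$ part of $F$, from $\nabla_\nu F$, and from the $\nabla_\nu\to\tn_\nu$ replacements — assembles exactly into $\emS_\nu$. The efficient route is to expand the claimed answer backwards: from $\tn^\dagger_\nu f = -2\nabla_\nu f$ one gets $f^{-1}(\tn^\dagger_\nu f) = -2\omega_\nu$, so the second term of $2\emS_\nu$ is simply $\Re[\tn_\sigma\overline\phi\cdot(-2\omega_\nu)\tn^\sigma\phi]$, while a short manipulation gives $\tn^\dagger_\nu(Ff)f^{-1} = -(\nabla_\nu F + F\omega_\nu + \omega_\nu F)$, rendering the first term in explicit $\omega_\nu$, $\nabla_\nu F$ language. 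Matching these two explicit forms against the accumulated corrections is a finite, if lengthy, bookkeeping exercise. I expect this reconciliation to be the only genuine obstacle — the delicate points being to keep careful track of which contributions carry a real part and which do not, and to commute the diagonal matrix $f$ correctly past the (generally non-diagonal) potential $F$ and the sesquilinear pairing; everything else is routine application of the Leibniz rule.
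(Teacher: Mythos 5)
Your proposal is correct and is essentially the paper's own argument (see \S\ref{Lemma appendix}): a direct Leibniz-rule computation whose working parts are the reality, positivity and diagonality of $f$, the hermiticity of $F$, and the relation $f^{-1}\tn^\dagger_\nu f = -2f^{-1}\nabla_\nu f$ --- the paper invokes $(f^{-1}\nabla_\nu f) = -\tfrac{1}{2}(f^{-1}\tn^\dagger_\nu f)$ at the same final step, and obtains the cancellations you get from $\nabla_\mu\omega_\nu = \nabla_\nu\omega_\mu$ through the equivalent identity \eq{double twisted} for $\tn^\dagger_\mu\tn_\nu\phi$ rather than your connection-form bookkeeping $\tn_\mu = \nabla_\mu - \omega_\mu$. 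The reconciliation you defer does close: the $(\nabla\omega)$-corrections from the mixed-derivative and trace pieces cancel by the symmetry of $\nabla_\mu \omega_\nu$, and then, using $f^{-1}\nabla_\mu\nabla^\mu f = \omega^\mu\omega_\mu + \nabla^\mu\omega_\mu$, the $\nabla^\mu\omega_\mu$ and $\omega^\mu\omega_\mu$ contributions pair off under the real part with the remaining first-order corrections, leaving precisely $-\tfrac{1}{2}\overline{\phi}\cdot(\nabla_\nu F)\phi - \Re[\omega_\nu\overline{\phi}\cdot F\phi] - \Re[\tn_\sigma\overline{\phi}\cdot\omega_\nu\tn^\sigma\phi] = \emS_\nu[\phi]$, which matches your backward expansion of the claimed answer.
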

\begin{proof}
First let us note that, by expanding with Leibniz rule and using the diagonal form of $f$, we have:
\bea
\tn^\dagger_\mu \tn_\nu \phi &=& -f^{-1} \nabla_\mu\left[f^2 \nabla_{\nu} \left(f^{-1} \phi\right) \right] \nonumber \\
&=& - \nabla_\mu \nabla_\nu \phi + \left(f^{-1}\nabla_\mu \nabla_\nu f \right) \phi + f^{-1}\nabla_\nu f \nabla_\mu \phi - f^{-1}\nabla_\mu f \nabla_\nu \phi, \label{double twisted}
\eea
so that
\be
\tn^\dagger_\mu \tn^\mu = \Box_g +  \left(f^{-1}\nabla_\mu \nabla^\mu f \right).
\ee
Now let us consider each term in $\nabla_\mu \emT^\mu{}_\nu$ in turn. Firstly we have, using that $f$ is real and diagonal:
\bean
 \nabla^{\mu}\left[  \tn_{(\mu} \overline{\phi} \cdot \tn_{\nu)} \phi \right] &=&  \frac{1}{2} \nabla^{\mu} \left[ (f\tn_{\mu}\phi)^* f^{-2} (f \tn_{\nu} \phi) + (f\tn_{\nu} \phi)^*  f^{-2} (f \tn_{\mu} \phi) \right] \\&=& \frac{1}{2}  \big [ (\nabla^{\mu} f\tn_{\mu}\phi)^* f^{-2} (f \tn_{\nu} \phi) + (f\tn_{\mu}\phi)^* f^{-2} (\nabla^\mu f \tn_{\nu} \phi)\\&&\quad  + (\nabla^\mu f\tn_{\nu} \phi)^*  f^{-2} (f \tn_{\mu} \phi) +  (f\tn_{\nu} \phi)^*  f^{-2} ( \nabla^\mu f \tn_{\mu} \phi)
 \\&&\quad + (f\tn_{\mu}\phi)^* (\nabla^\mu f^{-2}) (f \tn_{\nu} \phi) + (f\tn_{\nu} \phi)^*  (\nabla^\mu f^{-2}) (f \tn_{\mu} \phi)  \big ]
 \\&=& \Re\Big[ (f\tn_{\nu} \phi)^*  f^{-2} ( \nabla^\mu f \tn_{\mu} \phi)+ (f\tn_{\mu}\phi)^* f^{-2} (\nabla^\mu f \tn_{\nu} \phi)\\&& \quad + (f\tn_{\mu}\phi)^* (\nabla^\mu f^{-2}) (f \tn_{\nu} \phi) \Big] \\
 &=& \Re \left[ -(\tn_{\nu} \phi)^*   ( \tn_\mu^\dagger  \tn^{\mu} \phi)-(\tn^{\mu}\phi)^*  (\tn_\mu^\dagger  \tn_{\nu} \phi) + (f\tn_{\mu}\phi)^* (\nabla^\mu f^{-2}) (f \tn_{\nu} \phi)\right] .
\eean
Now, let us consider:
\bean 
 \nabla^{\mu}\left[  g_{\mu \nu} \tn_\sigma  \overline{\phi} \cdot \tn^\sigma \phi \right] &=&  \nabla_{\nu}\left[  (f \tn_\sigma \phi)^* f^{-2}  (f \tn^\sigma \phi) \right] \\&=& (\nabla_\nu f \tn_\sigma \phi)^* f^{-2}  (f \tn^\sigma \phi) + (f \tn_\sigma \phi)^* f^{-2}  (\nabla_\nu f \tn^\sigma \phi)  \\&&\quad +(f \tn_\sigma \phi)^* (\nabla_\nu f^{-2} ) (f \tn^\sigma \phi)\\ &=& \Re\left[2 (f \tn_\sigma \phi)^* f^{-2}  (\nabla_\nu f \tn^\sigma \phi)+ (f \tn_\sigma \phi)^* (\nabla_\nu f^{-2} ) (f \tn^\sigma \phi) \right] \\&=& \Re\left[-2  (\tn^\mu \phi)^*   (\tn^\dagger_\nu  \tn_\mu \phi)+ (f \tn_\sigma \phi)^* (\nabla_\nu f^{-2} ) (f \tn^\sigma \phi) \right].
\eean
Putting these two calculations together, and using \eq{double twisted} we find:
\bean
\nabla^{\mu}\left[  \tn_{(\mu} \overline{\phi} \cdot \tn_{\nu)} \phi - \frac{1}{2}   g_{\mu \nu} \tn_\sigma  \overline{\phi} \cdot \tn^\sigma \phi  \right] &=& \Re \Big[ -(\tn_{\nu} \phi)^*   ( \tn_\mu^\dagger  \tn^{\mu} \phi) \\ && \quad - (\tn^{\mu}\phi)^*  (\tn_\mu^\dagger  \tn_{\nu} \phi) + (\tn^\mu \phi)^*   (\tn^\dagger_\nu  \tn_\mu \phi)
\\&& \quad  + (f\tn_{\mu}\phi)^* (\nabla^\mu f^{-2}) (f \tn_{\nu} \phi) \\&& \quad - \frac{1}{2}  (f \tn_\mu \phi)^* (\nabla_\nu f^{-2} ) (f \tn^\mu \phi) \Big] \\ &=& -\Re\left[\tn_\nu \overline{\phi}\cdot \left(\Box_g\phi+ \left(f^{-1}\nabla_\mu \nabla^\mu f \right) \phi  \right) \right] 
\\ &&\quad + 2 \Re\left [(\tn^\mu \phi)^* (f^{-1}\nabla_\mu f) \tn_\nu \phi - (\tn^\mu \phi)^*(f^{-1}\nabla_\nu f)(  \tn_\mu \phi)\right ] \\ && \quad + 2 \Re \Big [-(\tn_{\mu}\phi)^* (f^{-1} \nabla^\mu f) ( \tn_{\nu} \phi) \\&& \quad +  \frac{1}{2} ( \tn_\mu \phi)^* (f^{-1} \nabla_\nu f ) (\tn^\mu \phi) \Big]
\\ &=& -\Re\left[\tn_\nu \overline{\phi}\cdot \left(\Box_g\phi+ \left(f^{-1}\nabla_\mu \nabla^\mu f \right)  \right) \phi  \right] 
\\ &&\quad -  \Re\left [ ( \tn_\mu \overline{\phi}) \cdot \left[ (f^{-1} \nabla_\nu f ) (\tn^\mu \phi)\right] \right].
\eean
Now, finally we consider the last term in  $\nabla_\mu \emT^\mu{}_\nu$. We have:
\bean
 \nabla^{\mu}\left[  g_{\mu \nu}\overline{\phi}\cdot  F\phi \right] &=&   \nabla_{\nu}\left[  (f^{-1} \phi)^*  (f Ff)  (f^{-1} \phi) \right]  \\&=& \big[  (\nabla_{\nu} f^{-1} \phi)^*  (f Ff)  (f^{-1} \phi)+ (f^{-1} \phi)^*  (f Ff)  (\nabla_{\nu} f^{-1} \phi)\\&&\quad + (f^{-1} \phi)^*  (\nabla_{\nu} f Ff)  (f^{-1} \phi)\big] \\&=& 2 \Re \left[ (\tn_\nu \phi)^* F \phi\right] - \phi^* \left [\tn_\nu^{\dagger} (Ff) f^{-1} \right] \phi.
\eean
Taking everything together, we have
\bean
\nabla_\mu \emT^\mu{}_\nu 
&=& -\Re\left[\tn_\nu \overline{\phi}\cdot \left(\Box_g\phi+ \left(f^{-1}\nabla_\mu \nabla^\mu f \right) + F  \right) \phi  \right]  \\&&\quad +  \Re\left [\frac{1}{2}\overline{\phi}\cdot \left [\tn_\nu^{\dagger} (Ff) f^{-1} \right] \phi -( \tn_\mu \overline{\phi}) \cdot \left[ (f^{-1} \nabla_\nu f ) (\tn^\mu \phi)\right] \right].
\eean
Writing $(f^{-1} \nabla_\nu f ) = -\frac{1}{2} (f^{-1} \tn^\dagger_\nu f )$, and recalling
\be
F =  \frac{1}{2} (V+V^*)- f^{-1}\nabla_\mu \nabla^\mu f,
\ee
and 
\be
2 \emS_\nu[\phi] = \Re\left\{ \overline{\phi} \cdot \left[ \tn^\dagger_\nu ( F f) f^{-1} \phi\right]+  (\tn_\sigma \overline{\phi}) \cdot \left[f^{-1} (\tn^\dagger_\nu  f)  (\tn^\sigma \phi)\right]\right \}.
\ee
We finally have:
\be
\nabla_\mu \emT^\mu{}_\nu[\phi] = -\Re\left[\tn_\nu \overline{\phi}\cdot \left(\Box_g\phi+ \frac{1}{2} (V+V^*) \phi  \right) \right]+ \emS_\nu[\phi],
\ee
and we're done.
\end{proof}
\subsection{Proof of the twisted trace identity}

For completeness we include the twisted trace identity, which was stated as part of Lemma 4.2.1 of \cite{Warnick:2012fi} but not explicitly proven there.

We assume $(\mathring{\mathscr{M}}^{d+1}, g)$ to be a time oriented Lorentzian manifold with an asymptotically AdS end, with asymptotic radial coordinate $r$ which we assume extends as a smooth positive function throughout $\mathring{\mathscr{M}}^{d+1}$. We take $\Sigma$ to be a spacelike surface which extends to the conformal infinity of the asymptotically AdS end, $\scri$, and meets $\scri$ orthogonally with respect to the conformal metric $r^{-2} g$. We finally assume that $\scri \cap\Sigma$ is compact. We will show:

\begin{Lemma}\label{twisted trace}
Suppose that $\phi \in \H_\mathcal{D}^1(\Sigma, \kappa)$ and let $I \in \{1, \ldots, N\}\setminus \mathcal{D}$. Then for any $\delta>0$, there exists $C_\delta$ such that
\ben{apptraceeq}
 \int_{\scri \cap \Sigma} \abs{\phi_I}^2 r^{-2\kappa_I} \sqrt{A} d \mathcal{K} \leq \delta \norm{\chi \phi}{\H^1(\Sigma, \kappa)}^2 + C_\delta \norm{\chi \phi}{\L^2(\Sigma)}^2
\een
Where $\chi$ is some smooth function equal to $1$ near $\scri$ and vanishing outside a neighbourhood of $\scri$. 
\end{Lemma}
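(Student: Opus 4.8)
The plan is to localize near the corner $\scri\cap\Sigma$ using $\chi$, pass to the renormalized field $w:=f^{-1}\phi$, and reduce the statement to a one–dimensional weighted trace inequality in the boundary defining variable $\rho:=r^{-1}$. Componentwise $w^I=r^{\frac d2-\kappa_I}\phi^I$, and from the expansion $\phi=e^{(-\frac d2+\kappa)\log r}\phi_++e^{(-\frac d2-\kappa)\log r}\phi_-$ of Definition \ref{wpdefs} one has $w^I=\phi_+^I+\rho^{2\kappa_I}\phi_-^I\to\phi_+^I$ as $\rho\to0$. Tracking the powers of $r$ (using $\sqrt A\sim r$ and $d\mathcal K\sim r^{d-1}\,d\bar{\mathcal K}$, where $d\bar{\mathcal K}$ is the conformal boundary volume) shows that the left-hand side of \eqref{apptraceeq} is precisely the boundary $L^2$-norm of $w^I$:
\[
\int_{\scri\cap\Sigma}\abs{\phi_I}^2 r^{-2\kappa_I}\sqrt A\, d\mathcal K=\int_{\scri\cap\Sigma}\abs{w^I(0,\cdot)}^2\,d\bar{\mathcal K}.
\]

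Next I would compare the two norms on the right near $\scri$. Since $\Sigma$ meets $\scri$ orthogonally for $r^{-2}g$, I choose collar coordinates $(\rho,y^a)$ in which the induced metric is $h=\rho^{-2}\bigl(d\rho^2+\mathfrak h_{ab}\,dy^a dy^b+\O{\rho}\bigr)$, so $dS_\Sigma\sim\rho^{-d}\,d\rho\,d\bar{\mathcal K}$ and $h^{rr}\to r^2$. Using $(\tn_r\phi)^I=r^{-\frac d2+\kappa_I}\partial_r w^I$, a direct computation gives, with the exponent $p_I:=1-2\kappa_I$,
\[
\norm{\chi\phi^I}{\L^2(\Sigma)}^2\sim\int\rho^{p_I}\abs{w^I}^2\,d\rho\,d\bar{\mathcal K},\qquad \norm{\chi\phi^I}{\H^1(\Sigma,\kappa)}^2\gtrsim\int\rho^{p_I}\bigl(\abs{\partial_\rho w^I}^2+\abs{w^I}^2\bigr)\,d\rho\,d\bar{\mathcal K}.
\]
The $\H^1$ bound uses that $h^{rr}$ dominates, so the radial twisted derivative is controlled; the tangential and off-diagonal pieces of $\abs{\tn\phi}^2$ and the $\O{\rho}$ metric corrections are of strictly lower order in $\rho$ and are discarded.

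I then prove the weighted trace inequality pointwise in $y$ for $-1<p_I<1$. From $w^I(0)-w^I(\rho)=\int_0^\rho\partial_\sigma w^I\,d\sigma$ and Cauchy–Schwarz against the weight,
\[
\abs{w^I(0)}\le\abs{w^I(\rho)}+C\rho^{(1-p_I)/2}\Bigl(\int_0^{\rho_0}\sigma^{p_I}\abs{\partial_\sigma w^I}^2\,d\sigma\Bigr)^{1/2},
\]
where $\int_0^{\rho_0}\sigma^{-p_I}\,d\sigma<\infty$ exactly because $p_I<1$, i.e.\ $\kappa_I>0$. Squaring, multiplying by $\rho^{p_I}$, integrating over $(0,\rho_0)$ and dividing by $\int_0^{\rho_0}\rho^{p_I}\,d\rho=\rho_0^{p_I+1}/(p_I+1)$ — which is positive precisely because $p_I>-1$, i.e.\ $\kappa_I<1$ — yields
\[
\abs{w^I(0)}^2\le C_{\rho_0}\int_0^{\rho_0}\rho^{p_I}\abs{w^I}^2\,d\rho+C'(p_I+1)\rho_0^{1-p_I}\int_0^{\rho_0}\rho^{p_I}\abs{\partial_\rho w^I}^2\,d\rho,
\]
with the gradient coefficient $\sim\rho_0^{1-p_I}\to0$ as $\rho_0\to0$. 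Shrinking the collar thus produces the arbitrary $\delta$ at the cost of a large $C_\delta$, and this is exactly where the two-sided range $0<\kappa_I<1$ is indispensable.

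Finally I would integrate this pointwise estimate over $y$ against $d\bar{\mathcal K}$, insert the norm comparisons of the second step, and absorb the lower-order terms with Young's inequality to obtain \eqref{apptraceeq}; the inequality then extends from smooth $\phi$ to $\phi\in\H^1_\mathcal D(\Sigma,\kappa)$ by density. The main obstacle is precisely this degenerate-weight trace inequality: for $\kappa_I<\tfrac12$ the weight $\rho^{p_I}$ vanishes at the very boundary where the trace is extracted, so a naive fundamental-theorem-of-calculus/Young splitting (which would force the non-matching weight $\rho^{-p_I}$) fails; averaging against $\rho^{p_I}$ and contracting the collar repairs this, but only in the window $-1<p_I<1$. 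A secondary technical point is verifying that the tangential and off-diagonal contributions to $\abs{\tn\phi}^2$, together with the subleading terms in the metric expansion, are genuinely lower order in $\rho$ so that the radial model problem captures the full estimate.
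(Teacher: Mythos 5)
Your proof is correct and follows essentially the same route as the paper: your renormalised field $w^I=r^{\frac d2-\kappa_I}\phi^I$ is exactly the quantity $\rho^{\kappa_I-\frac d2}\phi_I$ inside the paper's averaged function $\eta(\rho)$, and your fundamental-theorem-of-calculus step with weighted Cauchy--Schwarz (using $\kappa_I>0$) followed by averaging over a shrinking collar against the weight $\rho^{1-2\kappa_I}$ (using $\kappa_I<1$) reproduces the paper's chain of estimates, the only cosmetic difference being that you work pointwise in the tangential variable and integrate at the end rather than estimating $\eta'(\rho)$ directly. The measure bookkeeping, the reduction to finitely many collar charts by partition of unity, and the final density extension from smooth functions to $\H^1_{\mathcal D}(\Sigma,\kappa)$ all coincide with the paper's proof.
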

\begin{proof}
We first note that it will suffice to prove the result for $\phi \in C^\infty_\mathcal{D}(\Sigma; \C^N)$, defined to be the set of $\phi$ such that
\begin{align*}
 &\phi = e^{(-\frac{d}{2}\iota+\kappa)\log r} \phi_++e^{(-\frac{d}{2}\iota-\kappa)\log r} \phi_-,  \\ &\phi_{\pm} \in C^\infty(\Sigma; \C^N), \qquad  \phi_+^I = 0\   \textrm{ for }I \in \mathcal{D}, \\&\qquad \qquad \norm{\phi}{\H^1(\Sigma, \kappa)}<\infty,
\end{align*}
as such functions are dense in $\H_\mathcal{D}^1(\Sigma, \kappa)$.

From the definition of the asymptotically flat end and the compactness assumption, we can cover a neighbourhood of $\scri \cap \Sigma$ with a finite number of coordinate patches with coordinates $(\rho, x^\alpha)$, with $r = \rho^{-1}$, such that if $h$ is the metric induced from $g$ on $\Sigma$ we have
\be
h_{\rho \rho} = \frac{l^2}{\rho^2} + \O{1}, \quad h_{\rho \alpha} =  \O{1}, \quad h_{\alpha \beta} = \frac{\mathfrak{h}_{\alpha \beta}}{\rho^2} + \O{1},
\ee
as $\rho \to 0$, where $\mathfrak{h}$ is a Riemannian metric. Without loss of generality, we may assume that $\rho \in [0, \epsilon)$ and $x^\alpha$ take values in the unit $(d-1)$-ball, $B_1$. By considering a partition of unity, we may assume that $\phi$ is supported in one such coordinate chart.

A brief calculation shows that in these coordinates the measure induced on a surface $\{\rho = \textrm{const.} \}\cap \Sigma$ satisfies
\be
d \mathcal{K} = \sqrt{\mathfrak{h}} \rho^{-d+1} dx
\ee
so that
\be
\sqrt{A} d \mathcal{K} = \nu \rho^{-d} dx
\ee
where $\nu$ is a smooth positive function such that $\nu, 1/\nu$ are both bounded. Similarly, we have that
\be
dS_{\Sigma} = \nu' \rho^{-d} dx d\rho
\ee
for some other smooth positive function  $\nu'$  such that $\nu', 1/\nu'$ are both bounded.

Now let us fix $I\in \{1, \ldots, N\} \setminus \mathcal{D}$, so that $0<\kappa_I<1$, and consider the quantity
\be
\eta(\rho)^2 = \int_{B_1} \abs{ \rho^{\kappa_I} \phi_I(\rho, x)}^2 \rho^{-d} dx.
\ee
We note that for $\phi \in C^\infty_\mathcal{D}(\Sigma; \C^N)$ this expression extends to a smooth and bounded function on $[0, \epsilon)$ and that moreover
\be
 \int_{\scri \cap \Sigma} \abs{\phi_I}^2 r^{-2\kappa_I} \sqrt{A} d \mathcal{K} \leq C \eta(0)^2.
\ee
for some constant depending only on $g, \Sigma, \kappa$. A short calculation, making use of the Cauchy-Schwarz inequality, allows us to estimate:
\begin{align*}
\abs{\eta'(\rho)} &= \abs{\frac{1}{\eta(\rho)} \Re \int_{B_1} \overline{\partial_\rho \left( \rho^{\kappa_I - \frac{d}{2}} \phi_I \right)}\left( \rho^{\kappa_I - \frac{d}{2}} \phi_I dx \right)} \\ &\leq \left( \int_{B_1}   \abs{\partial_\rho \left( \rho^{\kappa_I - \frac{d}{2}} \phi_I \right)}^2 dx \right)^{\frac{1}{2}}.
\end{align*}
Next we calculate
\begin{align*}
\abs{\eta(0) - \eta(\rho)} &= \abs{\int_0^\rho \eta'(s) ds} \leq \int_0^\rho \abs{\eta'(s)} ds \\ 
&\leq \int_0^\rho \left( \int_{B_1}   \abs{\partial_s \left( s^{\kappa_I - \frac{d}{2}} \phi_I(s, x) \right)}^2 dx \right)^{\frac{1}{2}} ds \\ 
&\leq \left(\int_0^\rho \int_{B_1}   \abs{s^{\frac{d}{2}-\kappa_I} \partial_s \left( s^{\kappa_I - \frac{d}{2}} \phi_I(s, x) \right)}^2 s^{-d+1} dx  ds\ \cdot  \int_0^\rho s^{2\kappa_I - 1} ds\right)^{\frac{1}{2}} \\
&\leq C \rho^{\kappa_I} \norm{\phi}{\H^1(\Sigma, \kappa)}, 
\end{align*}
for some constant $C$ depending only on $g, \Sigma, \kappa$. From here we can estimate
\be
\eta(0)^2 \leq 2 C^2 \rho^{2 \kappa_I} \norm{\phi}{\H^1(\Sigma, \kappa)}^2 + 2 \eta(\rho)^2.
\ee
Now, multiplying by $\rho^{1-2\kappa_I}$ and integrating in $\rho$ over $[0, \delta]$ for some $\delta<\epsilon$, we have
\be
\int_0^\delta \eta(0)^2  \rho^{1-2\kappa_I} d\rho = \eta(0)^2 \frac{\delta^{2-2\kappa_I}}{2-2\kappa_I},
\ee
and
\be
\int_0^\delta   \rho^{2 \kappa_I} \norm{\phi}{\H^1(\Sigma, \kappa)}^2 \rho^{1-2\kappa_I} d\rho = \frac{1}{2} \delta^{2} \norm{\phi}{\H^1(\Sigma, \kappa)}^2 .
\ee
Finally
\be
\int_0^\delta \eta(\rho)^2  \rho^{1-2\kappa_I} d\rho =  \int_0^\delta \int_{B_1} \abs{  \phi_I(\rho, x)}^2 \rho^{-d+1} dx \leq C' \norm{\phi}{\L^2(\Sigma)}^2
\ee
Putting these estimates together, we have
\be
\eta(0)^2 \leq C\left( \delta^{2 \kappa_I}\norm{\phi}{\H^1(\Sigma, \kappa)}^2  + \delta^{2\kappa_I - 2}  \norm{\phi}{\L^2(\Sigma)}^2\right) 
\ee
for some constant $C$ depending only on $g, \Sigma, \kappa$. By taking $\delta$ small, we can make the coefficient in front of the $\H^1$ norm arbitrarily small. Combining this estimate with a partition of unity we are done.
\end{proof}

\newpage
\section{Glossary of symbols} \label{glossary}
{\small
% !TEX root = ./Quasinormal.tex
\begin{longtable}{ll}
$A$ &The lapse function of the stationary metric, \eq{sigA}, p\pageref{sigA} \\
$\A$ &The generator of the wave semi-group, Def. \ref{Adef}, p\pageref{Adef}\\
$\beta_I$ & The Robin functions, Def. \ref{wpdefs}, p\pageref{wpdefs} \\
$C$ & A constant, which may vary from line to line. \\
$C^\infty_{bc}(\Sigma; \C^N)$ & Smooth functions obeying boundary conditions at $\scri$, Def. \ref{wpdefs}, p\pageref{wpdefs} \\
$d+1$ & The dimension of the spacetime manifold, Def. \ref{adsenddef}, p\pageref{adsenddef}\\
$D^k(\A)$ &The domain of $\A:D^k(\A) \to \mathbf{H}^k(\Sigma)$, Def. \ref{Adef}, p\pageref{Adef}\\
$\mathcal{D}^+(\Sigma)$ &The future domain of dependence of $\Sigma$, Def. \ref{wpdefs}, p\pageref{wpdefs} \\
$\mathcal{D}$ & The set of components obeying Dirichlet conditions, Def. \ref{wpdefs}, p\pageref{wpdefs} \\
$E_\gamma(t)[\phi]$ & The Killing energy, Def. \ref{Kildef}, p\pageref{Kildef} \\
$\mathcal{E}_\gamma(t)[\phi]$ & The redshift energy, Def. \ref{RSdef}, p\pageref{RSdef} \\
$f$ & The twisting matrix, \eq{fdef}, p\pageref{fdef} \\
G & The axial symmetry group of a locally stationary BH, Def. \ref{locstatbh}, p\pageref{locstatbh}  \\
$\H^1(\Sigma, \kappa)$,  &The twisted Sobolev space, Def. \ref{wpdefs}, p\pageref{wpdefs} \\
$\overline{H}^1(\Sigma', \kappa)$,  &The twisted Sobolev space for the unitary slicing,  \S\ref{unitary}, p\pageref{unitary} \\
$\H^{1, k}(\Sigma, \kappa)$,  &A higher twisted Sobolev space, \S\ref{LTop}, p\pageref{LTop}\\
$\H^1_{\mathcal{D}}(\Sigma, \kappa)$ &The twisted Sobolev space, with Dirichlet conditions\\&  for the $\mathcal{D}$ components Def. \ref{wpdefs}, p\pageref{wpdefs} \\
$\mathbf{H}^k(\Sigma)$ & The higher regularity Sobolev spaces, Def. \ref{Hkdef}, p\pageref{Hkdef} \\
$\hor$ &The black hole horizon, Def. \ref{adsbh}, p\pageref{adsbh}\\
$\scri$ &The conformal infinity of an aAdS spacetime, Def. \ref{adsenddef}, p\pageref{adsenddef} \\
$\curJ^T_\gamma[\phi]$ & The Killing energy current, Def. \ref{Kildef}, p\pageref{Kildef} \\
$\curJ^N_\gamma[\phi]$ & The redshift energy current, Def. \ref{RSdef}, p\pageref{RSdef} \\
$\kappa$ & The matrix of asymptotic exponents, \eq{kappadef}, p\pageref{kappadef} \\
$\varkappa$ &The surface gravity, \eq{sgdef}, p\pageref{sgdef} \\
$l$ & The AdS radius, Def. \ref{adsenddef}, p\pageref{adsenddef} \\
$L$, $L^\dagger$ &A strongly hyperbolic operator and its adjoint, Def. \ref{LLdef}, p\pageref{LLdef}\\
$\hL, \hL^\dagger$ &The Laplace transform of $L$, $L^\dagger$, \S\ref{LTop}, p\pageref{LTop}\\
$\L^2(\Sigma)$ &The renormalised $L^2$ space, Def. \ref{wpdefs}, p\pageref{wpdefs} \\
$\overline{L}^2(\Sigma')$ &The renormalised $L^2$ space for the unitary slicing, \S\ref{unitary}, p\pageref{unitary} \\
$\L^{2,k}(\Sigma)$ & A higher twisted Sobolev space, \S\ref{LTop}, p\pageref{LTop}\\
$\tn_\mu$, $\tn_\mu^\dagger$ & The twisted derivative and its adjoint, \eq{twist}, p\pageref{twist} \\
$\Phi$ & An axial killing field of a locally stationary BH, Def. \ref{locstatbh}, p\pageref{locstatbh}  \\
$r$ &The asymptotic radial coordinate of an aAdS spacetime, Def. \ref{adsenddef}, p\pageref{adsenddef} \\
$\bhR$ &A stationary black hole region, Def. \ref{adsbh}, p\pageref{adsbh} (see also Def. \ref{locstatbh}, p\pageref{locstatbh})\\
$\bhR_e$ &An extended stationary black hole region, Def. \ref{extadsbh}, p\pageref{extadsbh} \\
$S$ &The bifurcation surface of an extended black hole region, Def. \ref{extadsbh}, p\pageref{extadsbh} \\
$\cS(t)$& The solution operator, Def. \ref{Sdef}, p\pageref{Sdef}\\
$\sigma$ &The norm of the Killing field $T$, \eq{sigA}, p\pageref{sigA} \\
$\Sigma$ &A (typically spacelike) hypersurface\\
$\Sigma_t$ &A family of spacelike surfaces foliating a black hole, Def. \ref{adsbh}, p\pageref{adsbh} \\
$\Sigma'_\tau$ &The unitary slicing of an extended black hole region, Def. \ref{extadsbh}, p\pageref{extadsbh} \\
$T$ &The Killing generator of the black hole horizon, Def. \ref{adsbh}, p\pageref{adsbh}\\
$\emT_{\mu \nu}[\phi]$ & The twisted energy-momentum tensor of $\phi$, Def. \ref{emtdef}, p\pageref{emtdef} \\
$w_L, w_L^*$ & Bounds on the cross-term, Def. \ref{RSdef}, p\pageref{RSdef} \\
$\mathfrak{X}(\mathscr{M})$ &The smooth vector fields on $\mathscr{M}$, Def. \ref{adsenddef}, p\pageref{adsenddef} \\
$\mathfrak{X}_\Sigma(\mathscr{M})$ &The smooth vector fields on $\mathscr{M}$ tangent to $\Sigma$, Def. \ref{adsenddef}, p\pageref{adsenddef} \\
$\mathfrak{X}^*(\mathscr{M})$ &The smooth one-form fields on $\mathscr{M}$, Def. \ref{adsenddef}, p\pageref{adsenddef} \\
\end{longtable}

%\begin{itemize}
%\item[] $A$: The lapse function of the stationary metric, \eq{sigA}, p\pageref{sigA}
%\item[] $\A$, $\A^k$: The generator of the wave semi-group, Def. \ref{Adef}, p\pageref{Adef}
%\item[] $L$, $L^\dagger$: A strongly hyperbolic operator and its conjugate, Def. \ref{LLdef}, p\pageref{LLdef}
%\item[] $\bhR$: A stationary black hole region, Def. \ref{adsbh}, p\pageref{adsbh} 
%\item[] $\hor$: The black hole horizon, Def. \ref{adsbh}, p\pageref{adsbh}
%\item[] $\Sigma$: A spacelike surface
%\item[] $\Sigma_t$: A family of spacelike surfaces foliating a black hole, Def. \ref{adsbh}, p\pageref{adsbh} 
%\item[] $T$: The Killing generator of the black hole horizon, Def. \ref{adsbh}, p\pageref{adsbh}
%\item[] $r$: The asymptotic radial coordinate of an aAdS spacetime, Def. \ref{adsenddef}, p\pageref{adsenddef}
%\end{itemize}
}

\providecommand{\href}[2]{#2}\begingroup\raggedright\endgroup

\subsection*{Acknowledgements} I would like to thank Gustav Holzegel and Mihalis Dafermos for helpful advice and comments. I am grateful to PIMS and NSERC for funding. Part of this work was undertaken at Perimeter Institute. Research at Perimeter Institute is supported by the Government of Canada through Industry Canada and by the Province of Ontario through the Ministry of Economic Development \& Innovation. I am also very grateful to the anonymous referees for a thorough and helpful review of the manuscript.

\end{document}